\let\oldmarginpar\marginpar
\renewcommand\marginpar[1]{\oldmarginpar[\raggedleft\footnotesize #1]%
{\raggedright\footnotesize #1}}
\newcommand{\skcomment}[2]{{\color{red}#1}\marginpar{\color{TealBlue}\tiny #2}}
\renewcommand{\skcomment}[2]{#1}
\newcommand{\velpath}[1]{V[#1]}
\newcommand{\wait}[1]{\delta[#1]}
\newcommand{\arrive}[1]{t_A[#1]}
\newcommand{\depart}[1]{t_D[#1]}
\newtheorem{theorem}{Theorem}[section]
\newtheorem{lemma}[theorem]{Lemma}
\newtheorem{proposition}[theorem]{Proposition}
\newtheorem{definition}[theorem]{Definition}
\newtheorem{assumption}[theorem]{Assumption}
\newtheorem{conjecture}[theorem]{Conjecture}
\newcommand{\qed}{\nobreak \ifvmode \relax \else
      \ifdim\lastskip<1.5em \hskip-\lastskip
      \hskip1.5em plus0em minus0.5em \fi \nobreak
      \vrule height0.75em width0.5em depth0.25em\fi}
\renewcommand{\qed}{\hfill$\Box$}      
\newenvironment{proof}[1][Proof]{\begin{trivlist}
\item[\hskip \labelsep {\bfseries #1}]}{\qed\end{trivlist}}
\newcommand\numberthis{\addtocounter{equation}{1}\tag{\theequation}}
\newcommand{\ie}{{\em i.e.}}
\newcommand{\eg}{{\em e.g.}}
\newcommand{\reals}{\mathbb{R}}
\newcommand{\naturals}{\mathbb{N}}
\DeclareMathOperator*{\argmin}{arg\,min}
\newcommand{\EE}{\mathbb{E}}
\DeclareMathAlphabet{\mathpzc}{OT1}{pzc}{m}{it}
\newcommand{\notraj}[1]{#1}
\title{\LARGE \bf
Polling-systems-based Autonomous Vehicle Coordination\\ in Traffic Intersections with No Traffic Signals
}
\author{David Miculescu\qquad\qquad\qquad\qquad Sertac Karaman% <-this % stops a space
\thanks{This work was partially funded by the National Science Foundation through Grant \#1350685. 
}
\thanks{The authors are with the Department of Aeronautics and Astronautics and the Laboratory for Information and Decision Systems, Massachusetts Institute of Technology, Cambridge, MA 02139.
        {\tt\footnotesize \{dmicul, sertac\}@mit.edu}}%
}
\begin{document}

\maketitle
\thispagestyle{empty}
\pagestyle{empty}

%%%%%%%%%%%%%%%%%%%%%%%%%%%%%%%%%%%%%%%%%%%%%%%%%%%%%%%%%%%%%%%%%%%%%%%%%%%%%%%%
\begin{abstract}
The rapid development of autonomous vehicles spurred a careful investigation of the potential benefits of all-autonomous transportation networks. Most studies conclude that autonomous systems can enable drastic improvements in performance. 
A widely studied concept is all-autonomous, collision-free intersections, where vehicles arriving in a traffic intersection with no traffic light adjust their speeds to cross safely through the intersection as quickly as possible. In this paper, we propose a coordination control algorithm for this problem, assuming stochastic models for the arrival times of the vehicles. 
The proposed algorithm provides provable guarantees on safety and performance. More precisely, it is shown that no collisions occur surely, and moreover a rigorous upper bound is provided for the expected wait time. The algorithm is also demonstrated in simulations.
The proposed algorithms are inspired by polling systems. In fact, the problem studied in this paper leads to a new polling system where customers are subject to differential constraints, which may be interesting in its own right.

\end{abstract}

%%%%%%%%%%%%%%%%%%%%%%%%%%%%%%%%%%%%%%%%%%%%%%%%%%%%%%%%%%%%%%%%%%%%%%%%%%%%%%%%
\section{Introduction} \label{section:introduction}
Autonomous vehicles hold the potential to revolutionize transportation and logistics. 
Self-driving cars may reduce congestion and emissions, while substantially enhancing safety~\cite{Greenblatt:2015fa,Fagnant:2015cf}. Drones may be used for delivering goods in urban centers or for providing emergency supplies in disaster-struck areas~\cite{Sandvik:2014em,DAndrea:2014kja}.  
Almost all studies on autonomous vehicles, including those mentioned above, envision large {\em fleets} of autonomous vehicles that work in coordination to enable efficient transportation and logistics services. In fact, in many cases, well-coordinated fleets are key for economic and societal impact. For instance, self-driving cars create the most value when they are deployed as a fleet of autonomous taxis for mobility on demand~\cite{Fagnant:2014bg,Fagnant:2015cf}; delivery drones are most useful when they can relay packages from one to another~\cite{DAndrea:2014kja}.

In almost all such examples, efficient coordination algorithms can enable substantial savings in energy, increase in throughput and capacity, and reduction in delays and emissions. In fact, the overall performance of the system is often highly sensitive to the operator's choice of the underlying coordination algorithms~\cite{Fagnant:2014bg}. Thus, designing effective coordination algorithms is vital towards understanding the real potential for impact that autonomous vehicles provide. 

%Motivated by these examples, in this paper, we are concerned with the control of autonomous vehicles near intersections. In this context, intersections are locations where two flows of autonomous vehicles meet. To achieve the best performance (\eg, in terms of throughput or delay), autonomous vehicles must coordinate their motion through intersections. 

\begin{figure}
\centerline{\includegraphics[width=0.34\textwidth]{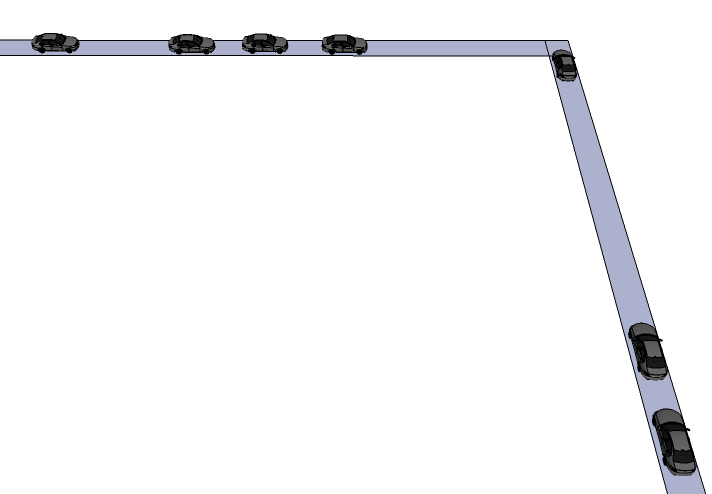}}
\caption{An illustration of a fully autonomous traffic intersection. Autonomous vehicles arriving near the intersection are fully controlled by a central control system. The control system ensures that the vehicles safely pass through the intersection; furthermore, the central control system provides provable guarantees on performance, \eg, an upper bound on the average time it takes a typical vehicle to go through this intersection region.} \label{figure:intersection_illustration}
\end{figure}

The application domain of such coordination algorithms is not limited to autonomous cars and delivery drones.
There are already a number of existing applications where effective coordination of a fleet of autonomous vehicles is key. 
For instance, robotic vehicles that shuttle materials to service packaging requests in Amazon warehouses~\cite{Guizzo:2008wj,Wurman:2008tp} and autonomous trucks that carry cargo containers in seaports~\cite{Anonymous:uclib5BZ,Hoshino:2006tf,Grunow:2004ck} can all benefit from better coordination algorithms to reduce delays and energy consumption, while increasing throughput. 

High-performance coordination algorithms for autonomous vehicles that have to work in tight spaces would have tremendous immediate impact on these application domains. They would also pave the way for the design of future urban centers with shared autonomous vehicles~\cite{Fagnant:2014bg}, next generation air transportation systems~\cite{Lyons:2012bk}, and industrial environments that house autonomous vehicles to shuttle goods~\cite{PeterRWurman:2008tp}. 

In systems involving high-performance autonomous vehicles, ``traffic intersections,'' where flows of vehicles intersect, are among the most critical. 
Efficient coordination algorithms are most beneficial in these places where interaction between vehicles is frequent. 
In this paper, we study ``signalless traffic intersections'' in which vehicles coordinate among each other to carefully adjust their speed in such a way that maximal throughput can be achieved with minimal delay. To be precise, we assume that vehicles arrive stochastically at the ends of two roads that intersect at their other end. (See Figure~\ref{figure:intersection_illustration}.) We assume that a central controller commands each vehicle in the system. The vehicles are bound by second-order differential constraints with bounded velocity and bounded acceleration. In this setting, we propose a novel coordination algorithms that provide guarantees on safety and performance. In doing so, we establish novel connections with existing results in polling systems literature, which may be of interest in its own right.

The importance of coordination algorithms for signalless traffic intersections has not gone unnoticed. In particular, there is a large and growing body of literature based on multi-agent simulations~\cite{KDresner:2009ul,TszChiuAu:2010vd,Fagnant:2014bg,Azimi:2012vd,Azimi:2013,Azimi:2014}, genetic algorithms~\cite{Onieva:2012dd}, token-based approaches~\cite{Liu:2013kc,Perronnet:2013}, auction-based approaches~\cite{Carlino:2013wn}, and discrete-time occupancy theory~\cite{Lu:2016}.
A comprehensive recent summary and comparison of some of these approaches are provided by Dai et al.~\cite{daiquality:2016}. 
These approaches propose practical algorithms that seem to perform well in simulation studies. 
However, most provide no performance guarantees.

Control-theoretic formulations and analyses have also been considered.
The existing literature includes a hierarchical-distributed coordination scheme that balances computational complexity and optimality~\cite{Tallapragada:2016}, an approximation algorithm based on an optimal control formulation~\cite{Hult:2015}, an MPC-based approach~\cite{Kim:2014}, a distributed control generated from its formal specifications~\cite{Duo:uT9p4lpt}, a game-theoretic cruise control approach~\cite{Zohdy:2013}, an analysis of intersecting vehicle flows~\cite{Lupu:2010vza, ZhiHongMao:2001cm} as well as approximation approaches based on hybrid systems theory~\cite{Colombo:2012uc,Hafner:2013to}.
Some of these approaches provide provable guarantees. However, many of them are conservative, and often fail to provide any guarantees on performance. 

Finally, problems similar to the one considered in the present paper were discussed in the context of air traffic control and conflict resolution~\cite{AlonsoAyuso:2011cr,Paielli:1997job,Pallottino:2002wca,Lalish:2008tm,Ishutkina:2012it,Lee:2009fta,Frazzoli:2001eza,Mao:2007csa,Mao:2001uh,Mao:2001ij,Mao:2005iea}. The work we report in the present paper is most related to the analysis of aircraft flows by Mao et al.~\cite{Mao:2001uh,Mao:2001ij,Mao:2007csa,Mao:2005iea}. Their analysis considers two intersecting flows of aircraft, and computes the maximum throughput that can be achieved by various maneuvers, such as heading change. The analysis is geometric and is for aircraft maneuvers; it considers a series of vehicles arriving one after another, aimed at finding the maximum throughput in the worst case. 
In contrast, our analysis is tailored for autonomous ground vehicles that move in a lane, and thus, have to slow down and speed up, but can not maneuver sideways. More importantly, our analysis takes into account stochastic arrivals, allowing performance results for the average case. The analysis in this paper may apply to aircraft models as well, extending existing literature in air traffic control for intersecting aircraft flows, with stochastic models.

The main contribution of this paper is a motion coordination algorithm for autonomous vehicles operating in traffic intersections with no traffic signals. Our algorithm is based on polling systems~\cite{Gross:1998vr}, and it provides provable guarantees on safety and performance. Our results extend many of the widely-used polling policies to traffic intersections of autonomous vehicles. This extension is non-trivial due to the differential constraints of the vehicles, which do not appear in the traditional application areas of polling systems, such as data networks. 

Our main theoretical result is the following. If the road length is larger than a certain threshold (which depends on the maximum speed and maximum acceleration of the vehicles), then no collisions occur surely, and, moreover, the delay that each vehicle experiences is bounded by the delay experienced by a customer arriving in a traditional polling system without the differential constraints. This result is remarkable. Essentially, as far as the delay is concerned, the differential constraints of the vehicles are irrelevant, as long as the road length is larger than a certain threshold; most of the existing results on delay in the polling systems literature directly apply.

The novel connections with polling systems and the new motion coordination algorithm provides us with an insight for signalless traffic intersection control: {\em smart platooning}, \ie, effective clustering of vehicles, is key for achieving good delay-throughput tradeoff. 
To achieve the best throughput, the vehicles in the same lane must slow down to form a cluster, and pass through the intersection as a platoon, in order to minimize the time it takes to switch over to the other lane. This behavior naturally comes out from the motion coordination algorithms. We emphasize this insight throughout the paper with examples, simulations, and analyses, when applicable.

Let us note at this point that the polling systems literature is fairly rich~\cite{Boxma:2011fm,Vishnevskii:2006dd,Takagi:1998vi,Boon:2011dw,Levy:1990hj}. Motivated by applications in communication systems, transportation systems, and manufacturing, the literature has flourished during the last few decades. Optimal polling policies were characterized for a large class of input processes~\cite{liu1992optimal} and analytical expressions were derived for a range of polling policies~\cite{Boxma:2011fm,Vishnevskii:2006dd,Takagi:1998vi}. These foundational results have been utilized in several application domains~\cite{Boon:2011dw,Levy:1990hj}.

The existing applications of polling systems also include urban traffic flows~\cite{Boon:2011dw}. 
However, to the best of our knowledge, the applications of polling systems in the context of all-autonomous traffic intersections is novel. Furthermore, the problem formulation presented in this paper can be generalized leading to a new class of polling systems where the customers are subject to differential constraints, which may be interesting in its own right. 
In this case, the customers must be ``steered'' to a suitable state before they can be serviced. Our results imply that, in a certain class of such polling systems, the differential constraints can be managed, \ie, the differentially-constrained polling system can achieve the same performance that its counterpart with no differential constraints achieves. 

Although this paper focuses on applications in urban transportation, let us emphasize that potential applications also include air transportation as well as warehouse automation and manufacturing. In particular, trajectory-based operations considered for the NextGen air transportation system (see, \eg,~\cite{Lyons:2012bk}) by the Federal Aviation Administration in the U.S. will enable trajectory planning and precision execution for aircraft. An effective use of the shared airspace may be possible by setting up virtual roads and intersections, where the aircraft coordinate their motion for increased performance. Furthermore, autonomous robotic vehicles servicing warehouses, factories, and transportation hubs (\eg, airports, seaports, train stations, {\em etc}.) may enable efficient transportation of goods and people, with effective motion coordination algorithms. 

A preliminary version of our results were presented in the Conference on Decision and Control (CDC) in 2014~\cite{miculescu:cdc14}. In addition to the results presented in the preliminary version of this paper, the present paper provides a full proof of safety and performance results for a slightly more general setting, provides new results on stability, and provides conjectures and open problems based on new simulation results. 

This paper is organized as follows. We formalize the problem of signalless intersection control with stochastic arrivals in Section~\ref{section:problem}. We describe our control policy in Section~\ref{section:control}, where we also introduce preliminaries, such as key results from the polling systems literature. We state our main theoretical results in Section~\ref{section:analysis}, and provide the results of supporting simulation studies in Section~\ref{section:simulations}. We conclude the paper with remarks in Section~\ref{section:conclusions}. We provide the technical proofs in the appendix.

\newpage

%%%%%%%%%%%%%%%%%%%%%%%%%%%%%%%%%%%%%%%%%%%%%%%
%%%%%%%%%%%%%%%%%%%%%%%%%%%%%%%%%%%%%%%%%%%%%%%
%%%%%%%%%%%%%%%%%%%%%%%%%%%%%%%%%%%%%%%%%%%%%%%
\section{PROBLEM DEFINITION} \label{section:problem}

We are interested in the problem of motion coordination through a traffic intersection, as depicted in Figure~\ref{figure:intersection_illustration}. 
Consider a traffic intersection where two orthogonal lanes intersect. Suppose each vehicle is subject to second order dynamics of the following form:
\begin{align}
\ddot{x}(t) &= u(t), \label{equation:system}
\end{align}
where $x(t)$ denotes the position of the front bumper of the vehicle, $0\le \dot{x}(t) \le v_\mathrm{m}$ is the maximum velocity constraint, and $\vert u(t) \vert \le a_\mathrm{m}$ is the maximum acceleration constraint.
%
% Note that the vehicle is allowed to come to a full stop.
%
The region where the two lanes intersect is called the {\em intersection region}. 
The portion of the road within a distance $L$ of the intersection region is called the {\em control region}.
We assume that % once a vehicle is inside the control region, it is controlled by a central control system. In other words, 
the input signal $u(t)$ is directly controlled by a central control system for all vehicles that are in the control region.

\begin{figure}[b]
\centerline{
\includegraphics[clip=true, trim= 0in 0in 0in 0in, width=0.35\textwidth]{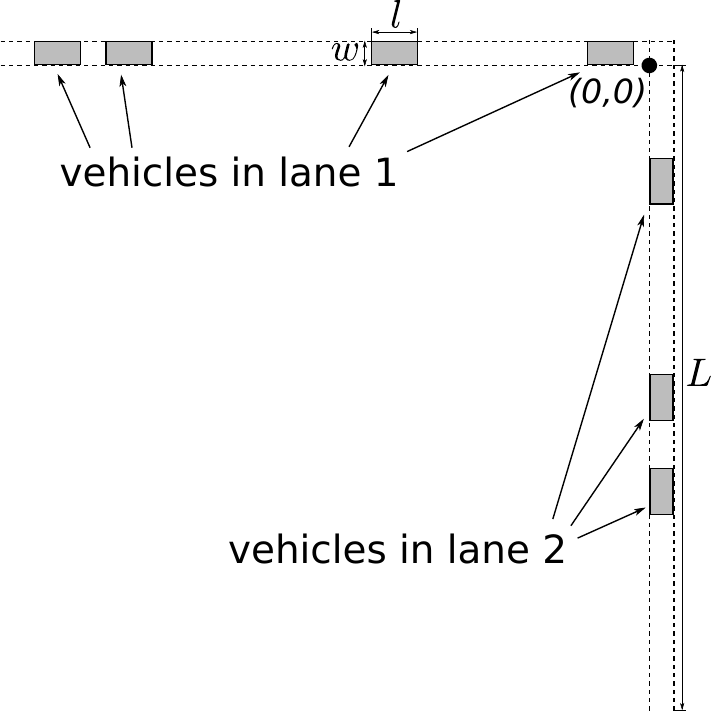}}
\vspace{-0.1in}

\caption{An illustration of the control region.}\label{figure:intersection2}
\end{figure}

This central control system does not know {\em a priori} the precise times that each vehicle will arrive at the control region. However, we assume that certain statistics of their interarrival times are available. 
More precisely, we model the arrival times $\{t_{i,k} : i \in \naturals\}$ as a suitable stochastic process, where $t_{i,k}$ is the time that the $i$th vehicle from lane $k \in \{1,2\}$ enters the control region (front bumper is exactly a distance of $L$ away from the intersection region).
In other words, the $i$th vehicle entering the control region from lane $k$ is at position $-L$ at time $t_{i,k}$, \ie, $x_{i,k}(t_{i,k}) = -L$, where $x_{i,k} (t)$ denotes the position of this vehicle at time $t$. 
From this point on, the position $x_{i,k}(t)$ of the same vehicle is governed according to the dynamics given in Equation~\eqref{equation:system}.  If the lane number of the $i$th vehicle is clear from context, we simplify the notation and drop the lane subscript, \ie, $x_i$ denotes the trajectory of the $i$th vehicle, $t_i$ denotes the time that the $i$th vehicle enters the control region, {\em et cetera}.
We assume that the vehicles enter the control region with maximum speed, \ie, $\dot{x}(t_{i,k}) = v_m$ for all $i \in \naturals$ and all $k \in \{1,2\}$.
To simplify notation, the $i$th vehicle that enters lane $k$ is at times referred to as vehicle $(i,k)$.

We represent each vehicle as a two-dimensional, rectangular rigid body with length $l$ and width $w$. The position of this rigid body is encoded with respect to one of the corners of the intersection region. See Figure~\ref{figure:intersection2}. 
The position of the rigid body represents the center of the front bumper of the vehicle. 
The orientation of this rigid body depends on which lane the vehicle is traveling in. 
% {\color{red}
To formalize, let us define the rigid body at position $y \in \reals$ in lane $k$ with $R(y,k) \subseteq \reals^2$, \ie, 
\begin{align*}
R(y,1) &:= \{(y_1,y_2) \in \reals :  y-l< y_1 < y, 0 < y_2 < w \}, \\
%\end{align*}
%%and 
%\begin{align*}
R(y,2) &:= \{(y_1,y_2) \in \reals :  0< y_1 < w, y-l < y_2 < y \}.
\end{align*}
% }

With this notation, the $i$th vehicle in lane $k$ at time $t$ is represented by $R(x_{i,k}(t), k)$.
% \footnote{Since we represent the vehicle geometries as open sets, a ``train of vehicles,'' \ie, a sequence of vehicles where each consecutive pair is a car length apart, does not violate the geometric constraints and considered ``safe" as defined below.}
%
% {\color{blue}
Let $I_k(t)$ denote the indices for all vehicles that are inside the control region at time $t$ and in lane $k$. For instance, if the 3rd, 4th, and 5th vehicles are in lane 1 at time $t$, then we have $I_1(t) = \{3,4,5\}$. Clearly, $I_k(t)$ is a set of consecutive natural numbers for all $t \ge 0$ and all $k \in \{1,2\}$.
% }
Finally, we define safety as follows. 

\begin{definition}{\bf (Safety)}  \label{definition:safety}
The control region is said to be safe at time $t \in \reals_{\ge 0}$, if there are no pairwise collisions among the vehicles, i.e., 
$
R(x_{i,k}(t), k) \cap R(x_{j,l}(t), l) = \emptyset
$
for all $i \in I_k(t)$, all $j \in I_l(t)$ and all $k,l\in \{1,2\}$ satisfying $(i,k) \not= (j,l)$.
\end{definition}

For each vehicle, we define the {\em delay incurred in traveling through the system}, or {\em delay} for short, as follows. Recall that $t_{i,k}$ is the time that the $i$th vehicle enters lane $k$. Let $T_{i,k}$ denote the time that the same vehicle completely exits the intersection region, that is, the rear bumper of the vehicle is outside the intersection region. More precisely, $T_{i,k}$ is the time instance for which $x_{i,k}(T_{i,k}) = l + w$.
Then, the time it takes for the same vehicle to travel through the system is $T_{i,k} - t_{i,k}$. We define the delay of a vehicle as the difference between this time and $(L + l + w) v_m$, which is the time it would have taken the vehicle to travel through the system had the control region been completely empty, \ie, at full velocity $v_m$.
\begin{definition}{\bf (Delay)}  \label{definition:delay}
The delay for vehicle $(i,k)$ is defined as
$$
D_{i,k} := (T_{i,k} - t_{i,k}) - \frac{L + l + w}{v_m}.
$$
\end{definition}

In this paper, we are interested in developing coordination algorithms that govern this
\skcomment{vehicle-to-infrastructure (V2I)}{We often  put the abbreviations and shorthand in parenthesis, after the longer version of the word (not before).}
controller such that both performance (\eg, in terms of delay) and safety (avoiding collisions between vehicles) are simultaneously guaranteed.
We will discuss such algorithms in the next section.
At this point, we tacitly leave out an important part of the problem formulation: we do not specify anything about the arrival process, \ie, the distribution of the stochastic process $\{t_{i,k} : i \in \naturals\}$. We will discuss the arrival process as a part of our model and assumptions in Section~\ref{section:analysis}.

%%%%%%%%%%%%%%%%%%%%%%%%%%%%%%%%%%%%%%%%%%%%%%%%%%%%%%%%%%%%%%%%%%%%%%%%%%%%%%%%%%%%%%%%
\section{Control Policy}\label{section:control}

In this section, we propose a policy for the intersection coordination problem we introduced in Section~\ref{section:problem}. Our policy is based on the polling policies that appear in the polling systems literature~\cite{Takagi:1998vi}. Before describing our policy, we briefly introduce queueing and polling systems and their analyses, in order to provide a framework for scheduling vehicles in an efficient manner. This framework is rich enough that we can apply it to our intersection management problem as defined in Section~\ref{section:problem}, providing us our performance bounds. %, while also simple enough to simulate in a straightforward manner.

\subsection{Background on Queueing and Polling Systems}\label{subsection:queue}

% Loosely speaking, queueing theory studies the behavior of wait lines and queues. 
% One of the widely studied mathematical models can be described as follows. 
A widely-studied queueing model is the following. 
Suppose customers arrive into a single queue in which their requests are processed by a server, one by one, in the order that they arrive. Let $t_i$ denote the arrival time of the $i$th customer. Let $s_i$ denote the time it takes for this customer to be serviced by the server, once it gets the server's attention. Often, both $\{t_i : i \in \naturals \}$ and $\{s_i : i \in \naturals\}$ are stochastic processes. Given the statistics of arrival times and service times, the queueing theory literature asks the following types of questions: What is the average wait time for a typical customer? What is the time average queue length? Queueing theory aims to answer these kinds of questions for a variety of arrival and servicing models. This literature has found profound applications in a number of domains~\cite{Gross:1998vr}, including urban traffic~\cite{Larson:1999vs}.

Polling systems are extensions of queueing systems. In a polling system, the server services multiple sets of customers arriving in different queues. 
The server may choose to serve the first customer of any queue. However, the server must pay a set-up cost, also called the server switchover time, each time it serves customers coming from a queue that is different than the queue of the previous customer. 
This situation may arise, for example, in manufacturing machines that need to change their tooling each time they switch to processing a different kind of manufacturing good, where changing the tooling may require time. Another important application domain is data networks. For instance, consider a network switch that is serving packets arriving from different channels, and switching channels requires running a piece of setup software that takes a non-negligible amount of time to execute~\cite{Boon:2011dw}.

Let us describe a widely-studied polling system example that extends the queueing system example we discussed earlier. 
Consider a polling system with two queues and one server. Let $t_{i,k}$ denote the time that customer $i$ arrives in queue $k$, and let $s_{i,k}$ denote the amount of time it takes the server to service this customer, where $k\in \{1,2\}$. Both $\{t_{i,k} : i \in \naturals\}$ and $\{s_{i,k} : i \in \naturals\}$ are stochastic processes, for $k=1,2$.
The customers are serviced by the server one at a time. % Each time the server finishes serving a customer, it must decide which queue to service next. 
Switching from one queue to the other requires a switchover time, say $r$, which is also a random variable. That is, if the server last serviced a customer from queue 1 and it decides to service a customer from queue 2 next, then a switchover time of $r$ time units is incurred in addition to the service time of the customers; no switchover time is required, if the server decides to continue servicing customers from queue 1.\footnote{Strictly speaking, we must specify how the switchover time is incurred when all queues are empty. For example, in the wait-and-see system (see~\cite{Takagi:1998vi}), if all queues are empty, then the server waits in the queue it last serviced. If a customer from this queue arrives, then no switchover time is incurred. However, if a customer from a different queue arrives, then the server incurs the switchover time and switches to the other queue.}%In other systems, no set-up cost incurs when the server switches from one queue to the other, when all queues are empty~\cite{??}.
%
%
% Each time the server finishes serving a customer, the server must decide which queue to serve next. 

% \skcomment{
Central to the operation of a polling system is a controller that decides which queue to serve next. This decision is a determining factor for the performance of the system, for instance, in terms of the average delay and queue length. 
In both cases, the control policy must trade off the following. On the one hand, it should switch between different queues often enough, in order to ensure customers in one queue do not wait too long for the server to process customers in a different queue; on the other hand, the server should not switch too often in order not to incur lengthy switchover times.
% 
% After each service completion, the server must make a decision: continue servicing customers from the same queue or switch to the next queue. 
%
Ultimately, customers are clustered into chunks to be processed one right after another.
%
% If these clusters consist of just a few vehicles, then the server will spend most of its time switching. If the clusters consist of too many vehicles, then vehicles on one lane will have to wait for the large cluster coming from the other lane to pass through the intersection. Hence, both too small and too large cluster sizes incur substantial delay. 
Best polling policies are those that achieve the best cluster sizes in an online manner.

The polling systems literature analyzes the performance of various polling policies~\cite{Takagi:1998vi}. Popular examples include the {\em exhaustive policy}, {\em gated policy}, and the {\em $k$-limited policy}. 
In the {\em exhaustive policy}, the server continues to service customers from the same queue until that queue is empty. 
In the {\em gated policy},  right after the server switches over to a new queue, it takes a snapshot of this queue; and the server services only those customers in the snapshot, but not the customers that arrive after the snapshot is taken.
%
% After all the customers in the snapshot are serviced, the server switches to the other queue. 
In the {\em $k$-limited policy}, the server services the customers in the same queue until either $k$ customers are serviced or all customers in the queue are serviced, whichever comes first. Once these customers are serviced, the server switches to the next queue. 
These policies can be formalized easily. For details, we refer the reader, for example, to the seminal paper by Takagi~\cite{Takagi:1998vi}.

% Polling systems literature has been active for several years~\cite{Boon:2011dw}. 
Most of the existing literature focuses on the case when the interarrival times of the customers have independent identical memoryless distribution, \ie, the arrival times process $\{t_{i,k} : i \in \naturals \}$ is a Poisson process for all queues $k \in \{1,2\}$. For example, the necessary and sufficient conditions for stability, the expected delay, and the steady state queue length are known for all of the three policies considered above~\cite{Takagi:1998vi}. In fact, these values can be computed when the number of queues is more than two and the intensity of arrival times is different across queues~\cite{Takagi:1998vi}. 
For more general arrival processes, performance bounds and stability results have been derived for the exhaustive and gated policies~\cite{altman1996bounds,altman1993performance,cruz1991calculus}.
Mean waiting times for exhaustive policy is always better than gated policy in both a symmetric polling system and in a heavily unbalanced polling system~\cite{rubin1983message}.

Unfortunately, it is analytically challenging to find optimal policies~\cite{duenyas1996heuristic,Takagi:1998vi}. However, approximation results for limiting cases are available. For example, on the one hand, the exhaustive policy is known to induce lower delay when compared to the gated polling system in a light-load regime, \ie, when the intensities of the arrival times are close to zero for both queues~\cite{Takagi:1998vi}; %on the other hand, the gated policy is known to be better than the exhaustive one in the heavy-load regime, \ie, when the load is close to instability~\cite{Vishnevskii:2006dd,Takagi:1998vi}.
Furthermore, for the $G/G/1-$type polling system (general arrival process, general service time distributions, single server, no stochastic independence assumptions), the exhaustive policy was shown to be the optimal polling policy in the sense of minimizing {\it the total amount of unfinished work in the system}, which is the weighted sum (based on load) of the mean waiting times~\cite{levy1990dominance,boxma1993efficient,Vishnevskii:2006dd}. In a fully symmetric system, \ie, mean service times and arrival rates are the same across all queues, this total workload is exactly the mean customer delay. In a system in which mean service times are the same, but the arrival rates are allowed to differ across queues, the total workload is exactly the mean waited customer delay.
Moreover, optimal polling policies have been investigated for the $GI/GI/1-$type polling system (general arrival process, general service time distribution, single server, all stochastic processes are mutually independent). The exhaustive policy is optimal in the sense of minimizing the number of customers in the system \cite{liu1992optimal}.
For more general optimization criteria, such as fairness, discounting, including switchtime cost, similar results are available for the analytically-more-tractable $M/GI/1$-type polling system~\cite{hofri1987optimal,duenyas1996heuristic}. An extensive survey of optimal policies for various polling systems is given by Vishnevskii~\cite{Vishnevskii:2006dd}.

\subsection{Simulating Polling Systems Behavior}\label{subsection:simulateQueue}

The coordination algorithms we present below heavily rely on the polling systems policies. In particular, in a number of places we simulate the behavior of a polling policy forward in time. We devote this section to formalizing this procedure.

Consider a polling system with two queues and deterministic service and set-up times. That is, the customers arrive at times $\{t_{i,k} : i \in \naturals\}$, where $k \in \{1,2\}$; their service requires $s$ time units, and the server requires $r$ time units to switch queues. 
The arrival times of the customers are stochastic processes. For this section, consider the case when the variables $s$ and $r$ are fixed (\ie, deterministic and same) for all the customers. 
The coordination algorithm we present in Section~\ref{subsection:policy} relies on simulating the behavior of a polling system that has fixed service time $s$ and fixed set-up time $r$. 

For notational convenience, we represent a polling system with the symbol ${\cal P}$. 
The algorithms we describe below interact with a polling system through two procedures. The procedure ${\cal P}.{\tt AddToQueue}(k)$ adds one customer to queue $k$. 
The ${\cal P}.{\tt Simulate}()$ procedure simulates the behavior of the polling system, assuming no additional customers arrive. More precisely, the ${\cal P}.{\tt Simulate}()$ returns two sequences of time instances, namely ${\cal T}_1 = (\tau_{i_1,1}, \tau_{i_2,1}, \dots, \tau_{i_{n_1},1})$ and ${\cal T}_2 = (\tau_{j_1,2}, \tau_{j_2,2}, \dots,\tau_{j_{n_2},2} )$, where $\tau_{i,k}$ is the time that the server is scheduled to begin servicing the $i$th customer in queue $k$, assuming no additional customers arrive. 
We call $\tau_{i,k}$ the {\it schedule time} of vehicle $i$ in lane $k$.
Note that the behavior of the polling system ${\cal P}$ depends on its polling policy (\eg, exhaustive, gated, $k$-limited, {\em et cetera}), the service time $s$, the set-up time $r$, the number of customers in the two queues, and the time the server began servicing the current customer (if the server is currently serving any customers) or the time that the set-up operation started (if the server is currently switching to the next queue).
Given the values of all these variables, the ${\cal P}.{\tt Simulate}()$ procedure is well-defined, in the sense that the sets ${\cal T}_1$ and ${\cal T}_2$ are uniquely determined, if the service time $s$ and the set-up time $r$ are fixed.
Given ${\cal T}_1$ and ${\cal T}_2$, we define the corresponding {\it service order} of vehicles as
\begin{gather*}
\Big( (i_1,k_1), \ldots, (i_n,k_n) \Big),
\end{gather*}
such that the schedule time is strictly increasing in this sequence, \ie, for all $m > 1$,
\begin{align*}
	\tau_{i_{m-1},k_{m-1}}  < \tau_{i_m,k_m}.
\end{align*}
See Figure~\ref{figure:schedule} for an example of computing schedule times and service order under the exhaustive policy.
\begin{figure}[b]
	\centerline{\includegraphics[clip=true, trim=2.1in 6.82in 4.1in 3.71in, width=0.5\textwidth]{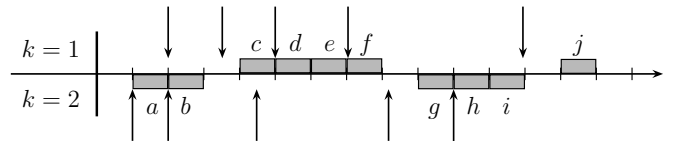}}
	\caption{Arrival times for vehicles in lanes 1 and 2 are depicted with the vertical arrows. Newly arrived vehicle $j$ is in lane 1. Shaded blocks show the result of simulating the polling system under the exhaustive policy. Service time $s$ and switch time $r$ are assumed to be $s = r =1$. ${\cal T}_1 = (4,5,6,7,13)$. ${\cal T}_2 = (1,2,9,10,11)$. Schedule order for this instance is $\Big( (a,2), (b,2) , (c,1) , (d,1), (e,1), (f,1) , (g,2), (h,2), (i,2) , (j,1) \Big)$.} \label{figure:schedule}
\end{figure}

\subsection{The Intersection Coordination Algorithm} \label{subsection:policy}
We propose an event-triggered coordination algorithm that plans the motions of all vehicles that enter the control region as described in Section~\ref{section:problem}. More precisely, the control algorithm computes trajectory $x_{i,k}$ for each vehicle $(i,k)$ in the control region, such that trajectory $x_{i,k}$ is dynamically feasible and no two vehicles collide. The coordination algorithm is event-triggered in the sense that each trajectory $x_{i,k}$ is updated whenever a new vehicle arrives at the control region.

The core procedure embedded in this coordination algorithm is presented in Algorithm~\ref{algorithm:main}.
Each time a new customer arrives in queue $k$, this procedure is triggered. The procedure computes and returns the trajectories $x_{i,1}$ for all $i \in \{i_1,i_2, \dots, i_{n_1}\}$ and $x_{j,2}$ and all $j \in \{j_1,j_2,\dots, j_{n_2}\}$, each time a new vehicle arrives at the control region. 

Before presenting the coordination algorithm, let us introduce the following motion planning procedure. 
The ${\tt MotionSynthesize}$ procedure generates a trajectory for each vehicle given the time this vehicle must reach the intersection region and the trajectory of the vehicle directly in front of it. Furthermore, the trajectory is designed such that the vehicle stays as closely as possible to the intersection region at all times. 
Denote $t_0'$ as the arrival time of this vehicle, and $t_f'$ as the time to reach the intersection region. Suppose the trajectory of the vehicle in front is denoted by $y : [t_0,t_f] \to \reals$, where $t_0$ is its arrival time in the control region and $t_f$ is the time it is scheduled to reach the intersection region.
Then, the ${\tt MotionSynthesize}$ procedure computes a new trajectory such that the two vehicles do not collide and the new trajectory reaches the intersection at time $t_f'$ with full speed $v_m$, \ie, 
\begin{align*}
&\hspace{-0.5in}{\tt MotionSynthesize}(z_{i,k}(t_0'),t_0',t_f',y) := \hspace{1in}\\[1ex]
\argmin_{x \colon [t_0',t_f'] \to \reals} &\quad \int_{t_0}^{t_f} \vert x(t) \vert dt \\
\mathrm{subject}\,\,\mathrm{to} &\quad \ddot{x}(t) = u(t), \mbox{ for all } t \in [t_0', t_f'];\\
&\quad 0 \le \dot{x}(t) \le v_m, \mbox{ for all } t\in [t_0', t_f'];  \\
& \quad \vert u(t) \vert \le a_m, \mbox{ for all } t\in [t_0', t_f'];  \\
& \quad \vert x(t) - y(t) \vert \ge l, \mbox{ for all }t \in [t_0', t_f]; \\
& \quad x(t_0') = x_{i,k}(t_0');  \quad \dot{x}(t_0') = \dot{x}_{i,k}(t_0');\\
& \quad x(t_f') = 0; \quad \dot{x}(t_f') = v_m,
\end{align*}
where $z_{i,k}(t_0')$ is the state of vehicle $(i,k)$ at time $t_0'$, \ie, $z_{i,k}(t_0') := \big(x_{i,k}(t_0'),\dot{x}_{i,k}(t_0')\big)$, $v_m$ and $a_m$ are the maximum velocity and the maximum acceleration of the vehicles, $l$ is the length of the vehicle, and $L$ is the control region length.

Each time a new customer arrives in the control region from lane $k$, Algorithm~\ref{algorithm:main} is triggered. First, the algorithm executes the ${\tt AddToQueue}(k)$ procedure, which adding into queue $k$ of polling system ${\cal P}$ one customer that represents the newly arrived customer (Line~\ref{line:add_to_queue}). Then, the algorithm simulates the polling system forward in time assuming no new customers will arrive, with fixed service time $s = l/v_m$ and fixed set-up time $r = w/v_m$ (Line~\ref{line:simulate}). Note that the service time $s$ is precisely the amount of time from when a vehicle's front bumper enters the intersection region to when its rear bumper leaves the control region assuming the vehicle is traveling at $v_m$. The set-up time on the other hand is the amount of time from when the vehicle's rear bumper leaves the control region to when its rear bumper exits the intersection region. The result of the simulation is a sequence of times, namely ${\cal T}_1$ and ${\cal T}_2$.
Let us denote the $i$th element of ${\cal T}_k$ by $\tau_{i,k}$.
Finally, the algorithm generates a trajectory using the ${\tt MotionSynthesize}$ procedure for each vehicle such that the $i$th vehicle in lane $k$ is scheduled to reach the intersection at time $\tau_{i,k} + L/v_m$ while avoiding collision with the vehicle in front (Lines~\ref{line:motion_synthesize_start}-\ref{line:motion_synthesize_end}). To be precise, define $x_{0,k}(t) = l$ for all $t \in \reals_{\ge 0}$ and $k \in \{1,2\}$; then, ${\tt MotionSynthesize}$ computes $x_{i,k}$ for all $i \neq 0$.\footnote{Since the vehicle closest to the intersection region in each lane does not have a vehicle directly in front, the safety constraint in the ${\tt MotionSynthesize}$ procedure, \ie, $|x(t)-y(t)| \geq l$, can be ignored.}

%\IncMargin{0.02in}
\begin{algorithm}[t] \small 
${\cal P}.{\tt AddToQueue}(k)$\; \label{line:add_to_queue}
$({\cal T}_1, {\cal T}_2) \leftarrow {\cal P}.{\tt Simulate}()$\; \label{line:simulate}
\For{$k = 1,2$}{\label{line:motion_synthesize_start}
	\For{$i = 1,2,\dots,n_k$}{
		$\tau_{i,k} \leftarrow {\cal T}_k(i)$\;
		$x_{i,k} \leftarrow {\tt MotionSynthesize}(x_{i-1,k}, \tau_{i,k}+  \frac{L}{v_m} )$\; \label{line:motion_synthesize}
	}
}\label{line:motion_synthesize_end}
\caption{\small ${\tt NewArrival}(k)$ procedure is triggered when a new vehicle arrives in the control region from lane $k$.}\label{algorithm:main}
\end{algorithm}
We emphasize that the coordination algorithm depends on the polling policy that governs the polling system ${\cal P}$. In the next section, we show that safety is guaranteed for a wide range of polling policies and that performance can be bounded with respect to the performance of the polling system ${\cal P}$, which in turn depends on the said polling policy.

\begin{figure}
\centerline{\includegraphics[width=0.5\textwidth]{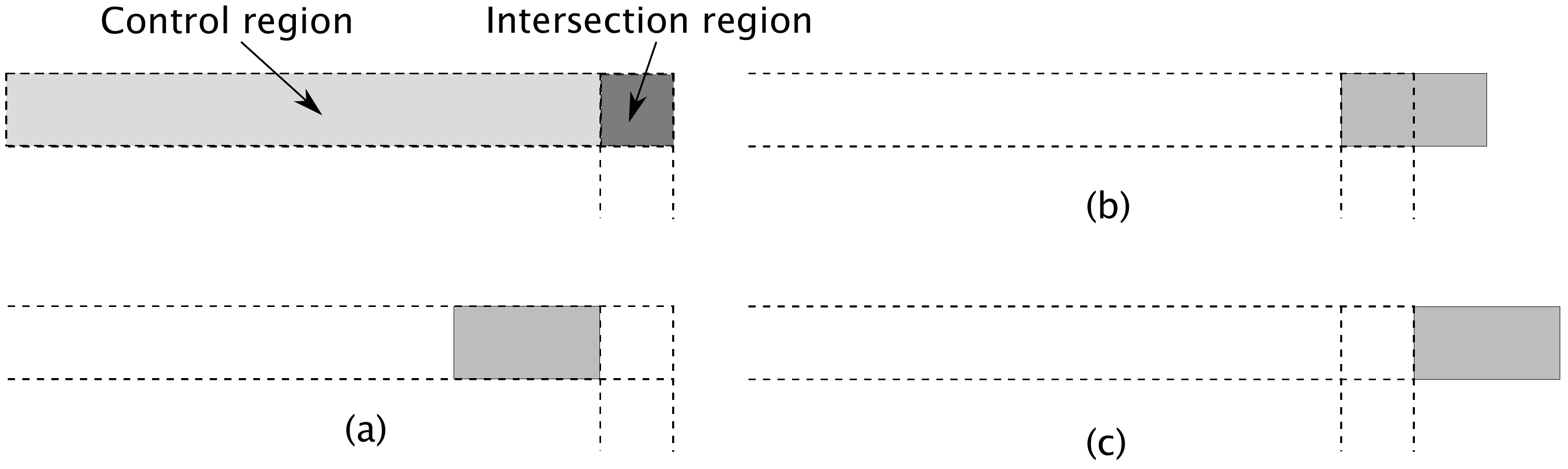}}
\caption{We show the three time instances that are described in the text. In (a), the front bumper of the vehicle is entering the intersection region. In (b), the rear bumper is leaving the control region and entering the intersection region. In (c), the rear bumper is leaving the intersection region. In the figure above, the control region (shaded light grey) and the intersection region (shaded in dark grey) are shown.} 
\label{figure:intersection_crossing}
\end{figure}

%%%%%%%%%%%%%%%%%%%%%%%%%%%%%%%%%%%%%%%%%%%%%%%%%%%%%%%%%%%%%%%%%%%%%%%%%%%%%%%%%%%%%%
\section{ANALYSIS} \label{section:analysis}

In this section, we show that the coordination algorithm presented in Section~\ref{subsection:policy} has two important properties. Firstly, the algorithm is safe in the sense that no collisions occur in the control region. (See Definition~\ref{definition:safety}.) Second, the coordination algorithm provides good performance. Recall that the coordination algorithm simulates a polling system. We show that the additional delay incurred in traversing the control region is no more than the delay that would incur in servicing a customer in the corresponding polling system.%, independently of the polling policy that governs the system.

These guarantees hold under certain assumptions.
In what follows, we present a class of stochastic process models for vehicle arrival times and two important assumptions. Subsequently, we state and prove our theoretical results. %, including the key lemma and its corollaries. 

\subsection{Vehicle arrival time model} \label{subsection:arrivalModel}

First, we model the arrival times $\{t_{i,k} : i \in \naturals\}$ as a hard-core stochastic point process on the non-negative real line~\cite{Stoyan:1995ve} such that the $t_{i+1,k} - t_{i,k} \ge l/v_m$ for all $i \in \naturals$ and $k \in \{1,2\}$.\footnote{Roughly speaking, hard-core point processes are those that ensure the distance between its points is lower bounded by a certain number. See~\cite{Stoyan:1995ve}.} The inequality guarantees that no two vehicles are in collision at the time of arrival. A widely studied hard-core stochastic point process is the Mat\'ern process~\cite{Teichmann:2012vj,Baccelli:2012jma,Stoyan:1985vwa}. First, a marked Poisson point process is generated on the non-negative real line, where each point $t_{i}$ is marked with an independent uniformly random real number between zero and one denoted by $m(t_i)$.
Subsequently, any point $t_i$ that satisfies the following is deleted: there exists another $t_j$ such that the distance between $t_i$ and $t_j$ is no more than $d$ and $m(t_j) > m(t_i)$. Clearly, no two points have distance less than $d$. Hence, the Mat\'ern process is a hard-core point process. 
{\color{red} }

In addition, we model the effect of overcrowding in the control region to rule out cases that trivially contradict safety. More precisely, we tacitly assume that, if vehicle $i$ arrives in lane $k$ at time $t_{i,k}$ and at this time there is no input that saves it from hitting the car in front, then the same vehicle chooses not to enter the intersection.\footnote{For practical purposes, we envision a system with a fork right at the entrance of the control region, and when it is clearly unsafe to enter the control region (\ie, it is impossible to avoid a collision), then the same vehicle takes the exit at the fork and does {\em not} enter the control region.} 
Let us emphasize that this assumption does not guarantee safety immediately. That there exists some path that does not lead to a collision does not immediately imply that there is a path that is both safe and provides performance guarantees. In other words, this assumption does not render the problem trivial. 
Without such an assumption, however, the system is (trivially) unsafe. 
We emphasize that it is impossible to guarantee safety without this assumption. 
That is, without it, one is faced with trivial cases where any coordination algorithm is unsafe with a non-zero probability. However, we show through computational experiments that the chance that a vehicle must choose not to enter the intersection is small, when the road length $L$ is very large or the intensity of the arrival times is very small (\ie, the light load case). We conjecture that our analysis applies in these limiting cases, even without this assumption. 

\newpage

\subsection{Assumptions on polling policies and road length}
%Next, we discuss two assumptions. 
Our first assumption is rather technical. 
Recall that the behavior of the coordination algorithm depends on the policy of the polling system it simulates. We introduce a regularity assumption on this policy. Roughly speaking, we assume that the policy does {\em not} respond to the arrival of a customer in queue $1$ by favoring the servicing of customers in queue $2$, and {\em vice versa}. The assumption is formalized as follows.
\begin{assumption}[Regular polling policies] \label{assumption:regular_policy}
%Let ${\cal P}$ denote the polling system that the coordination algorithm is using. The polling policy of ${\cal P}$ satisfies the following. At any given time instance, suppose we simulate the polling system to get $({\cal T}_1,{\cal T}_2) \leftarrow {\cal P}.{\tt Simulate}()$. Suppose, right at that time instance we first add one customer to queue 1 and simulate the queue afterwards, i.e., ${\cal P}.{\tt AddToQueue}(1)$; $({\cal T}_1',{\cal T}_2') \leftarrow {\cal P}.{\tt Simulate}()$ to obtain $({\cal T}_1', {\cal T}_2')$. Then, policy does not schedule customers in queue 2 to an earlier time, i.e., ${\cal T}_2'(j) \ge {\cal T}_2(j)$ for all $j \in \{j_1, j_2, \dots, j_{n_2}\}$, and moreover, the policy does not change the scheduled times of the customers in queue 1, \ie, ${\cal T}_1'(i) = {\cal T}_1(i)$ for all $i \in \{i_1, i_2, \dots, i_{n_1-1}\}$. The same result also holds for queue 1, whenever a new customer is added to queue 2. 
Suppose customer $A$ arrives at time $t_0'$ in queue 1.
Let $\Big( (i_1,k_1), \ldots, (i_n,k_n) \Big)$ be the service order of customers computed at the previous call to Algorithm 1 before the arrival of customer $A$.
At time $t_0'$, we add customer $A$ to queue 1 and afterwards simulate the polling system according to ${\cal P}$ , i.e., ${\cal P}.{\tt AddToQueue}(1)$; $({\cal T}_1',{\cal T}_2') \leftarrow {\cal P}.{\tt Simulate}()$ to obtain $({\cal T}_1', {\cal T}_2')$.
%which re-simulates the polling system according to polling policy ${\cal P}$ to obtain the updated schedule times of the customers.
%Let $(i_1,\ldots,i_m,A,j_1,\ldots,j_n)$ be the order of service of customers computed at time $t_0'$, where $i_1$ denotes the vehicle currently being serviced by the server.
Then, we say that the polling policy ${\cal P}$ is {\em regular} if customer $A$ is simply inserted into the service order of customers, without otherwise disturbing the service order of customers, \ie, the updated service order is $\Big( (i_l,k_l), \ldots, (i_{m-1},k_{m-1}), (A,1), (i_m,k_m), \ldots (i_n,k_n) \Big)$, where $1 \leq l \leq m \leq n$.
The same result also holds for queue 1, whenever a new customer is added to queue 2. 
\end{assumption}

Let us note that this assumption is satisfied by many widely-studied polling policies, as stated in the proposition below. 
\vspace{-0.1in}
\begin{proposition}
The exhaustive, gated, and the $k$-limited polling policies (see Subsection~\ref{subsection:queue}) satisfy Assumption~\ref{assumption:regular_policy}.
\end{proposition}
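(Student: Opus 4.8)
The plan is to verify Assumption~\ref{assumption:regular_policy} directly for each of the three policies by a case analysis on where the newly arrived customer $A$ (say in queue~1) falls relative to the current server state and the already-computed service order. The key observation to set up first is that, because service and set-up times are deterministic and fixed, the output of ${\cal P}.{\tt Simulate}()$ is a deterministic function of (i) which queue the server is currently attending, (ii) how far along it is in the current service or switchover, and (iii) the number of customers currently waiting in each queue. Adding one customer to queue~1 changes only the count in queue~1 by one; the claim is that this perturbation propagates through the simulation as a single insertion into the service order.

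For the \textbf{exhaustive policy}: the server alternates maximal ``runs'' on a queue until it empties. When $A$ is appended to queue~1, consider the first moment in the simulated future at which the server is (or becomes) attending queue~1. If the server is currently on queue~1, $A$ simply extends the current run: it is serviced immediately after the last previously-scheduled queue-1 customer of this run, and since queue~1 is now nonempty at that instant where it used to be empty, the run continues through $A$ and then the server switches — the rest of the schedule (the queue-2 run, the next queue-1 run, etc.) is shifted later in time by exactly one service slot but its \emph{order} is unchanged. If the server is currently on queue~2, then $A$ is appended to the next queue-1 run, again a pure insertion. In every sub-case the service order is $\big((i_1,k_1),\ldots,(i_{m-1},k_{m-1}),(A,1),(i_m,k_m),\ldots,(i_n,k_n)\big)$ for the appropriate $m$, and no pair gets reordered; in particular $A$ never causes a queue-2 customer to be pulled ahead of a queue-1 customer or vice versa. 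For the \textbf{gated policy}: the argument is the same except one must also check the gating snapshot. If the snapshot of queue~1 for the current or imminent visit has already been taken (server already visiting queue~1) then $A$ is \emph{not} in that gate and joins the next queue-1 gate; if the snapshot has not yet been taken, $A$ joins it. Either way $A$ lands at the end of exactly one queue-1 visit's batch, hence a single insertion. For the \textbf{$k$-limited policy}: $A$ is placed at the end of the first queue-1 visit whose batch is not yet full (has fewer than $k$ customers at the time that visit is simulated); if all current queue-1 visits are already full, a new queue-1 visit is appended in the natural alternation and $A$ starts it. Because $k$-limited visits are processed in FIFO order within each queue and the alternation pattern between queues is otherwise preserved, this is again a pure insertion.

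The common formal core, which I would state once and reuse, is: in all three policies the service order restricted to queue~$k$ is exactly the FIFO order of queue-$k$ arrivals, and the interleaving of the two queues' customers is determined by the ``visit structure'' (sequence of maximal runs / gated batches / $k$-limited batches), which for a new queue-1 arrival is modified only by enlarging one queue-1 batch or appending one queue-1 batch — never by shrinking, splitting, or reordering any queue-2 batch. Since $A$ is a queue-1 customer inserted into a queue-1 batch at its FIFO position, and no queue-2 customer changes its relative position with respect to any queue-1 customer, the updated order has the claimed form with $1 \le l \le m \le n$ (here $l$ marks the start of the batch $A$ joins and $m$ its position within). The symmetric statement for a queue-2 arrival is identical by relabeling.

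The main obstacle is not conceptual but bookkeeping: one must be careful with the boundary cases baked into the footnote about the wait-and-see convention — namely, what happens when a queue is empty, when the server is mid-switchover, or when $A$ arrives exactly at an instant where the server would otherwise switch away from queue~1. In the wait-and-see system, an arrival to the queue the idle server is parked at incurs no switchover, which is precisely why $A$ extends the current visit rather than forcing a reshuffle; getting this right is what makes the exhaustive and gated cases go through cleanly. I would handle these by explicitly enumerating the server-state cases (serving queue~1 / serving queue~2 / switching $1\!\to\!2$ / switching $2\!\to\!1$ / idle at queue~1 / idle at queue~2) and checking the insertion claim in each, which is routine once the visit-structure viewpoint above is in place.
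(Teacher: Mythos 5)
Your proof is correct and follows essentially the same direct case-by-case verification the paper uses, namely conditioning on the server's state at the arrival instant and tracking how the new customer slots into the simulated visit order. You are simply more thorough than the paper's terse two-case sketch for the exhaustive policy alone: you explicitly work out the gated snapshot, the $k$-limited batch placement, and the wait-and-see idle/switchover boundary cases, all of which the paper dispatches with ``the same proof technique.''
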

\vspace{-0.1in}
\begin{proof} (Sketch) This result can be verified easily for each of the three policies. We sketch the proof only for the exhaustive policy. Without loss of generality, suppose a new customer arrives in queue 1. First, suppose the server is currently servicing a customer from queue 1. This would cause the queue 2 customers to be serviced after the newly arrived customer, while not affecting the service order of customers in queue 1. Second, suppose the server is servicing a customer from queue 2 at the arrival time of the new customer. Then, the the newly arrived customer is simply inserted at the end of the previous service order, not affecting the schedule order of the customers in the polling system. Hence, Assumption~\ref{assumption:regular_policy} holds for the exhaustive policy. 
Similarly, it can be shown that Assumption~\ref{assumption:regular_policy} is satisfied by gated and $k$-limited policies, merely by utilizing the same proof technique.
\end{proof}

Our second assumption is that the length of the control region is bounded from below by a certain distance $L^*$. Note that this minimum road length $L^*$ is a function of the dynamics of the vehicles, but not of the geometry of the vehicles.
\begin{assumption}[The length of the control region] \label{assumption:control_region_length}
The length $L$ of each of the control regions is bounded from below as follows:
$
L \,\,\ge\,\, 2 v_m^2 / a_m := L^*.
$
\end{assumption}
As we will show later, this assumption guarantees that the control region is long enough to allow the safe coordination of vehicles while guaranteeing good performance.

\subsection{Main theoretical results: Safety, performance and stability}

Our first two main theoretical results are the safety and performance guarantees, presented below in Theorems~\ref{theorem:safety} and \ref{theorem:performance}. These results are enabled by the following lemma. 

%\vspace{-0.15in}
\begin{lemma} \label{lemma:main}
Suppose Assumptions \ref{assumption:regular_policy} and \ref{assumption:control_region_length} hold. Then, each time a new vehicle arrives and  Algorithm~\ref{algorithm:main} is called, every call to the ${\tt MotionSynthesize}$ procedure (Line~\ref{line:motion_synthesize} of Algorithm~\ref{algorithm:main}) yields a feasible optimization problem.
\end{lemma}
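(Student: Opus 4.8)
I would prove the lemma by a double induction: an outer induction on the sequence of arrival events, and, at each event, an inner induction on the vehicle index $i$ within a lane; at each step the goal is merely to \emph{exhibit} one trajectory satisfying all constraints of the corresponding ${\tt MotionSynthesize}$ call, which is all that is claimed. The invariant carried between events is that every vehicle currently in the control region has an assigned trajectory that is ${\tt MotionSynthesize}$-feasible and, moreover, admits a continuation that never collides with its lane-predecessor (for the front-most vehicle the predecessor is the fictitious $x_{0,k}\equiv l$ and this part is vacuous).

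\textbf{What the assumptions provide.} Assumption~\ref{assumption:regular_policy} ensures that re-simulating the polling system after a new arrival only \emph{inserts} a customer into the service order. Hence: every pre-existing vehicle keeps the same lane-predecessor; within a lane the scheduled crossing times stay separated by at least one service time, i.e. $\tau_{i,k}+L/v_m \ge \tau_{i-1,k}+L/v_m + l/v_m$, which is exactly the time to cover the final length-$l$ segment behind the leader at speed $v_m$; and no scheduled crossing time is ever advanced, only deferred. Assumption~\ref{assumption:control_region_length} ($L\ge 2v_m^2/a_m$) ensures there is always room for a full stop-and-go: a vehicle at speed $v_m$ can decelerate to rest and re-accelerate to $v_m$ within a road segment of length $v_m^2/a_m\le L/2$. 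Together these give the key reachability fact: from an admissible state $(p,v)$ with $p\le 0$ and enough road ahead, a vehicle can reach $(0,v_m)$ at \emph{any} time that is not earlier than the minimum-time value, by inserting a ``decelerate--wait--re-accelerate'' segment that fits inside $[-L,0]$; the wait is nonnegative precisely because the new target is never earlier than the previous (feasible) one.

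\textbf{The inductive step.} At a new arrival in lane $k$ at time $t_0'$, process the lane's vehicles in increasing index order. For vehicle $i$, the leader's freshly constructed trajectory $x_{i-1,k}$ is feasible by the inner hypothesis; since the leader reaches $0$ at $t_f^{(i-1)}:=\tau_{i-1,k}+L/v_m$ and leaves the control region shortly after, vehicle $i$'s collision constraint is active only on $[t_0',t_f^{(i-1)}]$, where it is the one-sided bound $x_{i,k}(t)\le x_{i-1,k}(t)-l$. Vehicle $i$'s state at $t_0'$ lies on its old feasible trajectory (or is the entry state $(-L,v_m)$ if it just arrived), so it has nonpositive position and, re-timing the whole family of trajectories consistently with the updated schedule, it admits a collision-avoiding continuation relative to the rebuilt leader. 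Onto such a continuation I would splice a stop-and-go segment that tunes the crossing of $0$ to occur exactly at $t_f^{(i)}:=\tau_{i,k}+L/v_m$ at speed $v_m$: by the reachability fact this fits inside the control region, the wait is nonnegative, and the spacing $\tau_{i,k}\ge\tau_{i-1,k}+l/v_m$ leaves exactly the time needed for the last segment behind the leader. The velocity, acceleration and endpoint constraints are then verified directly, and the invariant is re-established.

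\textbf{Main obstacle.} The crux is to realize the schedule-tuning stop-and-go and the collision-avoidance \emph{simultaneously on the same stretch of road}. One must exclude the pathology of a vehicle that is already ``committed'' --- so close to the intersection at speed $v_m$ that it can no longer stall --- while being asked to wait longer; this is exactly what Assumption~\ref{assumption:regular_policy} rules out, since a regular policy never defers a customer that is imminent in the service order, and imminence forces the vehicle to be near (not far from) the intersection. Making this quantitative, and bounding the road consumed when a leader-induced slowdown and a schedule-induced stall must both be accommodated, is where the factor $2$ in $L^*=2v_m^2/a_m$ is spent; the remaining steps are routine double-integrator reachability calculations.
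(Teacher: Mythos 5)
Your high-level decomposition matches the paper's: split the vehicles into those scheduled before the new arrival $A$ (whose trajectories can simply be truncated, by Bellman optimality, because Assumption~\ref{assumption:regular_policy} forces their schedule times to be unchanged) and those scheduled after $A$ (whose trajectories must be deferred), and handle the latter by induction on the index, exhibiting a ``decelerate--wait--re-accelerate'' continuation. The observation that $L \ge 2v_m^2/a_m$ is spent on fitting such a stop-and-go into the control region is also the right budget. So the skeleton is sound.

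The genuine gap is in what you call the ``main obstacle.'' You need to exclude vehicles that are too committed to stall, and you argue by a chain ``committed $\Rightarrow$ near the intersection $\Rightarrow$ imminent in the service order $\Rightarrow$ not deferred by a regular policy.'' The middle implication is not true and is not what Assumption~\ref{assumption:regular_policy} gives you: a regular policy can and does defer a customer whose vehicle is physically close to the intersection (e.g.\ exhaustive policy serving $A$'s lane while the other lane's lead vehicle is already near). The paper instead proves the forward statement directly: if vehicle $i$ is scheduled \emph{after} $A$ in the updated order, then by the regular-policy structure of the service sequence (Proposition~\ref{proposition:taubt0}: $\tau_B + s \ge t_0'$, hence $\tau_i \ge t_0' + (\nu_i-1)s$) plus $L/v_m \ge 2v_m/a_m$, the time until vehicle $i$ was previously scheduled to reach the state $\mathbf{I}_i = (-(\nu_i-1)l, v_m)$ is at least $2v_m/a_m$; and a Pontryagin maximum-time argument (Proposition~\ref{proposition:minVelocity}) shows that any state from which $\mathbf{I}_i$ can still be reached but only in strictly less than $2v_m/a_m$ lies outside $\mathcal{F}_i$, forcing $z_i(t_0') \in \mathcal{F}_i$, i.e.\ uncommitted. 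Without this quantitative lemma your induction cannot start, because you have no lower bound on the slack available to the first deferred vehicle.

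Two further pieces are asserted but not established in your sketch. First, compatibility of vehicle $i$'s stop-and-go with the \emph{rebuilt} leader trajectory $x_{i-1,k}'$ (not the old $x_{i-1,k}$) is nontrivial precisely because the leader also slows down; the paper handles this via the explicit ``high-performance''/``safe'' trajectory sandwich and Propositions~\ref{proposition:boundedPaths}, \ref{proposition:boundedTimes}, \ref{proposition:sectionCmainprop}, split into three geometric cases depending on where $x_i$ and $x_{i-1}'$ leave their respective $\mathcal{F}$-sets. Second, ``feasible optimization problem'' in the lemma requires that the infimum in ${\tt MotionSynthesize}$ is attained, not just that the feasible set is non-empty; the paper proves attainment via Filippov's theorem (Lemma~\ref{lemma:well-defined}), which your proposal does not address.
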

%\vspace{-0.1in}

Before providing the proof of this lemma, let us point out two important corollaries, which are our main results.

First, the algorithm guarantees safety surely. In other words, it is guaranteed that no collisions occur as new customers arrive at the control region.
\vspace{-0.1in}
\begin{theorem}[Safety] \label{theorem:safety}
Suppose Assumptions \ref{assumption:regular_policy} and \ref{assumption:control_region_length} hold. Then, the control region is safe at all times $t \ge 0$ in the sense of Definition~\ref{definition:safety}.
\end{theorem}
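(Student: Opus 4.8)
The plan is to derive safety as a direct consequence of Lemma~\ref{lemma:main}. The key observation is that safety needs to be checked in two regimes: collisions between two vehicles in the same lane, and collisions between a vehicle in lane~1 and a vehicle in lane~2 (i.e., inside the intersection region). I would first argue that, whenever Algorithm~\ref{algorithm:main} is called and every \texttt{MotionSynthesize} call returns a feasible trajectory, each computed trajectory $x_{i,k}$ satisfies, by construction, the constraint $|x_{i,k}(t) - x_{i-1,k}(t)| \ge l$ for all $t$ up to the time $x_{i-1,k}$ reaches the intersection region, and $x_{i,k}$ travels at full speed $v_m$ afterward. Translating the scalar positions $x_{i,k}$ back into the rigid bodies $R(x_{i,k}(t),k)$, the gap constraint $|x_{i,k}(t)-x_{i-1,k}(t)|\ge l$ is exactly the statement that $R(x_{i-1,k}(t),k)$ and $R(x_{i,k}(t),k)$ have disjoint interiors for consecutive vehicles; transitivity of the ordering along the lane then rules out any same-lane collision among all $i\in I_k(t)$.

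Next I would handle cross-lane collisions, which can only occur inside the intersection region (the bodies $R(\cdot,1)$ and $R(\cdot,2)$ are disjoint whenever at least one of the two vehicles is outside the square $[0,w]\times[0,w]$, by the definitions of $R(y,1)$ and $R(y,2)$). Here the argument is temporal rather than spatial: I would show that the schedule times $\tau_{i,k}$ produced by ${\cal P}.\texttt{Simulate}()$ induce non-overlapping occupancy intervals of the intersection region. Since the polling system ${\cal P}$ is simulated with deterministic service time $s=l/v_m$ and set-up time $r=w/v_m$, the server occupies a ``slot'' $[\tau_{i,k}, \tau_{i,k}+s]$ for customer $(i,k)$ and these slots are pairwise disjoint (separated by at least $r$ whenever the lane changes), for any work-conserving polling policy. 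Because \texttt{MotionSynthesize} schedules vehicle $(i,k)$ to reach the intersection region at time $\tau_{i,k}+L/v_m$ with full speed $v_m$, and $s$ was chosen precisely as the time from front-bumper entry to rear-bumper exit of the control region while $r$ is the additional time to clear the intersection region, the vehicle occupies the intersection region exactly during a sub-interval of $[\tau_{i,k}+L/v_m,\ \tau_{i,k}+L/v_m+s+r]$; disjointness of the polling slots then gives disjointness of the intersection-occupancy intervals across lanes, hence no cross-lane collision.

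Finally I would close the induction over arrival events: the control region is empty (hence safe) before the first arrival, and Lemma~\ref{lemma:main} guarantees that at each arrival event Algorithm~\ref{algorithm:main} succeeds in producing a full set of feasible trajectories, so by the two claims above the control region remains safe on the whole time interval until the next arrival; stitching these intervals together yields safety for all $t\ge 0$. I expect the main obstacle to be the cross-lane (intersection-region) argument: one must carefully verify that the bookkeeping identifying $s=l/v_m$ and $r=w/v_m$ with the physical traversal times is exactly right (accounting for front vs.\ rear bumper and for the fact that a vehicle must arrive at the intersection region at full speed, as enforced by the terminal constraint $\dot{x}(t_f')=v_m$ in \texttt{MotionSynthesize}), and that the polling-slot disjointness holds for every \emph{regular} work-conserving policy and not merely for the three named examples. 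The same-lane part, by contrast, is essentially immediate once Lemma~\ref{lemma:main} is in hand, since the collision-avoidance constraint is built directly into the optimization defining \texttt{MotionSynthesize}.
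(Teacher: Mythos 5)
Your proposal is correct and matches the paper's implicit argument: the paper presents Theorem~\ref{theorem:safety} as an immediate corollary of Lemma~\ref{lemma:main} with no separate proof, and your decomposition into same-lane safety (from the ${\tt MotionSynthesize}$ constraint $|x-y|\ge l$ together with the speed bound $\dot x\le v_m$, propagated by transitivity along each lane) and cross-lane safety (from the non-overlapping intersection-occupancy intervals induced by the polling schedule with $s=l/v_m$ and $r=w/v_m$), closed by induction over arrival events, is exactly the reasoning needed to justify that corollary. The one place to tighten is the cross-lane disjointness step: vehicle $(i,k)$ occupies the intersection region on precisely $(\tau_{i,k}+L/v_m,\ \tau_{i,k}+L/v_m+s+r)$, so disjointness across lanes relies on the switch time $r$ separating consecutive slots of different lanes (which you do note parenthetically) rather than on disjointness of the bare service slots $[\tau_{i,k},\tau_{i,k}+s]$ alone.
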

\vspace{-0.1in}

Second, the delay experienced by the vehicles is bounded by the delay of the corresponding polling system.

\vspace{-0.1in}
\begin{theorem}[Performance] \label{theorem:performance}
Suppose Assumptions \ref{assumption:regular_policy} and \ref{assumption:control_region_length} hold. Recall that the delay incurred in transitioning through the control region for vehicle $i$ in lane $k$ is denoted by $D_{i,k}$. (See Section~\ref{section:problem}.) Let $W_{i,k}$ denote the wait time of the corresponding customer added to the polling system (see Algorithm~\ref{algorithm:main}, Line~\ref{line:add_to_queue}) when vehicle $i$ arrives in the control region from lane $k$. Then, we have the following:
$$
D_{i,k} \,\,\leq\,\, W_{i,k}, \qquad \mbox{almost surely.}
$$
\end{theorem}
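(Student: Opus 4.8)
The plan is to show that the extra delay $D_{i,k}$ incurred by vehicle $(i,k)$ equals, exactly, a scheduled service start that Algorithm~\ref{algorithm:main} produces for the corresponding customer, and then to bound that scheduled start by the true polling-system waiting time $W_{i,k}$ using a monotonicity property of regular polling policies. So the proof has an ``exact accounting'' part in the physical system and a ``monotonicity'' part in the polling system.

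For the accounting, fix vehicle $(i,k)$ and let $\hat\tau_{i,k}$ be the scheduled service start assigned to the corresponding customer in the last invocation of Algorithm~\ref{algorithm:main} whose ${\tt MotionSynthesize}$ output the vehicle is still tracking when its front bumper reaches position $0$; once the front bumper has entered the intersection region the vehicle is, in effect, in service and its trajectory is no longer recomputed, so $\hat\tau_{i,k}$ is well defined. By Lemma~\ref{lemma:main} that optimization problem is feasible, so the vehicle follows a trajectory with $x_{i,k}(\hat\tau_{i,k}+L/v_m)=0$ and $\dot x_{i,k}(\hat\tau_{i,k}+L/v_m)=v_m$, and since $\dot x_{i,k}\ge 0$ the front bumper reaches position $0$ for the first time exactly at $\hat\tau_{i,k}+L/v_m$. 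By Theorem~\ref{theorem:safety} the ensuing pass at full speed is collision-free, so the vehicle proceeds at $v_m$: its rear bumper leaves the control region after an additional $s=l/v_m$ and leaves the intersection region after a further $r=w/v_m$. Hence $T_{i,k}=\hat\tau_{i,k}+(L+l+w)/v_m$, and Definition~\ref{definition:delay} gives $D_{i,k}=\hat\tau_{i,k}-t_{i,k}$. (The choices $s=l/v_m$ and $r=w/v_m$ in Section~\ref{subsection:policy} are precisely what makes this identity exact.)

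For the monotonicity part I would first isolate the statement: in a polling system run under a regular policy (Assumption~\ref{assumption:regular_policy}), inserting one more customer never decreases the scheduled start of service of any customer already present. Given the assumption this is immediate — the new customer is merely spliced into the current service order, so customers ahead of the splice point are untouched and customers behind it are pushed back by at most the new customer's service time plus possibly one switchover, never pulled forward; the ``without otherwise disturbing the service order'' clause forbids the only mechanism (suppression of a switchover) by which an earlier customer could benefit. Now replay the arrivals occurring after the invocation that defines $\hat\tau_{i,k}$, one at a time, each invoking ${\tt AddToQueue}$ and ${\tt Simulate}$: the scheduled start of customer $(i,k)$ forms a nondecreasing chain whose final value is $\tau^{*}_{i,k}$, the scheduled start of this customer when ${\cal P}$ is driven by the full arrival sequence, because every customer that ${\cal P}$ serves before $(i,k)$ arrives before $\tau^{*}_{i,k}$ and is present once the last of them has been added, while every later arrival is spliced in after $(i,k)$. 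Therefore $\hat\tau_{i,k}\le\tau^{*}_{i,k}$, i.e.\ $\hat\tau_{i,k}-t_{i,k}\le W_{i,k}$, and combining with the accounting gives $D_{i,k}=\hat\tau_{i,k}-t_{i,k}\le W_{i,k}$; since every quantity here is a deterministic function of the arrival sample path, this holds almost surely.

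The main obstacle is the monotonicity part: one must argue the insertion claim carefully — in particular that no fresh arrival can, by changing which queue is visited first, advance an already-scheduled customer (this is exactly what Assumption~\ref{assumption:regular_policy} prevents), and that the replayed chain of ${\tt Simulate}$ outputs converges to the schedule of ${\cal P}$ under the full arrival sequence rather than to that of a truncation. A smaller but necessary point in the accounting is the claim that a vehicle's trajectory is frozen once its front bumper lies in the intersection region. (One can in fact push the accounting to $\hat\tau_{i,k}=\tau^{*}_{i,k}$, hence $D_{i,k}=W_{i,k}$, whenever $W_{i,k}\le L/v_m$; only the inequality is claimed, and it is all that is needed to transfer delay bounds from the polling-systems literature.)
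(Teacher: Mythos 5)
Your proof is correct and follows the route the paper implicitly takes: Theorem~\ref{theorem:performance} is stated as a corollary of Lemma~\ref{lemma:main}, and the missing derivation is exactly your delay accounting $D_{i,k}=\hat\tau_{i,k}-t_{i,k}$ combined with the observation that under a regular policy a new arrival never advances an already-present customer's scheduled start --- which is precisely what the paper isolates as Lemma~\ref{lemma:committed} in the appendix. The only stylistic looseness is that your replay chain establishes $\hat\tau_{i,k}\le \tau^*_{i,k}$, whereas Lemma~\ref{lemma:committed} together with the absence of any arrival between the last replan and the crossing instant $\hat\tau_{i,k}+L/v_m>\hat\tau_{i,k}$ gives the stronger $\hat\tau_{i,k}=\tau^*_{i,k}$ (hence $D_{i,k}=W_{i,k}$); since only the inequality is asserted, this is harmless.
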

%\vspace{-0.1in}

In essence, Theorem~\ref{theorem:performance} states that the differential constraints that bind the customers as they traverse the control region are irrelevant. The delay is no more than the delay that is incurred in the corresponding polling system. 
Hence, one can employ any polling policy in the coordination algorithm and guarantee the same performance provided by this policy. 
We emphasize that the expected wait time is known for many polling policies, in particular the exhaustive and gated policies, when the arrival process is Poisson~\cite{Takagi:1998vi}.

Our third main result is a stability criterion, presented in Theorem~\ref{theorem:stability}.
Let us define $A_{k}(t)$ as the total number of vehicles that have arrived in lane $k$ during the time interval $[0,t]$, according to the arrival process $\{t_{i,k} : i \in \naturals\}$. Next, let us define $\theta^{L,{\cal P}}_k (t)$ as the number of lane $k$ vehicles that are thinned due to the overcrowding assumption during the time interval $[0,t]$. Note that $\theta^{L,{\cal P}}_k(t)$ changes with the road length $L$ and the polling policy ${\cal P}$ chosen for the algorithm; however, when the polling policy is clear from context we simply write $\theta^L_k(t)$. 
We are ready to present the notion of stability.

\begin{definition}{\bf (Stability)}  \label{definition:stability}
We say that the coordination algorithm with polling policy ${\cal P}$ and arrival process $\{t_{i,k}\}$ is stable if 
$$
\lim_{L \to \infty} \lim_{t \to \infty} \EE[\frac{\theta^{L,{\cal P}}_1(t) + \theta^{L,{\cal P}}_2(t)}{t}] = 0.
$$
If the coordination algorithm is not stable, in the above sense, then we say that the algorithm is unstable.
\end{definition}
Note that the stability of the coordination algorithm is dependent only on the arrival process and the polling policy and does not depend on the length of the control region. Intuitively, this definition of stability means that as we make the control region longer and longer, the intensity of the thinned vehicles tends to 0.
Next, we define $S^{L,{\cal P}}_k(t) = A_k(t) - \theta^{L,{\cal P}}_k(t)$. This is the number of lane $k$ vehicles that enter the control region and are assigned a schedule time by the coordination algorithm during the time interval $[0,t]$.
Again, when clear from context, the superscript ${\cal P}$ denoting the polling policy will be dropped. 
We present the following lemma.

\begin{lemma}
For any control region length $L > 0$, and any polling policy ${\cal P}$, we have the following:
$$
\lim_{t \to \infty} \EE[\frac{S^{L,{\cal P}}_1(t)+S^{L,{\cal P}}_2(t)}{t}] \leq \frac{1}{s},
$$
where $s$ is the service time of a vehicle.
\end{lemma}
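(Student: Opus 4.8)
The plan is to prove a deterministic, pathwise upper bound
\[
S^{L,{\cal P}}_1(t)+S^{L,{\cal P}}_2(t)\ \le\ \frac{t}{s}+C_L ,
\]
where $C_L$ is a constant depending only on the control region length $L$ and the vehicle length $l$, but not on $t$ (or on the sample path). The claimed inequality then follows immediately: divide by $t$, take expectations (the bound being deterministic, this is trivial), and let $t\to\infty$, so that the additive term $C_L/t$ vanishes and one is left with $1/s$; this controls the limit superior, and hence the limit whenever it exists.

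To obtain this bound I would fix a sample path and a time $t$, and partition the $S^{L,{\cal P}}_1(t)+S^{L,{\cal P}}_2(t)$ admitted (non-thinned) vehicles that entered the control region during $[0,t]$ into two classes. Class (a) consists of those whose service in the polling system ${\cal P}$ has already begun by time $t$. Since ${\cal P}$ has a single server that processes customers one at a time, each service lasting exactly $s$ time units, with distinct services occupying disjoint time intervals, the $m$-th service to begin cannot begin before time $(m-1)s$; hence at most $t/s+1$ services have begun by time $t$, so $|\mathrm{(a)}|\le t/s+1$. Class (b) consists of the remaining admitted vehicles, i.e.\ those not yet handed over to the server. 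By Algorithm~\ref{algorithm:main} together with the ${\tt MotionSynthesize}$ procedure, vehicle $(i,k)$ reaches position $0$ (the boundary of the intersection region) exactly at time $\tau_{i,k}+L/v_m$, where $\tau_{i,k}$ is the time ${\cal P}$ begins servicing the corresponding customer, and the vehicle's position is nondecreasing; hence before its service starts the vehicle lies in the control region segment $[-L,0)$. Thus every class-(b) vehicle occupies the control region at time $t$, and since vehicles in the same lane are non-overlapping rigid bodies of length $l$ confined to a segment of length $L$, at most $L/l+1$ of them can lie in one lane, giving $|\mathrm{(b)}|\le 2(L/l+1)$, which is independent of $t$. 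Adding the two estimates yields the bound with $C_L:=2L/l+3$.

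The step I expect to require the most care is the class-(b) estimate, together with making the (a)/(b) dichotomy well posed. On the one hand, one must argue that an admitted vehicle that has not yet been handed to the server is necessarily still inside the control region, so that the \emph{finite geometric capacity} of the control region caps the number of such vehicles by a constant — intuitively, admitted vehicles leave the control region no faster than one per $s$ time units, so they cannot accumulate beyond the road's packing capacity. On the other hand, because the schedule times $\tau_{i,k}$ are recomputed at every new arrival, I would have to point out that they are only ever revised for customers whose service has not yet begun (the portion of ${\cal P}$'s execution already in the past is unaffected by later arrivals); this is what makes ``the service of vehicle $(i,k)$ has begun by time $t$'' an unambiguous event and lets the class-(a) counting argument go through cleanly.
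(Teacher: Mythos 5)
The paper states this lemma without giving a proof, so there is no argument of the authors' to compare against; what matters is whether your argument stands on its own. It does, and in fact it establishes the stronger pathwise, almost-sure bound
\[
S^{L,{\cal P}}_1(t)+S^{L,{\cal P}}_2(t)\ \le\ \frac{t}{s}+C_L ,\qquad C_L=\frac{2L}{l}+3,
\]
from which the expectation statement and the limit follow trivially. Both halves of your dichotomy are sound. For class (a), services in the simulated polling system are disjoint intervals of length $s$ on the nonnegative time line, so at most $t/s+1$ of them can have started by time $t$; and the event ``service of $(i,k)$ has started by time $t$'' is well posed because the simulated schedule time of a customer can never be moved into the past at a later recomputation (no polling policy may preempt a service in progress, a fact the paper itself relies on in the proof of Proposition~\ref{proposition:taubt0}), so you do not even need the regularity Assumption~\ref{assumption:regular_policy} here, which is important since the lemma is claimed for \emph{any} polling policy ${\cal P}$. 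For class (b), the ${\tt MotionSynthesize}$ constraints force $0\le\dot x\le v_m$ and $x(\tau_{i,k}+L/v_m)=0$, so $x_{i,k}$ is nondecreasing with terminal value $0$, which together with $x_{i,k}(t_{i,k})=-L$ pins $x_{i,k}(t)\in[-L,0]$ whenever $t<\tau_{i,k}+L/v_m$; the hard-core packing bound $\lfloor L/l\rfloor+1$ per lane then gives a $t$-free constant. One caveat worth stating explicitly if this were to be written up: the class-(b) step tacitly assumes the assigned trajectories actually exist, which the paper only guarantees under Assumption~\ref{assumption:control_region_length} ($L\ge L^*$), whereas the lemma is phrased for all $L>0$. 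This is a looseness already present in the paper's statement and does not affect the use of the lemma in Theorem~\ref{theorem:stability}, but your write-up could note it.
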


Now, we present a sufficient condition for the {\em instability} of the coordination algorithm.
By ergodicity of the arrival processes, we have that for any $t$, $\EE[A_k(t)/t] = \lambda_k$, which is the intensity of the arrival process in lane $k$.

\begin{theorem}[Sufficient condition for instability] \label{theorem:stability}
Let $A_k(t)$ count the arriving vehicles in lane $k$, and $\lambda_k$ be the intensity of the arrival process in lane $k$. The coordination algorithm is unstable, if
$$
\lambda_1 + \lambda_2 > \frac{1}{s}.
$$
\end{theorem}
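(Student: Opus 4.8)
The plan is to obtain the result as a short deduction from the preceding lemma (which bounds the scheduled throughput by $1/s$) together with the bookkeeping identity $\theta^{L,{\cal P}}_k(t) = A_k(t) - S^{L,{\cal P}}_k(t)$, which is merely a rearrangement of the definition $S^{L,{\cal P}}_k(t) := A_k(t) - \theta^{L,{\cal P}}_k(t)$. Summing this identity over $k \in \{1,2\}$, taking expectations, and dividing by $t$ yields
$$
\EE\!\left[\frac{\theta^{L,{\cal P}}_1(t)+\theta^{L,{\cal P}}_2(t)}{t}\right] \;=\; \EE\!\left[\frac{A_1(t)+A_2(t)}{t}\right] \;-\; \EE\!\left[\frac{S^{L,{\cal P}}_1(t)+S^{L,{\cal P}}_2(t)}{t}\right],
$$
and by ergodicity the first term on the right-hand side equals $\lambda_1+\lambda_2$ for every $t>0$.

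Next I would let $t\to\infty$. The lemma guarantees that $\lim_{t\to\infty}\EE[(S^{L,{\cal P}}_1(t)+S^{L,{\cal P}}_2(t))/t]$ exists and is at most $1/s$, for every fixed control-region length $L>0$ and every polling policy ${\cal P}$. Hence the limit of the left-hand side also exists, and
$$
\lim_{t\to\infty}\EE\!\left[\frac{\theta^{L,{\cal P}}_1(t)+\theta^{L,{\cal P}}_2(t)}{t}\right] \;\ge\; (\lambda_1+\lambda_2) - \frac{1}{s} \;>\; 0,
$$
where the strict inequality is precisely the hypothesis $\lambda_1+\lambda_2 > 1/s$. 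The crucial point is that this lower bound is a positive constant independent of $L$.

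Finally I would take $L\to\infty$: a sequence all of whose terms are bounded below by the positive number $(\lambda_1+\lambda_2)-1/s$ cannot have limit $0$, so
$$
\lim_{L\to\infty}\lim_{t\to\infty}\EE\!\left[\frac{\theta^{L,{\cal P}}_1(t)+\theta^{L,{\cal P}}_2(t)}{t}\right] \;\ge\; (\lambda_1+\lambda_2)-\frac{1}{s} \;>\; 0,
$$
which contradicts the equality required in Definition~\ref{definition:stability}; therefore the coordination algorithm is unstable.

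I do not expect a genuine obstacle in this argument itself — essentially all of the content sits in the throughput lemma, whose proof must show that, regardless of the polling policy, no more than one vehicle can be assigned a schedule time per $s$ units of time on average (each scheduled vehicle occupies the intersection-plus-tail ``server'' for exactly $s = l/v_m$ time units, and switchover times only subtract further from the available capacity). The only mild care needed here is (i) that the $\theta = A - S$ decomposition is applied with the correct matching of which arriving vehicles are counted in each quantity, and (ii) that the inner limit over $t$ is known to exist — which it does, since $\EE[A_k(t)/t]$ is constant in $t$ and the lemma supplies $\lim_{t\to\infty}\EE[(S^{L,{\cal P}}_1(t)+S^{L,{\cal P}}_2(t))/t]$ — so that the nested limit in Definition~\ref{definition:stability} is well defined and the comparison with $0$ is legitimate.
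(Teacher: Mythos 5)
Your proposal is correct and follows essentially the same route as the paper: both use the identity $\theta^L_1(t)+\theta^L_2(t)=A_1(t)+A_2(t)-(S^L_1(t)+S^L_2(t))$, divide by $t$, take expectations, apply ergodicity and the throughput lemma to get a lower bound $(\lambda_1+\lambda_2)-1/s>0$ that is independent of $L$, and conclude instability from Definition~\ref{definition:stability}. Your added remarks about the existence of the inner limit are a useful bit of extra care but do not change the argument.
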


\begin{proof} For any $L > 0$ and $t > 0$, we have
\begin{align*}
	\theta^L_1(t) + \theta^L_2(t) = A_1(t) + A_2(t) - \left(S^L_1(t) + S^L_2(t) \right).
\end{align*}
%	\EE[\theta^L_1(t) + \theta^L_2(t)] = \EE[A_1(t)] + \EE[A_2(t)] - \EE[\left(S^L_1(t) + S^L_2(t) \right)].
Dividing by $t$, taking expectation, and then taking the limit, we obtain
\begin{align*}
\lim_{t \to \infty} \EE[\frac{\theta^L_1(t) + \theta^L_2(t)}{t}] &= \lambda_1 + \lambda_2 - \lim_{t \to \infty} \EE[\frac{S^L_1(t) + S^L_2(t)}{t}] \\
&\geq \lambda_1 + \lambda_2 - \frac{1}{s} > 0.
\end{align*}
Since this holds for any $L > 0$, the system is unstable. 
\end{proof}

Furthermore, we conjecture the following.
\begin{conjecture}[Sufficient condition for stability] Let ${\cal P}$ be any regular polling policy. If
\begin{align}
	\lambda_1 + \lambda_2 < \frac{1}{s}, \label{sufficient_stability}
\end{align}
then the coordination algorithm with policy ${\cal P}$ is stable.
\end{conjecture}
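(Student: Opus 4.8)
\noindent\emph{Proof approach.}
The plan is to compare the coordination algorithm against the \emph{classical} polling system $\widetilde{\cal P}$ that runs the same policy ${\cal P}$ with the same service time $s$ and set-up time $r$ but admits \emph{every} arriving vehicle (no thinning), and to show that, when $(\lambda_1 s,\lambda_2 s)$ lies in ${\cal P}$'s classical stability region --- which, for the exhaustive and gated policies, is exactly $\{(\rho_1,\rho_2):\rho_1+\rho_2<1\}$, matching~\eqref{sufficient_stability} --- a vehicle can be thinned only when the lane it enters already holds order-$L$ queued vehicles, an event of vanishing probability as $L\to\infty$.

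First I would set up a pathwise coupling. Run $\widetilde{\cal P}$ and the coordination algorithm (for a fixed $L$) on the same realization of the arrival point processes $\{t_{i,k}\}$. Every vehicle admitted by the coordination algorithm is also a customer of $\widetilde{\cal P}$, so the coordination algorithm feeds its internal polling system ${\cal P}$ a sub-sequence of $\widetilde{\cal P}$'s input. Using the standard monotonicity of the exhaustive and gated disciplines in their input stream --- deleting arrivals can only decrease queue lengths on every sample path --- one obtains $Q^{L}_k(t)\le \widetilde Q_k(t)$ for all $t$ and $k\in\{1,2\}$, where $Q^L_k$ and $\widetilde Q_k$ are the lengths of queue $k$ in the coordination algorithm's ${\cal P}$ and in $\widetilde{\cal P}$ respectively. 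By~\eqref{sufficient_stability} and the classical stability theory of polling systems, $\widetilde{\cal P}$ is positive recurrent, so in steady state $\EE[\widetilde Q_k]<\infty$ and the long-run fraction of time with $\widetilde Q_k\ge n$ is at most $\EE[\widetilde Q_k]/n$.

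Next I would translate thinning into a queue-length event. A lane-$k$ vehicle reaching position $-L$ at speed $v_m$ is thinned precisely when no admissible input keeps it clear of the vehicle immediately ahead, i.e. when that vehicle is too near the entrance. Since ${\tt MotionSynthesize}$ returns the frontmost feasible trajectory (its objective $\int|x|\,dt$ is minimized by keeping $x$ as close to the intersection as the constraints allow), the still-unserviced lane-$k$ vehicles sit packed nearly bumper-to-bumper toward the intersection, occupying a stretch of length $\approx Q^L_k\,l$ whose front end is within $v_m^2/a_m$ of the intersection; hence the entrance is blocked only when $Q^L_k\ge N(L)$ for a threshold $N(L)=\big(L-O(v_m^2/a_m)\big)/l$ that grows without bound in $L$. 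Consequently the long-run rate at which lane-$k$ vehicles are thinned is at most $\lambda_k$ times the Palm probability (under the arrival process) of the event $\{Q^L_k\ge N(L)\}$; by the ergodic theorem for the arrival point process (PASTA in the Poisson case, Palm calculus in general) together with the coupling bound, this is at most $\lambda_k\,\EE[\widetilde Q_k]/N(L)$, which tends to $0$ as $L\to\infty$. Summing over the two lanes then gives $\lim_{L\to\infty}\lim_{t\to\infty}\EE\big[(\theta^{L,{\cal P}}_1(t)+\theta^{L,{\cal P}}_2(t))/t\big]=0$, i.e.\ stability in the sense of Definition~\ref{definition:stability}.

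I expect two points to carry the real difficulty. The first is the translation step: making precise, for a general regular policy, the assertion that unserviced vehicles are packed to within a bounded distance of the intersection --- this needs the frontmost-trajectory structure from the safety analysis and careful bookkeeping of how successive batches stack behind one another, and it is the place where the vehicle dynamics genuinely enter. The second --- and this is why the statement is only a conjecture --- is the hypothesis concealed in ``$(\lambda_1 s,\lambda_2 s)$ lies in ${\cal P}$'s classical stability region.'' For the exhaustive and gated policies this region is $\{\rho_1+\rho_2<1\}$ and the argument runs verbatim; but for the $k$-limited policy the classical stability region is strictly smaller, so there exist rates with $\lambda_1+\lambda_2<1/s$ for which $\widetilde{\cal P}$ is transient and thinning persists at a positive rate for \emph{every} $L$. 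Thus, to be true, ``$\lambda_1+\lambda_2<1/s$'' in the conjecture must be read as ``$(\lambda_1 s,\lambda_2 s)$ is interior to the stability region of ${\cal P}$,'' and the genuinely open question is whether this (policy-dependent) condition is sufficient for \emph{every} regular policy --- not merely for those whose throughput is never eroded by switchovers.
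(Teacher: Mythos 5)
The paper offers no proof: this statement is explicitly presented as a conjecture, supplied only with a pointer to the polling literature, so there is no proof in the paper against which to compare. Your submission is accordingly a proof \emph{plan} rather than a proof, and you say so openly. As a plan, the coupling argument is sensible for the exhaustive and gated policies: compare against the unthinned classical polling system $\widetilde{\cal P}$, invoke sample-path monotonicity of queue length under arrival deletion, translate thinning into a queue-length threshold event using the frontmost-trajectory property of ${\tt MotionSynthesize}$, and send $L\to\infty$. The translation step --- showing that the unserviced vehicles pack bumper-to-bumper within a bounded distance of the intersection, so that blockage at the entrance forces $Q_k^L \gtrsim L/l$ --- is the genuine work, and none of it is carried out here; so this cannot stand as a proof.

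Your most useful observation is that the conjecture, as literally written, is likely false. The $k$-limited policy is regular by the paper's own proposition on polling policies, yet its classical stability region is a strict subset of $\{\lambda_1+\lambda_2 < 1/s\}$: a $k$-limited cycle serves at most $2k$ customers while incurring at least $2r>0$ of switchover (here $r=w/v_m$), so the admitted rate in the coordination algorithm is capped at $1/(s+r/k) < 1/s$. For $1/(s+r/k) < \lambda_1+\lambda_2 < 1/s$, essentially the same counting argument that proves Theorem~\ref{theorem:stability} shows the thinning intensity stays bounded away from zero for every $L$, i.e.\ the system is unstable. The conjecture should therefore be read with ``$\lambda_1+\lambda_2 < 1/s$'' replaced by ``$(\lambda_1 s,\lambda_2 s)$ interior to the classical stability region of ${\cal P}$,'' and the genuinely open question is whether \emph{that} policy-dependent condition suffices for every regular policy --- a refinement the paper itself does not make.
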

This sufficient condition for stability is inspired by the similar results in the polling system literature~\cite{Takagi:1998vi}. 
% for details on how to define stability of a polling system and also necessary and sufficient conditions for stability of polling systems.
The restriction to regular polling policies in the above conjecture is crucial. 
For instance, consider the polling policy that only allows one lane access and completely stops the other lane; this policy is not a regular polling policy.
Even when the arrival rates $\lambda_1$ and $\lambda_2$ are chosen to satisfy Inequality~\eqref{sufficient_stability}, the intensity of thinned vehicles will always be greater than some $\epsilon > 0$ for any $L$. Thus, this (non-regular) policy is unstable.

%%%%%%%%%%%%%%%%%%%%%%%%%%%%%%%%%%%%%%%%%%%%%%%%%%%%%%%%%%%%%%%%%%%%%%%%%%%%%%%%%%%%%%%%%%
\section{Computational Experiments}\label{section:simulations}

In this section, we evaluate the proposed coordination algorithm in simulation studies.
First, we describe the simulation setup and the computational methods used in this study in Section~\ref{section:approximate}. 
Subsequently, we describe the results of computational experiments for light-load, medium-load and heavy-load cases in Section~\ref{section:platooning}, where we observe the platooning phenomenon. 
Next, we study the effects of the overcrowding assumption in Section~\ref{section:overcrowding}. 
Lastly, we compare the performance of our algorithm with an intersection managed by a traditional traffic light with human drivers in Section~\ref{section:trafficLight}. 

\notraj{
\begin{figure*}
\centerline{\includegraphics[clip=true, trim=2.8in 0.02in 2.65in 0.33in, width=1.09\textwidth]{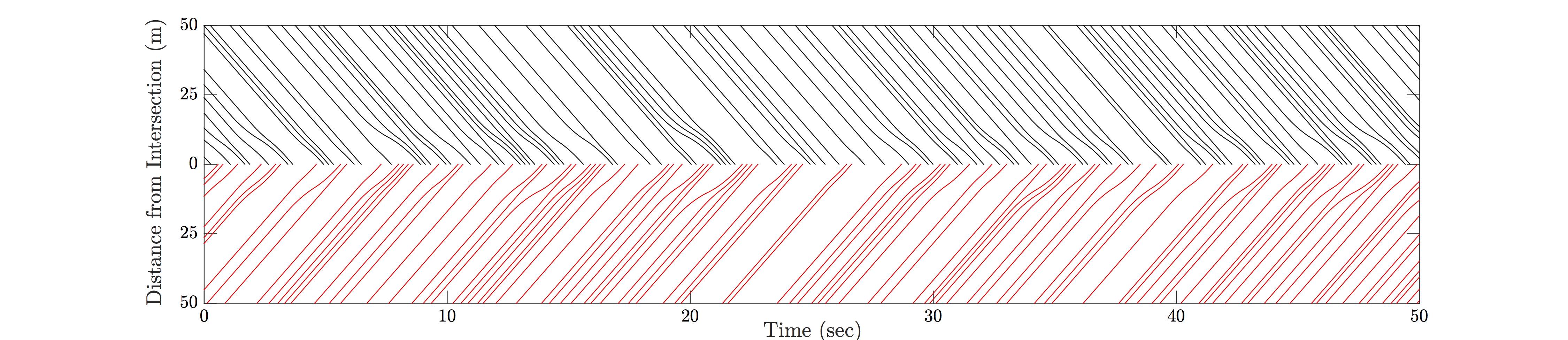}}
\centerline{\includegraphics[clip=true, trim=2.8in 0.02in 2.65in 0.33in, width=1.1\textwidth]{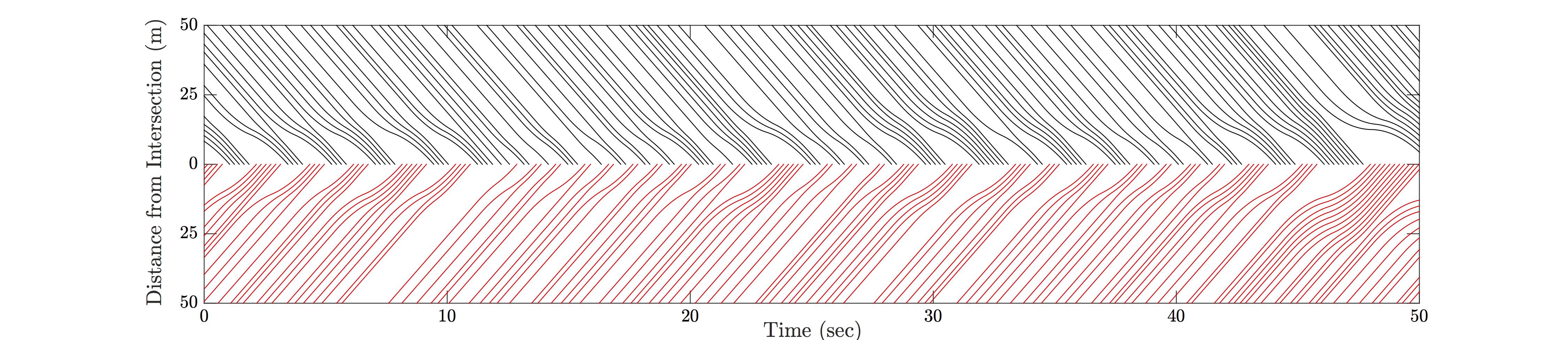}}
\centerline{\includegraphics[clip=true, trim=1.3in .05in 1.3in .15in, height=.19\textheight]{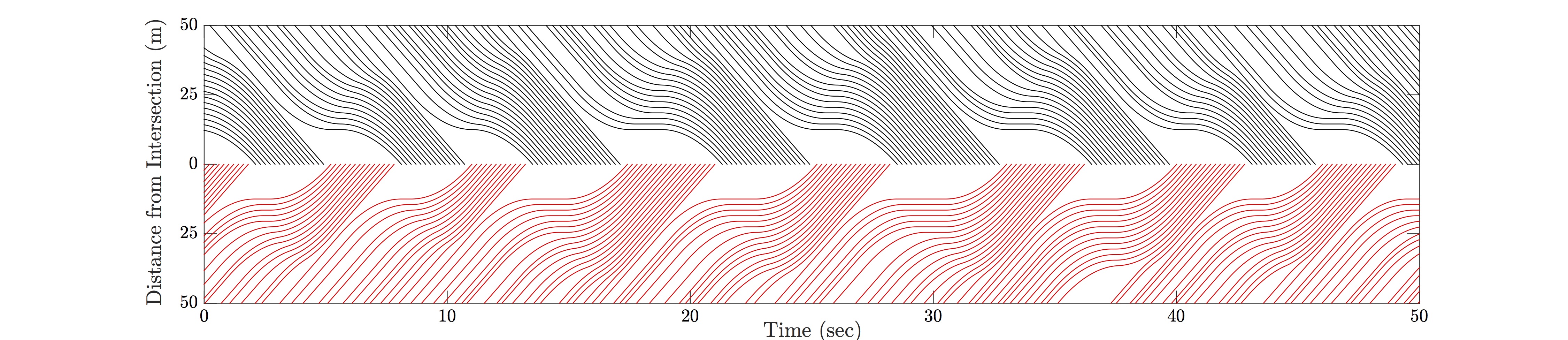}}
\caption{Trajectories of vehicles controlled by Algorithm~\ref{algorithm:main} with the exhaustive polling policy. The top, middle, and bottom figures are representatives for the light, medium, and heavy load cases, respectively. The top half of each figure shows the trajectories of vehicles in Lane 1, and the bottom half shows the trajectories of vehicles from Lane 2. Each figure shows a small window of time after the system reaches steady state.}\label{figure:trajectories}
\end{figure*}

\begin{figure*}
\centerline{\includegraphics[clip=true,trim=2.8in 0.02in 2.65in 0.33in, width=1.1\textwidth]{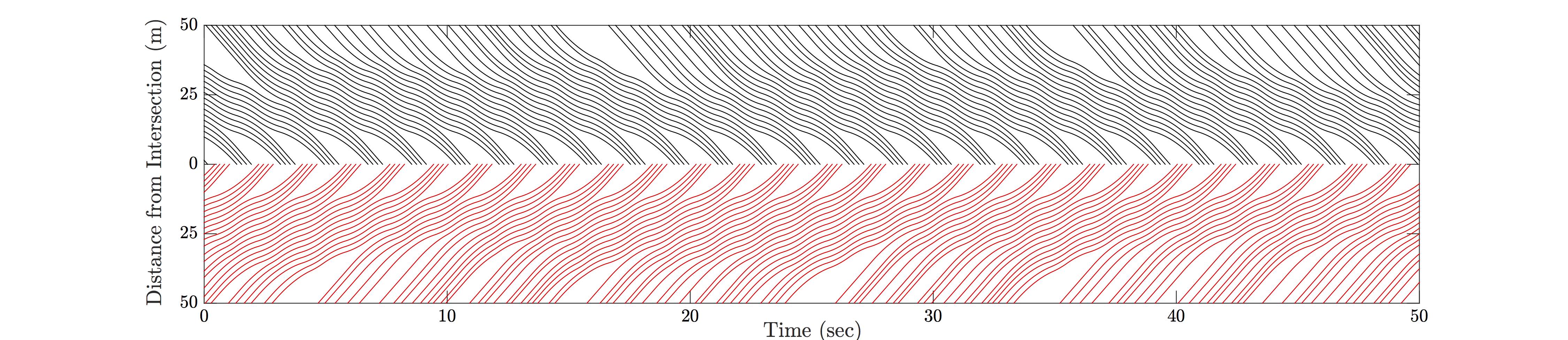}}
\centerline{\includegraphics[clip=true,trim=2.8in 0.02in 2.65in 0.33in, width=1.1\textwidth]{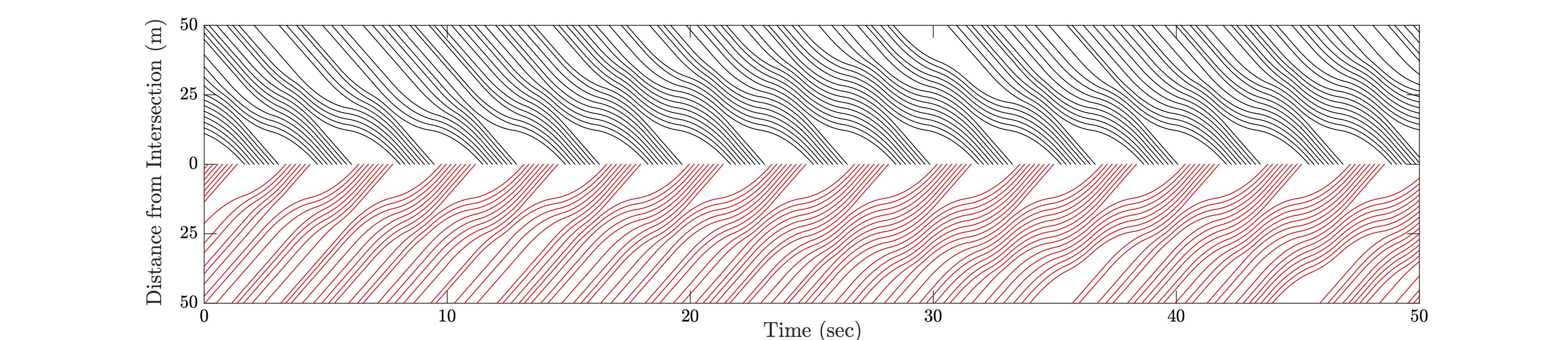}}
\caption{Trajectories of vehicles controlled by Algorithm~\ref{algorithm:main} with the $k$-limited polling policy, where $k=4$ in the upper plot and $k = 8$ in the lower plot. The top half of each figure shows the trajectories of vehicles in Lane 1, and the bottom half shows the trajectories of vehicles from Lane 2. Each figure shows a small window of time after the system reaches steady state.}
\label{figure:ktraj}
\end{figure*}
}
\begin{figure}
	\centerline{\includegraphics[clip=true, trim=.22in .05in .51in .26in, width=0.5\textwidth]{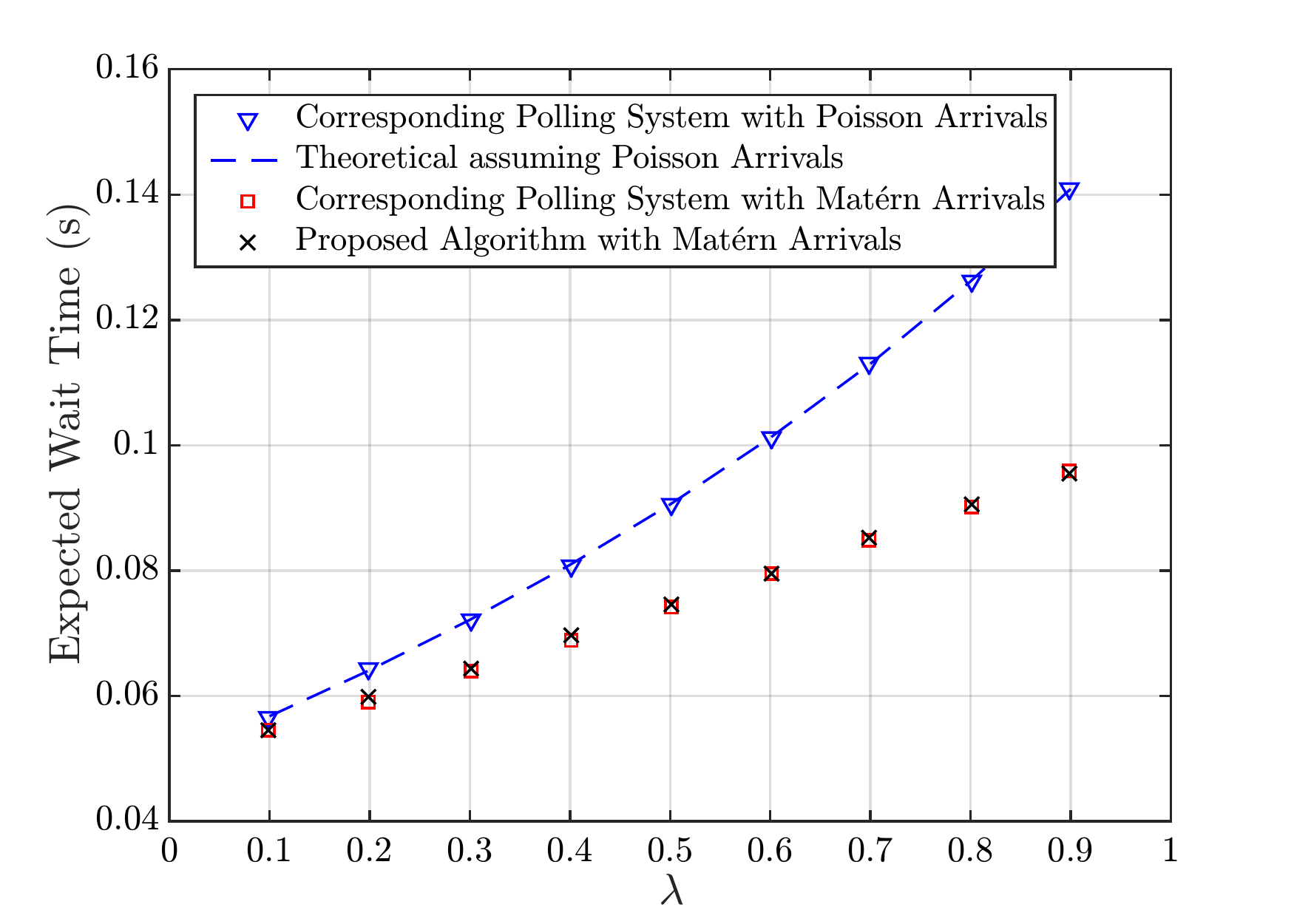}}
%	\vspace{-.1in}
	\caption{Comparison of the expected wait time of Algorithm~\ref{algorithm:main} and the corresponding polling system according to a Mat\'ern arrival process with parameter $\lambda$. The expected wait time of a polling system with Poisson arrival process with intensity $\lambda$ is computed via simulations and also using polling systems theory. All polling policies in this figure use exhaustive polling policy.} \label{figure:performance}

\end{figure}

\subsection{The simulation environment and computational methods}
\label{section:approximate}
We used Matlab to generate the simulation results presented in this section, solving the linear programs with Gurobi~\cite{gurobi}.
Incoming vehicle arrival times were generated by a Mat\'ern process~\cite{Stoyan:1985vwa}, independent in each lane, but symmetric, \ie, equal arrival intensities.
The vehicle length, width, maximum velocity, and maximum acceleration were taken to be 2 meters, 1 meter, 10 meters/sec, and 4 meters/sec$^{2}$, respectively.

Vehicle trajectories were obtained by discretizing the ${\tt MotionSynthesize}$ procedure, leading to a linear program:
\begin{align*}
\max &\sum_{i=0}^N x_i \\
x_{i+1} &= x_i + (v_i + v_{i+1}) \cdot \Delta t /2,\text{ for all } i; \\
v_{i+1} &= v_i + u_i \cdot \Delta t, \text{ for all } i; \\
0 &\leq v_i \leq v_m, \text{ for all } i; \\
-a_m &\leq u_i \leq a_m, \text{ for all }i; \\
x_i &\leq y(t_0 + i \cdot \Delta t) - l, \text{ for all } i; \\
x_0 &= -L; \,\,x_N = 0; \\
v_0 &= v_m; \,\, v_n = v_m,
\end{align*}
where $\Delta t$ is the constant step size of the time history, $t_0$ is the arrival time of some vehicle that triggers the algorithm, $[x_0,\dots x_N,v_0,\dots,v_N,u_0,\dots,u_{N-1}]^T$ is the state vector, representing the discretized position, velocity and acceleration of the vehicle at time $t_0 + i \cdot\Delta t$, and $y$ is the trajectory of the vehicle directly in front.
Note that the trajectory $y$ is evaluated at time $t_0 + i \cdot \Delta t$ by a quadratic interpolation scheme, assuming a linear velocity history $\dot{y}$ over the relevant time interval.

Note that the running time of our algorithm scales polynomially with increasing number of vehicles, since we solve this linear program at most once for each vehicle that is in the system, when a new vehicle arrives in the system. 

We initialize the polling system with deterministic service and switching times, as mentioned in section~\ref{subsection:policy}.
We use the ``wait-and-see'' rule~\cite{Takagi:1998vi} to govern the switching dynamics. 

Expected delays were estimated by running a simulation for a long time and then averaging the delay incurred by each vehicle in the system.
We find that $N = 800$ yields sufficiently smooth trajectories. However, in general, the required degree of discretization increases with longer control regions.
%Also, since the vehicle directly in front is discretized, we found that a quadratic interpolation scheme, using the positions as well as the velocities, significantly improves the quality of approximation than using the standard linear interpolation.
%Furthermore, by the "causality" of the proposed algorithm as established in the proof of the main Lemma, the initial position and velocity, \ie, $x_0$ and $v_0$, can be when the vehicle entered the control region, since the solution is just a subtrajectory of the entire trajectory.

\subsection{The effect of load conditions and platooning} \label{section:platooning}

In Figure~\ref{figure:trajectories}, we show trajectories of vehicles under an exhaustive policy with increasing load. 
In Figure~\ref{figure:ktraj}, we show trajectories of vehicles under a $k$-limited polling policy. % Notice that the results are in line with Corollary~\ref{corollary:poisson_comparison}.
In none of these simulations, the ${\tt MotionSynthesize}$ procedure reported infeasible, which is in line with Lemma~\ref{lemma:main}, hence with Theorems~\ref{theorem:safety} and \ref{theorem:performance}. 

First, notice that the {\em ``platooning behavior''} emerges naturally, particularly when the load becomes substantial. Specifically, vehicles {\em slow down} to form a cluster with other vehicles. This cluster of vehicles speeds up and passes through the intersection at maximum speed. In this way, the intersection, which is the shared resource, is utilized as efficiently as possible. The vehicles that slow down do so at the latest time possible, so that more vehicles can be packed into the control region. 
This behavior is observed easily in Figure~\ref{figure:trajectories} as a result of utilizing the exhaustive policy. It is visible in the utilization of the $k$-limited policy as well, as shown in Figure~\ref{figure:ktraj}. This platooning behavior emerges naturally, as it also arises in polling systems, which we discussed in Section~\ref{subsection:queue}.

\newpage 
Next, we compute performance bounds on Algorithm 1 under the exhaustive policy, assuming vehicles arrive according to a Mat\'ern process.
Define the Poisson process with parameter $\lambda$ as the Poisson point process in the line that has intensity $\lambda$. 
Define the Mat\'ern process with parameter $\lambda$ as the Mat\'ern process obtained by thinning a Poisson process with parameter $\lambda$.
Note that the intensity of the Mat\'ern process with parameter $\lambda$ is
\begin{align*}
\lambda_I (\lambda) = \frac{1-\exp(-2 \lambda b)}{2b},
\end{align*}
where $b$ is the service time~\cite{Stoyan:1985vwa}.
In Figure~\ref{figure:performance}, we compare the performance of our proposed algorithm with the performance of the corresponding polling systems, which is in line with our result in Theorem~\ref{theorem:performance}.
Note that the average delay incurred in Algorithm 1 for Mat\'ern arrivals of intensity $\lambda_I(\lambda)$ seems to be bounded from above by the average wait time of an exhaustive polling system with Poisson arrivals of intensity $\lambda$.
We observe this behavior consistently in simulation studies, and we conjecture it holds for all regular polling policies.

%\newpage
\subsection{Unsafe arrivals in terms of road length and arrival rate} \label{section:overcrowding}

Recall the overcrowding model from Section~\ref{subsection:arrivalModel}. We remove any incoming vehicle before it ever enters the control region, if the same vehicle has no safe trajectory. We say that this vehicle is diverted away from the control region. We call this phenomenon diversion.
The result is that the the stochastic process is ``thinned'' by the removal of unsafe vehicles.

In this section, we study the effect of road length and arrival rate on unsafe arrivals. Our key observation is the following. In the computational experiments we present here, we find that the fraction/intensity of the vehicles that are diverted converges to zero exponentially fast with increasing road length as well as with decreasing vehicle arrival rate. In fact, we observe that this overcrowding phenomenon is extremely rare, when the road length is reasonably large (\eg, twice the limit given by Assumption~\ref{assumption:control_region_length}) and the arrival process slightly below the limit of instability (\eg, 90\% of instability limit).

Before presenting our simulation results, note that the choice of polling policy affects overcrowding. For the remainder of the simulations carried out in this section, we employed the exhaustive policy, although any regular polling policy can be used and should show similar results.
First, let us consider how often a vehicle may be removed. In Figure~\ref{figure:thinnedVSlambda}, the percentage of diverted vehicles is plotted against arrival process intensity.  These quantities are each per lane, \ie, arrival process intensity in each lane, and percentage of vehicles per lane.
We see that until approximately 2.15 vehicles per second, virtually no thinning occurs. These results were obtained by running our proposed algorithm for 50,000 seconds of simulation time, and comparing the number of thinned vehicles per lane to the total number of vehicles that approached that lane.
Note that this simulation uses the shortest possible control region length $L^*$, as determined by Assumption~\ref{assumption:control_region_length}. For longer control regions, the onset is pushed even closer to the point of instability. % which is 2.5 vehicles per second.
Next, we consider how often diversion occurs as a function of control region length. In Figure~\ref{figure:thinnedL}, we plot the log of the intensity of thinned vehicles as a function of control region length.
We fix an arrival rate of 2.45 vehicles per second in each lane (while instability limit is at 2.5 vehicles per second). Each data point represents 50,000 to 100,000 seconds of simulation time, depending on how quickly the intensity converged.
The log of the diversion intensity and road length seem to have a linear correspondence.
Thus, the number of diversions decreases exponentially with increasing road length.
%
% This suggests that the number of removed vehicles is connected with the time that the queue length spends above a certain threshold.

\begin{figure}[t]
	\includegraphics[clip=true,trim=.22in .04in .51in .26in,width=0.5\textwidth]{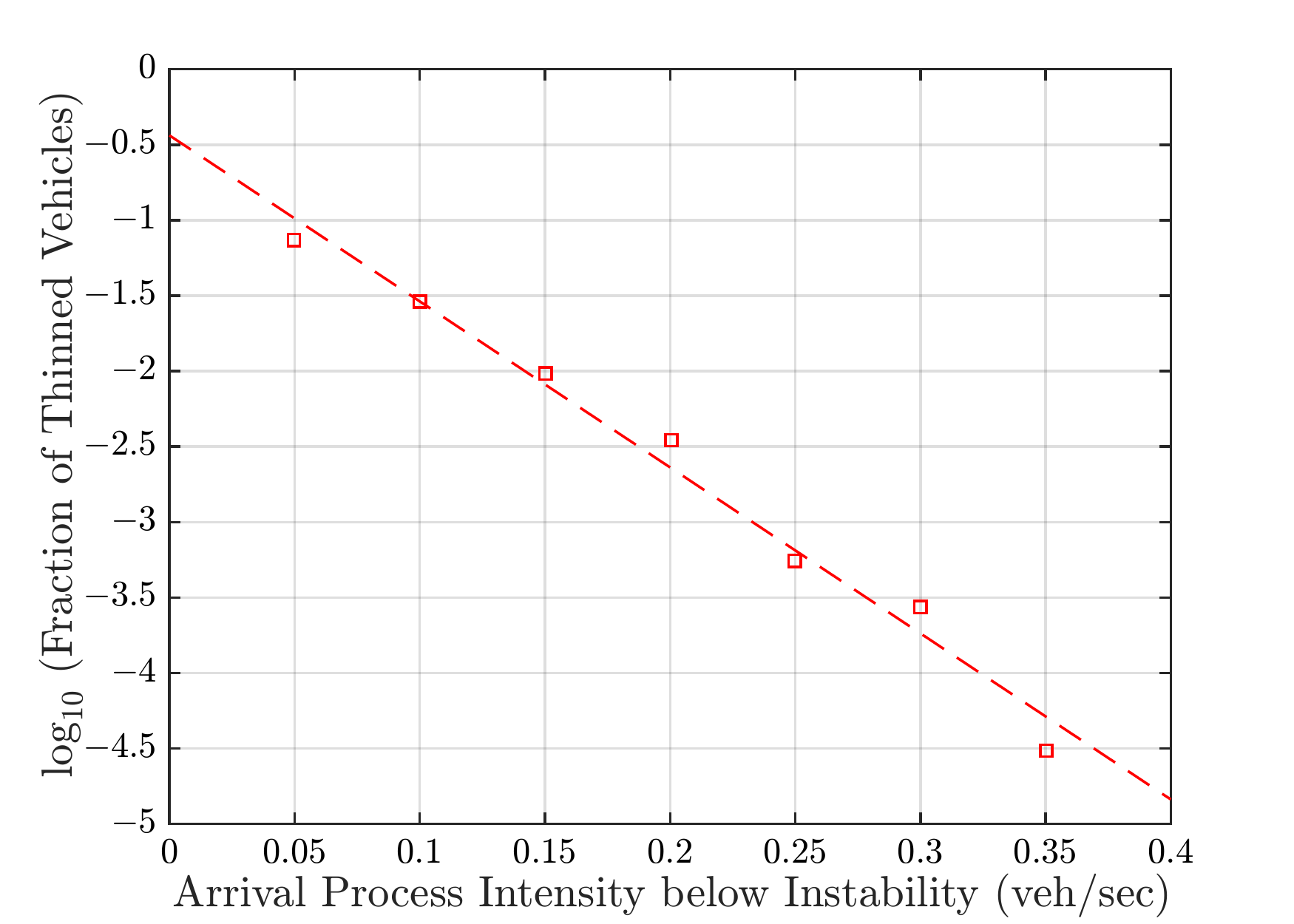}
	\caption{Logarithm of the fraction of thinned vehicles is plotted as a function of arrival process intensity. Algorithm~\ref{algorithm:main} is simulated with exhaustive polling policy. The length of the control region is set to the minimum control region length $L^*$ (See Assumption~\ref{assumption:control_region_length}) to guarantee safety of Algorithm~\ref{algorithm:main}. } \label{figure:thinnedVSlambda}
\end{figure}

\subsection{Comparison with a traditional traffic light}
\label{section:trafficLight}

In this section, we compare the performance of the proposed algorithm with a similar traffic light scenario.

Consider two lanes of vehicles approaching an intersection of the same dimensions as the system described in Section~\ref{section:problem}. Geometries and dynamics of the vehicles are the same as in the earlier problem definition. 
%Time is discretized. 
The positions of the vehicles are populated in the real line according to a Mat\'ern process, and all vehicles start with the maximum velocity $v_m$.
The traffic light cycles through three phases for each lane in the following order: green, yellow, red, yellow, and so on.

In the green phase, vehicles quickly accelerate to maximum speed, attempting to pass through the intersection.
During the red phase, vehicles decelerate quickly to a full stop, approaching as close to the intersection as safely possible. 
During the yellow phase, vehicles do either one of two things.
If the vehicle can come to a full stop before entering the intersection, the vehicle decelerates to a stop as close to the intersection as safely possible; otherwise, the vehicle accelerates quickly and passes through the intersection. This implies that a yellow phase which follows a red phase is simply an extension of the previous red phase.
Also, the traffic cycle is staggered so that while one lane is in its green phase, the other is in its red phase, and {\em vice versa}. Although both lanes are in the yellow phase simultaneously, due to the staggering of the green and red phases, only one lane permits its vehicles to traverse the intersection, provided that they cannot safely come to a stop before entering the intersection region.

At each time step, an aggressive control action is determined locally for each vehicle, inspired by the following assumptions.
To keep in line with the traffic light scenario, we assume that in determining a vehicle's control action, the vehicle can see the the phase of its lane, no matter how far away the vehicle may be from the intersection; however, the vehicle does not know how much longer the current phase will last.
Also, we assume that the a vehicle has perfect knowledge of the instantaneous velocity of the vehicle directly in front and also of the distance to the vehicle directly in front.
Although the vehicle does not know the future trajectory of the vehicle in front, the vehicle does know what the instantaneous acceleration of the vehicle in front is. %, \ie, the vehicle knows what the vehicle in front is about to do.
Although knowledge of the instantaneous acceleration of the vehicle in front is not realistic, this assumption actually improves the performance of the traffic light, making a more competitive scenario with which we benchmark our algorithm.
\begin{figure}[t]
	\includegraphics[clip=true,trim=0.22in .05in .51in .26in,width=0.5\textwidth]{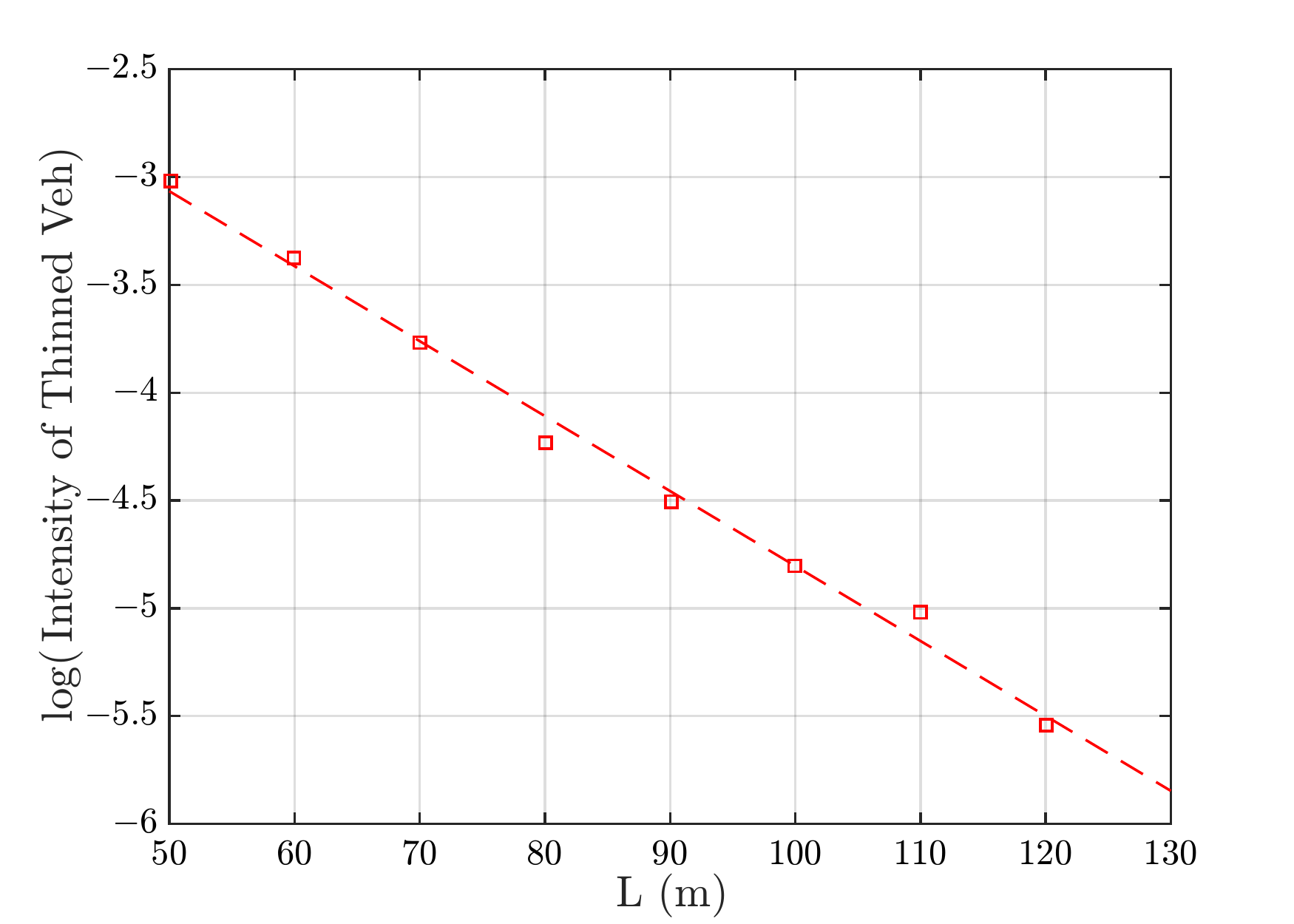}
	\caption{Logarithm of the intensity of thinned vehicles is plotted as a function of control region length $L$. Algorithm~\ref{algorithm:main} was run with exhaustive polling policy and arrival intensity of 2.45 vehicles per second (close to 2.5 vehicles per second instability limit) in each lane according to Mat\'ern process.}\label{figure:thinnedL}
\end{figure}

With these assumptions in mind, we design the most aggressive control action for each vehicle such that safety with the vehicle in front is ensured at the next time step.
% An aggressive control action is taken as follows.
A vehicle decides its acceleration as to position itself as close as possible to the vehicle directly in front, while also preserving safety. The manner in which we do this is by finding, at each time step, the maximum acceleration over the next time interval that will guarantee the following condition is satisfied at the next time step:
$
v_{i}^2/(2a_m) \leq d - l + v_{i-1}^2/(2a_m),
$
where $d$ is the distance between the bumpers of the two vehicles, and $v_i$ and $v_{i-1}$ are the velocities of the vehicle and the vehicle directly in front at the next time step, respectively.
The vehicle has knowledge of the right-hand side of the inequality, \ie, the distance that the vehicle in front will travel if it slams on the breaks beginning at the next time step. The left-hand side is the distance that the current vehicle will travel if it slams on the breaks beginning at the next time step. We choose maximum $a$ such that $v_i := v_{current} + a\Delta t$ will satisfy the above inequality.

We define delay for a vehicle approaching the traffic light similarly to our definition of delay earlier. (See Definition~\ref{definition:delay}.)
Delay is the additional time required for a vehicle to traverse the intersection due to the presence of other vehicles on the road and the traffic light, \ie, the difference in time between a vehicle fully traversing through the system and a vehicle continuing at maximum velocity through the system.

The traffic light scenario described provides a lower bound on the average delay of a traditional traffic light intersection.

The results of the comparison study are provided in Figure~\ref{figure:trafficComp}. Algorithm 1 with the exhaustive policy outperforms this traffic light scenario by at least two orders of magnitude in terms of delay.
\begin{figure}
\centerline{\includegraphics[clip=true,trim=0.22in .05in .51in .26in,width=0.5\textwidth]{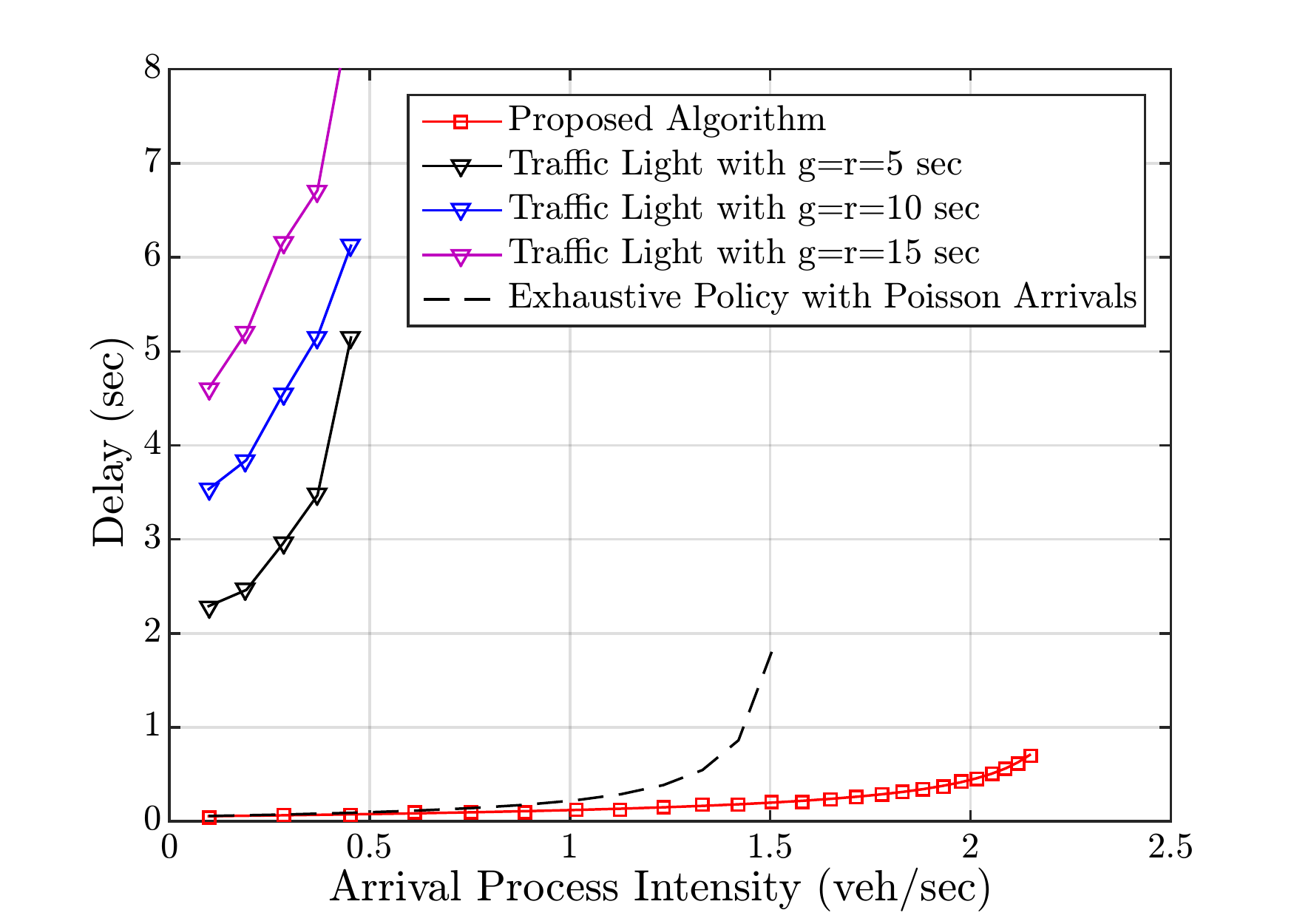}}
\caption{Delay of traffic intersection under Algorithm~\ref{algorithm:main} with exhaustive policy compared with delay of a traditional traffic intersection with equal green and red light phases of 5 seconds, 10 seconds, and 15 seconds. Expected wait time of polling system with exhaustive policy and Poisson arrivals is also plotted.} \label{figure:trafficComp}
\end{figure}
Beyond an arrival rate of $0.5$ vehicles per second in each lane, the delay of the traffic light scenario increases drastically, even though the point of instability of an intersection of these two flows is 2.5 vehicles per second. % in each lane, given the geometries and the dynamics of the vehicles.

The durations of the red and green light phases are equal to each other, while the duration of the yellow light phase is always a constant, depending only on the geometry and the dynamics of the vehicles.
The duration of the yellow light phase is the minimum amount of time to guarantee safety between the two lanes of intersecting traffic, \ie, suppose a vehicle sees a red light and cannot come to a stop before the intersection region, then it must be allowed to safely traverse the intersection region.
For example, suppose a vehicle is traveling at full speed $v_m$ at the instant the traffic light switches from the green phase to yellow phase. The distance required for the vehicle to come to a full stop is $v_m^2/(2a_m)$.
However, suppose the remaining road length in front of the vehicle is slightly less than this quantity. Then, the vehicle will continue with full speed in order to traverse the intersection region. The time it takes to exit the intersection region is slightly less than $v_m/(2a_m) + (l+w)/v_m$.
Thus a necessary condition for safety is that the duration of the yellow phase must be at least as long. 
Furthermore, one can also show that for any other initial configuration of position and velocity at the end of the green phase, the time it takes for such a problematic vehicle to clear the intersection is no greater than $v_m/(2a_m) + (l+w)/v_m$. 
We set the duration of the yellow phase to this quantity.

%%%%%%%%%%%%%%%%%%%%%%%%%%%%%%%%%%%%%%%%%%%%%%%%%%%%%%%%%%%%%

\section{CONCLUSIONS} \label{section:conclusions}
In this paper, we considered the problem of coordinating the motion of vehicles through an intersection with no traffic lights. 
We proposed a coordination algorithm that provides provable guarantees on both safety and performance in all-autonomous traffic intersections. The proposed algorithm, at its core, schedules vehicles to use the intersection region according to a polling policy, which can be selected from a wide variety of policies for traditional polling systems. 
% Then, we show that one can motion plan the vehicles in the control region in a safe and high performance fashion. 
% In that regard, this work established novel connections between the polling systems literature and the motion planning, in the context of coordination problems in all-autonomous intersections. 
%
Provable performance bounds were established for the average delay of the proposed algorithm by considering its corresponding polling policy.
It was also shown that the proposed algorithm is computationally efficient. 
In simulation studies, the proposed system was compared to a traditional intersection system with a red-yellow-green traffic light; it was shown that the proposed system achieves delays that are at least two orders of magnitude smaller when compared to the traditional system.

\bibliographystyle{IEEEtran}
\bibliography{bib1,bib2,bib3}

\clearpage

%\appendix \label{section:appendix}
\section*{Appendix}
\addcontentsline{toc}{section}{Appendices}
\renewcommand{\thesubsection}{\Alph{subsection}}
%\vspace*{-8pt}
We devote this appendix to the proof of Lemma~\ref{lemma:main}. 
In what follows, we present a number of intermediate results leading to the proof of Lemma~\ref{lemma:main}. Then, we provide proofs of these intermediate results separately.
We restate the lemma here for convenience.

\medskip
\noindent {\bf Lemma~\ref{lemma:main}} %how to re:=fer to previous lemma without making it a new lemma !?!?!
{\em Suppose Assumptions \ref{assumption:regular_policy} and \ref{assumption:control_region_length} hold. Then, each time a new vehicle arrives and Algorithm~\ref{algorithm:main} is called, every call to the ${\tt MotionSynthesize}$ procedure (Line~\ref{line:motion_synthesize} of Algorithm~\ref{algorithm:main}) yields a feasible optimization problem.}
\medskip

First, we roughly outline the result that we are after.
Suppose a new vehicle, which we denote vehicle $A$, enters the control region at time $t_0'$. Without loss of generality, let us assume vehicle $A$ enters lane 2.
Since Algorithm~\ref{algorithm:main} is called (\ie, the vehicle is not removed before entering the control region), there exists a safe trajectory for vehicle $A$.
Every call to the ${\tt MotionSynthesize}$ procedure before time $t_0'$ yielded a feasible optimization problem in the same way.
Now to prove the lemma, we show that at time $t_0'$ each call to the ${\tt MotionSynthesize}$ procedure yields a feasible solution.

To make this precise, we introduce the following notation.
We denote $t_0$ as the latest time that Algorithm~\ref{algorithm:main} was called before the arrival of vehicle $A$, \ie, $t_0 = \max \{ t < t_0' \colon \text{Algorithm~\ref{algorithm:main} is called at time }t \}$.
Recall that the $i$th vehicle in lane $k$ is denoted as vehicle $(i,k)$.
We denote $\tau_{i,k}$ as the schedule time of vehicle $(i,k)$ assigned at time $t_0$.
Also, we denote $x_{i,k}$ as the trajectory of vehicle $(i,k)$ assigned by the ${\tt MotionSynthesize}$ procedure at time $t_0$.
%Similarly, we denote $\tau_{i,k}'$ as the schedule time of vehicle $(i,k)$ assigned (in Line 5 of Algorithm~\ref{algorithm:main}) at time $t_0'$. 
Similarly, we denote $\tau_{i,k}'$ and $x_{i,k}'$ as the schedule time and trajectory assigned at time $t_0'$.
Lastly, we denote $\tau_A$ as the schedule time of newly arrived vehicle $A$ assigned at time $t_0'$.
See Table~\ref{table:notation} for a summary of this notation.

%%%%% TABLE FOR NOTATION
\begin{table}[ht]
\centering
\begin{tabular}{c c c}
\hline\hline
\noalign{\vskip 3pt} 
call to Algorithm~\ref{algorithm:main} 	&	$t_0$	&	$t_0'$ \\ [1ex]
\hline
\noalign{\vskip 3pt} 
schedule time	& 	$\tau_{i,k}$ 	& $\tau_{i,k}'$ \\ [1ex]
\hline
\noalign{\vskip 3pt} 
schedule time of vehicle $A$	& 	- 	& $\tau_A$ \\ [1ex]
\hline
\noalign{\vskip 3pt} 
assigned trajectory	& $x_{i,k}$	& $x_{i,k}'$ \\ [1ex]
\hline
\noalign{\vskip 3pt} 
scheduled to arrive \\ at intersection region	& \centering$\tau_{i,k}+L/v_m$	& $\tau_{i,k}'+L/v_m$ \\ [1ex]
\hline\hline
\vspace{1pt}
\end{tabular}
\caption{Notation pertaining to vehicles in the system at times $t_0$ and $t_0'$.}\label{table:notation}
\end{table}
\vspace{-1pc}
We use the boldface notation as follows. We denote a state $\mathbf{z} := (p,v)$. We denote a state history $z := (x,\dot{x})$.
Next, let $\mathscr{C}(\mathbf{z}_0,\tilde{t}_0,\tilde{t}_f,x')$ denote the set of trajectories with initial state $\mathbf{z}_0 := (p_0,v_0)$ at time $\tilde{t}_0$ that reach the intersection region with full velocity at time $\tilde{t}_f$ and are safe with $x'$.
See Figure~\ref{figure:Ci}.
More precisely, we define the {\em set of feasible trajectories} as
\begin{align*}
\mathscr{C}(\mathbf{z}_0,\tilde{t}_0,\tilde{t}_f,&x') := \big\{ x \colon [\tilde{t}_0,\tilde{t}_f] \to \reals \,\, \big\vert \,\,\, \exists u \in \mathcal{U}, \\
& \ddot{x}(t) = u(t),\, |u(t)| \leq a_m,\, \forall t \in [\tilde{t}_0,\tilde{t}_f]; \\
& \dot{x}(t) \in [0,v_m],\, \forall t \in [\tilde{t}_0,\tilde{t}_f];\\
& z(\tilde{t}_0) = \mathbf{z}_0;\, z(\tilde{t}_f) = (0,v_m); \\
& |x(t) | \geq l + |x'(t)|,\, \forall t \in [\tilde{t}_0,\tilde{t}_f] \cap \mathscr{D}(x')\big\},
\end{align*}
where $u \in \mathcal{U}$ is a measurable function, $u \colon [\tilde{t}_0,\tilde{t}_f] \rightarrow \reals$, and $\mathscr{D}(x')$ is the domain of trajectory $x'$.
Also, we write $\mathscr{C}_{i,k}'$ for $\mathscr{C}(z_{i,k}(t_0'),t_0',\tau_{i,k}'+L/v_m,x_{i-1,k}')$.
\begin{figure}[t]
\centerline{\includegraphics[clip=true, trim=2in 6.6in 4.2in 3.3in, width=0.5\textwidth]{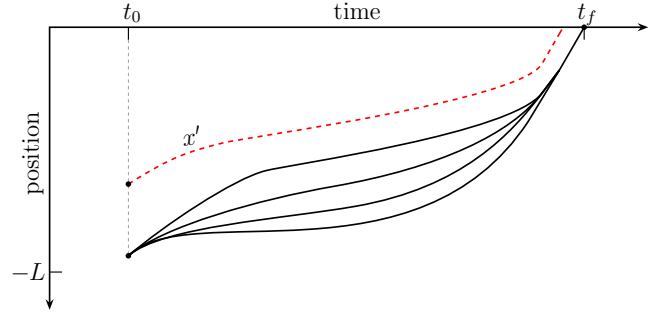}}
\caption{Sample trajectories from set $\mathscr{C}(\mathbf{z}_0,t_0,t_f,x')$ are shown in solid black. Trajectory $x'$ is shown as the dashed red line. Note that all sample trajectories have the same initial and terminal states, same terminal time, and are safe with trajectory $x'$.} \label{figure:Ci}
\end{figure}
The ${\tt MotionSynthesize}$ procedure at time $t_0'$ for vehicle $(i,k)$ is simply
\begin{align} \label{eq:procedure}
	x_{i,k}' \leftarrow \min_{x \in \mathscr{C}_{i,k}'} \int_{t_0'}^{\tau_{i,k}'+L/v_m} |x(t)| dt.
\end{align}
Hence, Lemma~\ref{lemma:main} basically states that the optimization problem in Equation~\eqref{eq:procedure} has a feasible solution each time the ${\tt MotionSynthesize}$ procedure is called. 
The following lemma restates Lemma~\ref{lemma:main} using the new notation.

\begin{lemma} \label{lemma:main2} 
Suppose Assumptions \ref{assumption:regular_policy} and \ref{assumption:control_region_length} hold. Suppose a new vehicle arrives at time $t_0'$ and Algorithm~\ref{algorithm:main} is called. Then, for every vehicle $(i,k)$ in the system at time $t_0'$, the optimization problem in Equation~\eqref{eq:procedure} has a feasible solution.
\end{lemma}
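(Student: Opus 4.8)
The plan is to prove Lemma~\ref{lemma:main2} by induction on the index $i$ within each lane, exploiting the fact that every vehicle enters the control region at position $-L$ with full speed $v_m$ and the fact that all previously scheduled trajectories were themselves produced by feasible invocations of \texttt{MotionSynthesize}. Throughout, I fix attention on the call to Algorithm~\ref{algorithm:main} at time $t_0'$ triggered by the arrival of vehicle $A$ in lane~$2$, and I use the notation of Table~\ref{table:notation}. The two hypotheses play complementary roles: Assumption~\ref{assumption:regular_policy} (regular policies) guarantees that the service order is only \emph{perturbed} by inserting $A$, so every pre-existing vehicle $(i,k)$ keeps its relative ordering and its schedule time can only be pushed \emph{later}, i.e.\ $\tau_{i,k}' \ge \tau_{i,k}$; Assumption~\ref{assumption:control_region_length} ($L \ge 2v_m^2/a_m$) gives the geometric room needed to guarantee that a vehicle entering at $-L$ with speed $v_m$ can always be steered to position $0$ with speed $v_m$ by any target time that is at least $L/v_m$ in the future, even after an arbitrary delay (the worst case being: brake to a stop, wait, then accelerate back to $v_m$; this maneuver consumes at most $v_m^2/a_m$ extra distance on deceleration and $v_m^2/a_m$ on reacceleration, hence fits in $L$).

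The key steps, in order, are as follows. \textbf{Step 1: characterize $\mathscr{C}(\mathbf{z}_0,\tilde t_0,\tilde t_f,x')$ when there is no leader.} For the vehicle closest to the intersection in each lane the safety constraint is vacuous, and I show $\mathscr{C}$ is nonempty whenever the reachability condition holds, namely whenever $\tilde t_f - \tilde t_0 \ge L/v_m$ given the initial state $\mathbf{z}_0$ actually produced by the algorithm (position in $[-L,0]$, speed in $[0,v_m]$, in fact on a trajectory that itself came from a feasible previous call so it already ``knows'' how to reach $0$ with speed $v_m$ by its old deadline $\tau_{i,k}+L/v_m \le \tau_{i,k}'+L/v_m$). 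Since the old trajectory $x_{i,k}$ restricted to $[t_0',\tau_{i,k}+L/v_m]$ is already a witness reaching $(0,v_m)$, and the deadline has only moved later, one can pad with a wait-at-the-boundary segment; this produces an explicit feasible point. \textbf{Step 2: induction on $i$.} Assume $x_{i-1,k}'$ is feasible; I must exhibit an element of $\mathscr{C}_{i,k}' = \mathscr{C}(z_{i,k}(t_0'),t_0',\tau_{i,k}'+L/v_m,x_{i-1,k}')$. The natural candidate is to take the \emph{old} trajectory $x_{i,k}$ wherever it is still safe against the \emph{new} leader trajectory $x_{i-1,k}'$, and otherwise to follow $x_{i-1,k}'$ shifted back by exactly the headway $l$ (i.e.\ ``tailgate'' the new leader at distance $l$), then merge back. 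I need to check this candidate is dynamically feasible (bounded speed and acceleration) — tailgating the leader inherits the leader's speed/acceleration bounds — and that it reaches $(0,v_m)$ at the new deadline $\tau_{i,k}'+L/v_m$, which uses that the leader reaches $0$ at time $\tau_{i-1,k}'+L/v_m$ and that the polling schedule enforces $\tau_{i,k}' \ge \tau_{i-1,k}' + s$ with $s=l/v_m$, so the follower has a strictly later deadline and enough slack. \textbf{Step 3: base case via the overcrowding assumption.} For vehicle $A$ itself (and, after reindexing, for the lead vehicle), the overcrowding model from Section~\ref{subsection:arrivalModel} guarantees $A$ has \emph{some} safe trajectory at arrival; combining this with Step~1's reachability analysis and Assumption~\ref{assumption:control_region_length} upgrades ``some safe trajectory'' to ``a trajectory in $\mathscr{C}_{A,2}'$'', because the schedule time $\tau_A$ assigned by a regular policy is never earlier than what feasibility requires.

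I expect the main obstacle to be \textbf{Step 2}, specifically the interplay between the \emph{leader's} updated trajectory and the \emph{follower's} deadline. One must rule out the pathological possibility that the new leader trajectory $x_{i-1,k}'$ lingers near the intersection so long (because it was delayed) that a follower constrained to stay $l$ behind it \emph{and} to arrive at $0$ with full speed by time $\tau_{i,k}'+L/v_m$ cannot do both. The resolution is quantitative and is exactly where $L \ge 2v_m^2/a_m$ is consumed: one shows that from any safe-for-the-leader configuration, the follower can spend the interval ``glued'' at distance $l$ behind the leader and then, once the leader has cleared position $0$, use the remaining road (which is at least $l$, and more importantly the follower's own braking/reacceleration budget $2v_m^2/a_m \le L$ is available because the follower never needed to brake below what the leader did) to reach $(0,v_m)$ on time — the schedule gap $\tau_{i,k}'-\tau_{i-1,k}' \ge s = l/v_m$ is precisely the time for the leader's rear bumper to clear, so the follower's front bumper is free to occupy position $0$ exactly when its deadline demands. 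Making the ``glue then release'' construction dynamically feasible (no velocity or acceleration violation at the merge points) is the one genuinely delicate verification; everything else is bookkeeping with the notation of Table~\ref{table:notation} and the monotonicity $\tau_{i,k}' \ge \tau_{i,k}$ supplied by Assumption~\ref{assumption:regular_policy}.
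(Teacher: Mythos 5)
Your proposal captures the rough geometry of the paper's argument---in particular, the ``tailgate the new leader at distance $l$'' construction is essentially the paper's ``safe'' trajectory $\bar x_i$ in the proof of Lemma~\ref{lemma:EiNotEmpty}---but it has a genuine gap at precisely the point you flag as ``one genuinely delicate verification,'' and the gap is more than delicate bookkeeping.

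The missing step is the paper's Lemma~\ref{lemma:uncommittedInFi}: before any glue-and-release or brake-wait-reaccelerate maneuver can be carried out, you must show that the follower's state $z_i(t_0')$ actually has the braking budget required, i.e.\ $z_i(t_0') \in \mathcal{F}_i$. Your argument that $L \ge 2v_m^2/a_m$ provides ``geometric room'' is only valid for a vehicle entering at position $-L$ with speed $v_m$; a vehicle that is already deep in the control region at time $t_0'$, moving near $v_m$, cannot brake to a stop and reaccelerate in the remaining road. You cannot simply ``pad with a wait-at-the-boundary segment'' without first proving the wait is reachable. The paper closes this hole with a timing argument that is where Assumption~\ref{assumption:regular_policy} does its real work: Proposition~\ref{proposition:taubt0} shows $\tau_B + s \ge t_0'$ for the vehicle $B$ serviced immediately before $A$, which combined with Assumption~\ref{assumption:control_region_length} yields $\tau_i + L/v_m - (\nu_i-1)s - t_0' \ge 2v_m/a_m$; Proposition~\ref{proposition:minVelocity} (a Pontryagin-type argument) then forces $z_i(t_0') \in \mathcal{F}_i$, because otherwise the vehicle could not have this much time remaining on its old feasible schedule. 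In your sketch, the regular-policy assumption is used only to get the monotonicity $\tau_{i,k}' \ge \tau_{i,k}$; you never derive the quantitative lower bound on the time-to-$\mathbf{I}_i$ that makes the whole construction possible.

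A secondary issue: your induction hypothesis (``assume $x_{i-1,k}'$ is feasible'') is too weak to carry the inductive step. To make the tailgate construction safe against the \emph{new} leader, you need to know not just that $x_{i-1,k}'$ exists, but that $x_{i-1,k}'$ follows the leader's \emph{old} trajectory until it leaves $\mathcal{F}_{i-1}$---this is exactly membership in $\mathscr{E}_{i-1}'$, and it is why the paper's Lemma~\ref{lemma:well-defined} (via Filippov's theorem) is needed to show the \texttt{MotionSynthesize} minimizer actually lands in $\mathscr{E}_{i-1}'$ and not merely in $\mathscr{C}_{i-1}'$. Finally, you partly conflate the two regimes the paper separates cleanly: vehicles scheduled before $A$ have $\tau_{i,k}' = \tau_{i,k}$ (not merely $\tau_{i,k}' \ge \tau_{i,k}$), so they need no padding at all and the Bellman-optimality restriction argument of Lemma~\ref{lemma:committed} disposes of them; the padding/tailgating machinery is needed only for vehicles scheduled after $A$, which, being all in the other lane, have a very specific structure that your proposal does not exploit.
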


To prove Lemma~\ref{lemma:main2} (or equivalently Lemma~\ref{lemma:main}), we consider two cases separately: {\em (i)} vehicles that are scheduled to arrive at the intersection before vehicle $A$, and {\em (ii)} vehicles that are scheduled to arrive at the intersection after vehicle $A$. 
We show that the optimization problem in Equation~\eqref{eq:procedure} yields a feasible solution for all these vehicles. 

In Section~\ref{section:setup} of this appendix, we focus on the first case, and we prove the following lemma.
\begin{lemma} \label{lemma:committed}
%Suppose vehicle $(i,k)$ is committed, \ie, $\tau_{i,k}' = \tau_{i,k}$.
Suppose that at time $t_0'$ the schedule of vehicle $(i,k)$ is earlier than that of vehicle $A$, \ie, $\tau_{i,k}' < \tau_A$.
Then, $\mathscr{C}_{i,k}'$ is non-empty and the minimum in Equation~\eqref{eq:procedure} is attained, for vehicle $(i,k)$. Moreover, the trajectory $x_{i,k}'$ is simply a truncation of $x_{i,k}$, \ie, $x_{i,k}' = x_{i,k}|_{\{ t \colon t\geq t_0' \} }$.
\end{lemma}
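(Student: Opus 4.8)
The plan is to induct along the service order, restricted to those vehicles that are scheduled before $A$. The starting point is Assumption~\ref{assumption:regular_policy}: because ${\cal P}$ is regular, re-simulating after ${\cal P}.{\tt AddToQueue}(2)$ merely inserts $A$ into the service order without disturbing the other entries. Hence every vehicle $(i,k)$ with $\tau_{i,k}' < \tau_A$ keeps its schedule time, $\tau_{i,k}' = \tau_{i,k}$, and therefore its target arrival time $\tau_{i,k}' + L/v_m = \tau_{i,k} + L/v_m$. Moreover, since $t_0$ is the \emph{last} trigger before $t_0'$, no vehicle enters and no re-planning occurs on $(t_0,t_0')$, so vehicle $(i,k)$ follows exactly $x_{i,k}$ on $[t_0,t_0']$ and its state at $t_0'$ is $z_{i,k}(t_0') = \big(x_{i,k}(t_0'),\dot x_{i,k}(t_0')\big)$, which is precisely the left-endpoint data of the restricted ${\tt MotionSynthesize}$ instance~\eqref{eq:procedure}. (Regularity of solutions of $\ddot x = u$, $|u|\le a_m$, guarantees $\dot x_{i,k}$ is well defined at $t_0'$.) I also record the elementary fact, which makes the induction self-contained, that the set $\{(i,k) : \tau_{i,k}' < \tau_A\}$ is closed under passing to the vehicle directly in front: in a regular policy the vehicles of a lane are served in index order, so $\tau_{i-1,k}' < \tau_{i,k}' < \tau_A$; and the ``front vehicle'' of the leading car is the constant trajectory $x_{0,k}\equiv l$, which is never modified.

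The induction hypothesis, stated in service order, is: for every vehicle $(i,k)$ with $\tau_{i,k}'<\tau_A$, the set $\mathscr{C}_{i,k}'$ is non-empty, the minimum in~\eqref{eq:procedure} is attained, and its minimizer equals the truncation $x_{i,k}|_{[t_0',\,\tau_{i,k}+L/v_m]}$. For the inductive step, the front trajectory used in~\eqref{eq:procedure} is either $x_{0,k}\equiv l$ or, by the induction hypothesis applied to the (earlier-scheduled) vehicle $(i-1,k)$, equal to $x_{i-1,k}|_{[t_0',\,\cdot\,]}$; in either case the safety constraint cutting out $\mathscr{C}_{i,k}'$ is exactly the restriction to $[t_0',\,\tau_{i,k}+L/v_m]$ of the safety constraint that $x_{i,k}$ already met at the $t_0$-call. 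Since the dynamic, velocity, and acceleration constraints are pointwise and the terminal condition $(0,v_m)$ is unchanged, the truncation $x_{i,k}|_{[t_0',\,\tau_{i,k}+L/v_m]}$ lies in $\mathscr{C}_{i,k}'$, which already proves non-emptiness. Optimality is a splicing argument: were some $\tilde x\in\mathscr{C}_{i,k}'$ strictly cheaper than this truncation, then concatenating $x_{i,k}|_{[t_0,t_0']}$ with $\tilde x$ would yield a trajectory admissible for the $t_0$-instance of ${\tt MotionSynthesize}$ (state continuity at $t_0'$, a measurable control bounded by $a_m$, and all other constraints inherited) whose cost is strictly below that of $x_{i,k}$, contradicting optimality of $x_{i,k}$ at $t_0$. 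Hence the truncation attains the minimum, and since ${\tt MotionSynthesize}$ returns the (pointwise-optimal, hence unique) minimizer, $x_{i,k}' = x_{i,k}|_{[t_0',\,\tau_{i,k}+L/v_m]}$, closing the induction.

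The main obstacle I expect is not a hard estimate but the bookkeeping: one must verify carefully that ``scheduled before $A$'' is inherited by the preceding car (this is the second, in-lane-FIFO, use of regularity), that $t_0$ being the last trigger legitimately identifies $z_{i,k}(t_0')$ with the state read off $x_{i,k}$, and that the spliced control is genuinely admissible so that the concatenated trajectory lands in the $t_0$-feasible set. It is worth noting that Assumption~\ref{assumption:control_region_length} on the control-region length plays \emph{no} role in this lemma; it will only be needed in the companion case (vehicles scheduled \emph{after} $A$), where a genuinely new trajectory has to be synthesized rather than inherited by truncation.
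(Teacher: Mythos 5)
Your proposal is correct and follows essentially the same route as the paper's appendix proof: regularity of the polling policy freezes the schedule times of vehicles scheduled before $A$, one inducts along vehicles in each lane (front to back), and the truncation is shown to be the new minimizer via Bellman's optimality principle — which your explicit splicing-and-contradiction argument simply unpacks. The extra bookkeeping you flag (in-lane FCFS, state continuity at $t_0'$, admissibility of the spliced control) and the remark that Assumption~\ref{assumption:control_region_length} is not used here are all consistent with, and slightly more detailed than, the paper's terser treatment.
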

This lemma states that vehicles scheduled to arrive at the intersection before vehicle $A$ simply continue on their previous trajectories. That is, the call to the ${\tt MotionSynthesize}$ procedure at time $t_0'$ (triggered by the arrival of vehicle $A$) does not change the trajectories of this set of vehicles; for these vehicles, the ${\tt MotionSynthesize}$ procedure returns the same set of trajectories at time $t_0'$ it returned in the previous call to the procedure at time $t_0$.
In Section~\ref{section:setup} of this appendix, we prove this result by induction. We use Bellman's optimality principle to show that the previous trajectory is still optimal.

Next, we verify Equation~\eqref{eq:procedure} for vehicles scheduled later than vehicle $A$.
We prove the following lemma.
\begin{lemma} \label{lemma:uncommittedOutline}
Suppose that at time $t_0'$ the schedule of vehicle $(i,k)$ is later than that of vehicle $A$, \ie, $\tau_{i,k}' > \tau_A$. % Then, Equation~\eqref{eq:procedure} holds for vehicle $(i,k)$.
Then, $\mathscr{C}_{i,k}'$ is non-empty and the minimum is attained in Equation~\eqref{eq:procedure}.%, for vehicle $(i,k)$.
\end{lemma}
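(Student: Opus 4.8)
The plan is to prove Lemma~\ref{lemma:uncommittedOutline} by induction on the service order, treating the vehicles scheduled after $A$ one at a time in increasing order of their new schedule times. Take $A$ to enter lane~$2$, without loss of generality. Since within each lane customers are served in arrival order and $A$, having just entered, carries the largest lane-$2$ index currently in the control region, no lane-$2$ vehicle is scheduled after $A$; hence every vehicle $(i,k)$ with $\tau_{i,k}' > \tau_A$ lies in lane~$1$, and its in-lane predecessor $(i-1,k)$ is either scheduled before $A$ or absent (\emph{Case~I}), or scheduled after $A$ and therefore already handled earlier in the induction, so that $x_{i-1,k}'$ is a well-defined feasible trajectory reaching $(0,v_m)$ at $\tau_{i-1,k}'+L/v_m < \tau_{i,k}'+L/v_m$ (\emph{Case~II}). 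Two elementary observations will be used throughout. First, a double integrator at an interior state $(p,v)$ with $p<0$ and $v\in[0,v_m]$ can reach $(0,v_m)$ at \emph{any} terminal time at least its time-optimal arrival time $T_{\min}(p,v)$, since it may loiter by creeping forward arbitrarily slowly. Second, on the fixed horizon $[t_0',\tau_{i,k}'+L/v_m]$ the admissible controls form a weak-$\ast$ compact set and every constraint defining $\mathscr{C}_{i,k}'$ is closed, so once $\mathscr{C}_{i,k}'$ is shown non-empty the minimum in~\eqref{eq:procedure} is attained. It therefore suffices, in each case, to exhibit a single element of $\mathscr{C}_{i,k}'$.

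In Case~I, Lemma~\ref{lemma:committed} applies to the predecessor and gives $x_{i-1,k}' = x_{i-1,k}|_{\{t\colon t\geq t_0'\}}$, so the safety constraint imposed on $(i,k)$ at time $t_0'$ coincides with the one it already satisfied at $t_0$. Since every ${\tt MotionSynthesize}$ call before $t_0'$ was feasible (as noted in the outline above), the current state $z_{i,k}(t_0')$ lies on the feasible trajectory $x_{i,k}$, and $x_{i,k}$ restricted to $[t_0',\tau_{i-1,k}+L/v_m]$ is safe with $x_{i-1,k}$. I would take $\hat x$ to coincide with $x_{i,k}$ up to the time $\tau_{i-1,k}+L/v_m$ when the predecessor leaves the control region (if it has already left by $t_0'$ there is no active constraint and the argument only gets easier); at that instant $\hat x$ sits at a state $(p_1,v_1)$ with $p_1 \leq x_{i-1,k}(\tau_{i-1,k}+L/v_m)-l = -l < 0$. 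Beyond that time no safety constraint is active, and the tail of $x_{i,k}$ itself witnesses $T_{\min}(p_1,v_1) \leq \tau_{i,k}-\tau_{i-1,k} \leq \tau_{i,k}'-\tau_{i-1,k}$, the last inequality because inserting $A$ into the polling system can only postpone the service of customers after it, so $\tau_{i,k}' \geq \tau_{i,k}$. By the first observation, $\hat x$ extends to reach $(0,v_m)$ at $\tau_{i,k}'+L/v_m$; concatenating yields $\hat x \in \mathscr{C}_{i,k}'$. The base case of the induction is subsumed here.

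Case~II is where I expect the real work. Now $x_{i-1,k}'$ may lag behind the old $x_{i-1,k}$, so $x_{i,k}$ need not be safe with $x_{i-1,k}'$. Instead I would let $\hat x$ run a maximal safe car-following law behind $x_{i-1,k}'$: apply the largest acceleration in $[-a_m,a_m]$ consistent with $\dot{\hat x}\leq v_m$ and with the safe-stopping invariant $\hat x+\dot{\hat x}^2/(2a_m) \leq x_{i-1,k}'+(\dot x_{i-1,k}')^2/(2a_m)-l$. Because leader and follower share the braking bound $a_m$, this law is well-defined, dynamically feasible, and keeps $\hat x(t)\leq x_{i-1,k}'(t)-l$ on all of $[t_0',\tau_{i-1,k}'+L/v_m]$. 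Once the predecessor exits, no safety constraint is active, and the only remaining point --- the hard one --- is to show that the state $\hat x$ reaches at $\tau_{i-1,k}'+L/v_m$ can still be steered to $(0,v_m)$ by the deadline $\tau_{i,k}'+L/v_m$. One has the slack $\tau_{i,k}'-\tau_{i-1,k}'\geq s = l/v_m$ for free, but this suffices only if the follower has essentially caught up to the tailgating state $(-l,v_m)$ by the time the leader departs. To make this go through I anticipate needing an auxiliary structural lemma describing the greedy output of ${\tt MotionSynthesize}$ --- roughly, that it cruises at $v_m$, decelerates, waits at distance $v_m^2/(2a_m)$ from the intersection, and finally accelerates back to $v_m$ --- together with Assumption~\ref{assumption:control_region_length}: the bound $L\geq 2v_m^2/a_m$ is precisely what leaves room for the full decelerate--wait--accelerate sequence plus a buffer, which forces the maximal-safe follower to reach, no later than $\tau_{i-1,k}'+L/v_m$, a state from which $(0,v_m)$ lies within time $l/v_m$. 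The first observation then finishes Case~II, and with the second observation supplying attainment in both cases the induction closes, giving Lemma~\ref{lemma:uncommittedOutline} --- hence also Lemmas~\ref{lemma:main2} and~\ref{lemma:main}.
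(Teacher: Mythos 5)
Your proof hinges on the ``first observation'' that a double integrator at any state $(p,v)$ with $p<0$, $v\in[0,v_m]$ can reach $(0,v_m)$ at \emph{any} terminal time $\geq T_{\min}(p,v)$ by ``creeping forward arbitrarily slowly.'' This is false. Loitering requires enough remaining distance to brake to a full stop and then re-accelerate to $v_m$; if $|p| < (v^2+v_m^2)/(2a_m)$, the vehicle can never reach zero velocity on the way to $(0,v_m)$, and the set of achievable arrival times is a bounded interval. Proposition~\ref{proposition:minVelocity} in the paper states exactly this: from such a state, the latest feasible arrival time is strictly less than $2v_m/a_m$.

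This breaks your Case~I. At the instant the predecessor exits, vehicle $(i,k)$ may already be at the tailgating state $(-l,v_m)$ (this is exactly the platooning regime), and $(-l,v_m)$ is outside the loitering region whenever $l<v_m^2/a_m$ --- with the paper's own parameters ($l=2$, $v_m=10$, $a_m=4$) one has $v_m^2/a_m=25\gg l$. The required postponement $\tau_{i,k}'-\tau_{i,k}$ is at least the service time of $A$ (plus possibly a switchover), and there is no reason this fits inside the bounded delay budget available from $(-l,v_m)$; generically it does not. The missing insight is that the extra delay has to be inserted at the \emph{front} of the trajectory, while vehicle $(i,k)$ is still far from the intersection, not appended after the predecessor departs. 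The paper formalizes ``far enough'' as the set $\mathcal{F}_i$, and Lemma~\ref{lemma:uncommittedInFi} proves $z_i(t_0')\in\mathcal{F}_i$; this is where Assumption~\ref{assumption:control_region_length} does its real work, via the inequality $\tau_i + L/v_m - (\nu_i-1)s - t_0' \geq 2v_m/a_m$. The explicit ``safe'' trajectory then brakes early, waits at distance $v_m^2/(2a_m)+(\nu_i-1)l$, and re-accelerates --- a construction your argument never makes. Your Case~II gestures at the right phenomenon ($(i,k)$ catching up to the tailgating state, which is the paper's Proposition~\ref{proposition:intersection}), but leaves it as an anticipated auxiliary lemma; and your inductive hypothesis --- mere existence of a feasible $x_{i-1,k}'$ --- is too weak for the car-following step, since you need $x_{i-1,k}'$ to have the structure ``follows the old trajectory as long as possible'' (the paper's $x_{i-1}'\in\mathscr{E}_{i-1}'$, established in Lemma~\ref{lemma:well-defined}). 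Your ``second observation'' on attainment via compactness is sound and parallels the paper's Filippov argument, but the existence half of the lemma is where the proof needs work.
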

To formalize our argument for this lemma, we introduce the following notation. 
Firstly, since all regular polling policies (See Assumption~\ref{assumption:regular_policy} for a definition) are {\it first-come first-serve} (FCFS), all vehicles scheduled after vehicle $A$ must be in lane 1, \ie, not in the lane vehicle $A$ entered.
Thus, for notational simplicity, for the remainder of the proof, we safely drop the $k$ subscript, \ie, denote $\tau_{i,k}'$ by $\tau_i'$, and $x_{i,k}$ by $x_i$, {\em et cetera}.

Secondly, for any vehicle $i$ scheduled after vehicle $A$, we define $\nu_i := i - \min  \{\,  i \, | \,\, \tau_{i,1}' > \tau_A \}  + 1$.
Note that this vehicle $i$ is the $\nu_i^\mathrm{th}$ vehicle scheduled after vehicle $A$.
See Figure~\ref{figure:inter3}.
\begin{figure}[b]
\centerline{\includegraphics[clip=true, trim=3in 8in 0.5in 2in, width=0.5\textwidth]{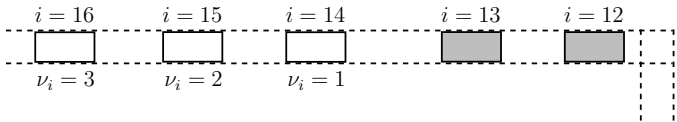}}
\caption{Lane 1 is shown in full. Vehicles scheduled before vehicle $A$ are shaded grey. Vehicles scheduled after vehicle $A$ have white interiors. For each vehicle scheduled after vehicle $A$, its value of $\nu_i$ is displayed directly below.} \label{figure:inter3}
\end{figure}
For each vehicle $i$ scheduled after vehicle $A$, we define:
\begin{align*}
	\mathcal{F}_i &:= \{\, (p_0,v_0) \in [-L,0] \times [0,v_m]\, | \\
			&p_0 + v_0^2/(2a_m) \leq - v_m^2/(2a_m) - (\nu_i-1)l \,\}.
\end{align*}
See Figure~\ref{figure:Fi}.
Furthermore, we denote the right-hand side boundary of the set $\mathcal{F}_i$ by $\partial \mathcal{F}_i$, \ie, 
\begin{align*}
	\partial \mathcal{F}_i &:= \{\, (p_0,v_0) \in [-L,0] \times [0,v_m] \, | \\
	&p_0 + v_0^2/(2a_m) =  - v_m^2/(2a_m) - (\nu_i-1)l \,\}.
\end{align*}

Let us provide some insight into sets $\partial{\cal F}_i$ and ${\cal F}_i$. 
Notice that following the boundary $\partial {\cal F}_i$, vehicle $i$ can decelerate as rapidly as possible (with deceleration $a_m$) and come to a full stop exactly $v_m^2/(2 a_m) + (\nu_i -1) l$ further away from the intersection region. Suppose $\nu_i = 1$, \ie, vehicle $i$ is the first vehicle to go through the intersection after vehicle $A$. Then, following $\partial {\cal F}_i$, vehicle $i$ can come to a full stop exactly $v_m^2/(2 a_m)$ away from the intersection region. Notice that $v_m^2/(2 a_m)$ is exactly the distance it takes vehicle $i$ to accelerate with maximum acceleration ($a_m$) and reach maximum speed ($v_m$) right when it arrives at the intersection region. When $\nu_i > 1$, vehicle $i$ can follow $\partial {\cal F}_i$ in the same manner and stop at a point that has an additional distance of $\nu_i-1$ vehicle lengths (\ie, $(\nu_i-1)l$) to the intersection region.
Hence, $k$ vehicles with $\nu_i \in \{1, 2, \dots, k\}$, by following their respective $\partial {\cal F}_i$, come to a full stop, one behind another, such that the foremost vehicle has distance $v_m^2/(2a_m)$ from the intersection region.
These vehicles can move together in this formation with maximum acceleration ($a_m$) to reach the intersection at maximum speed ($v_m$). 

The set ${\cal F}_i$ is critical in proving Lemma~\ref{lemma:uncommittedOutline}, and it provides us with insights into how the algorithm guarantees safety.

\begin{figure}[t]
	\centerline{\includegraphics[clip=true, trim=2.15in 6.5in 4.1in 3.4in, width=0.5\textwidth]{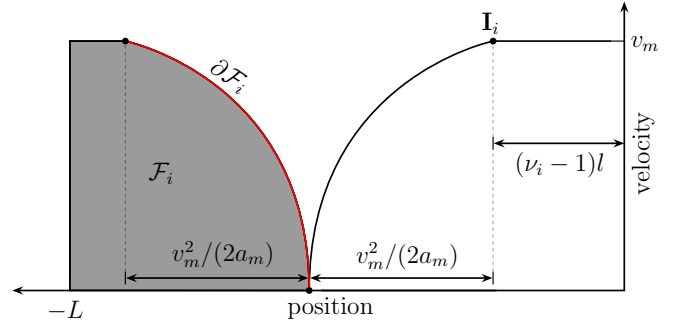}}
\caption{Sets $\mathcal{F}_i$ and $\partial \mathcal{F}_i$ are shown. Set $\mathcal{F}_i$ is shaded in grey. Set $\partial\mathcal{F}_i$, the right-hand side boundary of $\mathcal{F}_i$, is traced in red. State ${\bf I}_i$ has coordinates $(-(\nu_i-l)l,v_m)$. Set $\mathcal{F}_i$ can be described as the set of all states from which state ${\bf I}_i$ can be reached at a time indefinitely far into the future. } \label{figure:Fi}
\end{figure}

We prove Lemma~\ref{lemma:uncommittedOutline} in three steps.
First, we show that the state $z_i(t_0')$ of the $\nu_i^\mathrm{th}$ vehicle after vehicle $A$ at time $t_0'$ is inside the set ${\cal F}_i$. See Figure~\ref{figure:Fi}.
Second, we show that each such vehicle has a {\em feasible} trajectory satisfying a particular property: the $\nu_i^\mathrm{th}$ vehicle continues as long as possible along its last assigned trajectory $x_i$. See Figure~\ref{figure:Ei}.
Finally, we show that the ${\tt MotionSynthesize}$ procedure actually returns such a trajectory.
This last step is more technical, requiring results from optimal control theory showing that the optimization problem in Equation~\eqref{eq:procedure} is well-posed.
These steps are formalized in Lemmas~\ref{lemma:uncommittedInFi}, \ref{lemma:EiNotEmpty}, and \ref{lemma:well-defined}, respectively. We state these lemmas below, and provide insights into their proofs.

First, we show that at time $t_0'$ the state of the $\nu_i^\mathrm{th}$ vehicle after vehicle $A$ is in ${\cal F}_i$. These vehicles are able to come to a full stop far enough away from the intersection region, such that they can attain full speed before crossing it. 
\begin{lemma} \label{lemma:uncommittedInFi}
For any vehicle $i$ scheduled after vehicle $A$ at time $t_0'$, \ie, $\tau_i' > \tau_A$,
we have $z_i(t_0') \in \mathcal{F}_i$.
\end{lemma}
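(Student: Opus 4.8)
The plan is to prove the statement by (strong) induction on the sequence of arrival events, \ie, on the calls to Algorithm~\ref{algorithm:main}. The induction is run jointly with the feasibility claim of Lemma~\ref{lemma:main} and with Lemma~\ref{lemma:committed}, so that at the $t_0'$-call one may assume every earlier call produced feasible trajectories and that the previously assigned trajectory $x_i$ of each vehicle $i$ is well defined on $[t_0,t_0']$. The base case (the first arrival) is vacuous: there is at most one vehicle, and nothing is scheduled after $A$. For the inductive step I fix a lane-1 vehicle $i$ with $\tau_i' > \tau_A$; recall that, since every regular policy is first-come first-serve (Assumption~\ref{assumption:regular_policy}), all vehicles scheduled after the lane-2 vehicle $A$ do lie in lane~1, so this accounts for every relevant vehicle.

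The first step is a bookkeeping lemma on the schedule. By Assumption~\ref{assumption:regular_policy}, adding $A$ to queue~2 merely inserts $A$ into the service order without reordering the customers already present; combined with FCFS this shows that the lane-1 vehicles scheduled after $A$ at time $t_0'$, listed in order, are exactly those that were scheduled after the current lane-2 batch at time $t_0$, with the same relative order, hence with the same indices $\nu_i$. By Lemma~\ref{lemma:committed} applied at the $t_0$-call, each such vehicle $i$ has since $t_0$ been tracking the trajectory $x_i$ that the ${\tt MotionSynthesize}$ procedure assigned it then, so $z_i(t_0') = (x_i(t_0'),\dot x_i(t_0'))$ is a point of $x_i$, and the inductive hypothesis gives $z_i(t_0)\in\mathcal{F}_i$.

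The heart of the argument is to show that $\mathcal{F}_i$-membership is not lost as vehicle $i$ follows $x_i$ over $[t_0,t_0']$, \ie, that the inequality $x_i(t)+\dot x_i(t)^2/(2a_m)\le -v_m^2/(2a_m)-(\nu_i-1)l$ survives to $t=t_0'$. Since $x_i$ is an optimal ${\tt MotionSynthesize}$ trajectory, it reaches $(0,v_m)$ exactly at its scheduled crossing time $\tau_i^{(t_0)}+L/v_m$, stays below $0$ before then, and stays at least $l$ behind $x_{i-1}$; iterating this last property down the lane-1 platoon, $x_i$ stays at least $(\nu_i-1)l$ behind the trajectory of the first lane-1 vehicle scheduled after $A$. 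For that leading vehicle there is a clean lower bound on its scheduled crossing time coming from the service of the lane-2 batch plus the switchover time $r=w/v_m$, and the consecutive-schedule identity $\tau_{(\nu)}^{(t_0)}=\tau_{(\nu=1)}^{(t_0)}+(\nu-1)s$ with $s=l/v_m$ propagates this bound to vehicle $i$. I then show that this slack, together with $L\ge 2v_m^2/a_m$ (Assumption~\ref{assumption:control_region_length}), forces $t_0'$ to fall before the instant along $x_i$ at which $x_i(t)+\dot x_i(t)^2/(2a_m)$ could have risen above $-v_m^2/(2a_m)-(\nu_i-1)l$: intuitively, $x_i$ may approach the intersection only during a final accelerate-and-cruise phase whose duration is bounded by $\big(v_m^2/(2a_m)+(\nu_i-1)l\big)/v_m$, while the scheduled crossing time exceeds $t_0'$ by strictly more than this.

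The step I expect to be the main obstacle is precisely this last estimate: making rigorous the claim that the ``hug the intersection as closely as possible'' trajectory produced by ${\tt MotionSynthesize}$ cannot have exited $\mathcal{F}_i$ by time $t_0'$. This requires (a) carrying the schedule/index bookkeeping of the second paragraph through the degenerate cases — an idle server, an empty lane-2 batch at $t_0$, or vehicle $i-1$ being one of the vehicles scheduled before $A$ — and (b) converting the ``$(\nu_i-1)l$ behind the leader'' spacing into the stopping-distance inequality defining $\mathcal{F}_i$, which is delicate because that inequality also involves $\dot x_i$; the fix is to exploit that along $x_i$ the speed never exceeds $v_m$ and that, by safety, vehicle $i$ cannot be closing on its predecessor faster than the predecessor's own stopping-distance budget permits, so that the ``effective stopping position'' $x(t)+\dot x(t)^2/(2a_m)$ is monotone down the platoon. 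Assumption~\ref{assumption:control_region_length} enters at exactly the point where the available length $L$ is compared with the distance $v_m^2/(2a_m)$ needed to re-accelerate to full speed, guaranteeing the relevant slack is nonnegative for every $\nu_i$.
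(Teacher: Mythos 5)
Your plan diverges from the paper's proof, and two of the steps you flagged as ``the main obstacle'' are, I think, genuine gaps rather than technical loose ends. (i) The inductive hypothesis ``$z_i(t_0)\in\mathcal{F}_i$'' is not available, because $\mathcal{F}_i$ is parameterized by $\nu_i$, which is a count relative to the \emph{current} newly arrived vehicle $A$. At the previous call $t_0$, the arriving vehicle was some $A_0$, and the lemma there speaks only about vehicles scheduled after $A_0$, with $\nu_i$ measured from $A_0$. If $A_0$ was in lane~1 — the same lane as $i$ — the lemma at $t_0$ says nothing whatsoever about lane-1 vehicle $i$, since vehicle $i$ is not scheduled after a same-lane arrival; and even when $A_0$ is in lane~2, the index of $i$ relative to $A_0$ need not equal its index relative to $A$. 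So running your induction across calls does not hand you $z_i(t_0)\in\mathcal{F}_i$. (ii) The claimed monotonicity of the effective stopping position $x(t)+\dot x(t)^2/(2a_m)$ down the platoon does not follow from the constraints in ${\tt MotionSynthesize}$: the safety constraint is purely positional, $|x_i(t)-x_{i-1}(t)|\ge l$, and does not forbid $\dot x_i(t)>\dot x_{i-1}(t)$ when the gap is slack (for instance, when $i$ has fallen back and is accelerating to re-close it). Without a speed ordering, $x_i\le x_{i-1}-l$ does not give $x_i+\dot x_i^2/(2a_m)\le x_{i-1}-l+\dot x_{i-1}^2/(2a_m)$.

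The paper sidesteps both issues by making no $\mathcal{F}_i$-invariance argument across time at all. It shows (Proposition~\ref{proposition:intersection}) that the previously assigned $x_i$ passes through $\mathbf{I}_i=(-(\nu_i-1)l,v_m)$ at time $\tau_i+L/v_m-(\nu_i-1)s$; it then uses the regular-policy insertion property together with Proposition~\ref{proposition:taubt0} and Assumption~\ref{assumption:control_region_length} to show this instant exceeds $t_0'$ by at least $2v_m/a_m$; and a Pontryagin bang-bang argument (Proposition~\ref{proposition:minVelocity}) shows any state \emph{outside} $\mathcal{F}_i$ that can still reach $\mathbf{I}_i$ must do so in \emph{strictly less than} $2v_m/a_m$. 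The contradiction gives $z_i(t_0')\in\mathcal{F}_i$ directly, with no appeal to the state at $t_0$. Note also that your informal duration bound $(v_m^2/(2a_m)+(\nu_i-1)l)/v_m$ is not the right quantity: the relevant threshold is exactly $2v_m/a_m$, which is what Assumption~\ref{assumption:control_region_length} ($L/v_m\ge 2v_m/a_m$) is tuned to deliver.
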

Let us provide some intuition into the proof of this Lemma.
%Recall that for any vehicle $i$ scheduled to arrive at the intersection region after vehicle $A$, we defined $\nu_i$ such that vehicle $i$ is the $\nu_i^\mathrm{th}$ vehicle after vehicle $A$. 
First, we define state $\mathbf{I}_i := (\, -(\nu_i-1) l , v_m)$. See Figure~\ref{figure:Fi}.
By the construction of the ${\tt MotionSynthesize}$ procedure, the $\nu_i^\mathrm{th}$ vehicle after vehicle $A$ arrives at state ${\bf I}_i$ and then continues at maximum speed to the intersection region.
(One can observe this phenomena in the heavy load case of Figure~\ref{figure:trajectories}. The reason is that the ${\tt MotionSynthesize}$ procedure, while constraining the terminal state and time, also requires safety.)
Due to the regular polling policy condition (Assumption~\ref{assumption:regular_policy}) and the minimum road length condition (Assumption~\ref{assumption:control_region_length}), it turns out that the time between the arrival of vehicle $A$ and the arrival of vehicle $i$ at state $\mathbf{I}_i$ (according to previous trajectory $x_i$) is greater than or equal to $2 v_m/ a_m$, \ie, 
\begin{align*}
\tau_i + L/v_m - (\nu_i-1)s  - t_0' \geq 2v_m / a_m.
\end{align*}
We then conclude by showing that the state $z_i$ of the $\nu_i^\mathrm{th}$ vehicle at time $t_0'$ must be in $\mathcal{F}_i$, \ie, $z_i(t_0') \in \mathcal{F}_i$, via a straightforward application of Pontryagin's Minimum Principle.
Note that if $z_i(t_0') \not\in \mathcal{F}_i$, then there can be only two outcomes: either vehicle $i$ is not able to reach state $I_i$, or vehicle $i$ can reach state $I_i$ but only in an amount of time strictly less than $2v_m/a_m$, clearly violating the above inequality.
Detailed proof is given in Section B of this appendix.

Second, we show that not only does there exist a {\it feasible} trajectory for the $\nu_i^\mathrm{th}$ vehicle after vehicle $A$, \ie, $\mathscr{C}_i' \not= \emptyset$, but we also show that vehicle $i$ continues along its last assigned trajectory $x_i$ as long as possible. See Figure~\ref{figure:Ei}.
We will write this result rigorously as $\mathscr{E}_i' \not= \emptyset$, and we now proceed to make this notion precise.
Let $\mathscr{E}(y,\tilde{t}_0,\tilde{t}_f,x')$ be the set of all extensions of trajectory $y$ starting at time $\tilde{t}_0$ that reach the intersection region at time $\tilde{t}_f$ and are safe with trajectory $x'$. More precisely, we define
\begin{align*}
\mathscr{E}(y,\tilde{t}_0,\tilde{t}_f,x') := \big\{ & x \in \mathscr{C}\big(\, \big(y(\tilde{t}_0),\dot{y}(\tilde{t}_0)\big),\tilde{t}_0,\tilde{t}_f,x'\big) \,\big\vert \\
		& x(t) = y(t), \forall t \in [\tilde{t}_0,\tilde{t}_f] \cap \mathscr{D}(y) \big\}.
\end{align*}
We denote $t_c[x_i]$ as the time that trajectory $x_i$ leaves the set $\partial\mathcal{F}_i$, \ie,
$ %\begin{align*}
	t_c[x_i] := \sup \{t \,| \,\, z_i(t) \in \partial \mathcal{F}_i \}.
$ %\end{align*}
For shorthand, we write $\mathscr{E}_i'$ for $\mathscr{E}(x_i |_{[t_0,t_c[x_i]]},t_0',\tau_i' + L/v_m,x'_{i-1})$.
Note that $\mathscr{E}_i' \subseteq \mathscr{C}_i'$.
We now formalize our second intermediate result. 

\begin{figure}[t]
	\centerline{\includegraphics[clip=true, trim=2.15in 6.5in 4.1in 3.4in, width=0.5\textwidth]{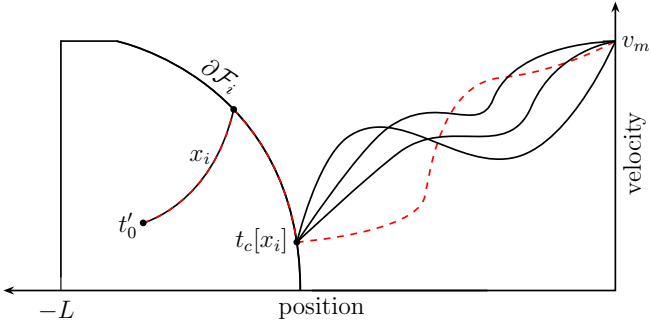}}
	\caption{Trajectory $x_i$ is depicted in position-velocity space as the dashed red line; it begins at time $t_0'$, exits set $\mathcal{F}_i$ at time $t_c[x_i]$, and continues to the state $(0,v_m)$. Sample trajectories from set $\mathscr{E}_i'$ follow trajectory also begin at time $t_0'$, replicate trajectory $x_i$ until time $t_c[x_i]$, and then choose any (feasible and safe) path to state $(0,v_m)$. Sample trajectories of $\mathscr{E}_i'$ are shown in black.} \label{figure:Ei}
\end{figure}

\begin{lemma} \label{lemma:EiNotEmpty}
Consider vehicle $i$ scheduled after vehicle $A$ at time $t_0'$, \ie, $\tau_i' > \tau_A$. Suppose $z_i(t_0') \in \mathcal{F}_i$ and $x_{i-1}' \in \mathscr{E}_{i-1}'$.
Then, $\mathscr{E}_i' \not= \emptyset$.
\end{lemma}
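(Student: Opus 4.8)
The plan is to construct an explicit extension of $x_i|_{[t_0,t_c[x_i]]}$ that is dynamically feasible, reaches $(0,v_m)$ at time $\tau_i'+L/v_m$, and stays safely behind $x_{i-1}'$. The candidate trajectory is: replicate $x_i$ on $[t_0',t_c[x_i]]$; from time $t_c[x_i]$ onward, follow the boundary curve $\partial\mathcal{F}_i$ (i.e.\ decelerate at rate $a_m$ until a full stop, then wait), and then at the last possible moment accelerate at rate $a_m$ so as to pass through state $\mathbf{I}_i=(-(\nu_i-1)l,v_m)$ and continue at $v_m$ to the intersection region, arriving at time $\tau_i'+L/v_m$. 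I would first verify this is well-defined: by hypothesis $z_i(t_0')\in\mathcal{F}_i$, and by definition of $t_c[x_i]$ the state $z_i(t_c[x_i])$ lies on $\partial\mathcal{F}_i$, so the replicated portion indeed ends on the boundary and the "bang--coast--bang" continuation from a point of $\partial\mathcal{F}_i$ is geometrically exactly the maneuver described in the paragraph following Lemma~\ref{lemma:uncommittedInFi} (come to rest at distance $v_m^2/(2a_m)+(\nu_i-1)l$, then accelerate through $\mathbf{I}_i$). This automatically satisfies the velocity bounds $[0,v_m]$, the acceleration bound $a_m$, and the terminal condition $z(\tau_i'+L/v_m)=(0,v_m)$. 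The timing works because there is enough slack: from the inequality established inside Lemma~\ref{lemma:uncommittedInFi}, $\tau_i+L/v_m-(\nu_i-1)s-t_0'\ge 2v_m/a_m$, and Assumption~\ref{assumption:regular_policy} (FCFS/regularity) ensures $\tau_i'\ge\tau_i$, so there is room to insert the stop-and-wait segment; one simply pushes the final acceleration arc as late as needed.

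The substantive step is the safety constraint $|x(t)|\ge l+|x_{i-1}'(t)|$ on the overlap of domains. I would split this over the two pieces of the constructed trajectory. On $[t_0',t_c[x_i]]$ the trajectory equals $x_i$, which was safe with $x_{i-1}$ by the previous call to ${\tt MotionSynthesize}$; here I would invoke $x_{i-1}'\in\mathscr{E}_{i-1}'$ to argue $x_{i-1}'$ agrees with $x_{i-1}$ on the relevant interval (both replicate the older trajectory up through $t_c[x_{i-1}]\ge t_c[x_i]$ since the $\partial\mathcal{F}$ curves are nested by $\nu$), so the old safety margin is inherited. On $[t_c[x_i],\tau_i'+L/v_m]$, the key observation is that the constructed trajectory for vehicle $i$ and the analogous constructed (or already-fixed) trajectory for vehicle $i-1$ both lie on/behind their respective $\partial\mathcal{F}$ curves, and $\partial\mathcal{F}_i$ and $\partial\mathcal{F}_{i-1}$ differ exactly by one vehicle length $l$ in stopping position; I would show that the bang--coast--bang profiles maintained behind these two parallel boundaries stay separated by at least $l$ in position for all $t$ (this is essentially a monotone-comparison / phase-plane argument using that both use acceleration magnitude $\le a_m$ and the "platoon" interpretation already spelled out in the text). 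This comparison between the two constructed profiles is where the real work lies.

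Finally, having shown $\mathscr{E}_i'\neq\emptyset$ by exhibiting this element, nothing further is needed for the stated lemma; attainment of the minimum in~\eqref{eq:procedure} over $\mathscr{C}_i'$ (which contains $\mathscr{E}_i'$) is deferred to Lemma~\ref{lemma:well-defined}. I expect the main obstacle to be making the position-separation comparison on the post-$t_c$ segment fully rigorous when the two vehicles are in different phases of their bang--coast--bang maneuvers (e.g.\ one still decelerating while the other is already stopped); the cleanest route is probably to parametrize both by the "distance-to-go until state $\mathbf{I}$" and show the induced positions never approach closer than $l$, using that the maneuvers are synchronized through a common target arrival time at the intersection.
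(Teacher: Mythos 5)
There are two genuine gaps. The first is timing. Your candidate trajectory always decelerates along $\partial\mathcal{F}_i$ to a \emph{full stop} at $t_c[x_i]+v_i^c/a_m$, waits, then re-accelerates; the wait duration is tuned so the trajectory arrives at $\tau_i'+L/v_m$. For this to be possible, the no-wait stop-and-go must not arrive \emph{later} than $\tau_i'+L/v_m$, which is not guaranteed. The no-wait stop-and-go takes $v_i^c/a_m+v_m/a_m+(\nu_i-1)s$ from $t_c[x_i]$ to the intersection, which strictly exceeds the time-optimal bang-bang continuation by $(v_i^c/a_m)(2-v_i^c/v_m)>0$; if $x_i$ was already near time-minimal after leaving $\partial\mathcal{F}_i$ and the insertion of $A$ only pushes $\tau_i'$ slightly past $\tau_i$, your candidate arrives too late. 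The slack inequality $\tau_i+L/v_m-(\nu_i-1)s-t_0'\ge 2v_m/a_m$ that you cite bounds the time budget from $t_0'$, not from $t_c[x_i]$, so it does not close this case. The paper's proof gets around it by using your stop-and-go only as a \emph{slow} endpoint $\bar{x}_i$, taking $x_i$ itself as the \emph{fast} endpoint, and invoking the interpolation result (Proposition~\ref{proposition:boundedPaths}) to exhibit some intermediate trajectory hitting $\tau_i'+L/v_m$ exactly; the interpolant in general decelerates only partway down $\partial\mathcal{F}_i$ and never comes to rest.

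The second gap is in the safety argument after $t_c[x_i]$. You assume both that $t_c[x_{i-1}]\geq t_c[x_i]$ (so that $x_{i-1}'$ and $x_{i-1}$ coincide on all of $[t_0',t_c[x_i]]$) and that the front vehicle's updated trajectory $x_{i-1}'$ is itself a bang--coast--bang behind $\partial\mathcal{F}_{i-1}$, so that two ``parallel'' profiles stay $l$ apart. Neither is given: the paper must separately treat the case $t_c[x_i]>t_c[x_{i-1}]$ (its Cases~1 and~2 when $\nu_i>1$), and the inductive hypothesis only yields $x_{i-1}'\in\mathscr{E}_{i-1}'$, not any particular bang-bang structure beyond $t_c[x_{i-1}]$. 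Rather than compare two parallel profiles, the paper builds the ``safe'' trajectory for vehicle $i$ to explicitly \emph{track} the front vehicle's control $u_{i-1}'$ (shifted in time) once the two velocity curves meet in the phase plane, and then derives position separation from the comparison results (Propositions~\ref{proposition:boundedTimes} and~\ref{proposition:boundedTimesFinal}). Your phase-plane intuition is on the right track, but it needs this tracking construction and the case split before the position-separation claim becomes a proof.
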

First, note that Lemma~\ref{lemma:EiNotEmpty} states that there exists a feasible trajectory for the $\nu_i^\mathrm{th}$ vehicle, \ie, $\mathscr{C}_i \not= \emptyset$.
Moreover, there exists a particular trajectory for the $\nu_i^\mathrm{th}$ vehicle that follows its last assigned trajectory $x_i$ at least until it leaves set $\mathcal{F}_i$, \ie, at time $t_c[x_i]$, as depicted in Figure~\ref{figure:Ei}.

The proof of Lemma~\ref{lemma:EiNotEmpty} is simply the construction of a trajectory in the set $\mathscr{E}_i'$, and it is given in Section C of this appendix.
This construction is quite elaborate in the general case. 
However, we first present a simple case that demonstrates the basic building blocks of the construction. In this simple case, we concern ourselves with the case $\nu_i = 1$. We construct a feasible trajectory in $\mathscr{E}_i'$ for this vehicle as follows: vehicle $i$ follows its last assigned trajectory $x_i$ until it reaches the boundary $\partial \mathcal{F}_i$, decelerates along $\partial \mathcal{F}_i$, and then finally accelerates to full speed before crossing the intersection region. The transition point from deceleration to acceleration is dependent solely on the terminal time $\tau_i' + L/v_m$ at which vehicle $i$ arrive at the intersection region.
We then argue that this trajectory for vehicle $i$ must be safe with the newly updated trajectory $x_{i-1}'$ of the vehicle directly in front.
One can see this from noting that the trajectory constructed above for vehicle $i$ is clearly safe with the last assigned trajectory $x_{i-1}$ for vehicle $i-1$, and then noting that $x_{i-1}$ and $x_{i-1}'$ are essentially identical, by the earlier stated Lemma~\ref{lemma:committed}.
The general case ($\nu_i > 1$) is more complicated in its construction due to the fact that the updated trajectory $x_{i-1}'$ of the vehicle directly in front changes; thus, additional care must be taken in constructing a feasible trajectory in $\mathscr{E}_i'$ for vehicle $i$ that is safe with the updated trajectory $x_{i-1}'$.

Now, we formalize our final intermediate result. 
\begin{lemma} \label{lemma:well-defined}
Suppose vehicle $i$ is scheduled after vehicle $A$ at time $t_0'$, and $\mathscr{E}_i' \not= \emptyset$.
Then, the ${\tt MotionSynthesize}$ procedure described in Equation~\eqref{eq:procedure} returns a solution from the set $\mathscr{E}_i' \subseteq \mathscr{C}_i'$.
\end{lemma}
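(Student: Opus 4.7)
The plan is to split the argument into two parts: existence of a minimizer for the optimization in Equation~\eqref{eq:procedure}, and the structural claim that every minimizer necessarily lies in $\mathscr{E}_i'$.

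For existence I would invoke the direct method of optimal control. The feasible set $\mathscr{C}_i'$ is non-empty because $\mathscr{E}_i' \subseteq \mathscr{C}_i'$ and $\mathscr{E}_i' \neq \emptyset$ by hypothesis. Trajectories in $\mathscr{C}_i'$ are uniformly Lipschitz (since $\dot x \in [0, v_m]$) and live in the compact rectangle $[-L, 0] \times [0, v_m]$, with measurable controls in the compact, convex interval $[-a_m, a_m]$ and dynamics affine in the control. A minimizing sequence therefore admits, by Arzel\`a--Ascoli on the trajectories and weak-$\ast$ compactness on the controls, a uniformly convergent subsequence whose limit remains in $\mathscr{C}_i'$; lower semicontinuity of $\int |x(t)|\,dt$ under uniform convergence then yields a minimizer $x^* \in \mathscr{C}_i'$.

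The structural step is the heart of the argument. I would show that on the sub-interval $[t_0', t_c[x_i]]$ the trajectory $x_i$ is \emph{pointwise extremal} in $\mathscr{C}_i'$, \ie, $x(t) \leq x_i(t)$ for every $x \in \mathscr{C}_i'$ and every $t \in [t_0', t_c[x_i]]$. Geometrically, $x_i$ lies on $\partial \mathcal{F}_i$ on this interval, so it sits at the closest-to-intersection position still consistent with eventually reaching $\mathbf{I}_i$ (and hence $(0, v_m)$) under the kinematic bounds, with the $\nu_i - 1$ vehicles ahead accommodated. Any trajectory crossing strictly forward of $x_i$ at some $t_* \in [t_0', t_c[x_i]]$ would exit $\mathcal{F}_i$ and, by the characterization of $\mathcal{F}_i$ as the backward-reachable set of $\mathbf{I}_i$, lose the ability to satisfy the terminal constraint at time $\tau_i' + L/v_m$ while remaining safe with respect to $x_{i-1}'$. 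Granting this pointwise inequality, replacing the prefix of any minimizer $x^*$ on $[t_0', t_c[x_i]]$ by $x_i|_{[t_0', t_c[x_i]]}$ and then continuing via the construction that witnessed $\mathscr{E}_i' \neq \emptyset$ in Lemma~\ref{lemma:EiNotEmpty} produces a trajectory $\tilde x \in \mathscr{E}_i'$ whose cost is no larger than that of $x^*$. Minimality of $x^*$ then forces $x^* \in \mathscr{E}_i'$, so the procedure returns such a minimizer.

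The main obstacle is establishing the pointwise inequality $x(t) \leq x_i(t)$ rigorously, because the safety constraint ties vehicle $i$ to $x_{i-1}'$ rather than to $x_{i-1}$, so the bound is not purely kinematic in $\mathcal{F}_i$. My plan is to combine three ingredients: Pontryagin's Minimum Principle applied to the optimization call at $t_0$ that produced $x_i$ (which forces $x_i$ to operate at maximum deceleration whenever it is on $\partial \mathcal{F}_i$, so the trajectory on that arc is uniquely determined by integration from $z_i(t_0')$); the inductive hypothesis $x_{i-1}' \in \mathscr{E}_{i-1}'$, which guarantees that $x_{i-1}'$ agrees with $x_{i-1}$ on the relevant sub-interval and therefore carves out at least as much safe room for vehicle $i$ as $x_{i-1}$ did; and the definition of $\mathcal{F}_i$ as the backward-reachable set of $\mathbf{I}_i$ under the kinematic bounds, which converts the pointwise claim into a reachability statement. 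The cleanest finish is a short swap-and-compare argument: assuming $x(t_*) > x_i(t_*)$ at some $t_* \in [t_0', t_c[x_i]]$, propagate $x$ forward and show that the terminal condition cannot be met while maintaining safety with $x_{i-1}'$, contradicting $x \in \mathscr{C}_i'$.
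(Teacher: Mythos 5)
Your existence step is fine and essentially equivalent to the paper's use of Filippov's theorem: a direct-method argument via Arzel\`a--Ascoli and weak-$\ast$ compactness of the controls gives the same conclusion. The structural step, however, has several genuine gaps.

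First, the geometric appeal to $\mathcal{F}_i$ is misdirected. Lemma~\ref{lemma:uncommittedInFi} places $z_i(t_0')$ \emph{inside} $\mathcal{F}_i$, not on $\partial\mathcal{F}_i$; there is no reason $x_i$ should track the boundary arc on all of $[t_0', t_c[x_i]]$, so the ``uniquely determined by integration'' step fails. Moreover, $x(t_*) > x_i(t_*)$ does not imply $z(t_*) \notin \mathcal{F}_i$: $\mathcal{F}_i$ is a region in $(p,v)$-space, and a state that is further forward in position but slower can still satisfy $p + v^2/(2a_m) \leq -v_m^2/(2a_m) - (\nu_i-1)l$. Finally, $\mathcal{F}_i$ is \emph{not} the backward-reachable set of $\mathbf{I}_i$ --- that is $\mathbf{\Omega}(\mathbf{I}_i)$, and $\mathcal{F}_i \subsetneq \mathbf{\Omega}(\mathbf{I}_i)$. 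Leaving $\mathcal{F}_i$ only costs the ability to reach $\mathbf{I}_i$ \emph{arbitrarily late}; it does not preclude meeting the terminal constraint at $\tau_i'+L/v_m$, which is a finite time. So the claim that $x(t_*)>x_i(t_*)$ contradicts feasibility of $x \in \mathscr{C}_i'$ is simply false: the new terminal time $t_f' \geq t_f$ is less demanding, not more, and a feasible $x$ can be ahead of $x_i$ early and slow down later.

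Second, the pointwise bound $x(t) \leq x_i(t)$ for $x \in \mathscr{C}_i'$ is genuinely a consequence of \emph{optimality} of $x_i$ in the $t_0$-call (with its terminal time $t_f \leq t_f'$), not of feasibility. The paper proves it by trajectory surgery: if $x \in \mathscr{C}_i'$ crossed above $x_i$, then Lemma~\ref{lemma:minimumTrajectory} (a nontrivial splicing lemma: given two feasible trajectories with the same start and $t_f'\geq t_f$, one can build a third, feasible for the shorter horizon $t_f$, that dominates $x_i$ pointwise and is strictly above it somewhere) would produce a strictly better competitor for the $t_0$ optimization, contradicting $x_i$'s optimality. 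That this surgery is even possible is the hard part; your outline has nothing playing that role, and it cannot be replaced by a reachability statement about $\mathcal{F}_i$. (A minor but telling slip: you say the inductive hypothesis means $x_{i-1}'$ ``carves out at least as much safe room'' --- it's the reverse, $x_{i-1}' \leq x_{i-1}$ gives \emph{less} room, which is exactly what lets one conclude $\mathscr{C}_i' \subseteq \mathscr{C}(z_i(t_0'),t_0',t_f',x_{i-1})$ and compare against $x_i$.)

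Third, even granting the prefix bound, your swap-and-compare does not close. Replacing the prefix of $x^*$ with $x_i|_{[t_0',t_c[x_i]]}$ generally creates a state mismatch at $t_c[x_i]$, and the Lemma~\ref{lemma:EiNotEmpty} continuation gives \emph{some} trajectory in $\mathscr{E}_i'$, with no argument that its suffix cost is $\leq$ the suffix cost of $x^*$ (which may hug the intersection more closely after $t_c[x_i]$). And even if the total cost comparison went through, it would only show that \emph{some} minimizer lies in $\mathscr{E}_i'$, not the returned one. The paper's sandwich $\tilde{x}(t) \leq x_i'(t) \leq x_i(t)$ (for every $\tilde{x}\in\mathscr{E}_i'$ and for the returned minimizer $x_i'$) avoids all of this: since $\tilde{x}$ agrees with $x_i$ on $[t_0',t_c[x_i]]$ by definition of $\mathscr{E}_i'$, the two bounds squeeze $x_i'$ to agree with $x_i$ there, which together with $x_i' \in \mathscr{C}_i'$ is exactly membership in $\mathscr{E}_i'$. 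Both halves of the sandwich rest on Lemma~\ref{lemma:minimumTrajectory}, which is the key tool missing from your proposal.
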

This lemma states that if $\mathscr{E}_i'$ is non-empty, then the ${\tt MotionSynthesize}$ procedure admits a trajectory in $\mathscr{E}_i'$. 
This rather straightforward lemma is quite technical in nature, while adding no further insight. 
The details of the proof of Lemma~\ref{lemma:well-defined} is given in Section D of this appendix.
%Now in Section C, we show (by induction on $n_i$) that if the state of vehicle $i$ is contained in set $\mathcal{F}_i$ at time $t_0'$, then the set $\mathscr{E}(x_i|_{[t_0',t_c(x_i)]},t_0',\tau_i'+L/v_m,x'_{i-1})$ is nontrivial.

Now, let us give a brief overview.
From Lemmas~\ref{lemma:committed} and~\ref{lemma:uncommittedOutline}, the proof of Lemma~\ref{lemma:main} follows.
Recall that Lemma~\ref{lemma:committed} verifies the feasibility of the optimization problem in Equation~\eqref{eq:procedure} for all vehicles with updated schedules {\it earlier} than vehicle $A$, \ie, $\tau_{i,k}' < \tau_A$. %See Section $A$ of this appendix for the proof of this lemma.
Similarly, Lemma~\ref{lemma:uncommittedOutline} shows the feasibility of Equation~\eqref{eq:procedure} for all vehicles with updated schedules {\it later} than vehicle $A$, \ie, $\tau_{i,k}' > \tau_A$.
We prove this lemma by induction, using our three intermediate results: Lemmas~\ref{lemma:uncommittedInFi},~\ref{lemma:EiNotEmpty}, and~\ref{lemma:well-defined}.

Consider the first vehicle with updated schedule after the arrival of vehicle $A$, \ie, $\nu_i = 1$.
By Lemmas~\ref{lemma:uncommittedInFi} and~\ref{lemma:EiNotEmpty}, $\mathscr{E}_i'$ is non-empty.
Next, by Lemma~\ref{lemma:well-defined}, trajectory $x_i'$, as defined by Equation~\eqref{eq:procedure}, exists {\it and} is contained in $\mathscr{E}_i'$.
Now, consider the case $\nu_i > 1$, and assume (by induction) that trajectory $x_{i-1}'$, as defined by Equation~\eqref{eq:procedure}, exists and is contained in $\mathscr{E}_{i-1}'$.
Then, by Lemmas~\ref{lemma:uncommittedInFi} and~\ref{lemma:EiNotEmpty}, the set $\mathscr{E}_i'$ is non-empty.
By Lemma~\ref{lemma:well-defined}, the ${\tt MotionSynthesize}$ procedure admits a solution $x_i'$ contained in $\mathscr{E}_i'$.
Thus, we have shown, for any vehicle $i$ scheduled after vehicle $A$, \ie, $\tau_i' > \tau_A$, the ${\tt MotionSynthesize}$ procedure always admits a trajectory in $\mathscr{E}_i'$. Lemma~\ref{lemma:uncommittedOutline} follows directly from noting $\mathscr{C}_i' \subseteq \mathscr{E}_i'$.
%Now, by the induction assumption, we assume $\mathscr{E}(x_{i-1}|_{[t_0',t_c(x_{i-1})]},t_0',\tau_{i-1}'+L/v_m,x'_{i-2})$ is nontrivial.
%Next, we invoke the result of Section D on vehicle $i-1$: the ${\tt MotionSynthesize}$ procedure admits a solution for vehicle $i$, \ie, its updated trajectory $x_{i-1}'$.
%Moreover, Section D affirms that the updated trajectory $x_{i-1}'$ is the same as the previous trajectory $x_{i-1}$ at least until time $t_c(x_{i-1})$; hence, safety of $x_i$ with $x_{i-1}'$ on time interval $[t_0',t_c(x_{i-1})]$ is immediate.
%Knowing this property of $x_{i-1}'$ is the key to showing $\mathscr{E}(x_i|_{[t_0',t_c(x_i)]},t_0',\tau_i'+L/v_m,x'_{i-1})$ is nontrivial.
%We exhaust the possible combinations of trajectories $x_i$ and $x_{i-1}'$ in three cases, and show that in each case, it is possible to construct a trajectory for vehicle $j$ contained in the set $\mathscr{E}(x_i|_{[t_0',t_c(x_i)]},t_0',\tau_i'+L/v_m,x'_{i-1})$.
% 

To be completely rigorous, one has to verify that the optimization problem in Equation~\eqref{eq:procedure} has a feasible solution for newly arrived vehicle $A$.
However, this is not difficult, since vehicle $A$ in lane 2 follows a vehicle whose updated trajectory is simply its last assigned trajectory.
Therefore, by using almost identical arguments as for vehicle $i$ in lane 1 satisfying $\nu_i = 1$, one can show that vehicle $A$ also has a feasible trajectory with terminal time $\tau_A + L/v_m$. 
%Then, with little adjustment, apply Lemma~\ref{lemma:well-defined} to verify that the ${\tt MotionSynthesize}$ procedure also holds for vehicle $A$.

\newpage
\subsection{Proof of Lemma~\ref{lemma:committed}} \label{section:setup}

Recall that $\tau_{i,k}$ is the schedule time for vehicle $i$ in lane $k$; hence, the same vehicle was scheduled to arrive at the intersection region at time $\tau_{i,k}+L/v_m$, when the algorithm was run at time $t_0$. Recall that the schedule time for the same vehicle is $\tau_{i,k}'$, when the algorithm is run again at time $t_0'$. 
Consider vehicles scheduled before vehicle $A$, according to the updated schedule times.
%Let $(  i_1 , i_2, \ldots , i_n  )$ be the service order of the vehicles at time $t_0$. (See Assumption~\ref{assumption:regular_policy}.)
%
%Let $(  i_l , \ldots , i_m, i_A , i_{m+1} , \ldots , i_n  )$ be the order of vehicles at time $t_0'$.
First, note that by the regular polling policy condition (See Assumption~\ref{assumption:regular_policy}), any vehicle $(i,k)$ scheduled before the newly arrived vehicle $A$ does not change its schedule time, \ie, $\tau_{i,k}' < \tau_A$ implies $\tau_{i,k}' = \tau_{i,k}$.
For shorthand, we denote $\tau_{i,k}+L/v_m$ by $t_f$.
Note that for all such vehicles, we have $\tau_{i,k}+L/v_m = \tau_{i,k}'+L/v_m = t_f$.
Now, we prove (by induction) that $x_{i,k}' = x_{i,k}|_{ [ t_0', t_f] }$.

Out of all the vehicles scheduled before vehicle $A$, let us consider the first vehicle in lane $k$ at time $t_0'$, \ie, $i = \min I_k(t_0')$.
We denote ${\tt MS}$ as short-hand for ${\tt MotionSynthesize}$ in this section. Note that trajectory $x_{i,k}$ satisfies
\begin{align*}
	x_{i,k} &=  {\tt MS}(z_{i,k}(t_0),t_0,t_f,x_{i-1,k}).
\end{align*}
Also note that by definition, we have
\begin{align*}
	x_{i,k}' &:=  {\tt MS}(z_{i,k}(t_0'),t_0',t_f,\emptyset).
\end{align*}
See Figure~\ref{figure:sectionA}. By Bellman's Principle of Optimality,
\begin{align*}
	x_{i,k}|_{[t_0',t_f]} &= {\tt MS}(z_{i,k}(t_0'),t_0',t_f,x_{i-1,k}) \\
				&= {\tt MS}(z_{i,k}(t_0'),t_0',t_f,x_{i-1,k}|_{[t_0',t_f]}) \\
				&= {\tt MS}(z_{i,k}(t_0'),t_0',t_f,\emptyset).
\end{align*}
Therefore, we have $x_{i,k}' = x_{i,k}|_{[t_0',t_f]}$.
Now, we consider any vehicle $(i,k)$ that is scheduled before vehicle $A$, \ie, $\tau_{i,k}' < \tau_A$.
By induction, we assume vehicle $(i-1,k)$ continues on its previous trajectory, \ie, $x_{i-1,k}' = x_{i-1,k}|_{ \{ t \colon t\geq t_0' \} }$.
We claim that vehicle $(i,k)$ also continues on its previous trajectory, \ie, $x_{i,k}' = x_{i,k}|_{ [ t_0' ,t_f]}$.
Recall
\begin{align*}
	x_{i,k}' := {\tt MS}(z_{i,k}(t_0'),t_0',t_f,x_{i-1,k}').
\end{align*}
Note that
\begin{align*}
	x_{i,k}|_{[t_0',t_f]} &= {\tt MS}(z_{i,k}(t_0'),t_0',t_f,x_{i-1,k}) \\
					&= {\tt MS}(z_{i,k}(t_0'),t_0',t_f,x_{i-1,k}|_{[t_0',t_f]}) \\
					&= {\tt MS}(z_{i,k}(t_0'),t_0',t_f,x_{i-1,k}').
\end{align*}
Hence, $x_{i,k}' = x_{i,k}|_{ [ t_0' ,t_f] }$.
The proof of this lemma is complete.

\begin{figure}
	\centerline{\includegraphics[clip=true, trim=2in 6.6in 4.2in 3.3in, width=0.5\textwidth]{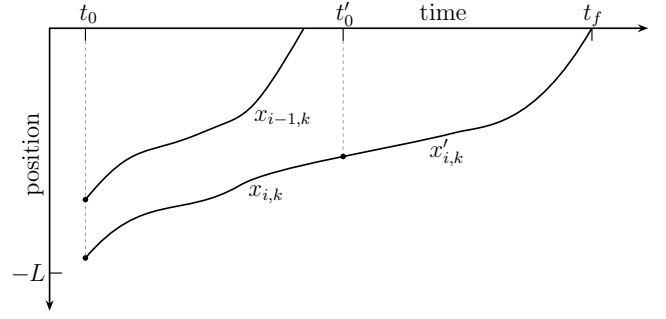}}
	\caption{Trajectories $x_{i,k}$ and $x_{i-1,k}$ that had been previously generated at time $t_0$. Newly generated trajectory $x_{i,k}'$ is also shown, assuming that vehicle $(i,k)$ is the first vehicle in lane $k$ at time $t_0'$. Note that in this scenario, updated trajectory $x_{i,k}'$ is simply a restriction of previous trajectory $x_{i,k}$.} \label{figure:sectionA}
\end{figure}

%%%%%%%%%%%%%%%%%%%%%%%%%%%%%%%%%%%%%%%%%%%%%%%%%%%%%%%%%%%%%%%%%%%%%%%%%%%%%%%%%%%%%%%%%%%%%%%%%%%%%%%%%%%%%%%%%%%%%%%%%%%%%%%%%%%%%%%%%%%%%%%%%%%%%%%%%%%%%%%%%%%%%%%%%
\newpage
\subsection{Proof of Lemma~\ref{lemma:uncommittedInFi}} \label{section:B}
In this section, we show that if at time $t_0'$ vehicle $i$ is scheduled to arrive at the intersection region after vehicle $A$, then the state of vehicle $i$ is currently contained in $\mathcal{F}_i$, \ie, $z_i(t_0') \in \mathcal{F}_i$.
In summary, we argue as follows.
Recall that for any vehicle $i$ scheduled to arrive at the intersection region after vehicle $A$, we defined $\nu_i$ such that vehicle $i$ is the $\nu_i$th vehicle after vehicle $A$. 
Let us define state $\mathbf{I}_i$ as $\mathbf{I}_i := (\,-(\nu_i-1) l , v_m)$. See Figure~\ref{figure:Fi}.
By the construction of the ${\tt MotionSynthesize}$ procedure, any such vehicle $i$ arrives at state ${\bf I}_i$ and then continues at maximum speed to the intersection region.
Due to the regular polling policy condition (Assumption~\ref{assumption:regular_policy}) and the minimum road length condition (Assumption~\ref{assumption:control_region_length}), it turns out that the time between the arrival of vehicle $A$ and the arrival of vehicle $i$ at state $\mathbf{I}_i$ (according to previous trajectory $x_i$) is greater than or equal to $2 v_m/ a_m$.
We then conclude by showing that $z_i(t_0') \in \mathcal{F}_i$ holds via an optimal control argument.

Now, we make the previous arguments precise.
First, we show that vehicle $i$ arrives at state $\mathbf{I}_i$ at time $\tau_i + L/v_m - (\nu_i-1)s$: 
\begin{proposition} \label{proposition:intersection}
Suppose vehicle $i$ is scheduled after vehicle $A$ at time $t_0'$, \ie, $\tau_i' < \tau_A$.  Let $x_i$ be the previous trajectory of vehicle $i$, let $z_i$ be the state history, \ie, $z_i = (x_i,\dot{x}_i)$, and let $\tau_i$ be its previous schedule time.
Then, $z_i(\tau_i+L/v_m-(\nu_i-1)s) = (\, -(\nu_i-1)l,v_m) := \mathbf{I}_i$.
\end{proposition}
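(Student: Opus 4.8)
\textbf{Proof proposal for Proposition~\ref{proposition:intersection}.}

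The plan is to unwind the definitions of the schedule times produced by the polling system together with the construction of the \texttt{MotionSynthesize} procedure, and then read off the claim. First I would recall that, in the polling system $\mathcal{P}$ that the algorithm simulates, the service time of each customer is $s = l/v_m$ and the set-up time is $r = w/v_m$, both deterministic; consequently, if $\tau_i$ is the schedule time of vehicle $i$ (the time the server begins servicing the corresponding customer), then the server finishes servicing that customer at $\tau_i + s$, and in the first-come-first-serve ordering enforced by a regular policy (Assumption~\ref{assumption:regular_policy}), the $\nu_i-1$ customers scheduled between vehicle $A$ and vehicle $i$ that lie in the same lane are serviced consecutively, so that $\tau_i = \tau_{i-1} + s$ for each consecutive pair among them. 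The key translation is the one made in Algorithm~\ref{algorithm:main}: vehicle $i$ is required by \texttt{MotionSynthesize} to reach the intersection region (position $0$) at time $\tau_i + L/v_m$ with full speed $v_m$, and to remain safe with (at distance $\geq l$ behind) the vehicle directly in front.

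Next I would argue that, because all of these vehicles are packed nose-to-tail behind the foremost one and all cross the intersection region at the common maximum speed $v_m$ (their terminal velocities are all $v_m$, and the safety constraint $|x(t) - y(t)| \geq l$ forces exactly spacing $l$ once the platoon is moving at $v_m$), vehicle $i$ trails the foremost of these vehicles by exactly $(\nu_i - 1)l$ in position during the final segment of its trajectory. Equivalently, vehicle $i$ reaches position $-(\nu_i-1)l$ at the moment the foremost vehicle (the one with $\nu = 1$) reaches position $0$, namely at time $\tau_{i}^{(1)} + L/v_m$ where $\tau_i^{(1)}$ is that foremost vehicle's schedule time; and since the schedule times of these $\nu_i$ consecutive same-lane customers increase by exactly $s$ each, $\tau_i^{(1)} = \tau_i - (\nu_i-1)s$. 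Hence vehicle $i$ is at position $-(\nu_i-1)l$ at time $\tau_i + L/v_m - (\nu_i-1)s$. It remains only to check the velocity component: on this final segment the vehicle is traveling at $v_m$ (it must arrive at $0$ with velocity $v_m$, and the last $(\nu_i-1)l$ of distance is covered at constant speed $v_m$ taking exactly $(\nu_i-1)l/v_m = (\nu_i-1)s$ time units, consistent with the time offset), so $\dot{x}_i(\tau_i + L/v_m - (\nu_i-1)s) = v_m$. Therefore $z_i(\tau_i + L/v_m - (\nu_i-1)s) = (-(\nu_i-1)l, v_m) = \mathbf{I}_i$, as claimed.

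The step I expect to be the main obstacle is establishing rigorously that vehicle $i$ is in fact at distance exactly $(\nu_i-1)l$ behind the foremost of the post-$A$ vehicles on its terminal segment, rather than merely at distance $\geq (\nu_i-1)l$. The inequality direction follows immediately from the chain of pairwise safety constraints $|x_j(t) - x_{j-1}(t)| \geq l$; the equality requires using the structure of the optimal trajectory returned by \texttt{MotionSynthesize}, which minimizes $\int |x(t)|\,dt$, i.e., keeps each vehicle as close to the intersection region as the constraints allow. Since the vehicle in front is itself at $v_m$ on its terminal segment and the follower must also arrive at $v_m$, the cost-minimizing feasible trajectory saturates the spacing constraint at exactly $l$ there; I would make this precise by a short optimal-control / exchange argument (any trajectory with spacing strictly greater than $l$ on a final sub-interval could be modified to move closer without violating feasibility, strictly decreasing the objective), or alternatively by directly invoking the explicit form of the synthesized trajectory near the terminal time together with the terminal conditions $x(\tau_i'+L/v_m)=0$, $\dot x(\tau_i'+L/v_m)=v_m$. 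Everything else is bookkeeping with the deterministic service-time recursion $\tau_i = \tau_{i-1} + s$ and the FCFS property guaranteed by Assumption~\ref{assumption:regular_policy}.
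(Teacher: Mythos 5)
Your proposal identifies the right structural ingredients---the deterministic schedule-time recursion $\tau_i = \tau_{i-1} + s$ from the FCFS property, and the observation that the post-$A$ platoon crosses nose-to-tail at $v_m$---and you correctly flag the crux: showing that the spacing at time $\bar t := \tau_i + L/v_m - (\nu_i-1)s$ is \emph{exactly} $l$, not merely $\geq l$. But the route you propose for closing that gap is different from, and somewhat weaker than, what the paper does.

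You attack the tightness via the \emph{optimality} of \texttt{MotionSynthesize} (an exchange argument: looser spacing on a final sub-interval would be wasteful for the $\int|x|\,dt$ objective). This is a detour and is also harder to make airtight: an exchange argument shows that the spacing constraint is saturated \emph{somewhere} on a terminal sub-interval, but it does not by itself pin down that the state equals exactly $(-(\nu_i-1)l, v_m)$ at the \emph{specific} time $\bar t$. The paper instead uses only \emph{feasibility}, which is both simpler and more robust. The argument is an induction on $\nu_i$: the base case $\nu_i=1$ is the terminal condition $z_i(\tau_i + L/v_m) = (0,v_m)$ itself. In the inductive step, the induction hypothesis gives $z_{i-1}(\bar t) = (-(\nu_{i-1}-1)l, v_m)$; safety gives $x_i(\bar t) \leq x_{i-1}(\bar t) - l = -(\nu_i-1)l$. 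If this were strict, vehicle $i$ would have to cover a distance strictly greater than $(\nu_i-1)l$ in the remaining time $(\nu_i-1)s = (\nu_i-1)l/v_m$, which requires an average speed strictly above $v_m$, contradicting $\dot x \leq v_m$. The same reasoning forces $\dot x_i(\bar t) = v_m$ (starting below $v_m$ would again make the remaining $(\nu_i-1)l$ uncoverable in time $(\nu_i-1)l/v_m$). No appeal to the minimizing property of the returned trajectory is needed---only to its terminal boundary conditions and the maximum-speed bound. So: your approach can likely be completed, but it leans on optimality where feasibility suffices, and it needs the induction to be made explicit; you should adopt the paper's contradiction-from-insufficient-time argument, which is exactly the second alternative you gesture at but do not carry out.
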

\begin{figure}
	\centerline{\includegraphics[clip=true, trim=2in 6.6in 4.2in 3.3in, width=0.5\textwidth]{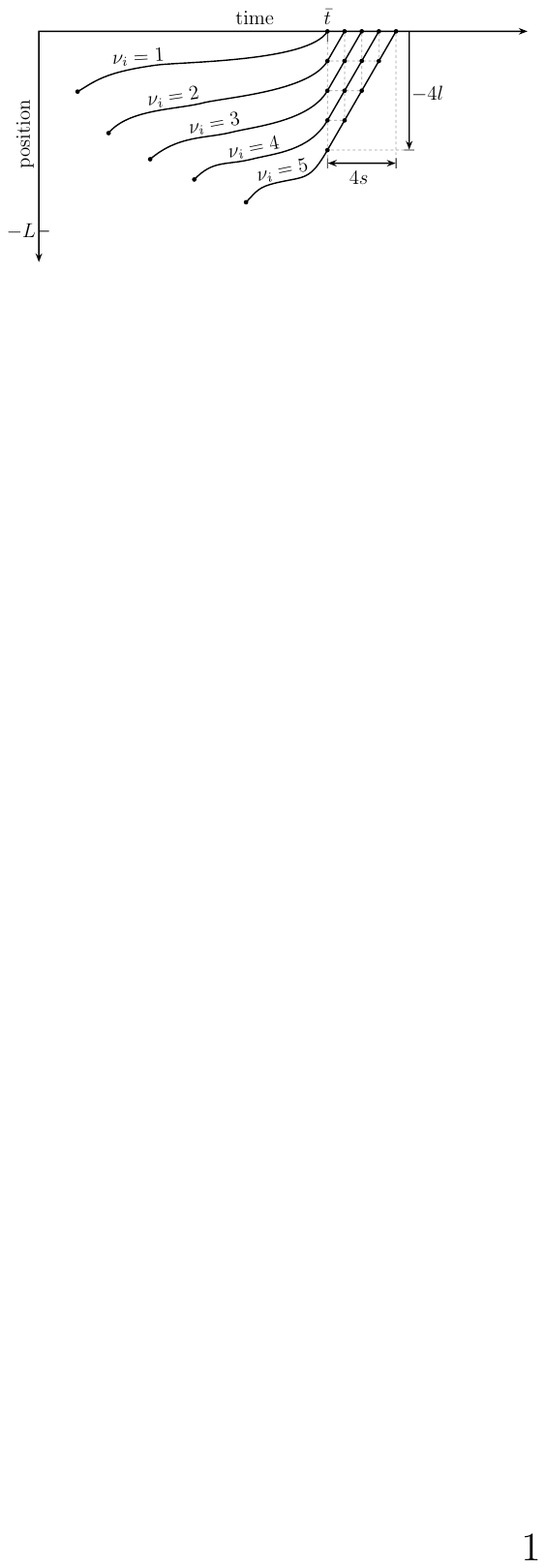}}
	\caption{Sample trajectories for the first five vehicles scheduled after newly arrived vehicle $A$. Time $\bar{t} := \tau_i + L/v_m - (\nu_i-1)s$ is exactly the terminal time of the first vehicle scheduled after vehicle $A$, \ie, $\nu_i = 1$. Note how these vehicles move as a platoon, as the first vehicle begins to cross the intersection region.} \label{figure:sectionBterminalTime}
\end{figure}

\begin{proof}
See Figure~\ref{figure:sectionBterminalTime}. Essentially, this proposition states that all vehicles scheduled after vehicle $A$ move as a platoon when the first vehicle ($\nu_i = 1$) begins to cross the intersection region.
This phenomenon can be observed clearly when the algorithm is run in the heavy load regime. See Figure~\ref{figure:trajectories}.
We prove this by induction on $\nu_i$.
First, consider the case $\nu_i = 1$. Then, $z_{i}(\tau_i + L/v_m) = (0,v_m)$ immediately holds from the construction of the ${\tt MotionSynthesize}$ procedure.
Next, we consider the more general case $\nu_i > 1$.
We assume $z_{i-1}(\tau_{i-1}+L/v_m-(\nu_{i-1}-1)s) = ( \,-(\nu_{i-1}-1)l,v_m)$.
Denote 
\begin{align*}
	\bar{t} := \tau_{i-1}+L/v_m-(\nu_{i-1}-1)s = \tau_i + L/v_m - (\nu_i-1)s.
\end{align*}
By safety of previous trajectories $x_i$ and $x_{i-1}$, we have 
\begin{align*}
	x_i(\bar{t}) \leq x_{i-1}(\bar{t}) - l = -(\nu_{i-1}-1)l - l =  -(\nu_i-1)l.
\end{align*}
We claim equality also holds.
To see this, suppose $x_i(\bar{t}) < -(\nu_i-1)l$.
Then, the time that vehicle $i$ arrives at the intersection region (according to trajectory $x_i$) is strictly later than $\tau_i + L/v_m$, a contradiction.
Thus, $x_i(\bar{t}) = -(\nu_i-1)l$.

Similarly, we show that $\dot{x}_i(\bar{t}) = v_m$.
By construction of the ${\tt MotionSynthesize}$ procedure, $\dot{x}_i(\bar{t}) \leq v_m$.
We claim equality. 
To see this, suppose $\dot{x}_i(\bar{t}) < v_m$.
Then, vehicle $i$ arrives at the intersection region (according to $x_i$) strictly later than time $\tau_i + L/v_m$, a contradiction.
Thus, $\dot{x}_i(\bar{t}) = v_m$.
\end{proof}
Although not necessary for the development of the proof of this lemma, it is an immediate corollary of Proposition~\ref{proposition:intersection} that vehicle $i$ travels with maximum speed from state $\mathbf{I}_i$ to the intersection region.

Next, we show that the time between the arrival of vehicle $A$ and the arrival of vehicle $i$ at state $\mathbf{I}_i$ is at least $2v_m/a_m$, \ie, Inequality~\eqref{equation:tauigeo}.
Let
\begin{gather} \label{order:t0prime}
\begin{split}
\Big( (i_l,k_l),\ldots,(i_{m-1},k_{m-1}),(B,k_B), \\ 
(A,2), \\ 
(C,k_C),(i_{m},k_{m}),\ldots,(i_n,k_n)\Big)
\end{split}
\end{gather}
be the service order of vehicles computed at time $t_0'$, and let
\begin{gather} \label{order:t0}
\begin{split}
\Big( (i_1,k_1),\ldots,(i_{m-1},k_{m-1}),(B,k_B), \\ 
(C,k_C),(i_{m},k_{m}),\ldots,(i_n,k_n) \Big)
\end{split}
\end{gather}
be the service order of vehicles computed at time $t_0$. (See Section~\ref{subsection:simulateQueue} and also Assumption~\ref{assumption:regular_policy}.)
First, note that all vehicles scheduled after vehicle $A$ are in lane 1, \ie, $k_C = k_m = \cdots = k_n = 1$, since a regular polling policy must be {\em first-come first-serve} (FCFS).
From the previous schedule order~\eqref{order:t0}, we can express the previous schedule time $\tau_i$ of vehicle $i$ as
\begin{align} \label{equation:tauioldtauc}
	\tau_i = \tau_C + (\nu_i - 1) s.
\end{align}
Define $\theta_{BC}$ to be equal to 1 if vehicles $B$ and $C$ are in opposite lanes, \ie, $k_B \not= k_C$, and 0, if they are in the same lanes, \ie, $k_B = k_C$.
Then, the previous schedule time of vehicle $C$ can be expressed as
\begin{align} \label{equation:taucold}
	\tau_C = \tau_B + s + r \theta_{BC},
\end{align}
where $r$ we defined as the switch time between queues.
Combining Equations~\eqref{equation:tauioldtauc} and~\eqref{equation:taucold}, we obtain
\begin{align} \label{equation:tauitaub}
	\tau_i = \tau_B + s + r\theta_{BC} + (\nu_i - 1) s.
\end{align}
%Since the regular polling policy condition does not allow for ``gaps" in the schedule time list generated at time $t_0'$, (Assumption~\ref{assumption:regular_policy}), we see that immediately after the service completion of vehicle $B$, \ie, at time $\tau_B + s$, either the server begins servicing vehicle $A$, or the server switches lane to service a vehicle.
Next, we establish $\tau_B + s \geq t_0'$ in the following proposition.
\begin{proposition} \label{proposition:taubt0}
Recall that $t_0'$ is the arrival time of newly arrived vehicle $A$ in lane 2.
Let~\eqref{order:t0prime} and~\eqref{order:t0} the schedule orders at times $t_0'$ and $t_0$, respectively, be generated according to a regular polling policy (Assumption~\ref{assumption:regular_policy}).
Then, $\tau_B + s \geq t_0'$ holds.
\end{proposition}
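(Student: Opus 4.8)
The plan is to prove Proposition~\ref{proposition:taubt0} by contradiction: assume $\tau_B + s < t_0'$ and show that this is incompatible with the fact that vehicle $B$ still appears in the service order~\eqref{order:t0prime} that Algorithm~\ref{algorithm:main} computes at time $t_0'$.

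First I would pin down the timeline between the two consecutive calls to Algorithm~\ref{algorithm:main}. Since $t_0$ is, by definition, the latest time strictly before $t_0'$ at which the algorithm is triggered, and the algorithm is triggered exactly when a vehicle enters the control region, \emph{no} vehicle arrives during the open interval $(t_0,t_0')$. Consequently the evolution of the polling system ${\cal P}$ over $[t_0,t_0')$ is identical to the forward simulation ${\cal P}.{\tt Simulate}()$ performed at time $t_0$ (which itself assumes no future arrivals); in that simulation customer $B$ occupies the server over the interval $[\tau_B,\tau_B+s]$. The arrival of vehicle $A$ at time $t_0'$ does not invalidate this, because by Assumption~\ref{assumption:regular_policy} inserting $A$ leaves the relative order of all previously present customers unchanged and $A$ is inserted \emph{after} $B$; equivalently, Lemma~\ref{lemma:committed} already gives that the schedule time of $B$ is unchanged, $\tau_B' = \tau_B$.

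Second, I would exploit the assumption $\tau_B+s<t_0'$. Then the completion of $B$'s service falls inside $[t_0,t_0')$, hence is faithfully realized by the true dynamics: customer $B$ departs from ${\cal P}$ strictly before time $t_0'$. But the service order computed at time $t_0'$ is derived from ${\cal P}.{\tt Simulate}()$ called at $t_0'$, whose output enumerates only the customers currently queued or currently in service, i.e., customers that have not yet departed. A customer that has already left the system cannot be listed, so $B$ cannot appear in~\eqref{order:t0prime}, contradicting the hypothesis. Hence $\tau_B+s\ge t_0'$.

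The only delicate point, and the one I expect to require the most care to phrase rigorously, is the first step: that the $t_0$-simulation genuinely coincides with the actual dynamics on $[t_0,t_0')$ and in particular that $A$'s later arrival cannot retroactively shift $B$'s service window. This rests on the regularity of the polling policy (Assumption~\ref{assumption:regular_policy}) together with the fact that the service time $s$ and switchover time $r$ are deterministic, so that the schedule of everything up to and including $B$ is unaffected by what is appended after $B$. A trivial boundary case, $\tau_B+s = t_0'$, needs no separate treatment: it already satisfies the claimed inequality (and has probability zero under the Mat\'ern arrival model), so it suffices to exclude $\tau_B+s<t_0'$.
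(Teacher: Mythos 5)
Your proof is correct, but it takes a genuinely different route from the paper's. The paper argues by a case split on whether $B$ and $A$ share a lane. In the trivial case (same lane), $\tau_A = \tau_B + s \geq t_0'$ immediately. In the non-trivial case, it supposes $t_0' > \tau_B + s$ and shows that queue~$2$ would be empty at time $\tau_B + s$, so the FCFS policy would dispatch $C$ at $\tau_C = \tau_B + s < t_0'$; since service is uninterruptible, $A$ would then have to be scheduled \emph{after} $C$, contradicting the order in~\eqref{order:t0prime}. Your argument instead derives the contradiction one step earlier and more uniformly: if $\tau_B + s < t_0'$ then $B$ completes service in $(t_0,t_0')$, hence has departed before the $t_0'$ call to ${\cal P}.{\tt Simulate}()$, so $B$ cannot appear in~\eqref{order:t0prime} at all --- yet it does, by definition. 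This avoids the case split and the reasoning about $C$ entirely. Both proofs rely on the same underlying implicit facts: no arrivals occur in $(t_0,t_0')$, the actual polling dynamics on that interval coincide with the $t_0$-simulation, and by Assumption~\ref{assumption:regular_policy} (not really Lemma~\ref{lemma:committed}, which is stated about trajectories --- the schedule-time invariance $\tau_B' = \tau_B$ is only established \emph{inside} its proof) $B$'s service window is not retroactively moved by $A$'s arrival. The one thing your version leans on slightly harder than the paper's is the premise that already-departed customers are excluded from the output of ${\cal P}.{\tt Simulate}()$. The paper never states this literally, but it is the natural reading of the procedure and is tacitly endorsed by the $l \geq 1$ offset in Assumption~\ref{assumption:regular_policy}; so this is an implicit dependency rather than a gap, and your proof stands.
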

\begin{proof}
Suppose vehicles $B$ and $A$ are in the same lane.
Then, $\tau_A = \tau_B + s$.
Since $\tau_A \geq t_0'$ always holds, we have established this proposition. Now suppose, vehicles $B$ and $A$ are in opposite lanes.
Then, applying the previous argument only assures us that vehicle $A$ arrives before the server finishes servicing vehicle $B$ and and switching lanes, \ie, $t_0' \geq \tau_A = \tau_B + s + r$.
Thus, we argue differently.
First, note that vehicles $B$ and $C$ must be in the same lane.
Suppose that vehicle $A$ arrives strictly later than the service completion of vehicle $B$, \ie, $t_0' > \tau_B + s$.
Then, we claim that vehicle $C$ is serviced earlier than vehicle $A$, contradicting the updated schedule order~\eqref{order:t0prime}.
Consider what happens at the service completion of vehicle $B$, \ie, at time $\tau_B + s$.
Since vehicle $A$ has not arrived yet, queue 2 is empty.
Thus, according to the regular polling policy condition (Assumption~\ref{assumption:regular_policy}), vehicle $C$ will be serviced immediately after vehicle $B$, \ie, $\tau_C = \tau_B  + s$.
%Now, consider the updated schedule order~\eqref{order:t0prime} at the arrival of vehicle $A$.
A nontrivial amount of time $\tau_B + s - t_0' > 0$ has elapsed.
Note that any polling policy (not just regular polling policies) cannot interrupt the service of a customer in order to service another customer.
Thus, vehicle $A$ will be serviced after vehicle $C$, contradicting the updated schedule order~\eqref{order:t0prime}.
Therefore, $\tau_B + s\geq t_0'$.
\end{proof}
From Equation~\eqref{equation:tauitaub} and Proposition~\ref{proposition:taubt0}, we establish
\begin{align} \label{equation:tauit0}
	\tau_i \geq t_0' + (\nu_i - 1) s.
\end{align}
Hence, the time that vehicle $i$ reaches the intersection region (according to previous trajectory $x_i$) is at time
\begin{align}
	\tau_i + L/v_m	&\geq t_0' + (\nu_i - 1) s + L / v_m \nonumber \\
				&\geq t_0' + (\nu_i - 1) s + 2 v_m/a_m, \label{equation:tauivmam}
\end{align}
applying Inequality~\eqref{equation:tauit0}, first, and then the minimum control region length condition (Assumption~\ref{assumption:control_region_length}).
Rearranging Inequality~\eqref{equation:tauivmam}, we obtain
\begin{align} \label{equation:tauigeo}
\tau_i + L/v_m - (\nu_i-1)s  - t_0' \geq 2v_m / a_m.
\end{align}
Recall that Proposition~\ref{proposition:intersection} states that vehicle $i$ arrives at state $\mathbf{I}_i$ (according to trajectory $x_i$) at time $\tau_i + L/v_m - (\nu_i-1)s$. Thus, Inequality~\eqref{equation:tauigeo} simply says that the time from the arrival of vehicle $A$ until the arrival time of vehicle $i$ at state $\mathbf{I}_i$ is greater than or equal to $2 v_m/ a_m$.

We conclude the proof of this lemma by showing that the state of vehicle $i$ at time $t_0'$ must be in $\mathcal{F}_i$, \ie, $z_i(t_0') \in \mathcal{F}_i$.
Consider the set of all states that can reach state $\mathbf{I}_i$, \ie,
\begin{align*}
	\mathbf{\Omega}(\mathbf{I}_i) := \{ \mathbf{z} \in [-L,0] \times [0,v_m] \colon \exists\, t_f \,\mathrm{s.t.} \, \mathscr{C}(\mathbf{z},t_0',t_f,\emptyset) \not= \emptyset \}.
\end{align*}
See Figure~\ref{figure:omega}.
First, note that from any state $\mathbf{z} \in \mathcal{F}_i$, we can accelerate to maximum speed and then continue at maximum speed until arriving at state $\mathbf{I}_i$; hence, $\mathcal{F}_i \subseteq {\bf \Omega}({\bf I}_i)$.
\begin{figure}
	\centerline{\includegraphics[clip=true, trim=2.15in 6.5in 4.17in 3.4in, width=0.5\textwidth]{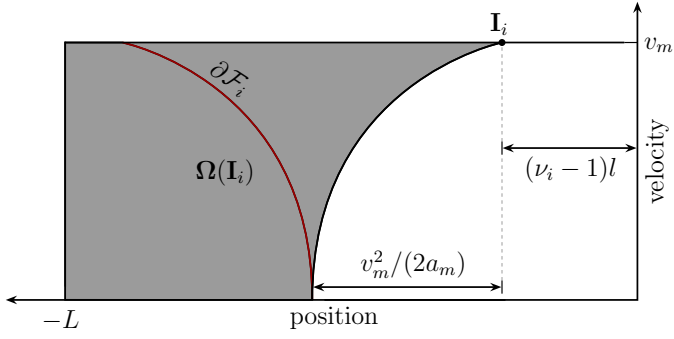}}
	\caption{State $\mathbf{I}_i := (-(\nu_i-1) l , v_m)$ is shown. Set ${\bf \Omega} ({\bf I}_i)$, shaded in grey, depicts the set of all states from which state ${\bf I}_i$ can be reached. Note that $\mathcal{F}_i$ lies entirely within ${\bf \Omega} ({\bf I}_i)$. } \label{figure:omega}
\end{figure}
Now, suppose the state of vehicle $i$ is not in $\mathcal{F}_i$ at time $t_0'$, \ie, $z_i(t_0') \in \mathbf{\Omega}(\mathbf{I}_i) \setminus \mathcal{F}_i$.
Note that by construction, any state $(p_0,v_0) \in \mathbf{\Omega}(\mathbf{I}_i) \setminus \mathcal{F}_i$ satisfies $|p_0| < (v_0^2 + v_m^2)/(2a_m)$.
Essentially, this means that a {\it feasible} trajectory initialized in $\mathbf{\Omega}(\mathbf{I}_i) \setminus \mathcal{F}_i$ at time $t_0'$ cannot possibly come to a full stop after time $t_0'$, since this trajectory cannot satisfy the terminal constraints.
We make this precise in the following proposition.
\begin{proposition} \label{proposition:minVelocity}
Let a vehicle be initialized with velocity $v_0$, \ie, $\dot{x}(0) = v_0$. Let $d$ be the distance it must travel, \ie, $x(0) = 0$ and $x(\tilde{t}_f) = d$. Suppose the following constraints are satisfied: $d < (v_0^2+v_m^2)/(2a_m)$ and $\dot{x}(\tilde{t}_f) = v_m$.
Then, the time $\max \tilde{t}_f$ of the longest duration feasible trajectory satisfies $ \max \tilde{t}_f < 2v_m/a_m$.
\end{proposition}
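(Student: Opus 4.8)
The plan is to collapse the whole problem onto one scalar, the minimum speed attained along a feasible trajectory, $v^* := \min_{t\in[0,\tilde{t}_f]}\dot{x}(t)$. The guiding intuition is that a long‑lasting trajectory must crawl slowly for a long while, hence must have small $v^*$; but the trajectory starts at speed $v_0$ and is forced to finish at speed $v_m$, so it necessarily burns distance $\tfrac{v_0^2-(v^*)^2}{2a_m}$ decelerating down to $v^*$ and another $\tfrac{v_m^2-(v^*)^2}{2a_m}$ re‑accelerating up to $v_m$, while the total displacement available is only $d<\tfrac{v_0^2+v_m^2}{2a_m}$. Playing these two facts against each other is the whole proof.

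Concretely, I would first record the elementary kinematic envelopes implied by $|\ddot{x}|\le a_m$ and $0\le\dot{x}\le v_m$: namely $\dot{x}(t)\ge v_0-a_m t$ for $t\in[0,\tilde{t}_f]$ and, by the same argument run backwards in time, $\dot{x}(\tilde{t}_f-\sigma)\ge v_m-a_m\sigma$. Picking a time $t^*$ with $\dot{x}(t^*)=v^*$, note $t^*\ge (v_0-v^*)/a_m$ and $\tilde{t}_f-t^*\ge(v_m-v^*)/a_m$. Splitting $d=\int_0^{t^*}\dot{x}\,dt+\int_{t^*}^{\tilde{t}_f}\dot{x}\,dt$ and bounding each piece below by the ``ramp‑then‑flat at $v^*$'' envelope $\max(v^*,\,v_0-a_m t)$ (resp. its time‑reversed analogue near $\tilde{t}_f$), a short integration yields
\[
d \;\geq\; \frac{v_0^2 + v_m^2 - 2(v^*)^2}{2a_m} \;+\; v^*\left(\tilde{t}_f - \frac{v_0 + v_m - 2v^*}{a_m}\right).
\]

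From here everything is immediate. Setting $v^*=0$ in the displayed inequality would force $d\ge\tfrac{v_0^2+v_m^2}{2a_m}$, contradicting the hypothesis $d<\tfrac{v_0^2+v_m^2}{2a_m}$; hence $v^*>0$, i.e. the vehicle never comes to a full stop (this is the statement actually invoked in Lemma~\ref{lemma:uncommittedInFi}). Since $v^*>0$ I may solve the displayed inequality for $v^*\tilde{t}_f$, substitute $d<\tfrac{v_0^2+v_m^2}{2a_m}$, cancel, and divide by $v^*$ to get $\tilde{t}_f<\tfrac{v_0+v_m-v^*}{a_m}$. Finally $v_0\le v_m$ (velocity constraint) and $v^*>0$ give $\tilde{t}_f<\tfrac{2v_m-v^*}{a_m}<\tfrac{2v_m}{a_m}$; since this holds for every feasible trajectory, it holds in particular for the longest one.

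The only nonroutine part — and hence the main obstacle — is getting the displayed lower bound on $d$ exactly right: one must use the acceleration envelope with the correct break point, split the displacement at $t^*$ so that the deceleration ramp $[0,(v_0-v^*)/a_m]$ and the symmetric acceleration ramp near $\tilde{t}_f$ never interact (automatic once one splits at $t^*$, using $t^*\ge(v_0-v^*)/a_m$ and $\tilde{t}_f-t^*\ge(v_m-v^*)/a_m$), and carry the arithmetic through cleanly. Everything after that inequality — the contradiction ruling out $v^*=0$ and the final algebra — is bookkeeping.
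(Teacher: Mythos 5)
Your proof is correct, and it takes a genuinely different and more elementary route than the paper's. The paper's argument relaxes the velocity state constraints, invokes Pontryagin's Minimum Principle to characterize the extremal trajectory as bang-bang (decelerate to a minimum speed $v_l$, then accelerate to $v_m$), tabulates the phase durations, verifies \emph{a posteriori} that the relaxed solution satisfies the original velocity bounds, and reads off $\max\tilde{t}_f = (v_m+v_0-2v_l)/a_m$. Your argument avoids optimal control machinery entirely: the pointwise envelope $\dot{x}(t)\geq\max(v^*,\,v_0-a_mt)$ (and its time-reversed counterpart near $\tilde{t}_f$), split at a time $t^*$ achieving $\dot{x}(t^*)=v^*$, integrates to the inequality
\begin{align*}
d \;\geq\; \frac{v_0^2+v_m^2-2(v^*)^2}{2a_m} + v^*\!\left(\tilde{t}_f - \frac{v_0+v_m-2v^*}{a_m}\right),
\end{align*}
which holds for \emph{every} feasible trajectory, and the bound $\tilde{t}_f < (v_0+v_m-v^*)/a_m < 2v_m/a_m$ falls out algebraically. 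The key arithmetic checks out: the contradiction at $v^*=0$, the cancellation of the $\frac{v_0^2+v_m^2}{2a_m}$ terms, and the division by $v^*>0$ are all valid. What your route buys is that you never need to identify the optimal trajectory, never need to justify a relaxation, and never need to verify state-constraint feasibility after the fact; the bound applies uniformly and the conclusion for the maximizer is automatic. What the paper's PMP route buys is the explicit form of the extremal (the table), which it does not actually need for this proposition but uses as exposition. Both establish the same fact, $v^*>0$ (resp.\ $v_l>0$), and terminate with the same elementary chain $v_0+v_m-v^*<v_0+v_m\leq 2v_m$.
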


%{\it Proof of Proposition~\ref{proposition:minVelocity}:}
\begin{proof}
We solve the following free terminal time optimal control problem:
\begin{subequations} \label{oc:original}
	\begin{align} 
		\max_{x : [0,\tilde{t}_f] \to \reals} &\quad \int_{0}^{\tilde{t}_f} dt \\
		\mathrm{subject}\,\,\mathrm{to} &\quad \ddot{x}(t) = u(t), \mbox{ for all } t \in [0, \tilde{t}_f];\\
		&\quad 0 \le \dot{x}(t) \le v_m, \mbox{ for all } t\in [0, \tilde{t}_f]; \label{oc:original_velocity_constraint} \\ 
		& \quad \vert u(t) \vert \le a_m, \mbox{ for all } t\in [0, \tilde{t}_f];  \\
		& \quad x(0) = 0;  \quad \dot{x}(0) = v_0;\\
		& \quad x(\tilde{t}_f) = d; \quad \dot{x}(\tilde{t}_f) = v_m.
	\end{align}
\end{subequations}
Denote $v_l$ as the lowest velocity that the vehicle achieves during its trajectory, \ie, $v_l = \inf \{ \dot{x}(t) \colon \forall t \in [0,\tilde{t}_f], \forall x \in \mathscr{C}((0,v_m),0,\tilde{t}_f,\emptyset) \}$.
Suppose $v_l \leq 0$.
Note that the distance required to decelerate from the current velocity $v_0$ to a full stop and then accelerate from a full stop to full speed is precisely $(v_0^2+v_m^2)/(2a_m)$. But since $d < (v_0^2+v_m^2)/(2a_m)$, we have a contradiction. Thus, $v_l > 0$, and, hence,~\eqref{oc:original} does not grow unboundedly.
%Note that we do not include a safety constraint with vehicle $j-1$, \ie, $\vert x(t) - x_{j-1}'(t) \vert \ge l$, in the optimal control problem formulated above, since it produces a suitable upper bound on the maximum terminal time.
Due to the difficulty of handling state constraints directly, we relax this problem to the following optimal control problem, for which a solution is readily available:
\begin{subequations} \label{oc:relaxed}
	\begin{align} 
		\max_{x : [0,\tilde{t}_f] \to \reals} &\quad \int_{0}^{\tilde{t}_f} dt \label{oc:relaxedCost}\\
		\mathrm{subject}\,\,\mathrm{to} &\quad \ddot{x}(t) = u(t), \mbox{ for all } t \in [0, \tilde{t}_f]; \label{oc:relaxed1}\\
		& \quad |u(t)| \le a_m, \mbox{ for all } t\in [0, \tilde{t}_f];  \label{oc:relaxed2}\\
		& \quad x(0) = 0;  \quad \dot{x}(0) = v_0; \label{oc:relaxed3}\\
		& \quad x(\tilde{t}_f) = \xi_0; \quad \dot{x}(\tilde{t}_f) = v_m. \label{oc:relaxed4}
	\end{align}
\end{subequations}
By Pontryagin's Minimum Principle, the control law for (\ref{oc:relaxed}) is simply bang-bang control. Furthermore, the optimal trajectory $\tilde{x}$ which maximizes the terminal time $\tilde{t}_f$ can be divided into two distinct phases.
In the first phase, the vehicle decelerates to the lowest speed $v_l$ with maximum deceleration, while in the second phase, vehicle accelerates to full speed, with maximum acceleration.
This trajectory is summarized in Table~\ref{table:maxTime}.
\begin{table}[ht]
\centering
\begin{tabular}{c c c c c}
\hline\hline
\noalign{\vskip 3pt} 
Phase & $\Delta t$ & $\Delta x$ & $\Delta v$ & $\ddot{x}$ \\ [0.5ex]
\hline
\noalign{\vskip 3pt} 
1	&	$\frac{v_0-v_l}{a_m}$	&	$\frac{v_0^2 - v_l^2}{2a_m}$ & $v_l - v_0$ & $-a_m$\\ [1ex]
2 	& 	$\frac{v_m-v_l}{a_m}$ 	& $\frac{v_m^2-v_l^2}{2a_m}$ & $v_m-v_l$ & $a_m$\\ [1ex]
\hline
\noalign{\vskip 3pt} 
Total & 	$\frac{v_m+v_0-2v_l}{a_m}$	& $\frac{v_m^2+v_0^2-2v_l^2}{2a_m}$ & $v_m-v_0$ \\ [1ex]
\hline\hline
\vspace{1pt}
\end{tabular}
\caption{Latest Schedule Time Trajectory}\label{table:maxTime}
\end{table}
%\vspace{5pt}
Note that the velocity of the vehicle throughout the trajectory satisfies the velocity constraint~\eqref{oc:original_velocity_constraint}; therefore, we have also solved~\eqref{oc:original}.
Finally, note that 
\begin{align*}
	\max \tilde{t}_f = \frac{v_m+v_0-2v_l}{a_m} < \frac{v_m + v_0}{a_m} \leq 2\frac{v_m}{a_m},
\end{align*}
where the first inequality holds from $v_l > 0$ and the second inequality from $v_0 \leq v_m$.
\end{proof}
Supposing $z_i(t_0') \in \mathbf{\Omega}(\mathbf{I}_i) \setminus \mathcal{F}_i$ and applying Proposition~\ref{proposition:minVelocity}, the time between the arrival of vehicle $A$ and the arrival of vehicle $i$ at state $\mathbf{I}_i$ is strictly less than $2 v_m / a_m$, contradicting Inequality~\eqref{equation:tauigeo}.
Hence, $z_i(t_0') \in \mathbf{\Omega}(\mathbf{I}_i) \cap \mathcal{F}_i = \mathcal{F}_i$.
This completes the proof of Lemma~\ref{lemma:uncommittedInFi}.

%%%%%
%%%%%
%%%%%
%%%%% SECTION C
%%%%%
%%%%%
%%%%%
\newpage
\subsection{Proof of Lemma~\ref{lemma:EiNotEmpty}} \label{section:C}

In this section, we show that the set $\mathscr{E}_i'$ is non-empty.
Recall that a trajectory $\tilde{x}_i \in \mathscr{E}_i'$ has the following properties:
\begin{align}
	&\tilde{x}_i(\tau_i' + L/v_m) = 0; \\
	&\dot{\tilde{x}}_i(\tau_i' + L/v_m) = v_m; \\
	&\tilde{x}_i (t) = x_i(t), \text{ for all } t \in [t_0',t_c[x_i]]; \\
	&\tilde{x}_i (t) \leq x_{i-1}'(t) - l, \text{ for all } t \geq t_0'.
\end{align}
Any feasible trajectory $x$ can be uniquely described by the triple $(\velpath{x},\wait{x},t_0)$, where $\velpath{x}$ maps the position of trajectory $x$ to the velocity at that position, $\wait{x}$ maps the position of trajectory $x$ to the amount of time spent at that position, and $t_0$ is the initial time of the trajectory. Note that $\wait{x}(p)$ is 0 for any position $p$ with nonzero velocity.
Because a vehicle might fully stop, the mapping of its trajectory from the position space to the time domain is not a function in general. Thus, we will make frequent use of the arrival time of a trajectory $x$ at position $p$, \ie, $\arrive{x}(p)$, and the departure time of a trajectory $x$ from position $p$, \ie, $\depart{x}(p)$. Formerly, we define:
\begin{align*}
	\arrive{x}(p) &:= \inf \{  t \colon x(t) = p  \} \\
	\depart{x}(p) &:= \sup \{   t \colon x(t) = p  \}.
\end{align*}

First, consider the case $\nu_i = 1$.
We introduce the following proposition. Note Figure~\ref{figure:boundedPaths}.
\begin{proposition} \label{proposition:boundedPaths}
Let $\mathbf{z}_0 := (p_0,v_0)$. % \in statespace S ? 
Suppose there exist a ``high-performance" trajectory $\hat{x} \in \mathscr{C}(\mathbf{z}_0,0,\hat{t},\emptyset)$ and a ``safe" trajectory $\bar{x} \in \mathscr{C}(\mathbf{z}_0,0,\bar{t},\emptyset)$, satisfying
\begin{align} \label{equation:slow_fast}
	\bar{t} \geq \hat{t}.
\end{align}
Suppose the following inequalities hold for all $p \in [p_0, 0]$:
\begin{align}
	\velpath{\bar{x}}(p) \leq \velpath{\hat{x}}(p)     \label{equation:p1} \\
	\wait{\bar{x}}(p) \geq \wait{\hat{x}}(p). \label{equation:delta1}
\end{align}
Then, for any terminal time $\tilde{t}$ satisfying $\bar{t} \geq \tilde{t} \geq \hat{t}$, there exists a trajectory $\tilde{x} \in \mathscr{C}(\mathbf{z}_0,0,\tilde{t},\emptyset)$ satisfying the following, for all $p \in [p_0,0]$:
\begin{align}
	 \velpath{\bar{x}}(p) \leq \velpath{\tilde{x}} (p)\leq \velpath{\hat{x}} (p)  \label{equation:p2} \\
	 \wait{\bar{x}}(p) \geq \wait{\tilde{x}}(p) \geq \wait{\hat{x}}(p). \label{equation:delta2}
\end{align}
%Also, $|\hat{x}(t)| \leq |\tilde{x}(t)| \leq |\bar{x}(t)|$ for all $t$.
\end{proposition}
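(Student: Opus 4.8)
The plan is to exhibit an explicit one--parameter family of feasible trajectories joining $\hat x$ to $\bar x$ whose terminal time is continuous and nondecreasing in the parameter, and then apply the intermediate value theorem. I would work in the $(\text{position},\text{energy})$ representation: for a feasible trajectory $x$ on $[p_0,0]$ put $E[x](p):=\tfrac12\velpath{x}(p)^2$ (the kinetic energy per unit mass as a function of position). Since $dE[x]/dp=u$ wherever the vehicle moves, a triple $(\velpath{x},\wait{x},0)$ describes a feasible element of $\mathscr{C}(\mathbf{z}_0,0,t_f,\emptyset)$ exactly when $0\le E[x]\le v_m^2/2$, $|dE[x]/dp|\le a_m$ a.e., $E[x](p_0)=v_0^2/2$, $E[x](0)=v_m^2/2$, the profile $\wait{x}$ is supported on $\{E[x]=0\}$, and $t_f=\int_{p_0}^0 dp/\velpath{x}(p)+\sum_p\wait{x}(p)$. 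In this language hypotheses \eqref{equation:p1}--\eqref{equation:delta1} read $E[\bar x]\le E[\hat x]$ and $\wait{\bar x}\ge\wait{\hat x}$ pointwise; in particular every position at which $\hat x$ dwells is also a position at which $\bar x$ dwells.

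I would construct the family in two successive stages. \emph{Stage 1 (morph the speed profile).} For $\lambda\in[0,1]$ set $E_\lambda:=(1-\lambda)E[\hat x]+\lambda E[\bar x]$ and let $x_\lambda$ be the trajectory with speed profile $\sqrt{2E_\lambda}$ and wait profile $\wait{\hat x}$. Since $dE_\lambda/dp$ is a convex combination of $dE[\hat x]/dp$ and $dE[\bar x]/dp$ and $E_\lambda\in[0,v_m^2/2]$ with the prescribed endpoint values, each $x_\lambda$ is feasible, terminates at $(0,v_m)$, obeys $\velpath{\bar x}\le\velpath{x_\lambda}\le\velpath{\hat x}$, and has $\wait{x_\lambda}=\wait{\hat x}$ (the zero set of $E_\lambda$ always contains the dwell positions of $\hat x$, so $\wait{\hat x}$ is an admissible wait profile). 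Its terminal time $T(\lambda)=\int_{p_0}^0 dp/\sqrt{2E_\lambda(p)}+\sum_p\wait{\hat x}(p)$ is finite (the integrand is dominated by $1/\velpath{\bar x}$, whose integral over $[p_0,0]$ is the finite moving time of $\bar x$), nondecreasing because $E_\lambda$ decreases pointwise, and continuous by dominated/monotone convergence. Hence Stage 1 realizes all terminal times in $[\hat t,T(1)]$, and $\hat t\le T(1)\le\bar t$ since $\velpath{\hat x}\ge\velpath{\bar x}$ and $\wait{\hat x}\le\wait{\bar x}$. \emph{Stage 2 (add the extra waiting).} Keeping the speed profile fixed at $\velpath{\bar x}$, set $\wait{x_s}:=(1-s)\wait{\hat x}+s\wait{\bar x}$ for $s\in[0,1]$; this is admissible because $\wait{\bar x}$ is supported on $\{\velpath{\bar x}=0\}$ and $\wait{\hat x}\le\wait{x_s}\le\wait{\bar x}$. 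These trajectories are feasible, still satisfy $\velpath{\bar x}\le\velpath{x_s}\le\velpath{\hat x}$, and their terminal time runs affinely and continuously from $T(1)$ (at $s=0$, matching the end of Stage 1) up to $\bar t$ (at $s=1$, where $x_1=\bar x$). Concatenating the two stages gives a family with continuous nondecreasing terminal time covering all of $[\hat t,\bar t]$, every member of which satisfies \eqref{equation:p2} and \eqref{equation:delta2}.

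Given $\tilde t$ with $\hat t\le\tilde t\le\bar t$, the intermediate value theorem produces a member of this family whose terminal time equals $\tilde t$; taking $\tilde x$ to be that member finishes the proof, since feasibility, the terminal state $(0,v_m)$, and the sandwich inequalities were all checked for every family member, so $\tilde x\in\mathscr{C}(\mathbf{z}_0,0,\tilde t,\emptyset)$ has the required properties.

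The crux, and the reason for using two stages rather than a single linear interpolation between $\hat x$ and $\bar x$, is that the two trajectories may dwell at different positions: one cannot continuously deform the set of dwell positions while keeping the acceleration bounded by $a_m$. Morphing the energy profile first can only create dwell points where $\hat x$ already dwells, and by $\lambda=1$ the speed profile $\velpath{\bar x}$ has touched zero at every dwell position of $\bar x$, so the remaining waiting can then be inserted continuously in Stage 2. The subsidiary technical point is the continuity of $\lambda\mapsto T(\lambda)$ as $\lambda\to1$, where the integrand $1/\sqrt{2E_\lambda}$ acquires new singularities at the dwell positions of $\bar x$; this is handled by monotone convergence, using that $\int_{p_0}^0 dp/\velpath{\bar x}$ equals the finite moving time of $\bar x$.
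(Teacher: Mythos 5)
Your proof is correct, and your second stage (linearly interpolating the wait profile at fixed speed profile $\velpath{\bar{x}}$ to cover $[t_m,\bar t]$, with $t_m=\int_{p_0}^0 dp/\velpath{\bar{x}}+\sum\wait{\hat{x}}$) is exactly the paper's construction, but your first stage is genuinely different. The paper morphs the speed profile by a one-parameter \emph{clamp}, $V_\nu(p)=\velpath{\bar{x}}(p)$ where $\velpath{\bar{x}}(p)>\nu$, $=\nu$ where $\velpath{\bar{x}}(p)\le\nu\le\velpath{\hat{x}}(p)$, and $=\velpath{\hat{x}}(p)$ where $\velpath{\hat{x}}(p)<\nu$, and then proves continuity of the resulting terminal-time map $\nu\mapsto T(\nu)$ with a hands-on $\epsilon$--$\delta$ estimate (splitting $[p_0,0]$ into $Z_\omega$ and its complement and bounding $|V_\nu-V_\omega|$ by $|\nu-\omega|$). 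You instead interpolate linearly in kinetic-energy space, $E_\lambda=(1-\lambda)E[\hat x]+\lambda E[\bar x]$. This buys you two things: feasibility of each $x_\lambda$ is immediate because the acceleration constraint $|dE/dp|\le a_m$ is convex in $E$ (whereas for the paper's $V_\nu$ one has to note separately that each of the three branches is a feasible arc), and the continuity of $\lambda\mapsto T(\lambda)$ follows in one line from monotone/dominated convergence rather than from an explicit estimate. You are also careful about the one genuinely delicate point, namely that dwell positions are respected throughout the homotopy: for $\lambda<1$ the zero set of $E_\lambda$ coincides with that of $E[\hat x]$ (so $\wait{\hat x}$ is admissible), and only at $\lambda=1$ do the extra zeros of $E[\bar x]$ appear, which is precisely why Stage 2 is needed. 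The only thing worth stating more explicitly is the assertion that $\{\velpath{\bar{x}}=0\}$ has Lebesgue measure zero so the new singularities of $1/\sqrt{2E_\lambda}$ as $\lambda\to 1$ do not affect the integral; this holds because $\int dp/\velpath{\bar{x}}<\infty$ (it equals the finite moving time of $\bar x$), which you do invoke. Overall your route gets to the same place with less bookkeeping.
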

\begin{figure}
	\centerline{\includegraphics[clip=true, trim=2.1in 6.5in 4.1in 3.4in, width=0.5\textwidth]{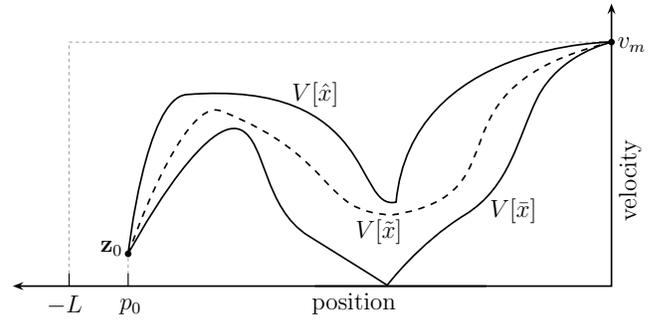}}
	\caption{ Notation for Proposition~\ref{proposition:boundedPaths}. ``High-performance" velocity path $\velpath{\hat{x}}$ and ``safe" velocity path $\velpath{\bar{x}}$ are depicted in solid black. Bounded velocity path $\velpath{\tilde{x}}$ is depicted in dashed black. Note that all trajectories have the same initial state $\mathbf{z}_0$. Note that terminal time of $\tilde{x}$ is also bounded by the terminal times of the ``high-performance" and ``safe" trajectories.} \label{figure:boundedPaths}
\end{figure}
We define $x := \mathscr{T}(\mathbf{z}_0,u)$ as the unique solution of the linear ODE $\ddot{x} = u$ with initial state $\mathbf{z}_0$.
Let $u_i$ be a control action that generates trajectory $x_i$, \ie, $u_i := \ddot{x}_i$.
Denote $t_f$ and $t_f'$ as the times that vehicle $i$ reaches the intersection region according to trajectories $x_i$ and $x_i'$, respectively, \ie, $t_f := \tau_i + L/v_m$ and $t_f' := \tau_i' + L/v_m$.
Let $v_i^c := \dot{x}_i(t_c[x_i])$.
Consider trajectory $\bar{x}_i := \mathscr{T}(z_i(t_0'),\bar{u}_i)$ (depicted in Figure~\ref{figure:uii}), where $\bar{u}_i$:
\begin{align} \label{eqn:ubarbar1}
\bar{u}_i(t) =
	\begin{cases}
		u_i(t)  &\colon t \in [t_0',t_c[x_i]) \\
		-a_m &\colon t \in [t_c[x_i],t_s) \\
		0	&\colon t \in [t_s, t_s + \sigma) \\
		a_m &\colon t \in [t_s + \sigma,t_s + \sigma +v_m/a_m],
	\end{cases}
\end{align}
where
\begin{align}
	t_s &:= t_c[x_i] + v_i^c/a_m, \\ %and
	\sigma &:= \max \{0, t_f'-t_s-v_m/a_m \}.
\end{align}
\begin{figure}
	\centerline{\includegraphics[clip=true, trim=2.1in 6.5in 4.1in 3.4in, width=0.5\textwidth]{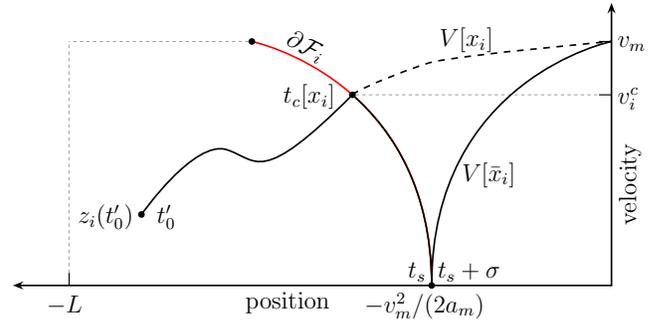}}
	\caption{Using terminology from Proposition~\ref{proposition:boundedPaths}, ``safe" velocity path $\velpath{\bar{x}_i}$ as generated by control $\bar{u}_i$ defined in Equation~\eqref{eqn:ubarbar1} is depicted in solid black. Recall that time $t_c[x_i]$ is defined as the time that trajectory $x_i$ exits set $\mathcal{F}_i$. ``High-performance" velocity path $\velpath{x_i}$ replicates $\velpath{\bar{x}_i}$ from $t_0'$ to $t_c[x_i]$ and then afterwards, follows the dashed black line. Note that during ``safe" trajectory $\bar{x}_i$, the vehicle halts for an amount of time $\sigma$ at a distance $v_m^2 / (2 a_m)$ from the intersection region.} \label{figure:uii}
\end{figure}
Borrowing the language from Proposition~\ref{proposition:boundedPaths}, the ``high-performance" trajectory is $x_i$, and the ``safe" trajectory is $\bar{x}_i$.
We show the existence of a feasible trajectory with terminal time $t_f'$.
First, note by construction of $\sigma$, the time that $\bar{x}_i$ reaches the intersection region is no earlier than $t_f'$.
Hence, $t_s + \sigma + v_m/a_m  \geq t_f' \geq t_f$.
Second, note $\wait{x_i}(-v_m^2/(2a_m)) \leq \sigma = \wait{\bar{x}_i}(-v_m^2/(2a_m))$; otherwise, $x_i$ would arrive at the intersection region strictly later than $t_f'$, \ie, $t_f > t_f'$, which is a contradiction.
Hence, Inequality~\eqref{equation:delta1} holds.
Lastly, directly from the construction of $\bar{u}_i$, Inequality~\eqref{equation:p1} holds.
%note that $\delta_{x_j}(\xi) = 0$, for all $\xi \in [0,\xi_j^c] \setminus \{ v_m^2/(2a_m)\}$.
%Thus, $\delta_{x_j}(\xi) \leq \delta_{\bar{\bar{x}}_j}(\xi)$ for all $\xi \in [0,\xi_j']$.
%Also, $P_{\bar{\bar{x}}_j}(\xi) \leq P_{x_j}(\xi)$ for all $\xi \in [0,\xi_j']$.
Thus, by Proposition~\ref{proposition:boundedPaths}, there exists trajectory $\tilde{x}_i \in \mathscr{C}(z_i(t_0'),t_0',t_f',\emptyset)$ satisfying Inequalities~\eqref{equation:p2} and~\eqref{equation:delta2}.
%$P_{\bar{\bar{x}}_i}(\xi) \leq P_{\tilde{x}_i}(\xi) \leq P_{x_i}(\xi)$ and $\delta_{\bar{\bar{x}}_i}(\xi) \geq \delta_{\tilde{x}_i}(\xi) \geq \delta_{x_i}(\xi)$ for all $\xi \in [0,\xi_j']$.

Now, we state a proposition, depicted in Figure~\ref{figure:aboveBelow}:
\begin{proposition} \label{proposition:boundedTimes}
Suppose there exist a ``fast" trajectory $\hat{x} \in \mathscr{C}(\mathbf{\hat{z}}_0,t_0,\hat{t},\emptyset)$ and a ``slow" trajectory $\bar{x} \in \mathscr{C}(\mathbf{\bar{z}}_0,t_0,\bar{t},\emptyset)$, such that for all $p \in (p_0,p_f)$:
\begin{align}
	\velpath{\bar{x}}(p) 		&\leq 	\velpath{\hat{x}}(p) 	\label{equation:pp}\\
	\wait{\bar{x}}(p)			&\geq 	\wait{\hat{x}}(p) 	\label{equation:deldel} \\
	\depart{\bar{x}}(p_0) 		&\geq  	\depart{\hat{x}}(p_0). \label{equation:dd} 
\end{align}
Then, for all $t \in  [\arrive{\hat{x}}(p_0), \depart{\bar{x}}(p_f)]$, we have
\begin{align}
	\bar{x}(t) \leq \hat{x}(t). \label{eq:prop12}
\end{align}
\end{proposition}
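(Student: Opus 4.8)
The plan is to reduce \eqref{eq:prop12} to a one-dimensional comparison of \emph{arrival times at positions}, and then recover the pointwise-in-time inequality by monotonicity. Throughout, recall that since the velocity is nonnegative every feasible trajectory $x$ is continuous and non-decreasing in $t$, so $p \mapsto \arrive{x}(p)$ and $p \mapsto \depart{x}(p)$ are non-decreasing; note also that the hypotheses implicitly presuppose that both $\hat{x}$ and $\bar{x}$ pass through every position in $[p_0,p_f]$. The key intermediate claim is
\[
  \arrive{\bar{x}}(p) \;\ge\; \arrive{\hat{x}}(p), \qquad \text{for all } p \in (p_0,p_f].
\]

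Granting this claim, \eqref{eq:prop12} follows quickly. Fix $t \in [\arrive{\hat{x}}(p_0),\,\depart{\bar{x}}(p_f)]$ and set $p := \bar{x}(t)$. Since $\bar{x}$ is non-decreasing and $\bar{x}(\depart{\bar{x}}(p_f)) = p_f$, we have $p \le p_f$. If $p \le p_0$, then from $t \ge \arrive{\hat{x}}(p_0)$ and monotonicity of $\hat{x}$ we get $\hat{x}(t) \ge p_0 \ge p = \bar{x}(t)$. If instead $p_0 < p \le p_f$, then $\bar{x}(t) = p$ forces $t \ge \arrive{\bar{x}}(p)$, which by the claim is $\ge \arrive{\hat{x}}(p)$, so $\hat{x}$ has already reached $p$ by time $t$, \ie\ $\hat{x}(t) \ge p = \bar{x}(t)$. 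In either case $\bar{x}(t) \le \hat{x}(t)$.

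To prove the claim I would decompose, for each $x \in \{\hat{x},\bar{x}\}$, the elapsed time between leaving $p_0$ and first reaching $p$ into a moving part and a waiting part:
\[
  \arrive{x}(p) \;=\; \depart{x}(p_0) \;+\; \int_{p_0}^{p} \frac{dq}{\velpath{x}(q)} \;+\; \sum_{q \in (p_0,p)} \wait{x}(q),
\]
where the integral is the total travel time (convergent, since $1/\velpath{x}$ has only integrable, inverse-square-root-type singularities at the isolated stop positions, where the speed behaves like $\sqrt{2a_m\,|q-q^{*}|}$) and the sum collects the dwell times at intermediate positions. Hypothesis \eqref{equation:pp} gives $0 < \velpath{\bar{x}}(q) \le \velpath{\hat{x}}(q)$ for a.e.\ $q$, hence $\int_{p_0}^{p} dq/\velpath{\bar{x}}(q) \ge \int_{p_0}^{p} dq/\velpath{\hat{x}}(q)$; hypothesis \eqref{equation:deldel} gives $\sum_{q\in(p_0,p)}\wait{\bar{x}}(q) \ge \sum_{q\in(p_0,p)}\wait{\hat{x}}(q)$; and hypothesis \eqref{equation:dd} is precisely $\depart{\bar{x}}(p_0) \ge \depart{\hat{x}}(p_0)$. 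Adding these three inequalities term by term in the identity above yields $\arrive{\bar{x}}(p) \ge \arrive{\hat{x}}(p)$, which is the claim.

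The only delicate part, and where I expect the bookkeeping to be heaviest, is the treatment of stops: the position-to-time correspondence of a trajectory is not a genuine function when the vehicle waits, which is exactly why the proposition is stated through the auxiliary maps $\velpath{\cdot}$, $\wait{\cdot}$ and the arrival/departure times rather than through a plain inverse. Making the displayed decomposition rigorous reduces to splitting $[p_0,p]$ into the (at most countably many) maximal subintervals on which the vehicle moves and the isolated positions at which it is stopped, applying the change of variables $t \leftrightarrow q$ on the former and summing the dwell times on the latter; every remaining step is a routine monotonicity argument.
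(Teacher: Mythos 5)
Your proof is correct and essentially mirrors the paper's own argument. The paper fixes $p := \hat{x}(t)$ and compares departure times via the identity $\depart{x}(p) = \depart{x}(p_0) + \int_{p_0}^{p} dq/\velpath{x}(q) + \sum_{q \in (p_0,p]} \wait{x}(q)$, then uses monotonicity of $\bar{x}$; you fix $p := \bar{x}(t)$, compare arrival times via the dual identity, and use monotonicity of $\hat{x}$ — the same termwise decomposition and the same three-inequality comparison, just viewed from the other trajectory.
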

\begin{figure}
	\centerline{\includegraphics[clip=true, trim=2.1in 6.5in 4.1in 3.4in, width=0.5\textwidth]{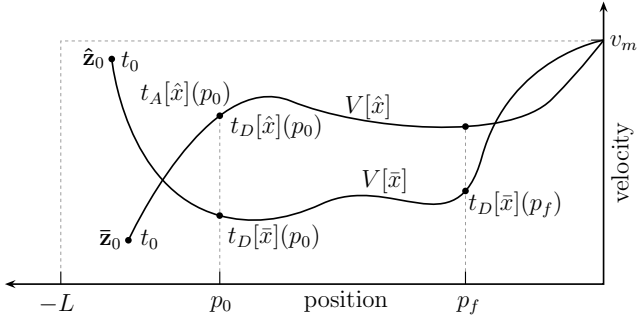}}
	\caption{Notation for Proposition~\ref{proposition:boundedTimes}. ``Fast" velocity path $\velpath{\hat{x}}$ and ``slow" velocity path $\velpath{\bar{x}}$ are shown. Note that at each position $p \in (p_0,p_f)$, the velocity of the ``fast" trajectory is greater than that of the ``slow" trajectory. See Inequality~\eqref{equation:pp}. Also, Inequality~\eqref{equation:dd} states that $\hat{x}$ departs position $p_0$ earlier than $\bar{x}$. The claim of this proposition, expressed in Inequality~\eqref{eq:prop12}, states that $\hat{x}$ is always ahead, or closer to the intersection, than $\bar{x}$.} \label{figure:aboveBelow}
\end{figure}
First, note that Inequalities~\eqref{equation:pp} and~\eqref{equation:deldel} are satisfied for trajectories $\bar{x}_i$, $\tilde{x}_i$, and $x_i$, \ie, 
\begin{align*}
		\velpath{\bar{x}_i}	&\leq \velpath{\tilde{x}_i} 	\leq \velpath{x_i} \\
		\wait{\bar{x}_i} 		&\geq \wait{\tilde{x}_i} 	\geq  \wait{x_i}.
\end{align*}
Define $p_i' := x_i(t_0')$.
Note that Inequality~\eqref{equation:dd} also holds for trajectories $\bar{x}_i$, $\tilde{x}_i$, and $x_i$, \ie,
\begin{align*}
	\depart{\bar{x}_i}(p_i')  \geq \depart{\tilde{x}_i}(p_i') \geq \depart{x_i}(p_i').
\end{align*}
Then, by Proposition~\ref{proposition:boundedTimes}, for all $t \geq t_0'$:
\begin{align*}
	\bar{x}_i(t) \leq \tilde{x}_i(t) \leq x_i(t).
\end{align*}
Hence, $\tilde{x}_i(t) = x_i(t)$ for all $t \in [t_0',t_c[x_i]]$.
It is left to show that trajectories $\tilde{x}_i$ and $x_{i-1}'$ are safe for all time.
Since $\nu_i = 1$, we have $x_{i-1}' = x_{i-1}|_{ \{ t \colon t\geq t_0' \} }$, by Proposition~\ref{lemma:committed}.
Thus, for any $t \geq t_0'$:
\begin{align*}
	\tilde{x}(t) \leq x_i(t) \leq x_{i-1}(t) - l = x_{i-1}'(t) - l.
%	|x_{i-1}'(t)| = |x_{i-1}(t)| \leq l + |x_i(t)| \leq l + |\tilde{x}_i(t)|.
\end{align*}
Hence, $\tilde{x}_i \in \mathscr{E}_i'$. \\

Now, we consider vehicle $i$ satisfying $\nu_i > 1$.
Recall that at the start of the induction step, the updated trajectory $x_{i-1}'$ of the vehicle $i-1$ and the previous trajectory $x_i$ of vehicle $i$ are known; and we would like to construct the updated trajectory $x_i'$ of vehicle $i$.
We consider three cases.
In each of those cases, we construct a ``high-performance" trajectory $\hat{x}_i$ and a ``safe" trajectory $\bar{x}_i$, and then apply the following proposition to establish that there exists a trajectory $\tilde{x}_i$ for vehicle $i$ that has both performance and safety guarantees, \ie, $\mathscr{E}_i'$ is non-empty.
Consider Figure~\ref{figure:sectionCmainprop}.
\begin{figure}
	\centerline{\includegraphics[clip=true, trim=2.1in 6.5in 4.1in 3.4in, width=0.5\textwidth]{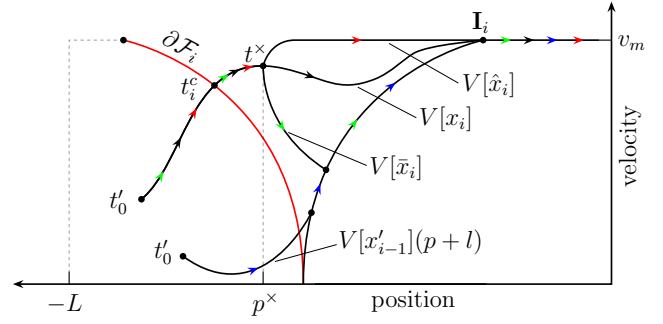}}
	\caption{Notation for Proposition~\ref{proposition:sectionCmainprop}. Previous trajectory $x_i$ is identified by following the black arrows, ``high-performance" trajectory $\hat{x}_i$ by red arrows, ``safe" trajectory by green arrows, and updated trajectory $x_{i-1}'$ of vehicle in front by blue arrows. Both ``high-performance" trajectory $\hat{x}_i$ and safe trajectory $\bar{x}$ replicate $x_i$ until time $t^\times$. After time $t^\times$, ``high-performance" velocity path $\velpath{\hat{x}_i}$ always has a higher velocity than the velocity path $\velpath{\bar{x}_i}$ of the ``safe" trajectory. See Inequality~\eqref{equation:nivv}. Time $t_i^c$ is the time that trajectory $x_i$ leaves set $\mathcal{F}_i$. See notation in~\eqref{equation:notation13}. Trajectory $\bar{x}_i$ is safe with updated trajectory $x_{i-1}'$ of the vehicle directly in front. Note that we shift velocity path of vehicle $i-1$ to the left by $l$, the vehicle length. We adopt this visualization since it is more natural to discuss safety of vehicle $i$ with respect to the rear bumper of vehicle $i-1$.} \label{figure:sectionCmainprop}
\end{figure}
\begin{proposition} \label{proposition:sectionCmainprop}
Let time $t^\times$ be such that $t^\times \geq t_c[x_i]$.
Suppose there exists a ``high-performance" trajectory $\hat{x}_i \in \mathscr{E}(x_i|_{[t_0',t^\times]},t_0',\hat{t},\emptyset)$ for some $\hat{t}$ satisfying
\begin{align} \label{equation:terminaltime}
	t_f' \geq \hat{t} ,
\end{align}
and a ``safe" trajectory $\bar{x}_i \in \mathscr{E}(\hat{x}_i|_{[t_0',t^\times]},t_0',\bar{t},x_{i-1}')$.
Define $p^\times := \hat{x}_i(t^\times)$.
Suppose the following are satisfied for all $p \in [p^\times,0]$:
\begin{align}
	\velpath{\bar{x}_i}(p) 	&\leq \velpath{\hat{x}_i}(p) \label{equation:nivv} \\
	\wait{\bar{x}_i}(p) 		&\geq \wait{\hat{x}_i}(p) \label{equation:nidd}
\end{align}
Suppose the following are satisfied for all $p \in (p^\times,-l)$:
\begin{align}
	\velpath{\bar{x}_i}(p) 	&\geq \velpath{x_{i-1}'}(p+l) \label{equation:nivvold} \\
	\wait{\bar{x}_i}(p) 		&\leq \wait{x_{i-1}'}(p+l). \label{equation:niddold}
\end{align}
Then, $\mathscr{E}_i'$ is non-empty.
\end{proposition}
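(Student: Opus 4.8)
The plan is to exhibit an explicit element of $\mathscr{E}_i'$ by interpolating, at the correct terminal time, between the high-performance trajectory $\hat x_i$ and the safe trajectory $\bar x_i$, and then to verify that this interpolant satisfies the four defining properties of $\mathscr{E}_i'$.

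I would begin with two observations. First, both $\hat x_i$ and $\bar x_i$ coincide with $x_i$ on $[t_0',t^\times]$ (by definition of the $\mathscr{E}(\cdot)$ sets, since $\bar x_i$ replicates $\hat x_i|_{[t_0',t^\times]}$, which replicates $x_i|_{[t_0',t^\times]}$); because $t^\times\ge t_c[x_i]$, any trajectory whose velocity and wait profiles are pinned between those of $\hat x_i$ and $\bar x_i$ automatically coincides with $x_i$ on the sub-interval $[t_0',t_c[x_i]]$, which is precisely the replication requirement for $\mathscr{E}_i'$. Second, $\hat t\le t_f'\le\bar t$: the lower bound is hypothesis~\eqref{equation:terminaltime}; for the upper bound, since $\bar x_i$ is safe with $x_{i-1}'$ and $x_{i-1}'$ has terminal position $0$ at time $\tau_{i-1}'+L/v_m$, the safety constraint forces $\bar x_i(\tau_{i-1}'+L/v_m)\le -l$, so $\bar x_i$ needs at least a further $l/v_m$ to reach the intersection region, giving $\bar t\ge\tau_{i-1}'+L/v_m+l/v_m=\tau_i'+L/v_m=t_f'$ (using $s=l/v_m$ and the fact that $i-1$ and $i$ are scheduled consecutively in the same lane).

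Next I would apply Proposition~\ref{proposition:boundedPaths} with initial state $z_i(t_0')$ at time $t_0'$, with $\hat x_i$ as the ``high-performance'' trajectory and $\bar x_i$ as the ``safe'' trajectory. Its hypotheses~\eqref{equation:p1}--\eqref{equation:delta1} hold over the whole position range $[x_i(t_0'),0]$: on $[p^\times,0]$ they are \eqref{equation:nivv}--\eqref{equation:nidd}, and on $[x_i(t_0'),p^\times]$ they hold with equality since both trajectories equal $x_i$ there; the ordering $\bar t\ge\hat t$ follows from these same inequalities. Choosing target terminal time $\tilde t:=t_f'$, which is admissible by the second observation, the proposition produces $\tilde x_i\in\mathscr{C}(z_i(t_0'),t_0',t_f',\emptyset)$ with $\velpath{\bar x_i}(p)\le\velpath{\tilde x_i}(p)\le\velpath{\hat x_i}(p)$ and $\wait{\bar x_i}(p)\ge\wait{\tilde x_i}(p)\ge\wait{\hat x_i}(p)$ for all $p$. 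Consequently $\tilde x_i=x_i$ on $[t_0',t^\times]$, $\tilde x_i(t_f')=0$, and $\dot{\tilde x}_i(t_f')=v_m$, so it remains only to check safety with $x_{i-1}'$.

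The safety check is the main obstacle. On $[t_0',t^\times]$ it is immediate, because there $\tilde x_i=x_i=\bar x_i$ and $\bar x_i$ is safe with $x_{i-1}'$ by assumption. On $[t^\times,\tau_{i-1}'+L/v_m]$ — equivalently, at positions $p\in[p^\times,-l]$, beyond which $x_{i-1}'$ is undefined and the constraint disappears — I would argue that $\tilde x_i$ never overtakes the rear of vehicle $i-1$, using the chain of inequalities $\velpath{\tilde x_i}(p)\ge\velpath{\bar x_i}(p)\ge\velpath{x_{i-1}'}(p+l)$ and $\wait{\tilde x_i}(p)\le\wait{\bar x_i}(p)\le\wait{x_{i-1}'}(p+l)$, where the outer bounds are hypotheses~\eqref{equation:nivvold}--\eqref{equation:niddold}, together with the fact that at time $t^\times$ the state of $\tilde x_i$ lies weakly behind the rear bumper of vehicle $i-1$ and that this bumper leaves the control region at position $-l$; the precise step is a position--time comparison analogous to Proposition~\ref{proposition:boundedTimes} applied to $\tilde x_i$ and the shifted trajectory $x_{i-1}'(\cdot+l)$, exploiting the finiteness of the window on which $x_{i-1}'$ is defined. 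Once this is done, $\tilde x_i\in\mathscr{E}_i'$ and the proposition follows; the delicacy of this last step, and the need to arrange $\hat x_i$, $\bar x_i$, and $t^\times$ appropriately, is precisely why the three special cases are treated separately afterwards.
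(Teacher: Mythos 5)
Your plan reproduces the paper's main case of the proof. The paper first assumes $t_f'\in[\hat{t},\bar{t}]$, applies Proposition~\ref{proposition:boundedPaths} to obtain the interpolant $\tilde{x}_i$, argues $\tilde{x}_i=x_i$ on $[t_0',t^\times]$ exactly as you do, and then verifies safety. You add the observation that $\bar{t}\geq t_f'$ always holds (so the paper's second case $t_f'>\bar{t}$ is vacuous in the application), which is a correct and useful simplification. But your derivation of it relies on $\tau_i'=\tau_{i-1}'+s$ and on $\bar{x}_i$ needing at least $l/v_m$ to cover the final $l$; these are facts of the application (consecutive same-lane vehicles, $\nu_i>1$), not hypotheses of the proposition, which is presumably why the paper spends a page constructing the ``slower'' trajectory $\bar{\bar{x}}_i$ to handle $t_f'>\bar{t}$ instead of ruling it out.

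The genuine gap is in the safety verification past $t^\times$. You propose anchoring on the state at time $t^\times$ lying weakly behind the rear bumper and invoking ``a position--time comparison analogous to Proposition~\ref{proposition:boundedTimes}''. That anchor is in the wrong direction: Proposition~\ref{proposition:boundedTimes} requires a departure-time ordering at the \emph{initial} position $p_0$ and concludes that the slow trajectory lies \emph{below} the fast one --- here, that the rear bumper lies below $\tilde{x}_i$, the opposite of safety. Moreover, since $\tilde{x}_i$ has the higher velocity profile, being behind at $t^\times$ alone does not prevent it from catching up and overtaking. What closes the argument is Proposition~\ref{proposition:boundedTimesFinal}, anchored at the \emph{terminal} position $-l$: one computes $\arrive{x_{i-1}'}(0)=\tau_{i-1}'+L/v_m=t_f'-s=\arrive{\tilde{x}_i}(-l)$, so the two arrive at $-l$ simultaneously, and working backwards the slower $x_{i-1}'(\cdot+l)$ must have been ahead at all earlier positions. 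Your ``exploiting the finiteness of the window'' remark gestures at the right neighborhood but does not identify this matching-arrival-times step, which is the actual content; without it (and the $\tau_i'=\tau_{i-1}'+s$ fact needed to make the arrival times match), the safety claim is unsupported.
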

We use the following notation throughout the three cases:
\begin{align} \label{equation:notation13}
\begin{split}
	t_i^c &:= t_c[x_i]  ;  \\	
	p_i^c &:= x_i(t_i^c) ;  \\
	v_i^c &:= \dot{x}_{i}(t_{i}^c) ;
\end{split}
\begin{split}
	t_{i-1}^c &:= t_c[x_{i-1}] ; \\
	p_{i-1}^c &:= x_{i-1}(t_{i-1}^c) - l; \\
	v_{i-1}^c &:= \dot{x}_{i-1}(t_{i-1}^c).
\end{split}
\end{align}
\begin{figure}
	\includegraphics[clip=true, trim=2.1in 6.5in 4.1in 3.4in, width=0.5\textwidth]{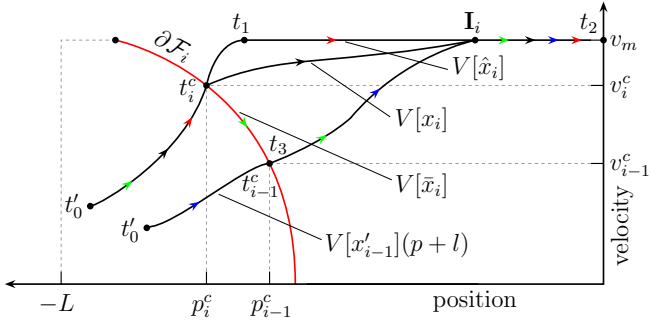}
	\caption{Depiction of Case 1 in Section C. Previous trajectory $x_i$ is identified by following the black arrows, ``high-performance" trajectory $\hat{x}_i$ by red arrows, ``safe" trajectory by green arrows, and updated trajectory $x_{i-1}'$ of vehicle in front by blue arrows. Both ``high-performance" trajectory $\hat{x}_i$ and safe trajectory $\bar{x}$ replicate $x_i$ until time $t_i^c$, which denotes the time that $x_i$ leaves the set $\mathcal{F}_i$. Inequality~\eqref{equation:case1position} is depicted clearly. Also, Inequality~\eqref{equation:case1time} states that previous trajectory $x_i$ leaves set $\mathcal{F}_{i}$, denoted time $t_i^c$, after previous trajectory $x_{i-1}$ of vehicle $i-1$ leaves set $\mathcal{F}_{i-1}$, denoted time $t_{i-1}^c$. In ``high-performance" trajectory $\hat{x}$, time $t_1$ denotes the time that the vehicle reaches maximum speed, \ie, $t_1$ is the earliest time $t$ such that $\dot{\hat{x}}_i(t) = v_m$. See definition of control $\hat{u}_i$ in Equation~\eqref{eqn:hat1}. Time $t_2$ is the time that ``high-performance" trajectory $\hat{x}_i$ reaches the intersection region, \ie, state $(0,v_m)$. Time $t_3$ is the time that ``safe" trajectory $\bar{x}_i$ arrives at state $(p_{i-1}^c,v_{i-1}^c)$. See definition of control $\bar{u}_i$ in Equation~\eqref{eqn:ubar1}. Note that ``safe" trajectory $\bar{x}_i$ arrives at this state later than when previous trajectory $x_{i-1}$ departs from this state, \ie, $t_3 \geq t_{i-1}^c$.} \label{figure:case1}
\end{figure}
{\it Case 1.}
See Figure~\ref{figure:case1}.
%Suppose vehicle $i-1$ exits $\partial \mathcal{F}_{i-1}$ before vehicle $i$ leaves $\partial \mathcal{F}_i$.
Suppose the following:
\begin{align}
	t_i^c &\geq t_{i-1}^c ; \label{equation:case1time} \\
	 p_i^c &\leq p_{i-1}^c. \label{equation:case1position}
\end{align}
First, we construct the ``high-performance" trajectory $\hat{x}_i := \mathscr{T}(z_i(t_0'),\hat{u}_i)$, where control action $\hat{u}_i$ is defined as follows:
\begin{align} \label{eqn:hat1}
\hat{u}_i(t) = 
	\begin{cases}
		u_i(t) &\colon t \in [t_0',t_i^c) \\
		a_m &\colon t \in [t_i^c,t_1)\\
		0 &\colon t \in [t_1,t_2],
	\end{cases}
\end{align}
where
\begin{align*}
	t_1 &:= t_i^c + (v_m-v_i^c)/a_m, \\
	t_2 &:= t_1 + \frac{p_i^c-\frac{v_m^2 - (v_i^c)^2}{2a_m}}{v_m}.
\end{align*}
%Since $P_{\hat{x}_i}(\xi) \geq P_{x_j}(\xi)$ and $\delta_{\hat{x}_j}(\xi) \leq \delta_{x_j}(\xi)$ for all $\xi \in [0,\xi_j']$ and $t_D(\hat{x}_j,\xi_j') = t_D(x_j,\xi'_j)$, by Proposition~\ref{proposition:boundedTimes}, $|\hat{x}_j(t)| \leq |x_j(t)|$ for all time $t \in [t_0',t_2]$.
Using $x_i$ as the ``slow" trajectory and $\hat{x}_i$ as the ``fast" trajectory, Inequalities~\eqref{equation:pp},~\eqref{equation:deldel}, and~\eqref{equation:dd} are satisfied; thus, by Proposition~\ref{proposition:boundedTimes}, we have $x_i(t) \leq \hat{x}_i(t)$ for all time $t$. % $t \geq t_0'$.
Since trajectory $x_i$ reaches the intersection region at time $t_f$, which satisfies $t_f' > t_f$, then trajectory $\hat{x}_i$ must reach the intersection region before time $t_f'$, \ie, $ t_f' \geq t_2$; thus, Inequality~\eqref{equation:terminaltime} is met.

By Inequalities~\eqref{equation:case1time} and~\eqref{equation:case1position}, we have $t_3 \geq t_{i-1}^c$.
Now, we construct the ``safe" trajectory $\bar{x}_i$.
Define trajectory $\bar{x}_i := \mathscr{T}(z_i(t_0'),\bar{u}_i)$, where control action $\bar{u}_i$ is defined as follows:
\begin{align} \label{eqn:ubar1}
\bar{u}_i(t) = 
	\begin{cases}
		u_i(t) &\colon t \in [t_0',t_i^c) \\
		-a_m &\colon t \in [t_i^c,t_3) \\
		u_{i-1}'(t-t_3+t_{i-1}^c) &\colon t \in [t_3,t_4)\\
		0 &\colon t \in [t_4,t_4+s],
	\end{cases}
\end{align}
where
\begin{align}
	t_3 &:= t_i^c + (v_i^c - v_{i-1}^c)/a_m  \label{equation:case1_t3} \\
	t_4 &:= t_{i-1}'-t_{i-1}^c+t_3 \\
	t_{i-1}' &:= \tau_{i-1}' + L/v_m.
\end{align}
To borrow the language of Proposition~\ref{proposition:sectionCmainprop}, $t_i^c$ is $t^\times$, $\hat{x}_i$ is the ``high-performance" trajectory, and $\bar{x}_i$ is the ``safe" trajectory.
First, note $\hat{x}(t) = x_i(t) = \bar{x}_i(t)$ for all $t \in [t_0',t_i^c]$.
Second, by construction, Inequalities~\eqref{equation:nivv},~\eqref{equation:nidd},~\eqref{equation:nivvold} and~\eqref{equation:niddold} hold.
Lastly, note that $\bar{x}_i$ and $x_{i-1}'$ are safe for all time. % $t \in [t_0',\bar{t}_f']$.
For any $t \in [t_0',t_{i-1}^c]$: 
\begin{align*}
	\bar{x}_i(t) = x_i(t) \leq x_{i-1}(t)- l = x_{i-1}'(t) - l.
\end{align*}
For any $t \in [t_{i-1}^c,t_3]$:
\begin{align*}
	\bar{x}_i(t) \leq p_{i-1}^c \leq x_{i-1}'(t_{i-1}^c) - l.
\end{align*}
For any $t \geq t_3$: %for all $t \in [t_3, \bar{t}_f']$
\begin{align*}
	\bar{x}_i(t) = x_{i-1}'(t - t_3 + t_{i-1}^c) - l \leq x_{i-1}'(t) - l,
\end{align*}
since $t_3 \geq t_{i-1}^c$.
By Proposition~\ref{proposition:sectionCmainprop}, $\mathscr{E}_i'$ is non-empty. \\

\begin{figure}
		\includegraphics[clip=true, trim=2.1in 6.5in 4.1in 3.4in, width=0.5\textwidth]{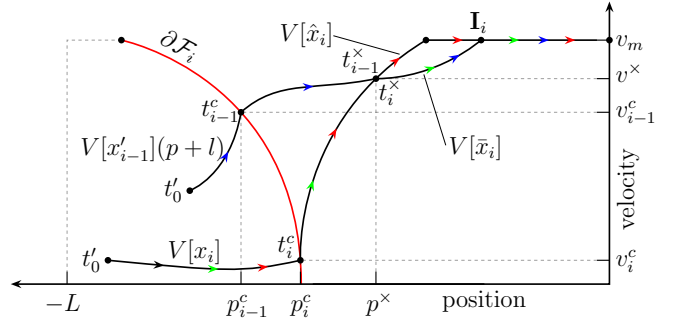}
		\caption{Depiction of Case 2 in Section C. Only the first segment of previous trajectory $x_i$ is depicted, by a black arrow. ``High-performance" trajectory $\hat{x}_i$ is depicted by the sequence of red arrows, ``safe" trajectory by green arrows, and updated trajectory $x_{i-1}'$ of vehicle in front by blue arrows. Inequality~\eqref{equation:case2position} is depicted clearly. Also, Inequality~\eqref{equation:case2time} states that previous trajectory $x_i$ leaves set $\mathcal{F}_{i}$, denoted time $t_i^c$, after previous trajectory $x_{i-1}$ of vehicle $i-1$ leaves set $\mathcal{F}_{i-1}$, denoted time $t_{i-1}^c$. Both ``high-performance" trajectory $\hat{x}_i$ and ``safe" trajectory $\bar{x}$ replicate $x_i$ until time $t_i^c$. Position $p^\times$ is the first position that ``high-performance" trajectory $\hat{x}_i$ and the rear bumper of the updated trajectory $x_{i-1}'$ intersect. See Definition~\eqref{eq:first}. Also, note that $\hat{x}_i$ departs position $p^\times$ at a later time than when the rear bumper of the updated trajectory $x_{i-1}'$ arrives at this position, \ie, $t_i^\times \geq t_{i-1}^\times$. Furthermore, ``high-performance" trajectory $\hat{x}_i$ and ``safe" trajectory $\bar{x}_i$ are equivalent up until time $t_i^\times$. See definition of control $\hat{u}_i$ in Equation~\eqref{eqn:ubar2}.} \label{figure:case2}
\end{figure}
{\it Case 2.}
See Figure~\ref{figure:case2}.
Suppose the following:
\begin{align}
	t_i^c &\geq t_{i-1}^c  \label{equation:case2time} \\
	p_{i}^c &> p_{i-1}^c. \label{equation:case2position}
\end{align}
In this case, we use the same ``high-performance" trajectory $\hat{x}_i$ as defined in Equation~\eqref{eqn:hat1}; and by the same argument made earlier, Inequality~\eqref{equation:terminaltime} holds.
Next, we construct the ``safe" trajectory $\bar{x}_i$.
By Inequality~\eqref{equation:case2position}, $\velpath{\hat{x}_i}(p)$ and $\velpath{x_{i-1}'}(p+l)$ must intersect after position $p_i^c$. % \ie, there exists $\xi \in [l,\xi_j^c]$ such that $P_{\hat{x}_j}(\xi) = Q_{x_{j-1}'}(\xi)$.}
Define 
\begin{align}
	p^\times &:= \inf \{ p \geq p_i^c \colon \velpath{\hat{x}_i}(p) = \velpath{x_{i-1}'}(p+l) \} \label{eq:first} \\
	v^\times &:= \velpath{\hat{x}_i}(p^\times) = \velpath{x_{i-1}'}(p^\times).
\end{align}
Define 
\begin{align*}
	t_i^\times 		&:= \depart{\hat{x}_i}(p^\times) , \\
	t_{i-1}^\times 	&:= \arrive{x_{i-1}'}(p^\times + l).
\end{align*}
By Inequalities~\eqref{equation:case2time} and~\eqref{equation:case2position}, we have $t_i^\times \geq t_{i-1}^\times$.
Define ``safe" trajectory $\bar{x}_i := \mathscr{T}(z_i(t_0'),\bar{u}_i)$, where control action $\bar{u}_i$ is defined as follows:
\begin{align} \label{eqn:ubar2}
\bar{u}_i(t) = 
	\begin{cases}
		\hat{u}_i(t) 					&\colon t \in [t_0',t_i^\times) \\
		u_{i-1}'(t-t_i^\times+t_{i-1}^\times) 	&\colon t \in [t_i^\times,t_5) \\
		0 							&\colon t \in [t_5,t_5+s],
	\end{cases}
\end{align}
where
\begin{align*}
	t_5 := t_{i-1}'+t_i^\times-t_{i-1}^\times.
\end{align*}
To borrow the language of Proposition~\ref{proposition:sectionCmainprop}, $t_i^\times$ is $t^\times$, $\hat{x}_i$ is the ``high-performance" trajectory, and $\bar{x}_i$ is the ``safe" trajectory.
First, note $\hat{x}_i(t) = x_i(t) = \bar{x}_i(t)$ for all $t \in [t_0',t_i^\times]$.
Second, by construction, Inequalities~\eqref{equation:nivv},~\eqref{equation:nidd},~\eqref{equation:nivvold} and~\eqref{equation:niddold} hold.
%Second, note that $P_{\bar{x}_j}(\xi) \leq P_{\hat{x}_j}(\xi)$ and $\delta_{\bar{x}_j}(\xi) \geq \delta_{\hat{x}_j}(\xi)$ for all $\xi \in [0,\xi^\times]$. %Note that $\delta_{\bar{x}_j}(\xi^\times) \geq \delta_{\hat{x}_j}(\xi^\times)$ by definition of $t_j^\times$.
%Third, $P_{\bar{x}_j}(\xi) \geq Q_{x_{j-1}'}(\xi)$ and $\delta_{\bar{x}_j}(\xi) \leq \Delta_{x_{j-1}'}(\xi)$ for all $\xi \in [0,\xi^\times) \cap \mathscr{D}(Q_{x_{j-1}'})$.
Lastly, note that $\bar{x}_i$ and $x_{i-1}'$ are safe for all time.
For any $t \in [t_0',t_{i-1}^c]$:
\begin{align*}
	\bar{x}_i(t) = x_i(t) \leq x_{i-1}(t) - l = x_{i-1}'(t) - l,
\end{align*}
since $t_i^c \geq t_{i-1}^c$. For any $t \in [t_{i-1}^c,t_i^\times]$:
\begin{align*}
	\bar{x}_i(t) \leq x_{i-1}'(t) - l,
\end{align*}
by Proposition~\ref{proposition:boundedTimes}. For any $t \geq t_i^\times$:
\begin{align*}
	\bar{x}_i(t) = x_{i-1}'(t-t_i^\times+t_{i-1}^\times) - l \leq x_{i-1}'(t) - l,
\end{align*}
since $t_i^\times \geq t_{i-1}^\times$.
By Proposition~\ref{proposition:sectionCmainprop}, $\mathscr{E}_i'$ is non-empty. \\

\begin{figure}
\centerline{\includegraphics[clip=true, trim=2.1in 6.5in 4.1in 3.4in, width=0.5\textwidth]{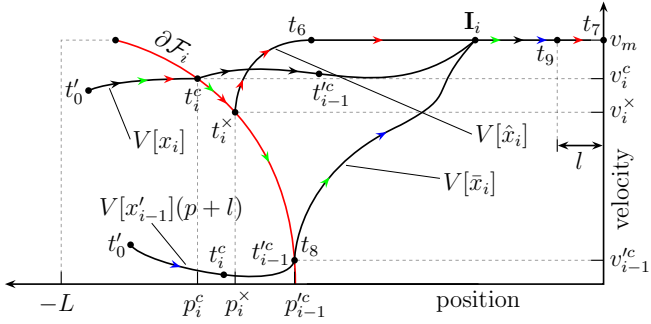}}
\caption{Depiction of Case 3 in Section C. Previous trajectory $x_i$ is depicted by a black arrow, ``high-performance" trajectory $\hat{x}_i$ is depicted by the sequence of red arrows, ``safe" trajectory by green arrows, and updated trajectory $x_{i-1}'$ of vehicle in front by blue arrows. Inequality~\eqref{equation:case3time} states that previous trajectory $x_i$ leaves set $\mathcal{F}_{i}$, denoted time $t_i^c$, earlier than when updated trajectory $x_{i-1}'$ of vehicle $i-1$ leaves set $\mathcal{F}_{i-1}$, denoted time $t_{i-1}'^c$. See definition in~\eqref{eq:defCase3}. We define time $t_i^\times$ in Equation~\eqref{eq:case3min}. ``High-performance" trajectory $\hat{x}_i$ and ``safe" trajectory $\bar{x}$ are equivalent up until time $t_i^\times$. After this time, $\hat{x}_i$ accelerates until reaching maximum speed at time $t_6$; while $\bar{x}_i$ continues to decelerate until its arrival at position $p_{i-1}'^c$ at time $t_8$. Note that $\bar{x}_i$ arrives at position $p_{i-1}'^c$ later than when the rear bumper of updated trajectory $x_{i-1}'$ of vehicle $i-1$ arrives at this same position, \ie, $t_8 \geq \arrive{(x_{i-1}'}(p_{i-1}'^c+l)$. Also, time $t_7$ is the time that ``high-performance" trajectory $\hat{x}_i$ reaches the intersection region, \ie, state $(0,v_m)$.} \label{figure:case3}
\end{figure}
{\it Case 3.}
See Figure~\ref{figure:case3}.
Suppose $t_{i-1}^c \geq t_i^c$. Since $t_c[x_{i-1}'] \geq t_{i-1}^c$ (because $x_{i-1}' \in \mathscr{E}_{i-1}'$ by Lemma~\ref{lemma:EiNotEmpty}), we have
\begin{align}
	t_i^c \leq t_c[x_{i-1}']. \label{equation:case3time}
\end{align}
Throughout this case, we use the following notation:
\begin{align} \label{eq:defCase3}
	\begin{cases}
	t_{i-1}'^c &:= t_c[x_{i-1}'] \\
	p_{i-1}'^c &:= x_{i-1}'(t_{i-1}'^c) - l \\
	v_{i-1}'^c &:= \dot{x}_{i-1}'(t_{i-1}'^c).
	\end{cases}
\end{align}
Define:
\begin{align}
	t_i^\times &:= \min \{ t_{i-1}'^c, t_i^c + v_i^c/a_m \} \label{eq:case3min}\\
	v_i^\times &:= 	v_i^c - a_m(t_i^\times-t_i^c). \label{equation:t7_linearconstraint}
\end{align}
Now, we construct the ``high-performance" trajectory $\hat{x}_i := \mathscr{T}(z_i(t_0'),\hat{u}_i)$, where control action $\hat{u}_i$ is defined as follows:
\begin{align*}
\hat{u}_i(t) = 
	\begin{cases}
		u_i(t) 	&\colon t \in [t_0',t_i^c) \\
		-a_m 	&\colon t \in [t_i^c,t_i^\times) \\
		a_m		&\colon t \in [t_i^\times,t_6) \\
		0 		&\colon t \in [t_6,t_7],
	\end{cases}
\end{align*}
where 
\begin{align*}
	t_6 		&:=	 t_i^\times + (v_m-v_i^\times)/a_m \\
	t_7 		&:= 	t_6 + \frac{ p_i^c - \frac{v_m^2 + (v_i^c)^2- 2(v_i^\times)^2}{2a_m}}{v_m}.
\end{align*}
``High-performance" trajectory $\hat{x}_i$ reaches the intersection region no later than $t_f'$.
To show this, we use Equation~\eqref{equation:t7_linearconstraint} to re-express $t_7$ as a function of $v_i^\times$ and some constants:
\begin{align*}
	t_7 = t_i^c + \frac{v_i^c - v_i^\times}{a_m} + \frac{v_m-v_i^\times}{a_m} + \frac{ p_i^c - \frac{v_m^2 + (v_i^c)^2- 2(v_i^\times)^2}{2a_m}}{v_m}.
\end{align*}
We upper bound $t_7$ as follows.
Note that $t_7$ is quadratic in $v_i^\times$ with positive curvature.
Also, $v_i^\times$ must range in $[v_{i-1}'^c,v_i^c]$.
Therefore, 
\begin{align*}
	t_7 = t_7(v_i^\times) \leq \max \{ t_7(v_{i-1}'^c) , t_7(v_i^c)  \}
\end{align*}
Note that $t_7(v_{i-1}'^c)$ is the time that the ``high-performance" trajectory $\hat{x}_i$ reaches the intersection region, with the constraint that vehicle $i$ arrives at $p_{i-1}'^c$ before updated trajectory $x_{i-1}'$ of vehicle $i-1$ departs $p_{i-1}'^c+l$.
Using $t_{i-1}' 	:= 	\tau_{i-1}' + L/v_m$:
\begin{align*}
	t_7(v_{i-1}'^c) \leq  t_{i-1}' + s = t_f'.
\end{align*}
%$t_7(v_{i-1}'^c) = s + \inf \{ t \colon \mathscr{C}(z_{i-1}'(t_{i-1}'^c),t_{i-1}'^c,t,\emptyset) \not=\emptyset \} \leq \bar{t}_f' + s \leq t_f'$.
Note that $t_7(v_i^c)$ is the time that ``high-performance" trajectory $\hat{x}_i$ reaches the intersection region if it copies exactly the previous trajectory $x_{i}$ of vehicle $i$ until position $p_i^c$ and then fully accelerates until the intersection region: % = \inf \{ t \colon \mathscr{C}(\mathbf{x}_j(t_j^c),t_j^c,t,\emptyset) \not= \emptyset\} \leq t_f \leq t_f'$.
\begin{align*}
	t_7(v_i^\times) \leq t_f \leq t_f'.
\end{align*}

Next, we construct ``safe" trajectory $\bar{x}_i$. Define trajectory $\bar{x}_i := \mathscr{T}(z_i(t_0'),\bar{u}_i)$, where control action $\bar{u}_i$ is defined as follows:
\begin{align} \label{eqn:ubar3}
\bar{u}_i(t) = 
	\begin{cases}
		\hat{u}_i(t)							&\colon t \in [t_0',t_i^\times) \\
		-a_m								&\colon t \in [t_i^\times,t_8) \\
		u_{i-1}'(t - t_8 + \arrive{x_{i-1}'}(p_{i-1}'^c+l))	&\colon t \in [t_8,t_9) \\
		0								&\colon t \in [t_9,t_9+s],
	\end{cases}
\end{align}
where
\begin{align*}
	t_8 		&:= t_i^\times + (v_i^\times-v_{i-1}'^c)/a_m \\
	t_9 		&:= t_8 + t_{i-1}' - \arrive{x_{i-1}'}(p_{i-1}'^c+l).
\end{align*}
First, note that $\bar{x}_i(t) = \hat{x}_i(t)$ for all $t \in [t_0',t_i^\times]$.
Second, by construction, Inequalities~\eqref{equation:nivv},~\eqref{equation:nidd},~\eqref{equation:nivvold} and~\eqref{equation:niddold} hold.
Lastly, $\bar{x}_i$ and $x_{i-1}'$ are safe for all time.
For any $t \in [t_0',t_{i-1}^c]$:
\begin{align*}
	\bar{x}_i(t) \leq x_i(t) \leq x_{i-1}(t) - l = x_{i-1}'(t) - l.
\end{align*}
%To see this, first note that since $\hat{u}_j(t) \leq u_j(t)$ for all $t \in [t_0',t_{j-1}^c]$, we have 
For any $t \in [t_{i-1}^c,t_{i-1}'^c]$:
\begin{align*}
	\bar{x}_i(t) \leq x_{i-1}'(t) - l
\end{align*}
by Proposition~\ref{proposition:boundedTimes}.
%For any $t \in [t_{j-1}^c,t_{j-1}'^c]$, trajectories $x_{j-1}'$ and $\bar{x}_j$ are in $\partial \mathcal{F}_j$, from which 
For any $t \in [t_{i-1}'^c,t_8]$:
\begin{align*}
	\bar{x}_i(t) \leq p_{i-1}'^c \leq x_{i-1}'(t) - l.
\end{align*}
For any $t \geq t_8$:
\begin{align*}
	\bar{x}_i(t) = x_{i-1}'(t-t_8+\arrive{x_{i-1}'}(p_{i-1}'^c+l)) - l \leq x_{i-1}'(t) - l,
\end{align*}
since $t_8 \geq \arrive{(x_{i-1}'}(p_{i-1}'^c+l)$ by Inequality~\eqref{equation:case3time}.
By Proposition~\ref{proposition:sectionCmainprop}, $\mathscr{E}_i'$ is non-empty.
%Now, suppose that $t_f' > t_9 + s$.
%Then, trajectory $\bar{\bar{x}}_j$ from~(\ref{eqn:ubarbar}) and trajectory $\bar{x}_j$ from~(\ref{eqn:ubar3}) satisfy the conditions of Proposition~\ref{proposition:case1}, from which $\emptyset \not= \mathscr{E}(x_j|_{[t_0,t_j^c]},t_0',t_f',\bar{x}_j) \subseteq \mathscr{E}(x_j|_{[t_0,t_j^c]},t_0',t_f',x_{j-1}')$.
%Let $\xi_j^\times = |\hat{x}_j(t_j^\times)|$.
%Also, since , from which $|\hat{x}_j(t)| \geq |x_j(t)|\geq l+ |x_{j-1}(t)| = l + |x_{j-1}'(t)|$ for all $t \in [t_0',t_j^\times]$.
%Since $t_8 \geq t_A(x_{j-1}',\xi_{j-1}'^c)$, $t_9 + s\geq t_f + s = t_f'$.

%%%%%%%%%%%%%%%%%%%%%%%%%%%%%%%%%%%%%%%%%%%%%%%%%%%%%%%%%%%%%%
%%%%%%%%%%%%%%%%%%%%%%%%%%%%%%%%%%%%%%%%%%%%%%%%%%%%%%%%%%%%%%
%%%%%%%%%%%%%%%%%%%%%%%%%%%%%%%%%%%%%%%%%%%%%%%%%%%%%%%%%%%%%%
%%%%%%%%%%%%%%%%%%%%%%%%%%%%%%%%%%%%%%%%%%%%%%%%%%%%%%%%%%%%%%

\subsection*{Proofs used in Section~\ref{section:C}}
{\it Proof of Proposition~\ref{proposition:boundedPaths}:}
Define:
\begin{align*}
	t_m = \int_{p_0}^{0} \frac{1}{\velpath{\bar{x}}(p)} dp + \sum_{\{ p \colon \wait{\hat{x}}(p) > 0\}} \wait{\hat{x}}(p).
\end{align*}
This is the terminal time of the trajectory uniquely generated by the triple $(\velpath{\bar{x}},\wait{\hat{x}},0)$, \ie, $\mathscr{G}(\velpath{\bar{x}},\wait{\hat{x}},0)$.
This trajectory copies the ``slow" trajectory $\bar{x}$ except at points of zero velocity; at which, the trajectory only stays as long as the ``fast" trajectory $\hat{x}$.
Note that this trajectory satisfies Inequalities~\eqref{equation:p2} and~\eqref{equation:delta2}.
Furthermore, for any terminal time $\tilde{t} \in [t_m,\bar{t}]$, we can just as easily create a trajectory satisfying those inequalities.
Define:
\begin{align*}
	\lambda &:= \frac{\tilde{t}-t_m}{\bar{t}-t_m} \\
	\delta_{\lambda} &:= \lambda(\wait{\bar{x}}-\wait{\hat{x}}) + \wait{\hat{x}}.
\end{align*}
Note that for any $\lambda \in [0,1]$, for all $p \in [p_0,0]$:
\begin{align*}
	\wait{\bar{x}}(p) \geq \delta_\lambda(p) \geq \wait{\hat{x}}(p).
\end{align*}
Therefore, the trajectory $\tilde{x}$ uniquely generated by the triple $(\velpath{\bar{x}},\delta_{\lambda},0)$, \ie, $\tilde{x} := (\velpath{\bar{x}},\delta_{\lambda},0)$, satisfies Inequalities~\eqref{equation:p2} and~\eqref{equation:delta2}.
%$\in \mathscr{C}(\mathbf{x}_0,\tilde{t}_0,\tilde{t},\emptyset)$

Now, suppose $\tilde{t} < t_m$.
In this case, the velocity path $\velpath{\bar{x}}$ is too slow, and we must construct a different velocity path in order to meet the terminal time $\tilde{t}$. Consider the following velocity path.

\begin{figure}
\includegraphics[clip=true, trim=2.1in 6.5in 4.1in 3.4in, width=0.5\textwidth]{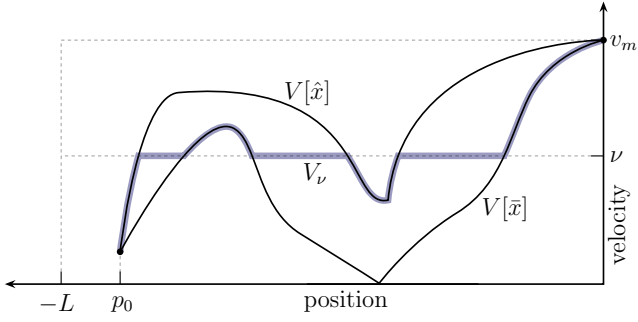}
		\caption{Velocity path $V_\nu$, shaded blue, as defined in Equation~\eqref{equation:vnu}, used in the proof of Proposition~\ref{proposition:boundedPaths}. We wish to find some velocity path $V_\nu$ (bounded between velocity paths $\velpath{\hat{x}}$ and $\velpath{\bar{x}}$) that has terminal time $\tilde{t}$ in between the terminal times of the ``fast" trajectory $\hat{x}$ and the ``slow" trajectory $\bar{x}$. When $\tilde{t}$ is strictly less than $t_m$, we can always find some $\nu > 0$ such that the trajectory $x_\nu$ constructed from velocity path $V_\nu$ has terminal time $\tilde{t}$. For terminal times $\tilde{t}$ greater than or equal to $t_m$, any bounded velocity path is fully stopped at some point along its trajectory, and we can choose the duration at this stopped location appropriately to verify the proposition.} \label{figure:vnu}
\end{figure}
For each $\nu \in [0,v_m]$, we define path $V_\nu \colon [p_0,0] \to [0,v_m]$ as follows:
\begin{align} \label{equation:vnu}
V_\nu(p) = 
	\begin{cases}
		\velpath{\bar{x}}(p) 	& \colon \velpath{\bar{x}}(p) > \nu \\
		\nu 				& \colon \velpath{\bar{x}}(p) \leq \nu \leq \velpath{\hat{x}}(p) \\
		\velpath{\hat{x}}(p) 	& \colon \velpath{\hat{x}}(p) < \nu.
	\end{cases}
\end{align}
See Figure~\ref{figure:vnu}.
Define $x_\nu := \mathscr{G}(V_\nu,\wait{\hat{x}},0)$.
For any $\nu \in [0,v_m]$, trajectory $x_\nu$ satisfies Inequalities~\eqref{equation:p2} and~\eqref{equation:delta2}.
We claim that for any $\tilde{t} \in [\hat{t}, t_m)$, there always exists $\tilde{\nu}$ such that $x_{\tilde{\nu}}$ has terminal time $\tilde{t}$.
Define the map $T \colon (0,v_m] \to \reals$ as follows:
\begin{align*}
	T(\nu) = \int_{p_0}^{0} \frac{1}{V_\nu(p)} dp + \sum_{\{ p \colon \wait{\hat{x}}(p) > 0\}} \wait{\hat{x}}(p).
\end{align*}
First, note that $T_\nu$ is exactly the terminal time of trajectory $x_\nu$. 
Second, $T(v_m) = \hat{t}$. 
Next, by the monotone convergence theorem, $\lim_{\nu\to 0^+} T(\nu) = t_m$.
Lastly, note that $T$ is continuous; therefore, for any $\tilde{t} \in [\hat{t},t_m)$, there exists $\tilde{\nu} \in (0,v_m]$ such that $T(\tilde{\nu}) = \tilde{t}$; and by definition of $T$, trajectory $x_{\tilde{\nu}} := \mathscr{G}(V_{\tilde{\nu}},\wait{\hat{x}},0)$ has terminal time $\tilde{t}$.
%For each $\xi \in [0,\xi_0]$, $1/P_\nu(\xi)$ is a monotone increasing sequence as $\nu$ tends to 0, and $1/P_\nu$ converges pointwise to $1/P_0$ as $\nu$ tends to $0$.

It is left to show the continuity of $T$.
Let $\epsilon > 0$ be given. Choose any $\nu \in (0,v_m]$.
We show there exists $\delta > 0$ such that for any $\omega \in (0,v_m]$ satisfying $|\nu - \omega| \leq \delta$:
\begin{align*}
	\Big\vert\int_{p_0}^{0} \frac{1}{V_\nu(p)} dp - \int_{p_0}^{0} \frac{1}{V_\omega(p)} dp\Big\vert \leq \epsilon.
\end{align*}
First, note that for any $p \in [p_0,0]$: 
\begin{align*}
	|V_\nu(p) - V_\omega(p)| \leq |\nu-\omega|.
\end{align*}
Define set $Z_\omega = \{ p \colon V_\nu(p) < \min \{ \nu,\omega\} \}$.
Next, note that for any $p \in Z_\omega$:
\begin{align*}
	V_\nu(p) = V_\omega(p).
\end{align*}
Furthermore, for any $p \in Z_\omega^\complement$:
\begin{align*}
	V_\nu(p) &\geq \min\{\nu,\omega\} \\
	V_\omega(p) &\geq \min \{ \nu,\omega\}.
\end{align*}
Then,
\begin{align*}
	&\quad\text{ } \Big\vert\int_{p_0}^{0} \frac{1}{V_\nu(p)} dp - \int_{p_0}^{0} \frac{1}{V_\omega(p)} dp \Big\vert \\
	&\leq \Big\vert \int_{Z_\omega} \frac{1}{V_\nu(p)} - \frac{1}{V_\omega(p)} dp \Big\vert + \Big\vert \int_{Z_\omega^\complement} \frac{1}{V_\nu(p)} - \frac{1}{V_\omega} dp \Big\vert \\
	&\leq \Big\vert\int_{Z_\omega^\complement} \frac{1}{V_\nu(p)} - \frac{1}{V_\omega(p)} dp \Big\vert \\
	& \leq \int_{Z_\omega^\complement} \frac{|V_\nu - V_\omega|} {V_\nu(p) V_\omega(p)} dp  
	\leq \frac{|\nu-\omega|}{ \min\{\nu^2,\omega^2 \}} \int_{Z_\omega^\complement} dp \\
	&\leq \frac{|p_0| \delta}{\min\{\nu^2,\omega^2 \}}  \leq \frac{4 |p_0| \delta}{ \nu^2} \leq \epsilon.
\end{align*}
The penultimate inequality holds, since $\omega \geq \nu/2$ for sufficiently small $\delta$, \ie, $\delta \leq \nu/2$.
The last inequality holds by choosing sufficiently small $\delta$, \ie, $\delta \leq \epsilon \nu^2 /(4|p_0|)$. \qed \\

%%%%%%%%%%%%%%%%%%%%%%%%%%%%%%%%%%%%%%%%%%%%%%%%%%%%%%%%%%%%%%
%%%%%%%%%%%%%%%%%%%%%%%%%%%%%%%%%%%%%%%%%%%%%%%%%%%%%%%%%%%%%%
%%%%%%%%%%%%%%%%%%%%%%%%%%%%%%%%%%%%%%%%%%%%%%%%%%%%%%%%%%%%%%
%%%%%%%%%%%%%%%%%%%%%%%%%%%%%%%%%%%%%%%%%%%%%%%%%%%%%%%%%%%%%%

{\it Proof of Proposition~\ref{proposition:sectionCmainprop}}
Suppose $t_f' \in [ \hat{t} ,\bar{t}]$.
From the construction of the ``high-performance" trajectory $\hat{x}$ and the ``safe" trajectory $\bar{x}$, there exists (by Proposition~\ref{proposition:boundedPaths}) a trajectory $\tilde{x}_i \in \mathscr{E}(\hat{x}_i|_{[t_0',t^\times]},t_0',t_f',\emptyset)$ satisfying for all $p \in [p^\times,0]$:
\begin{align}
	\velpath{\bar{x}_i}(p) \leq   \velpath{\tilde{x}_i} (p) \leq  \velpath{\hat{x}_i} (p) \label{equation:13-1}\\
	\wait{\bar{x}_i}(p) \geq  \wait{\tilde{x}_i}(p)  \geq  \wait{\hat{x}_i}(p). \label{equation:13-2}
\end{align}
Now, we show $\tilde{x}_i$ is safe with $x_{i-1}'$.
For all $t \in [t_0',\depart{\bar{x}_i}(t^\times)]$:
\begin{align*}
	\tilde{x}_i(t) = \bar{x}_i(t) \leq x_{i-1}'(t) - l,
\end{align*}
by construction of $\tilde{x}_i$.
To show that for all $t > \depart{\bar{x}_i}(t^\times)$,
\begin{align}
	\tilde{x}_i(t) \leq x_{i-1}'(t) - l, \label{equation:safetyEnd}
\end{align}
we argue as follows.
From Inequalities~\eqref{equation:nivvold},~\eqref{equation:niddold},~\eqref{equation:13-1}, and~\eqref{equation:13-2}, for all $p \in (p^\times,-l)$:
\begin{align}
	\velpath{\tilde{x}_i}(p) &\geq \velpath{x_{i-1}'} (p+l) \label{equation:13-3} \\
	\wait{\tilde{x}_i}(p) &\leq \wait{x_{i-1}'}(p+l). 	\label{equation:13-4}
\end{align}
Also, note:
\begin{align}
	\arrive{x_{i-1}'}(0) = \tau_{i-1}' + L/v_m = t_f' - s = \arrive{\tilde{x}_i}(l). \label{equation:13-5}
\end{align}
From Inequalities~\eqref{equation:13-3},~\eqref{equation:13-4},~\eqref{equation:13-5}, Inequality~\eqref{equation:safetyEnd} holds (by Proposition~\ref{proposition:boundedTimesFinal}) for all $t > \depart{\bar{x}_i}(t^\times)$. 
Hence, we have shown that $\tilde{x}_i \in \mathscr{E}_i'$.

Now, suppose $t_f' > \bar{t}$. 
We proceed by constructing a ``slower" trajectory $\bar{\bar{x}}_i$ than trajectory $\bar{x}_i$ in order to bound the terminal time $t_f'$, as in Equation~\eqref{equation:terminaltimebound}, and afterwards apply Proposition~\ref{proposition:boundedPaths} to show existence of a trajectory $\tilde{x}_i$ with terminal time $t_f'$.
We then show $\tilde{x}_i$ is safe with $x_{i-1}'$.

Borrowing the language of Proposition~\ref{proposition:boundedPaths}, $\bar{x}_i$ acts as the ``fast" trajectory, and $\bar{\bar{x}}_i$ acts as the ``slow" trajectory.
In order to construct $\bar{\bar{x}}_i$, we first find the earliest intersection point of velocity path $\velpath{\bar{x}_i}$ and the blue curve depicted in Figure~\ref{figure:ubb}.
Formerly, we define:
\begin{align*}
	p^+ &= \inf \Bigg\{  p \colon \velpath{\bar{x}_i}(p) = \sqrt{p + \frac{v_m^2}{2a_m} + (\nu_i-1)l } \Bigg\} \\
	v^+ &= \velpath{\bar{x}_i} (p^+).
\end{align*}
We use notation developed in Equations~\eqref{equation:notation13}.
Now, consider trajectory $\bar{\bar{x}}_i := \mathscr{T}(z_i(t_0'),\bar{\bar{u}}_i)$, where control action $\bar{\bar{u}}_i$ is defined as follows (depicted in Figure~\ref{figure:ubb}):
\begin{align} \label{eqn:ubarbar}
\bar{\bar{u}}_i(t) =
	\begin{cases}
		\bar{u}_i(t)  	&\colon t \in [t_0',t_i^c) \\
		-a_m 		&\colon t \in [t_i^c,t_s) \\
		0			&\colon t \in [t_s, t_s + \sigma) \\
		a_m  		&\colon t \in [t_s + \sigma, t^+] \\
		\bar{u}_i( t - t^+ + \arrive{\bar{x}_i}(p^+)) 	&\colon t \in [t^+,\bar{\bar{t}}],
	\end{cases}
\end{align}
where
\begin{align*}
	t_s 			&:= 	t_i^c + v_i^c/a_m \\
	t^+		 	&:= 	t_s + \sigma + v^+/a_m \\
	\bar{\bar{t}} 	&:= 	t^+ + \bar{t} - \arrive{\bar{x}_i}(p^+) \\
	\sigma 		&:= 	\max \big\{ 0 , t_f' - \big(t_s + v^+ / a_m + \bar{t} - \arrive{\bar{x}_i}(p^+) \big) \big\}.
\end{align*}
\begin{figure}
	\centerline{\includegraphics[clip=true, trim=2.1in 6.5in 4.1in 3.4in, width=0.5\textwidth]{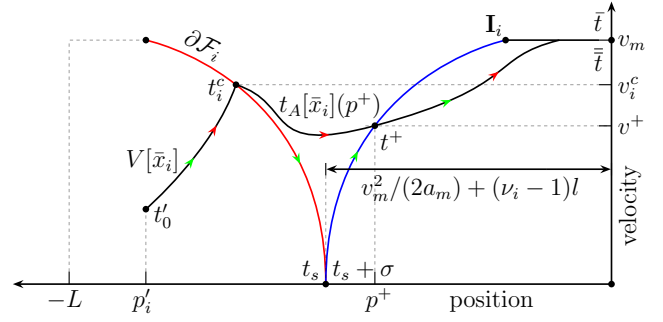}}
	\caption{``Slow" velocity path $\velpath{\bar{\bar{x}}_i}$ as generated by control $\bar{\bar{u}}_i$ defined in Equation~\eqref{eqn:ubarbar} is depicted with green arrows; ``fast" velocity path $\velpath{\bar{x}_i}$ is depicted with red arrows. Note that during ``slow" trajectory $\bar{\bar{x}}_i$ vehicle halts for an amount of time $\sigma$ at a distance $v_m^2 / (2 a_m) + (\nu_i-1)l$ from the intersection region. Trajectory $\bar{\bar{x}}_i$ arrives at this position at time $t_s$ and departs this position at time $t_s + \sigma$. Time $t^+$ is the time that $\bar{\bar{x}}_i$ arrives at state $(p^+,v^+)$. Trajectory $\bar{\bar{x}}_i$ reaches the intersection region, \ie, state $(0,v_m)$, at time $\bar{\bar{t}}$. ``Fast" trajectory $\bar{x}_i$ arrives at state $(p^+,v^+)$ at time $\arrive{\bar{x}_i}(p^+)$, and arrives at the intersection region, \ie, state $(0,v_m)$, at time $\bar{t}$.} \label{figure:ubb}
\end{figure}
By construction of $\sigma$:
\begin{align}
	\bar{\bar{t}} \geq t_f' \geq \bar{t}. \label{equation:terminaltimebound}
\end{align}
By construction of $\bar{\bar{u}}_i$, for all $p$:
\begin{align*}
	\velpath{\bar{\bar{x}}_i}(p) &\leq \velpath{\bar{x}_i}(p) \\
	\wait{\bar{\bar{x}}_i}(p) 	&\geq \wait{\bar{x}_i}(p).
\end{align*}
Since Inequalities~\eqref{equation:p1} and~\eqref{equation:delta1} hold for trajectories $\bar{\bar{x}}$ and $\bar{x}$, there exists (by Proposition~\ref{proposition:boundedPaths}) a trajectory $\tilde{x}_i \in \mathscr{C}(z_i(t_0'),t_0',t_f',\emptyset)$ satisfying Inequalities~\eqref{equation:pp} and~\eqref{equation:deldel} for trajectories $\bar{\bar{x}}_i$, $\tilde{x}_i$, and $\bar{x}_i$, \ie, for all $p$:
\begin{align*}
	\velpath{\bar{\bar{x}}_i}(p) &\leq 	\velpath{\tilde{x}_i}(p) \leq 		\velpath{\bar{x}_i}(p) \\
	\wait{\bar{\bar{x}}_i}(p) 	&\geq 	\wait{\tilde{x}_i}(p) 	 \geq 	\wait{\bar{x}_i}(p).
\end{align*}
Note that Inequality~\eqref{equation:dd} also holds for trajectories $\bar{\bar{x}}_i$, $\tilde{x}_i$, and $\bar{x}_i$, \ie,
\begin{align*}
	\depart{\bar{x}_i}(p_i')  \geq \depart{\tilde{x}_i}(p_i') \geq \depart{x_i}(p_i'),
\end{align*}
where $p_i' := x_i(t_0')$.
Since Inequalities~\eqref{equation:pp},~\eqref{equation:deldel}, and~\eqref{equation:dd} hold, then (by Proposition~\ref{proposition:boundedTimes}) for all $t \geq t_0'$:
\begin{align*}
	\bar{\bar{x}}_i(t) \leq \tilde{x}_i(t) \leq \bar{x}_i(t).
\end{align*}
Safety of $\tilde{x}_i$ and $x_{i-1}'$ immediately follows.
Also, $\tilde{x}_i(t) = x_i(t)$ for all $t \in [t_0',t_c[x_i]]$.
Thus, $\tilde{x}_i \in \mathscr{E}_i'$. \qed\\

%%%%%%%%%%%%%%%%%%%%%%%%%%%%%%%%%%%%%%%%%%%%%%%%%%%%%%
%%%%%%%%%%%%%%%%%%%%%%%%%%%%%%%%%%%%%%%%%%%%%%%%%%%%%%

{\it    Proof of Proposition~\ref{proposition:boundedTimes} }
\begin{figure}
	\centerline{\includegraphics[clip=true, trim=2in 6.6in 4.2in 3.3in, width=0.5\textwidth]{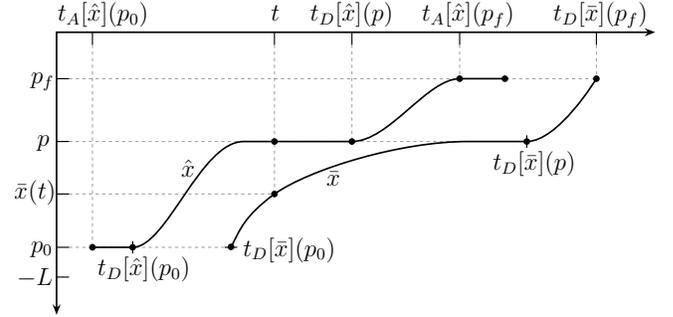}}
	\caption{Notation for proof of Proposition~\ref{proposition:boundedTimes} on a position-time plot. Note that trajectories $\hat{x}$ and $\bar{x}$ are both monotonically increasing functions. Note that Inequality~\eqref{equation:dd} is illustrated clearly: the departure time of $\bar{x}$ from position $p_0$ is later than the departure time of $\hat{x}$ from the same position, \ie, $\depart{\bar{x}}(p_0) \geq \depart{\hat{x}}(p_0)$. We show that $\bar{x}(t) \leq \hat{x}(t)$ for all time $t$ from the arrival of $\hat{x}$ at position $p_0$ until the departure of $\bar{x}$ from position $p_f$, \ie, for all $t \in [\arrive{\hat{x}}(p_0),\depart{\bar{x}}(p_f)]$.} \label{figure:prop12}
\end{figure}
See Figure~\ref{figure:prop12}.
Let
\begin{align*}
	t \in  [\arrive{\hat{x}}(p_0), \arrive{\hat{x}}(p_f)).
\end{align*}
We can find $p \in [p_0,p_f)$ such that $\hat{x}(t) = p$. %; otherwise, $\bar{x}(t) \leq p_f < \hat{x}(t)$.
By Inequalities~\eqref{equation:pp},~\eqref{equation:deldel}, and~\eqref{equation:dd}:
\begin{align*}
	t &\leq \depart{\hat{x}}(p) \\
	  &= \depart{\hat{x}}(p_0) + \int_{p_0}^{p} \frac{1}{\velpath{\hat{x}}(q)} dq + \sum_{ \{ q \in (p_0 , p] \colon \wait{\hat{x}}(q) > 0 \} } \wait{\hat{x}} (q) \\
	  &\leq
	  \depart{\bar{x}}(p_0) + \int_{p_0}^{p} \frac{1}{\velpath{\bar{x}}(q)} dq + \sum_{ \{ q \in (p_0 , p] \colon \wait{\bar{x}}(q) > 0 \} } \wait{\bar{x}} (q) \\ 
	  &= \depart{\bar{x}}(p).
\end{align*}
Since $t \leq \depart{\bar{x}}(p)$ and $\bar{x}$ is monotonically increasing:
\begin{align*}
	\bar{x}(t) \leq \bar{x}(\depart{\bar{x}}(p)) = p = \hat{x}(t).
\end{align*}
It remains to show the above inequality holds for
\begin{align*}
	t \in [ \arrive{\hat{x}}(p_f) , \depart{\bar{x}}(p_f) ].
\end{align*}
Note $\bar{x}(t) \leq p_f = \hat{x}(\arrive{\hat{x}}(p_f)) \leq \hat{x}(t)$. \qed

%%%%%%%%%%%%%%%%%%%%%%%%%%%%%%%%%%%%%%%%%%%%%%%%%%%%%%%%%%%%%%%
%%%%%%%%%%%%%%%%%%%%%%%%%%%%%%%%%%%%%%%%%%%%%%%%%%%%%%%%%%%%%%%

\begin{proposition} \label{proposition:boundedTimesFinal}
Suppose there exist a ``fast" trajectory $\hat{x} \in \mathscr{C}(\mathbf{\hat{z}}_0,t_0,\hat{t},\emptyset)$ and a ``slow" trajectory $\bar{x} \in \mathscr{C}(\mathbf{\bar{z}}_0,t_0,\bar{t},\emptyset)$, such that for all $p \in (p_0,p_f)$:
\begin{align}
	\velpath{\bar{x}}(p) 		&\leq 	\velpath{\hat{x}}(p) 	\label{equation:pp2}\\
	\wait{\bar{x}}(p)			&\geq 	\wait{\hat{x}}(p) 	\label{equation:deldel2} \\
	\arrive{\bar{x}}(p_f) 		&\leq  	\arrive{\hat{x}}(p_f). \label{equation:dd2} 
\end{align}
Then, for all $t \in  [\arrive{\bar{x}}(p_0), \depart{\hat{x}}(p_f)]$:
%Then, for all $t \in  [\arrive{\bar{x}}(p_0), \depart{\hat{x}}(p_f)]$:   ---- this is how it is used in Section D
\begin{align}
	\bar{x}(t) \geq \hat{x}(t). \label{equation:idk}
\end{align}
\end{proposition}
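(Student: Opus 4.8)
The plan is to mirror the proof of Proposition~\ref{proposition:boundedTimes}, but run the time bookkeeping \emph{backward} from the common terminal position $p_f$ instead of forward from $p_0$. I would parametrize time by the ``fast'' trajectory $\hat{x}$: fix $t$ in the stated interval $[\arrive{\bar{x}}(p_0),\depart{\hat{x}}(p_f)]$ and set $p := \hat{x}(t)$; note $p \le p_f$ because $t \le \depart{\hat{x}}(p_f)$ and $\hat{x}$ is nondecreasing. If $p \le p_0$, then $t \ge \arrive{\bar{x}}(p_0)$ together with monotonicity of $\bar{x}$ gives $\bar{x}(t) \ge \bar{x}(\arrive{\bar{x}}(p_0)) = p_0 \ge p = \hat{x}(t)$, which is \eqref{equation:idk}. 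If instead $p \in (p_0,p_f]$, then since $\bar{x}$ is continuous, nondecreasing, and passes through both $p_0$ and $p_f$ it attains $p$, and it suffices to prove $\arrive{\bar{x}}(p) \le \arrive{\hat{x}}(p)$: combined with $t \ge \arrive{\hat{x}}(p)$ this gives $t \ge \arrive{\bar{x}}(p)$, hence $\bar{x}(t) \ge \bar{x}(\arrive{\bar{x}}(p)) = p = \hat{x}(t)$.

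The core computation is the backward decomposition of the arrival time at $p$, splitting the passage from $p$ to $p_f$ into the dwell at $p$, the moving portion, and the intermediate dwells:
\begin{align*}
\arrive{\hat{x}}(p) &= \arrive{\hat{x}}(p_f) - \int_{p}^{p_f}\frac{1}{\velpath{\hat{x}}(q)}\,dq - \sum_{\{q \in [p,p_f) \colon \wait{\hat{x}}(q) > 0\}}\wait{\hat{x}}(q), \\
\arrive{\bar{x}}(p) &= \arrive{\bar{x}}(p_f) - \int_{p}^{p_f}\frac{1}{\velpath{\bar{x}}(q)}\,dq - \sum_{\{q \in [p,p_f) \colon \wait{\bar{x}}(q) > 0\}}\wait{\bar{x}}(q).
\end{align*}
Since $[p,p_f) \subseteq (p_0,p_f)$, hypotheses \eqref{equation:pp2} and \eqref{equation:deldel2} hold on the entire range of integration and summation: $\velpath{\bar{x}} \le \velpath{\hat{x}}$ makes the $\bar{x}$-integral no smaller than the $\hat{x}$-integral, and $\wait{\bar{x}} \ge \wait{\hat{x}}$ makes the $\bar{x}$-dwell-sum no smaller than the $\hat{x}$-dwell-sum. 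Combining these with \eqref{equation:dd2}, \ie, $\arrive{\bar{x}}(p_f) \le \arrive{\hat{x}}(p_f)$, and subtracting the larger integral and dwell sum from the smaller base time yields $\arrive{\bar{x}}(p) \le \arrive{\hat{x}}(p)$, as required. (The endpoint $p = p_f$ is more direct still: $t \ge \arrive{\hat{x}}(p_f) \ge \arrive{\bar{x}}(p_f)$ already gives $\bar{x}(t) \ge p_f$.) This is exactly the dual of the inequality chain in the proof of Proposition~\ref{proposition:boundedTimes}, with $\arrive{\cdot}$ and $p_f$ playing the roles of $\depart{\cdot}$ and $p_0$ there.

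The only delicate point — as in Proposition~\ref{proposition:boundedTimes} — is the bookkeeping at stopping positions. One must use $\arrive{\cdot}$ (not $\depart{\cdot}$) at both $p$ and $p_f$, so that any dwell at $p$ is counted identically on both sides and the dwell at $p_f$ is excluded on both sides; and one must adopt the paper's convention that the moving-time integral runs only over positions of positive velocity, so that the (countably many, isolated) stopping positions contribute only through the $\wait{\cdot}$ sums. With that fixed, the comparison of the two decompositions is purely term-by-term, requiring no further estimates; in particular, the hypotheses are automatically consistent because they force $\bar{x}$ to reach $p_0$ no later than $\hat{x}$ leaves it, so the two cases above genuinely exhaust the interval. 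I do not foresee any substantive obstacle beyond this notational care.
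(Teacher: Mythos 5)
Your argument is correct and follows essentially the same route as the paper's proof: both parametrize by the ``fast'' trajectory, split on whether $p=\hat{x}(t)$ lies in $(p_0,p_f]$ or at/below $p_0$, run the backward decomposition of $\arrive{\cdot}(p)$ from $p_f$ with separate moving-time and dwell terms, and compare term-by-term using \eqref{equation:pp2}--\eqref{equation:dd2} before invoking monotonicity of $\bar{x}$. (Your integral limits $\int_p^{p_f}$ are actually the correct ones; the paper's displayed $\int_{p_0}^{p_f}$ there appears to be a typo.)
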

\begin{figure}
	\centerline{\includegraphics[clip=true, trim=2in 6.6in 4.2in 3.3in, width=0.5\textwidth]{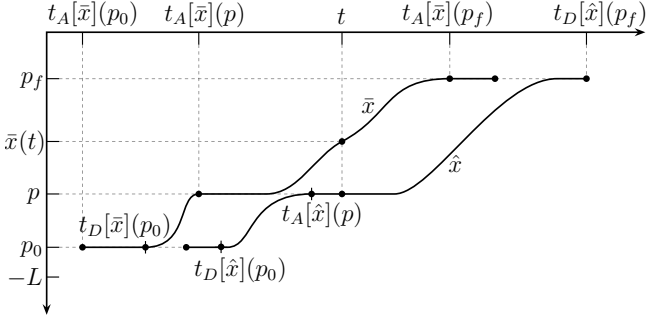}}
	\caption{Notation for proof of Proposition~\ref{proposition:boundedTimesFinal} on a position-time plot. Note that trajectories $\hat{x}$ and $\bar{x}$ are both monotonically increasing functions. Note that Inequality~\eqref{equation:dd2} is illustrated clearly: the arrival time of $\bar{x}$ at position $p_f$ is earlier than the arrival time of $\hat{x}$ at the same position. We show that $\bar{x}(t) \geq \hat{x}(t)$ for all time $t$ from the arrival of $\bar{x}$ at position $p_0$ until the departure of $\hat{x}$ from position $p_f$, \ie, for all $t \in [\arrive{\hat{x}}(p_0),\depart{\bar{x}}(p_f)]$. This proposition complements Proposition~\ref{proposition:boundedTimes}, in that it applies its argument in reverse.} \label{figure:prop14}
\end{figure}

\begin{proof}
Consider Figure~\ref{figure:prop14}.
Let $t \in  (\depart{\hat{x}}(p_0), \depart{\hat{x}}(p_f)]$.
Find $p \in (p_0,p_f]$ such that $\hat{x}(t) = p$.
By Inequalities~\eqref{equation:pp2},~\eqref{equation:deldel2}, and~\eqref{equation:dd2}:
\begin{align*}
	t &\geq \arrive{\hat{x}}(p) \\
	  &= \arrive{\hat{x}}(p_f) - \int_{p_0}^{p_f} \frac{1}{\velpath{\hat{x}}(q)} dq - \sum_{ \{ q \in [p, p_f) \colon \wait{\hat{x}}(q) > 0 \} } \wait{\hat{x}} (q) \\
	  &\geq
	  \arrive{\bar{x}}(p_f) - \int_{p_0}^{p_f} \frac{1}{\velpath{\bar{x}}(q)} dq - \sum_{ \{ q \in [p , p_f) \colon \wait{\bar{x}}(q) > 0 \} } \wait{\bar{x}} (q) \\ 
	  &= \arrive{\bar{x}}(p).
\end{align*}
Since $t \geq \arrive{\bar{x}}(p)$ and $\bar{x}$ is monotonically increasing:
\begin{align*}
	\bar{x}(t) \geq \bar{x}(\arrive{\bar{x}}(p)) = p = \hat{x}(t).
\end{align*}
It remains to show the above inequality holds for
\begin{align*}
	t \in [\arrive{\bar{x}}(p_0),\depart{\hat{x}}(p_0)].
\end{align*}
Note
\begin{align*}
	\bar{x}(t) \geq p_0 = \hat{x}(\depart{\hat{x}}(p_0)) \geq \hat{x}(t),
\end{align*}
where the last inequality holds since $\hat{x}$ is monotonically increasing.
\end{proof}

\newpage
\subsection{Proof of Lemma~\ref{lemma:well-defined}} \label{subsection:D}
In this section, we first show that {\it if} the set $\mathscr{E}_i'$ is non-empty, then the optimization problem in the ${\tt MotionSynthesize}$ procedure (see Equation~\eqref{eq:procedure}) attains its minimum. We use Filippov's theorem in this step. See Theorem~\ref{theorem:filippov}.

Next, we show that the minimizing trajectory, \ie, the updated trajectory $x_i'$, is contained in $\mathscr{E}_i'$.
More specifically, we show that the following inequalities hold for any $\tilde{x} \in \mathscr{E}_i'$ and all time $t$:
\begin{align}
	\tilde{x}(t) \leq x_i'(t) \leq x_i(t), \label{equation:d1}
\end{align}
where $x_i$ is the previous trajectory of vehicle $i$.
% 
%See Figure~\ref{??}. 
%
From the above inequalites, $x_i' \in \mathscr{E}_i'$ directly follows.

The first inequality holds by the following argument.
For all $x \in \mathscr{C}_i'$ and for all time $t$:
\begin{align}
	x(t) \leq x_i'(t), \label{equation:d2}
\end{align}
and then noting $\tilde{x} \in \mathscr{E}_i' \subseteq \mathscr{C}_i'$.
The second inequality holds from the following.
For all $x \in \mathscr{C}_i'$ and for all time $t$:
\begin{align}
	x(t) \leq x_i(t), \label{equation:d3}
\end{align}
and then noting $x_i' \in \mathscr{C}_i'$.
We show Inequality~\eqref{equation:d2} holds with a proof by contradiction.
Then, we show Inequality~\eqref{equation:d3} holds by induction on $\nu_i$.

To show the existence of a minimum in the ${\tt MotionSynthesize}$ procedure, we employ the following theorem from optimal control theory.
\begin{theorem} \textbf{Filippov~\cite{agrachev2001}} \label{theorem:filippov}
Let $\mathcal{U}$ be the set of all measurable functions from set $[t_0,t_f]$ to $U$, where $U \subseteq \reals^m$ is a compact set.
Let there exist a compact $K \subseteq \reals^n$ such that for all $\mathbf{q} \not\in K$ and for all $u \in U$:
\begin{align}
	f(\mathbf{q},u) = \mathbf{0}. \label{equation:filippov0}
\end{align}
For each $\mathbf{q} \in K$, let
\begin{align*}
	f(\mathbf{q},U) := \{ f(\mathbf{q},u) \colon u \in U\} \subseteq \reals^n
\end{align*}
be convex.
Then, for each initial state $\mathbf{q_0} \in K$ and any terminal time $t_f > 0$, the attainable set
\begin{align*}
	\mathcal{A}
	% [\mathbf{q_0},t] 
	:= \{ q_u(t_f) \colon u \in \mathcal{U},\,\, q_u(t_0) = \mathbf{q}_0 \} \subseteq \reals^n
\end{align*}
is compact, where $q_u \colon [t_0,t_f] \to \reals^n$ satisfies $\dot{q}_u(\tau) = f(q_u(\tau),u(\tau))$ for all $\tau \in [t_0,t_f]$ and $q_u(t_0) = \mathbf{q_0}$.

Define $\mathcal{Q} := \{ q_u \colon u \in \mathcal{U}, \,\, q_u(0) = \mathbf{q_0}\}$ as the set of all state histories $q_u$ from initial state $\mathbf{q_0}$ generated by a control $u \in \mathcal{U}$.
Then, for any sequence $\{ q^n \}_{n=1}^{\infty}$ of state histories where $q^n \in \mathcal{Q}$, there exists a subsequence $\{q^{n_k} \}_{k=1}^\infty$ of state histories that converges uniformly to some $\tilde{q} \in \mathcal{Q}$. 
\end{theorem}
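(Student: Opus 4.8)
The plan is to prove Theorem~\ref{theorem:filippov} in three stages. First I would extract a uniformly convergent subsequence of trajectories using the Arzel\`a--Ascoli theorem; then I would show that the uniform limit is again an admissible trajectory, which is the one genuinely delicate point and is where the convexity of $f(\mathbf{q},U)$ is indispensable; and finally I would deduce compactness of the attainable set $\mathcal{A}$ as the continuous image of the (now sequentially compact) trajectory set $\mathcal{Q}$.

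\emph{Stage 1: precompactness of $\mathcal{Q}$.} Since $f$ is continuous on the compact set $K\times U$ and vanishes identically off $K$ by~\eqref{equation:filippov0}, there is a constant $M<\infty$ with $|f(\mathbf{q},u)|\le M$ for all $\mathbf{q}\in\reals^n$ and $u\in U$. Consequently every $q_u\in\mathcal{Q}$ is $M$-Lipschitz, so the family $\mathcal{Q}$ is equicontinuous, and it is uniformly bounded because $|q_u(t)|\le|\mathbf{q}_0|+M(t_f-t_0)$. By Arzel\`a--Ascoli, any sequence $\{q^n\}\subseteq\mathcal{Q}$ has a subsequence $\{q^{n_k}\}$ converging uniformly on $[t_0,t_f]$ to some $M$-Lipschitz function $\tilde q$.

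\emph{Stage 2: the limit is a trajectory, $\tilde q\in\mathcal{Q}$.} Along the subsequence, the derivatives $\dot q^{n_k}(t)=f\bigl(q^{n_k}(t),u^{n_k}(t)\bigr)$ are bounded by $M$ in $L^2([t_0,t_f];\reals^n)$, so after passing to a further subsequence $\dot q^{n_k}\rightharpoonup g$ weakly in $L^2$ with $\|g\|_\infty\le M$. Testing against smooth compactly supported functions and using the uniform convergence $q^{n_k}\to\tilde q$ shows $\dot{\tilde q}=g$ almost everywhere. It remains to check that $g(t)\in f(\tilde q(t),U)$ for almost every $t$: by Mazur's lemma, suitable finite convex combinations of the $\dot q^{n_k}$ converge to $g$ strongly in $L^2$, hence pointwise a.e.\ along a subsequence; at such a point $t$, $g(t)$ is a limit of convex combinations of the vectors $f\bigl(q^{n_k}(t),u^{n_k}(t)\bigr)$, each of which lies within a vanishing distance of the compact convex set $f(\tilde q(t),U)$ by uniform continuity of $f$ and $q^{n_k}(t)\to\tilde q(t)$, so the convex hulls do too and $g(t)\in f(\tilde q(t),U)$. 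A measurable-selection theorem (Filippov's implicit-function lemma) then provides a measurable $\tilde u\colon[t_0,t_f]\to U$ with $g(t)=f\bigl(\tilde q(t),\tilde u(t)\bigr)$ a.e., so $\tilde q$ is generated by $\tilde u$ and belongs to $\mathcal{Q}$; this is exactly the last assertion of the theorem.

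\emph{Stage 3 and the main obstacle.} Stages 1--2 show $\mathcal{Q}$ is sequentially compact in $C([t_0,t_f];\reals^n)$; since the evaluation map $q\mapsto q(t_f)$ is continuous for the uniform topology, its image $\mathcal{A}$ is compact in $\reals^n$. The main obstacle is Stage~2: the convexity of $f(\mathbf{q},U)$ cannot be dropped --- without it ``chattering'' controls make $\mathcal{Q}$ non-closed --- and a rigorous argument must combine three non-elementary ingredients, namely weak sequential compactness of an $L^\infty$-bounded sequence of velocity fields, Mazur's lemma to upgrade weak convergence to pointwise-a.e.\ convergence of convex combinations, and a measurable-selection result to recover the limiting control from the limiting velocity.
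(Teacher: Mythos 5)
The paper does not prove this theorem: it is cited verbatim from Agrachev and Sachkov~\cite{agrachev2001} and used as a black box in the proof of Lemma~\ref{lemma:well-defined}, so there is no in-paper proof to compare against. Your proof outline is, however, essentially the standard argument for Filippov's theorem as given in that reference, and it is correct as far as it goes. You identify the right decomposition: Arzel\`a--Ascoli gives precompactness of $\mathcal{Q}$ in $C([t_0,t_f];\reals^n)$; weak $L^2$ compactness plus Mazur's lemma plus closedness and convexity of the velocity sets show the uniform limit satisfies $\dot{\tilde q}(t)\in f(\tilde q(t),U)$ a.e.; Filippov's measurable-selection lemma then produces a generating control $\tilde u$; and compactness of $\mathcal{A}$ follows because the evaluation map $q\mapsto q(t_f)$ is continuous from the (sequentially compact, hence compact, since $C$ is metrizable) set $\mathcal{Q}$. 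You also correctly flag that convexity of $f(\mathbf{q},U)$ is the load-bearing hypothesis and explain why (chattering) it cannot be dropped.

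Two small remarks. First, your argument uses continuity (and hence uniform continuity on compacts) of $f$ when you claim that $f\bigl(q^{n_k}(t),u^{n_k}(t)\bigr)$ lies within vanishing distance of $f(\tilde q(t),U)$; the statement as transcribed in the paper never explicitly assumes $f$ is continuous, so you are silently filling in a hypothesis that is standard in Agrachev--Sachkov but omitted here. Second, when invoking Mazur's lemma one must take convex combinations of tails $\{\dot q^{n_k}\}_{k\ge N}$ with $N\to\infty$ so that every term in the combination corresponds to a $q^{n_k}$ uniformly close to $\tilde q$; your phrase ``suitable finite convex combinations'' gestures at this but it is worth making explicit, since using early terms of the sequence would not give the required proximity to $f(\tilde q(t),U)$. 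Neither of these is a gap so much as an omitted standard detail.
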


Consider the following system with state vector $q := [q_1, q_2, q_3]^T$:
\begin{align*}
\dot{q}(t) = 
	\begin{cases}
		[q_2(t),q_3(t),p(u(t))]^T	&\colon q \in K \\
		\mathbf{0}			&\colon q \not\in K,
	\end{cases}
\end{align*}
where $K := [-L(t_f'-t_0'),0] \times [-L,0] \times [0, v_m]$ and $p \colon U \to U$ is defined as
\begin{align*}
p(u) = 
	\begin{cases}
		\max \{ u, 0 \} 	&\colon q_3 = 0 \\
		\min \{ u, 0 \}	&\colon q_3 = v_m \\
		u			&\colon q_3 \in (0,v_m).
	\end{cases}
\end{align*}
First, note that $U = [-a_m,a_m]$ is compact.
%Denote $\mathcal{U}$ as the set of all measurable functions from $[t_0',t_f']$ to $U$.
Next, note that Equation~\eqref{equation:filippov0} is satisfied for all $q \not\in K$ and for all $u \in U$.
Lastly, note that $f(q,U)$ is convex for all $q \in K$.
Consider initial state (column vector) $\mathbf{q_0} := [0,z_i(t_0')]$.
By Theorem~\ref{theorem:filippov}, $\mathcal{A}$ is compact.
%Also, $f(q,U) = \{(q_2,q_3,u) \colon u \in [-a_m,0]\}$ is convex for all $q$ such that $q_3 = v_m$.
%Also, $f(q,U) = \{(q_2,q_3,u) \colon u \in [0,a_m]\}$ is convex for all $q$ such that $q_3 = 0$.
%Also, $f(q,U) = \{(q_2,q_3,u) \colon u \in [-a_m,a_m]\}$ is convex for all $q$ such that $q_3 \in (0,v_m)$.

Define the following set of state histories:
\begin{align*}
	%\mathcal{Q}[\mathbf{q_0}] &:= \{ q \colon [t_0',t_f'] \to \reals^3 | \exists u \in \mathcal{U} , \forall t, \dot{q}(t) = f(q(t),u(t)) ; \\
	%\quad& q(t_0') = \mathbf{q_0} \} \\
	\mathcal{Q}[x_{i-1}'] &:= \{ q \in \mathcal{Q} \,\, | \,\, \forall t,  q_2(t) \leq x_{i-1}'(t) - l ; \\
	\quad& q_2(t_f') = 0 ; q_3(t_f')=v_m \}.
	%\mathcal{Q} &:= \mathcal{Q}[\mathbf{q_0}] \cap \mathcal{Q}[x_{i-1}'] \\
	%\mathcal{Q}_j &:= \{ q_j \colon q \in \mathcal{Q} \} \\
\end{align*}
Next, define the following attainable set:
\begin{align*}
	\mathcal{A}[x_{i-1}'] &:= \{ q(t_f') \colon q \in \mathcal{Q}[x_{i-1}'] \} \subseteq \reals^3.
\end{align*}
Now, $\mathcal{A}[x_{i-1}']$ is also compact. 
First, note that since $\mathcal{A}[x_{i-1}'] \subseteq \mathcal{A}$, then $\mathcal{A}[x_{i-1}']$ is bounded.
We, now show $\mathcal{A}[x_{i-1}']$ is also closed.
%By construction, $\mathcal{Q}_2 = \mathscr{C}_i'$.

Let $\{ a^n \}$ by a convergent sequence with elements $a^n \in \mathcal{A}[x_{i-1}'] \subseteq \reals^3$.
We show that the limit of sequence $\{ a^n \}$ is also contained in $\mathcal{A}[x_{i-1}']$.
By definition of $\mathcal{A}[x_{i-1}']$, we can embed convergent sequence $\{ a^n \}$ into some sequence $\{ q^n \}$ of state histories with elements $q^n \in \mathcal{Q}[x_{i-1}']$.
Next, we invoke Theorem~\ref{theorem:filippov} to find a convergent subsequence $q^{n_k}$, which converges uniformly to some $\tilde{q} \in \mathcal{Q}$.
%Suppose a sequence $\{ q^n\}_{n=1}^\infty $ of state histories with elements $q^n \in \mathcal{Q}[x_{i-1}']$ converges pointwise to some state history $\tilde{q}$. Then, $\tilde{q} \in \mathcal{Q}[x_{i-1}']$.
%To see this, first note that $\tilde{q} \in \mathcal{Q}$ by Theorem~\ref{theorem:filippov}. 
Next, note for all $t$ and for all $n$:
\begin{align*}
 	q^{n_k}_2(t) &\leq x_{i-1}'(t) - l \\
	\lim_{k\to\infty} q_2^{n_k}(t) &\leq  \lim_{k\to\infty} x_{i-1}'(t) - l \\
	\tilde{q}_2(t) &\leq x_{i-1}'(t) - l.
\end{align*}
Thus, $\tilde{q} \in \mathcal{Q}[x_{i-1}']$.
Moreover, by our embedding of $\{ a^n \}$ into $\{ q^n \}$, we have $\lim_{n\to\infty} a^n = \tilde{q}(t_f') \in \mathcal{A}[x_{i-1}']$.
Hence, $\mathcal{A}[x_{i-1}']$ is also closed.
%{\color{blue}
%Let $\{ z_f^n \}_{n=1}^\infty$ be a converging sequence with elements $z^f_n \in \reals$ such that for each $n \in \naturals$ there exists $q^n \in \mathcal{Q}$ satisfying $z_f^n = q_1^n(t_f')$.
%Since $q^n \in \mathcal{Q} \subseteq \mathcal{Q}_{q_0}$ for all $n \in \naturals$, there exists a uniformly converging subsequence of $\{ q^n\}_{n=1}^\infty$, which we denote $ \{ q^{n_k} \}_{k=1}^\infty$.
%Denote $\tilde{q}$ as the limit of the converging sequence $\{ q^{n_k} \}_{k=1}^\infty$, \ie, $\tilde{q}(t) = \lim_{k\to\infty} q^{n_k}(t)$ for all $t \in [t_0',t_f']$.
%Since $q^{n_k} \in \mathcal{Q}$ for each $k \in \naturals$, $\tilde{q} \in\mathcal{Q}$.
%Since $\lim_{n\to\infty} z_f^n = \tilde{q}_1(t_f')$ and $\tilde{q} \in \mathcal{Q}$, the set $\mathcal{Q}(t_f') = \{ q_1(t_f') \colon q \in \mathcal{Q}\}$ is closed.}
Hence, we have shown that $\mathcal{A}[x_{i-1}']$ is compact.

Now, consider optimization problem
\begin{align} \label{opt:opt1}
	\min_{ \mathbf{a} \in \mathcal{A}[x_{i-1}'] } [-1,0,0]^T \mathbf{a}.
\end{align}
Optimization problem~\eqref{opt:opt1} is nontrivial, since $\mathscr{C}_i' \not= \emptyset$ implies $\mathcal{A}[x_{i-1}'] \not= \emptyset$.
Also, note that the cost function is linear and $\mathcal{A}[x_{i-1}']$ is compact; hence, a global minimum exists.
Consider optimization problem
\begin{align} \label{opt:opt2}
	\min_{q \in \mathcal{Q}[x_{i-1}'] } -q_1(t_f')
\end{align}
and optimization problem
\begin{align} \label{opt:opt3}
	\min_{x \in \mathscr{C}_i'} \int_{t_0'}^{t_f'} |x(t)| dt.
\end{align}
By construction, optimization problems~\eqref{opt:opt1} and~\eqref{opt:opt2} are equivalent.
Next, notice optimization problems~\eqref{opt:opt2} and~\eqref{opt:opt3} are equivalent.
%This is because $\mathcal{Q}_2 = \mathscr{C}_i'$, and $-q_1 = \int_{t_0'}^{t_f'} |x(t)| dt$.
For any $x \in \mathscr{C}_i'$, define $q_1 \colon [t_0',t_f'] \to \reals$ as $q_1(t) := \int_{t_0'}^{t} x(\tau) d\tau$.
Then, $[q_1,x,\dot{x}]^T \in \mathcal{Q}$ and $-q_1(t_f') = \int_{t_0'}^{t_f'} |x(t)| dt$.
Conversely, note that for any $q \in \mathcal{Q}$, we have position history $q_2 \in \mathscr{C}_i'$.
Hence, the solution to optimization problem~\eqref{opt:opt3} exists.

Thus far, we have shown $x_i'$, as defined by the ${\tt MotionSynthesize}$ procedure, exists.
It remains to show Inequalities~\eqref{equation:d2} and~\eqref{equation:d3} hold.
First we show Inequality~\eqref{equation:d2} with a proof by contradiction.
We use the following lemma:
\begin{lemma} \label{lemma:minimumTrajectory}
Let $x \in \mathscr{C}(\mathbf{z}_0,t_0,t_f,x_s)$ and $x' \in \mathscr{C}(\mathbf{z}_0,t_0,t_f',x_s)$ satisfy
\begin{align*}
	t_f' \geq t_f.
\end{align*}
Suppose there exists time $\bar{t}$ satisfying
\begin{align*}
	x(\bar{t}) < x'(\bar{t}).
\end{align*}
Then, there exists trajectory $y \in \mathscr{C}(\mathbf{z}_0,t_0,t_f,x_s)$ satisfying for all $t \in [0,t_f]$:
\begin{align}
	x(t) 		&\leq y(t) \label{equation:higher} \\
	x(\bar{t}) 	&< y(\bar{t}). \label{equation:highest}
\end{align}
\end{lemma}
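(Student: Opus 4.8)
The plan is to prove Lemma~\ref{lemma:minimumTrajectory} constructively — I will build the trajectory $y$ explicitly rather than argue by contradiction. Normalize $t_0=0$. Two elementary observations drive the argument. First, every trajectory in these feasible sets is non-decreasing (since $\dot x\ge 0$) and takes values in $[-L,0]$, so ``safety with $x_s$'' is just the one-sided bound $x(t)\le x_s(t)-l$; consequently the pointwise maximum $x\vee x'$ still obeys this bound, because at each time it equals either $x(t)$ or $x'(t)$. Second, since $x$ and $x'$ emanate from the common state $\mathbf z_0$ and $x(\bar t)<x'(\bar t)$ while $x(t_f)=0\ge x'(t_f)$, the open set $\{t\in[0,t_f]\colon x(t)<x'(t)\}$ is nonempty; let $(t_1,t_2)$ denote the connected component containing $\bar t$, so that $x$ and $x'$ agree in position at $t_1$ and at $t_2$ (with the obvious reading if $t_1=0$ or $t_2=t_f$) and $x<x'$ strictly on $(t_1,t_2)$. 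Morally, the trajectory we want is $x\vee x'$: it dominates $x$, is strictly above $x$ at $\bar t$, and is safe; but it has convex corners — instantaneous upward jumps in velocity — so it is not dynamically feasible, and the real content is to manufacture a feasible trajectory retaining those three properties.

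I would construct $y$ by modifying $x$ only over an interval containing $\bar t$: keep $y=x$ on $[0,t_1]$ and $[t_2,t_f]$, and over $[t_1,t_2]$ route $y$ above $x$ but inside the band $[x,\,x\vee x']$ (hence safe) using extremal (bang-bang) arcs — apply maximum acceleration to overtake $x$, then coast and decelerate so as to rejoin the state of $x$ exactly at $t_2$, which keeps the glued trajectory dynamically feasible and preserves its terminal data $(0,v_m)$ at $t_f$. Getting strictly ahead of $x$ at $\bar t$ is possible because $x<x'$ on the whole open interval $(t_1,t_2)$, i.e.\ $x$ is not the maximal arc compatible with its own endpoint states there; in the degenerate case where it is (which forces $x'$ to have entered position $x(t_1)$ with strictly larger speed than $x$), the modification is instead extended backward into $[0,t_1]$, so that $y$, like $x'$, builds up the extra speed earlier. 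The pointwise bound $x\le y$ and the safety of $y$ are then read off from the monotonicity/interpolation machinery already developed — Proposition~\ref{proposition:boundedPaths} to produce arcs whose velocity path and wait function are squeezed between two prescribed ones while hitting a prescribed terminal time, and Propositions~\ref{proposition:boundedTimes} and~\ref{proposition:boundedTimesFinal} to convert velocity/wait orderings into pointwise position orderings.

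The main obstacle is the simultaneous requirement of (i) reconciling the generally mismatched velocities of $x$ and $x'$ at the splice points without an instantaneous velocity change, and (ii) still reaching $(0,v_m)$ at the \emph{exact} time $t_f$ — a naive concatenation of pieces of $x$ and $x'$ fails on both counts. The fix absorbs the velocity discrepancy into maximal-acceleration/maximal-deceleration segments whose durations are tuned by a one-parameter (switching-time) continuity argument, with Proposition~\ref{proposition:boundedPaths} guaranteeing that the needed range of terminal times is attained. I anticipate the write-up will split into a handful of cases — according to the sign of the velocity gap $\dot x'(t_1)-\dot x(t_1)$ and whether the repair fits inside $(t_1,t_2)$ or must spill into $[0,t_1]$ — each a routine but somewhat tedious verification once the band $[x,\,x\vee x']$ and the propositions of Section~\ref{section:C} are in place.
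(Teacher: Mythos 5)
Your outline tracks the paper's strategy closely: both identify the component $(t_1,t_2)$ (the paper's $\check{t}$ and $\hat{t}$) of $\{t\colon x(t)<x'(t)\}$ containing $\bar t$, both glue $x$ outside with a modified segment inside, both resolve velocity mismatches at the splice points via extremal arcs plus a one-parameter continuity argument, and both extend the modification backward in a degenerate case. But there is a concrete gap in the construction you sketch. The arc ``apply maximum acceleration to overtake $x$, then coast and decelerate so as to rejoin the state of $x$ exactly at $t_2$'' is \emph{not} guaranteed to stay inside the band $[x,\,x\vee x']$: unless $x'$ itself leaves $z(t_1)$ with maximal acceleration, a maximum-acceleration arc starting from $z(t_1)$ will cross above $x'$ on an initial subinterval, and once above $x\vee x'$ the ``hence safe'' conclusion no longer follows. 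The paper avoids this by using $x'$ itself as the bulk of the middle segment (its Case 1 is literally the splice $x$/$x'$/$x$, feasible because the splice states match), and it confines the extremal-arc repair (Lemma~\ref{lemma:accelerationTrajectoryPatch}) to a short interval around each mismatch point, constructing the patch as ``follow the slow trajectory, then accelerate until meeting the fast one's velocity path'' — which is a different three-phase arc from your accel--coast--decel, and is exactly what makes the squeeze via Propositions~\ref{proposition:boundedTimes}--\ref{proposition:boundedTimesFinal} go through.

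A second, smaller issue: the step ``$x<x'$ on $(t_1,t_2)$, i.e.\ $x$ is not the maximal arc compatible with its own endpoint states'' is not a valid implication on its own, because $x'$ generally has different velocities at $t_1$ and $t_2$ than $x$ does, so $x'$ is not a competitor arc for the same boundary-value problem. You do recover by recognizing the degenerate case and proposing to push the modification backward, but you stop short of specifying how far: the paper pins this down precisely as $t_3:=\sup\{t\le\check{t}\colon \exists\tau,\ z(t)=z'(\tau)\}$ (the last time $x$'s full state lies on $x'$'s state history), and similarly forward to a symmetric $t_4$, which is what makes the patch lemma applicable with a well-defined initial state for the repair arc. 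With those two fixes — replacing the bare accel--coast--decel by the paper's ``replicate-then-accelerate'' patch, and choosing the splice points in state space rather than position space — your outline matches the paper's proof.
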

Let $x \in \mathscr{C}_i'$.
Suppose there exists $t_a \in [t_0',t_f']$ such that $x(t_a) > x_i'(t_a)$.
By Lemma~\ref{lemma:minimumTrajectory}, there exists trajectory $y \in \mathscr{C}_i'$ satisfying for all time $t$:
\begin{align*}
	y(t) 		&\geq x_i'(t) \\
	y(t_a) 	&> x_i'(t_a).
\end{align*}
By continuity of $y$:
\begin{align*}
	\int_{t_0'}^{t_f'} |y(t)| dt < \int_{t_0'}^{t_f'} |x_i'(t)| dt,
\end{align*}
contradicting construction of $x_i'$.
Thus, Inequality~\eqref{equation:d2} holds.

We conclude by showing Inequality~\eqref{equation:d3} holds, by induction on $\nu_i$.
%Now we show $|x_j'(t)| \geq |\tilde{x}_j(t)|$ for all $t \in [t_0',t_f']$.
Let $x \in \mathscr{C}_i'$.
Suppose $\nu_i = 1$.
Then, by Lemma~\ref{lemma:committed}, $x_{i-1}' = x_{i-1}|_{t \geq t_0'}$, from which $x_i$ is safe with the same trajectory as $x$.
Thus, we are almost ready to apply Proposition~\ref{lemma:minimumTrajectory}.
Suppose there exists $t_a$ satisfying $x_i(t_a) < x(t_a)$.
Then, by Lemma~\ref{lemma:minimumTrajectory}, there exists trajectory $y \in \mathscr{C}(z_i(t_0'),t_0',t_f,x_{i-1})$ satisfying for all $t$:
\begin{align*}
	|y(t)| 		&\leq |x_i(t)| \\
	|y(t_a)| 	&< |x_i(t_a)|.
\end{align*}
By continuity of $y$:
\begin{align*}
	\int_{t_0'}^{t_f} |y(t)| dt < \int_{t_0'}^{t_f} |x_i(t)| dt,
\end{align*}
contradicting construction of $x_i$.
Thus, Inequality~\eqref{equation:d3} holds for $\nu_i = 1$.

Now, assume $\nu_i > 1$.
By induction, we assume for any $x \in \mathscr{C}_{i-1}'$ and for all $t$:
\begin{align*}
	x(t) \leq x_{i-1}(t).
\end{align*}
Recall trajectory $x_{i-1}'$ is the optimal trajectory for vehicle $i-1$ given by the ${\tt MotionSynthesize}$ procedure at time $t_0$.
Thus, since $x_{i-1}' \in \mathscr{C}_{i-1}'$, we have
\begin{align}
	x_{i-1}'(t) \leq x_{i-1}(t). \label{equation:inductionD}
\end{align}
Let $x \in \mathscr{C}_i'$.
We show for all $t$:
\begin{align*}
	x(t) \leq x_i(t).
\end{align*}
Since trajectory $x$ is contained in $\mathscr{C}_i'$, trajectories $x$ and $x_{i-1}'$ are safe.
Furthermore, $x$ must also be safe with $x_{i-1}$.
Note that for all time $t$:
\begin{align*}
	x(t) 	&\leq x_{i-1}'(t) - l \\
		&\leq x_{i-1}(t) - l,
\end{align*}
where the second inequality holds from the induction step in Inequality~\eqref{equation:inductionD}. 
Now, suppose there exists $t_a$ such that $x(t_a) > x_i(t_a)$. 
Recall that terminal time of $x \in \mathscr{C}_i'$ is equal to $t_f'$; and the terminal time of $x_i$ is equal to $t_f$.
Since $t_f' \geq t_f$, we can apply Lemma~\ref{lemma:minimumTrajectory} to create trajectory $y \in \mathscr{C}(z_i(t_0'),t_0',t_f,x_{i-1})$ satisfying for all $t$:
\begin{align*}
	|y(t)| 		&\leq |x_i(t)| \\
	|y(t_a)| 	&< |x_i(t_a)|.
\end{align*}
By continuity of $y$:
\begin{align*}
	\int_{t_0'}^{t_f} |y(t)| dt < \int_{t_0'}^{t_f} |x_i(t)| dt,
\end{align*}
contradicting the optimality of $x_i$.
Hence, Inequality~\eqref{equation:d3} is established.

%%%%%%%%%%%%%%%%%%%%%%%%%%%%%%%%%%%%%%%%%%%%%%%%%%%%%%%
%%%%%%%%%%%%%%%%%%%%%%%%%%%%%%%%%%%%%%%%%%%%%%%%%%%%%%%
%%%%%%%%%%%%%%%%%%%%%%%%%%%%%%%%%%%%%%%%%%%%%%%%%%%%%%%
%%%%%%%%%%%%%%%%%%%%%%%%%%%%%%%%%%%%%%%%%%%%%%%%%%%%%%%

\subsection*{Proofs used in Section~\ref{subsection:D}}
{\it Proof of Lemma~\ref{lemma:minimumTrajectory}:}
%
%Since $\tilde{t}_0 < \bar{t}$ and $x'(\tilde{t}_0) = x(\tilde{t}_0)$, define
%Since $\tilde{t}_f' > \bar{t}$, we define
Define
\begin{align*}
	\check{t} 		:= \sup \{ t \leq \bar{t} \colon x'(t) = x(t) \}  \\
	\hat{t} 	:= \inf \{ t \geq \bar{t} \colon x'(t) = x(t)\}.
\end{align*}
From these definitions, we have
\begin{align}
	x'(\check{t}) 	&= x(\check{t})	\label{equation:16-1}	\\
	x(\hat{t}) 	&= x'(\hat{t}). \label{equation:16-2}
\end{align}
%Also, define $\xi_2 = |x'(t_2)| = |x(t_2)|$.

%
{\it Case 1.}
First, suppose
\begin{align}
	\dot{x}'(\check{t}) 	&= \dot{x}(\check{t})	\label{equation:16-3}	\\
	\dot{x}(\hat{t}) 	&= \dot{x}'(\hat{t}). \label{equation:16-4}
\end{align}
Consider Figure~\ref{figure:ycase1}. Define trajectory $y_1$:
\begin{align} \label{equation:y1}
y_1(t) = 
	\begin{cases}
		x(t) 	&\colon 	t \in [t_0,\check{t})\\
		x'(t)	&\colon 	t \in [\check{t},\hat{t}) \\
		x(t)	&\colon	t \in [\hat{t},t_f].
	\end{cases}
\end{align}
\begin{figure}
	\centerline{\includegraphics[clip=true, trim=2in 6.6in 4.2in 3.3in, width=0.5\textwidth]{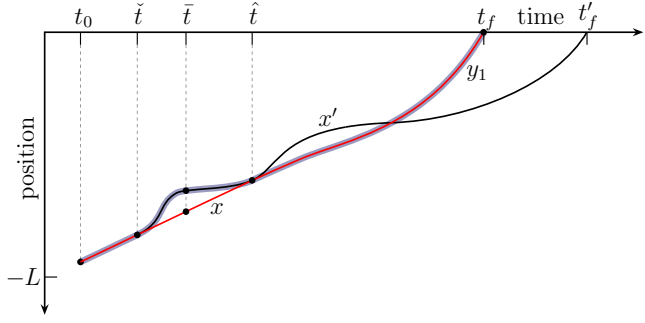}}
	\caption{Notation for Case 1 in the proof of Lemma~\ref{lemma:minimumTrajectory} used in Section D. Trajectory $y_1$ as defined in Equation~\eqref{equation:y1} is depicted in shaded blue. Trajectory $x$ is shown in red, and trajectory $x'$ in black. The assumptions in this scenario, given in Equations~\eqref{equation:16-3} and~\eqref{equation:16-4}, are depicted clearly. The result of Lemma~\ref{lemma:minimumTrajectory} is shown, \ie, $y(t) \geq x(t)$ for all time $t$ and also $y(\bar{t}) > x(\bar{t})$.} \label{figure:ycase1}
\end{figure}
%
%Note that $y$ is safe with $x_{j-1}'$, since $x_{j-1}'$ is safe with both $x$ and $x_j'$.
%By construction, $y \in \mathscr{C}(\mathbf{x}_0,\tilde{t}_0,\tilde{t}_f,x_s)$.
%Also, $|y(t)| \leq |x(t)|$ for all $t \in [\tilde{t}_0,\tilde{t}_f]$ and $|y(\bar{t})| = |x'(\bar{t})| < |(x(\bar{t})|$, establishing the lemma.
%Therefore, $\int _{t_0}^{t_f'} |y(t)| dt < \int_{t_0}^{t_f'} |x_j'(t)| dt$, contradicting construction of $x_j'$.
Note that trajectory $y_1$ satisfies Inequalities~\eqref{equation:higher} and~\eqref{equation:highest}.

{\it Case 2.}
Now, suppose
\begin{align}
	\dot{x}'(\check{t}) 	&= \dot{x}(\check{t})	\label{equation:16-5}	\\
	\dot{x}(\hat{t}) 		&\not= \dot{x}'(\hat{t}). \label{equation:16-6}
\end{align}
%$\dot{x}(\hat{t}) \not= \dot{x}'(\hat{t})$. This implies $\dot{x}'(\hat{t}) < \dot{x}(\hat{t})$, by construction of $\hat{t}$.
Recall for any $t$ we define $z'(t) := (x'(t),\dot{x}'(t))$ and $z(t) := (x(t),\dot{x}(t))$.
From Inequalities~\eqref{equation:16-2} and~\eqref{equation:16-6}, by Lemma~\ref{lemma:accelerationTrajectoryPatch}, there exists trajectory $\tilde{x} \in \mathscr{C}(z'(t_1),t_2,\emptyset,x_s)$ such that for all $t \in [t_1,t_2]$:
\begin{align*}
	\tilde{x}(t) 		&\geq \max \{x'(t), x(t) \}  \\
	\tilde{x}(t_2)	 &= x(t_2) \\
	\dot{\tilde{x}}(t_1) &= \dot{x}(t_2)
\end{align*}
where
\begin{align*}
	t_1 	&:= \sup \{t \leq \hat{t} \colon \exists\, \tau, \, z'(t) = z(\tau) \} \\
	t_2 	&:= \inf \{ t \geq \hat{t} \colon \exists\, \tau, \, z'(\tau) = z(t) \}.
\end{align*}
%Note that $|z(t_\alpha)| \leq \xi_2$ implies $t_\alpha \geq t_2$.
%
Note that $t_1 \geq \check{t}$.
Now, we construct trajectory $y_2$:
\begin{align} \label{equation:y2}
y_2(t) = 
	\begin{cases}
		x(t)	 	&\colon 	t \in [t_0,\check{t})\\
		x'(t)		&\colon 	t \in [\check{t},t_1) \\
		\tilde{x}(t)	&\colon	t \in [t_1,t_2) \\
		x(t)		&\colon	t \in [t_2,t_f].
	\end{cases}
\end{align}
\begin{figure}
	\centerline{\includegraphics[clip=true, trim=2in 6.6in 4.2in 3.3in, width=0.5\textwidth]{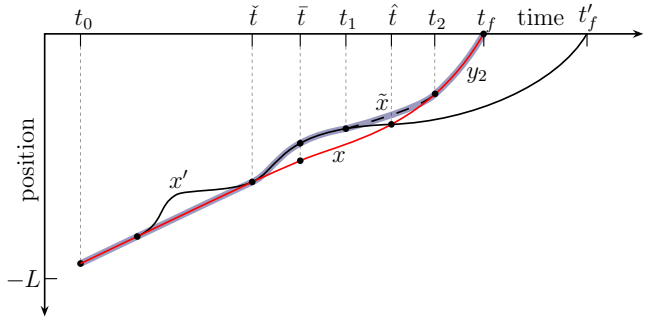}}
	\caption{Notation for Case 2 in the proof of Lemma~\ref{lemma:minimumTrajectory} used in Section D. Trajectory $y_2$ as defined in Equation~\eqref{equation:y2} is depicted in shaded blue. Trajectory $x$ is shown in red, and trajectory $x'$ in black. Trajectory $\tilde{x}$ is the dashed black line, and exists from a direct application of Lemma~\ref{lemma:accelerationTrajectoryPatch}.  The assumptions of Case 2, given in Equations~\eqref{equation:16-5} and~\eqref{equation:16-6}, are depicted. The difference between Cases 1 and 2, is that we now assume $\dot{x}(\hat{t}) \not= \dot{x}'(\hat{t})$. The result of Lemma~\ref{lemma:minimumTrajectory} is shown, \ie, $y(t) \geq x(t)$ for all time $t$ and also $y(\bar{t}) > x(\bar{t})$.} \label{figure:ycase2}
\end{figure}
%
%By construction, $y_2$ is safe with $x_{j-1}'$, \ie, $|y_2(t) - x_{j-1}'(t)| \geq l$ for all $t \in [t_0,t_f]$.
%By construction, $y_2 \in \mathscr{C}(\mathbf{x}_0,\tilde{t}_0,\tilde{t}_f,x_s)$.
%Also, $|y_2(t)| \leq |x(t)|$ for all $t \in [\tilde{t}_0,\tilde{t}_f]$ and $|y_2(\bar{t})| \leq |x'(\bar{t})| < |x(\bar{t})|$, establishing the lemma.
Note that trajectory $y_2$ satisfies Inequalities~\eqref{equation:higher} and~\eqref{equation:highest}.

Case 1 and 2 together exhaust the scenario $\dot{x}'(\check{t}) = \dot{x}(\check{t})$. See Equations~\eqref{equation:16-1} and~\eqref{equation:16-5}. In the remaining cases, we consider the following scenario:
\begin{align*}
	\dot{x}'(\check{t}) &\not= \dot{x}(\check{t}).
\end{align*}
Then, from the definition of $\bar{t}$ and $\check{t}$:
\begin{align}
	\dot{x}(\check{t}) < \dot{x}'(\check{t}). \label{equation:16-7}
\end{align}
From Inequalities~\eqref{equation:16-1} and~\eqref{equation:16-7}, by Lemma~\ref{lemma:accelerationTrajectoryPatch}, there exists trajectory $\tilde{x}_1 \in \mathscr{C}(z(t_3),t_3,\emptyset,x_s)$ satisfying for all $t \in [t_3,t_4]$: 
\begin{align*}
	\tilde{x}_1(t) 	&\geq \max\{x(t) , x'(t)\} \\
	\tilde{x}_1(t_4) 	 &= x'(t_4) \\
	\dot{\tilde{x}}_1(t_4) &= \dot{x}'(t_4),
\end{align*}
where
\begin{align*}
	t_3 &:=  	\sup \{t \leq \check{t} \colon \exists\, \tau, \, z(t) = z'(\tau) \} \\
	t_4 &:=	 \inf \{ t \geq \check{t} \colon \exists\, \tau, \, z(\tau) = z'(t)\}.
\end{align*}
Inequality~\eqref{equation:16-7} and the definition of $t_4$ imply that $x'(t_4) \geq x(t_4)$, from which we define
\begin{align*}
	t_6 := \inf \{ t \geq t_4 \colon x(t) = x'(t) \}.
\end{align*}
Clearly, $t_6 \geq t_4$. %It can be shown that $t_5 > t_3$.

{\it Case 3.}
Suppose
\begin{align} \label{equation:t6}
	\dot{x}(t_6) &= \dot{x}'(t_6).
\end{align}
Consider Figure~\ref{figure:ycase3}. We construct trajectory $y_3$:
\begin{align} \label{equation:y3}
y_3(t) = 
	\begin{cases}
		x(t)	 	&\colon 	[t_0,t_3)\\
		\tilde{x}_1(t)&\colon	[t_3,t_4) \\
		x'(t)		&\colon 	[t_4,t_6) \\
		x(t)		&\colon	[t_6,t_f].
	\end{cases}
\end{align}
\begin{figure}
	\centerline{\includegraphics[clip=true, trim=2in 6.6in 4.2in 3.3in, width=0.5\textwidth]{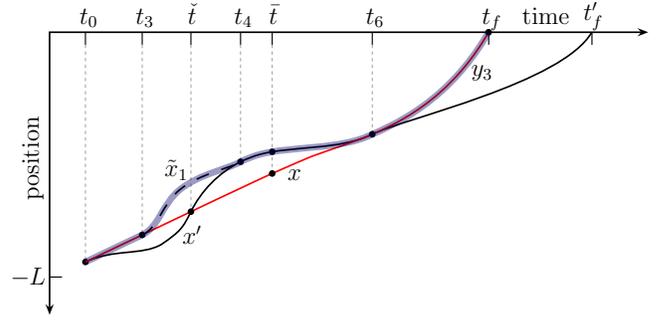}}
	\caption{Notation for Case 3 in the proof of Lemma~\ref{lemma:minimumTrajectory} used in Section D. Trajectory $y_3$ as defined in Equation~\eqref{equation:y3} is depicted in shaded blue. Trajectory $x$ is shown in red, and trajectory $x'$ in black. Trajectory $\tilde{x}_1$ is the dashed black line, and exists from a direct application of Lemma~\ref{lemma:accelerationTrajectoryPatch}.  The assumptions used in Case 3, given in Equations~\eqref{equation:16-7} and~\eqref{equation:t6}, are depicted. The result of Lemma~\ref{lemma:minimumTrajectory} is shown, \ie, $y(t) \geq x(t)$ for all time $t$ and also $y(\bar{t}) > x(\bar{t})$.} \label{figure:ycase3}
\end{figure}
%
%By construction, $y_3 \in \mathscr{C}(\mathbf{x}_0,\tilde{t}_0,\tilde{t}_f,x_s)$.
%Also, $|y_3(t)| \leq |x(t)|$ for all $t \in [\tilde{t}_0,\tilde{t}_f]$ and $|y_3(\bar{t})| \leq |x'(\bar{t})| < |x(\bar{t})|$, establishing the lemma.
Note that trajectory $y_3$ satisfies Inequalities~\eqref{equation:higher} and~\eqref{equation:highest}.

{\it Case 4.}
Now, suppose
\begin{align*}
	\dot{x}(t_6) \not= \dot{x}'(t_6).
\end{align*}
By Inequality~\eqref{equation:16-7} and the definition of $t_6$:
\begin{align} \label{equation:case4}
	\dot{x}'(t_6) < \dot{x}(t_6).
\end{align}
By Lemma~\ref{lemma:accelerationTrajectoryPatch}, there exists trajectory $\tilde{x}_2 \in \mathscr{C}(z'(t_5),t_5,\emptyset ,x_s)$ satisfying for all $t \in [t_5,t_7]$:
\begin{align*} 
	\tilde{x}_2(t) 	&\geq \max\{x(t), x'(t) \}  \\
	\tilde{x}_2(t_7) 	&= x(t_7) \\
	\dot{\tilde{x}}_2(t_7) 	&= \dot{x}(t_7)
\end{align*}
where
\begin{align*}	
	t_5 &:= \sup \{ t < t_6 \colon \exists\, \tau, \, z(\tau) = z'(t) \} \\
	t_7 &:= \inf \{ t > t_6 \colon \exists\, \tau, \, z(t) = z'(\tau) \}.
\end{align*}
Note $t_4 \geq t_3$.
Next, note that $x(t) \leq x'(t)$ for all $t \in [t_3,t_4]$.
%Since $|z_2(t_4)| = |x(t_4)| \leq \xi_3$, we have $t_4 \geq t_3$.
Consider Figure~\ref{figure:ycase4}. We define trajectory $y_4$:
\begin{align}  \label{equation:y4}
y_4(t) = 
	\begin{cases}
		x(t)	 	&\colon 	[t_0,t_3)\\
		\tilde{x}_1(t)&\colon 	[t_3,t_4) \\
		x'(t)		&\colon	[t_4,t_5) \\
		\tilde{x}_2(t)&\colon	[t_5,t_7) \\
		x(t)		&\colon	[t_7,t_f].
	\end{cases}
\end{align}
Note that trajectory $y_4$ satisfies Inequalities~\eqref{equation:higher} and~\eqref{equation:highest}. \qed
\begin{figure}
	\centerline{\includegraphics[clip=true, trim=2in 6.6in 4.2in 3.3in, width=0.5\textwidth]{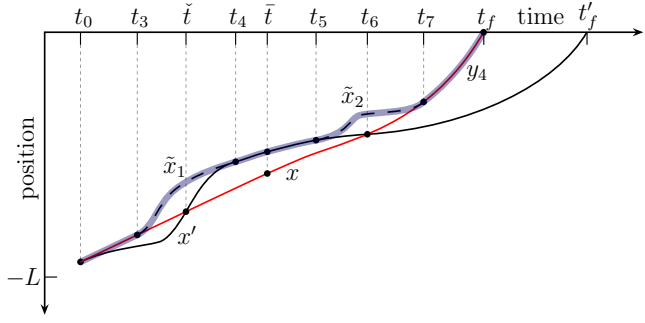}}
	\caption{Notation for Case 4 in the proof of Lemma~\ref{lemma:minimumTrajectory} used in Section D. Trajectory $y_4$ as defined in Equation~\eqref{equation:y4} is depicted in shaded blue. Trajectory $x$ is shown in red, and trajectory $x'$ in black. Trajectory $\tilde{x}_1$ is the dashed black line, and exists from a direct application of Lemma~\ref{lemma:accelerationTrajectoryPatch} at time $\check{t}$.  Trajectory $\tilde{x}_2$, depicted in the dashed black line, also comes from applying Lemma~\ref{lemma:accelerationTrajectoryPatch} at time $t_6$. The assumptions used in Case 4, given in Equations~\eqref{equation:16-7} and~\eqref{equation:case4}, are depicted. The result of Lemma~\ref{lemma:minimumTrajectory} is shown, \ie, $y(t) \geq x(t)$ for all time $t$ and also $y(\bar{t}) > x(\bar{t})$.} \label{figure:ycase4}
\end{figure}
%

%%%%%%%%%%%%%%%%%%%%%%%%%%%%%%%%%%%%%%%%%%%%%%%%%%%%%%%%%%%%
%%%%%%%%%%%%%%%%%%%%%%%%%%%%%%%%%%%%%%%%%%%%%%%%%%%%%%%%%%%%
%%%%%%%%%%%%%%%%%%%%%%%%%%%%%%%%%%%%%%%%%%%%%%%%%%%%%%%%%%%%
%%%%%%%%%%%%%%%%%%%%%%%%%%%%%%%%%%%%%%%%%%%%%%%%%%%%%%%%%%%%

\begin{lemma} \label{lemma:accelerationTrajectoryPatch}
Denote $\mathbf{z}_0 := (p_0,v_0)$. % \in \mathcal{S}$.
Denote $\mathbf{x}(t)$ and $\mathbf{y}(t)$ denote $(x(t),\dot{x}(t))$ and $(y(t),\dot{y}(t))$, respectively.
Let $x \in \mathscr{C}(\mathbf{z}_0,t_0,t_f,x_s)$ and $y \in \mathscr{C}(\mathbf{z}_0,t_0,t_f',x_s)$. 
Suppose there exists time $\hat{t}$ satisfying
\begin{align*}
	x(\hat{t}) 		&= y(\hat{t}) \\
	\dot{x}(\hat{t}) 	&> \dot{y}(\hat{t}).
 \end{align*}
Then, there exists trajectory $\tilde{x} \in \mathscr{C}(\mathbf{y}(t_a),t_a,\emptyset,x_s)$ satisfying for all $t \in [t_a,t_b']$:
\begin{align}
	\tilde{x}(t)		&\geq \max\{ x(t),y(t)\} \label{equation:tildep}\\
	\tilde{x}(t_b') 	&= x(t_b') \label{equation:tildep1} \\
	\dot{\tilde{x}}(t_b') &= \dot{x}(t_b') \label{equation:tildep2}
\end{align}
where 
\begin{align*}
	t_a 	&:= \sup \{ t \leq \hat{t} \colon \exists \, \tau, \, \mathbf{y}(t) = \mathbf{x}(\tau) \} \\
	t_b' 	&:= \inf \{  t \geq \hat{t} \colon \exists \, \tau, \, \mathbf{x}(t) = \mathbf{y}(\tau) \}.
\end{align*}
\end{lemma}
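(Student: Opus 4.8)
The plan is to construct $\tilde{x}$ by hand — as a splice of a piece of $y$, a bang--bang ``bridge,'' and a re-timed piece of $x$ — and then to verify the three claimed properties with the phase-plane comparison machinery already developed for the class $\mathscr{C}$ (Propositions~\ref{proposition:boundedPaths}, \ref{proposition:boundedTimes}, \ref{proposition:boundedTimesFinal}).

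First I would pin down the geometry. The set $\{t\le\hat t:\mathbf{y}(t)\in\mathrm{range}(\mathbf{x})\}$ contains $t_0$ since $\mathbf{x}(t_0)=\mathbf{y}(t_0)=\mathbf{z}_0$, and $\{t\ge\hat t:\mathbf{x}(t)\in\mathrm{range}(\mathbf{y})\}$ contains $t_f$ since $\mathbf{x}(t_f)=(0,v_m)=\mathbf{y}(t_f')$; both are closed because $\mathbf{x},\mathbf{y}$ are continuous and their ranges are compact, so the $\sup$ and $\inf$ defining $t_a$ and $t_b'$ are attained. Pick $\tau_a,\tau_b$ with $\mathbf{x}(\tau_a)=\mathbf{y}(t_a)$ and $\mathbf{y}(\tau_b)=\mathbf{x}(t_b')$. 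Since $x,y$ share the initial state and $\dot{x}(\hat t)>\dot{y}(\hat t)$, the function $w:=y-x$ satisfies $w(t_0)=0$, $\dot{w}(t_0)=0$, $w(\hat t)=0$, $\dot{w}(\hat t)<0$, which forces $y(t)\ge x(t)$ just below $\hat t$ and $x(t)\ge y(t)$ just above it. Combined with monotonicity of positions (as $\dot{x},\dot{y}\ge 0$) and the $\sup$/$\inf$ definitions, this gives the time orderings among $t_a,\tau_a,t_b',\tau_b$ I will need and shows that on $[t_a,t_b']$ the upper envelope $\max\{x,y\}$ equals $y$ up to some switch instant $\theta$ near $\hat t$ and equals $x$ afterward.

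Next, the construction of $\tilde{x}$ on $[t_a,t_b']$: on $[t_a,\theta]$ set $\tilde{x}=y$ (feasible, safe, and $=\max\{x,y\}$ there); then apply a bang--bang bridge from state $\mathbf{y}(\theta)$ — maximal acceleration, clamped at $v_m$, and additionally clamped below the leader's shifted trajectory $x_s(\cdot)-l$ whenever the free profile would cross it — so as to rise weakly above both $x$ and $y$; and finish with a deceleration/coast phase landing at state $\mathbf{x}(t_b')$ exactly at time $t_b'$, after which $\tilde{x}$ may simply follow $x$ (still safe). The bridge parameters and $\theta$ are tuned so the total duration is exactly $t_b'-t_a$; feasibility of this tuning I would obtain from Proposition~\ref{proposition:boundedPaths}, using as the ``slow'' endpoint the time-shifted copy $t\mapsto x(t-(t_a-\tau_a))$ of $x$ — which is feasible and still safe because $x_s$ is nondecreasing, so delaying $x$ only relaxes $x(\cdot)\le x_s(\cdot)-l$ — and as the ``fast'' endpoint the most aggressive clamped bang--bang, with the proposition interpolating to the prescribed terminal time.

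Finally I would verify the three properties. The endpoint conditions $\tilde{x}(t_b')=x(t_b')$ and $\dot{\tilde{x}}(t_b')=\dot{x}(t_b')$ hold by construction. Domination $\tilde{x}(t)\ge\max\{x(t),y(t)\}$ on $[t_a,t_b']$ follows by comparing $\tilde{x}$ against $x$ and against $y$ one at a time through Propositions~\ref{proposition:boundedTimes}/\ref{proposition:boundedTimesFinal}, feeding in the velocity-profile, dwell-time, and arrival/departure-time orderings produced by the construction. Safety $\tilde{x}(t)\le x_s(t)-l$ holds on the $y$-piece and the re-timed-$x$-piece by inheritance from $x,y\in\mathscr{C}(\cdot,\cdot,\cdot,x_s)$ together with monotonicity of $x_s$, and on the bridge because the bridge is explicitly clamped to track $x_s(\cdot)-l$. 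The main obstacle is precisely this simultaneity: producing a single trajectory that dominates both $x$ and $y$, stays a car length behind $x_s$, and hits the prescribed terminal state at the prescribed time. I expect that, just as in the proof of Lemma~\ref{lemma:EiNotEmpty}, this forces a case split — essentially on whether the aggressive bridge meets $x_s(\cdot)-l$ before or after it must peel off toward $\mathbf{x}(t_b')$, and on the relative ordering of $\theta$, $\hat t$ and the times $t_a,\tau_a,t_b',\tau_b$ — with each case closed by a routine but fiddly bang--bang patch.
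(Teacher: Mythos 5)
Your high-level splice — follow $y$, insert a maximum-acceleration bridge, land on $x$'s velocity profile, and verify domination with Propositions~\ref{proposition:boundedTimes}/\ref{proposition:boundedTimesFinal} — is exactly the paper's construction. But the mechanism you propose for tuning the splice to the prescribed endpoint does not apply, and this is where the paper's real work lies. You interpolate via Proposition~\ref{proposition:boundedPaths}, using a time-shifted copy of $x$ as the ``slow'' endpoint and an aggressive bang--bang as the ``fast'' one. That proposition, however, is stated for trajectories in $\mathscr{C}(\mathbf{z}_0,0,\cdot,\emptyset)$, i.e.\ trajectories whose terminal state is $(0,v_m)$. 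The terminal state you need to hit is $\mathbf{x}(t_b')$, which in general is not $(0,v_m)$, so the proposition does not apply as written, and re-deriving it with a free terminal state is essentially the same work the paper does from scratch. What the paper actually does is parameterize the splice by a single scalar $\tau$ — the time at which the trajectory peels off $y$ and starts accelerating — define $T(\tau)$ as the resulting arrival time at $p_b := x(t_b')$, verify the bracketing $T(t_a)\le t_b'\le T(t_b)$, prove $T$ is continuous by a direct $\varepsilon$--$\delta$ estimate (in four cases: whether the peel-off velocity is zero, and the order of $\tau_1,\tau_2$), and then invoke the intermediate value theorem. That continuity argument is the technical core of the lemma, and nothing in your outline anticipates or replaces it.

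Two further points. First, the paper's bridge is a pure full-acceleration arc until the velocity path merges with that of $x$, after which $x[\tau]$ literally reuses $x$'s control shifted in time by $\zeta[\tau]$; at the critical $\tilde{\tau}$ the shift vanishes, so the splice \emph{coincides with $x$} from the merge onward, and the endpoint conditions $\tilde{x}(t_b')=x(t_b')$, $\dot{\tilde{x}}(t_b')=\dot{x}(t_b')$ are automatic. Your separate ``deceleration/coast'' phase is unnecessary and makes the endpoint match harder, not easier. Second, your worry about safety and your proposal to clamp the bridge below $x_s(\cdot)-l$ is well motivated — indeed the paper's own safety justification (``$\tilde{x}(t)\ge y(t)\ge x_s(t)-l$'') points in the wrong direction given the constraint $\tilde{x}\le x_s-l$ — but clamping is not a fix you can just bolt on: once the bridge saturates the moving boundary $x_s(\cdot)-l$, you lose control of the arrival time at $p_b$, and both your interpolation and the paper's continuity-of-$T$ argument break down. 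The case split you anticipate (on where the bridge meets $x_s-l$) does not appear in the paper; its case split is an entirely different one, inside the continuity estimate for $T$.
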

\begin{figure}
	\centerline{\includegraphics[clip=true,trim=2.1in 6.5in 4.1in 3.4in, width=0.5\textwidth]{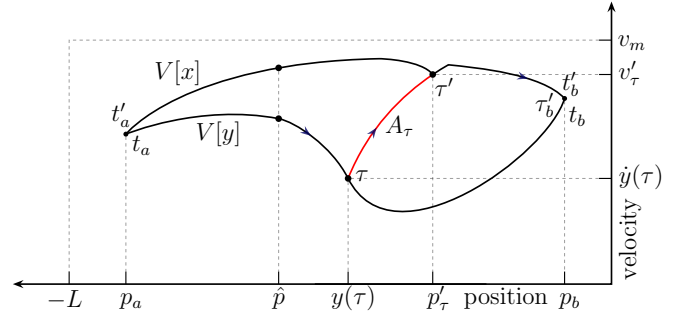}}
	\caption{Notation used in proof of Lemma~\ref{lemma:accelerationTrajectoryPatch}. Velocity path $x$ is the higher velocity path, while velocity path $y$ is the lower one. Trajectories $x$ and $y$ are at position $\hat{p}$ at the same time $\hat{t}$, and velocity of trajectory $x$ is higher than that of $y$, \ie, $\dot{x}(\hat{t}) > \dot{y}(\hat{t}).$ Velocity paths $x$ and $y$ intersect at a position before and after position $\hat{p}$; the closest to $\hat{p}$, we denote $p_a$ and $p_b$, respectively. However, trajectories $x$ and $y$ depart from position $p_a$ and arrive at position $p_b$ at different times. We denote $t_a'$ and $t_a$ as the times that trajectories $x$ and $y$ depart from position $p_a$, respectively. Similarly, we denote $t_b'$ and $t_b$ as the times that trajectories $x$ and $y$ arrive at position $p_b$, respectively. As we move along the lower velocity path $\velpath{y}$, we can choose at any time $\tau \in [t_a,t_b]$ to break away from velocity path $\velpath{y}$ in order to follow a maximum acceleration curve $A_\tau$, depicted in red, until intersection with the higher velocity path $\velpath{x}$. State $(p_\tau',v_\tau')$ is the earliest state at which acceleration curve $A_\tau$ and velocity path $\velpath{x}$ intersect; earliest in the sense of earliest position. (See definition~\eqref{definition:earliest}). Time $\tau'$ is the earliest time that the trajectory following black arrows, which we denote $x[\tau]$, arrives at this state. Trajectory $x[\tau]$ arrives at position $p_b$ at time $\tau_b'$.} \label{figure:hat}
\end{figure}
\begin{figure}
	\centerline{\includegraphics[clip=true,trim=2in 6.6in 4.2in 3.3in, width=0.5\textwidth]{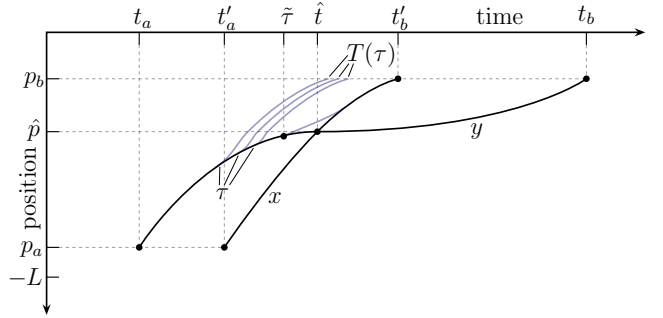}}
	\caption{Notation used in proof of Lemma~\ref{lemma:accelerationTrajectoryPatch}. Trajectories $x$ and $y$ are at position $\hat{p}$ at the same time $\hat{t}$, and velocity of trajectory $x$ is higher than that of $y$, \ie, $\dot{x}(\hat{t}) > \dot{y}(\hat{t}).$ Times $t_a'$ and $t_a$ are the times when trajectories $x$ and $y$ depart from position $p_a$, respectively. Note that $t_a' \geq t_a$. Also, times $t_b'$ and $t_b$ are the times when trajectories $x$ and $y$ arrive at position $p_b$, respectively. Note that $t_b' \leq t_b$. For any time $\tau \in [t_a,t_b]$, we can construct trajectory $x[\tau]$. We denote the time that trajectory $x[\tau]$ arrives at position $p_b$ as $T(\tau)$. Sample trajectories $x[\tau]$ with their associated terminal times $T(\tau)$ are depicted in light blue. Since the mapping $\tau \mapsto T(\tau)$ as defined in Equation~\eqref{equation:terminaltimemapping} is continuous, we can find a time $\tilde{\tau}$ at which to break away from trajectory $y$ as shown in Figure~\ref{figure:hat}, so that the resulting trajectory $x[\tilde{\tau}]$ has terminal time $T(\tilde{\tau})$ equal to $t_b'$.} \label{figure:hat2}
\end{figure}

\begin{proof}
See Figures~\ref{figure:hat} and~\ref{figure:hat2}.
Denote
\begin{align*}
	\hat{p} &:= x(\hat{t}) = y(\hat{t}) \\
	p_a 	&:= y(t_a) \\
	p_b &:= x(t_b').
\end{align*}

Note that for any $p \in (p_a,p_b)$:
\begin{align*}
	\velpath{x}(p) > \velpath{y}(p).
\end{align*}
Note that $t_a$ is the departure time of trajectory $y$ from position $p_a$.
Similarly, let $t_a' := \depart{x}(p_a)$ be the departure time of trajectory $x$ from position $p_a$.
We claim $t_a \leq t_a'$.
See Figure~\ref{figure:hat2}. Essentially, this happens since trajectory $x$ travels faster than $y$ before time $\hat{t}$.
More precisely, note:
\begin{align*}
	t_a' = \depart{x}(p_a) &= \hat{t} - \int_{p_a}^{\hat{p}} \frac{1}{\velpath{x}(p)} dp \\
			&\geq \hat{t} - \int_{p_a}^{\hat{p}} \frac{1}{\velpath{y}(p)} dp
			\geq t_a.
\end{align*}
Also, note that $t_b'$ is the arrival time of trajectory $x$ at position $p_b$.
Similarly, let $t_b := \arrive{y}(p_b)$ be the arrival time of trajectory $y$ at position $p_b$.
Since trajectory $x$ travels faster than $y$ after time $\hat{t}$, we have $t_b' \leq t_b$.
See Figure~\ref{figure:hat2}.
More precisely, note:
\begin{align*}
	t_b' = \arrive{x}(p_b) &= \hat{t} + \int_{\hat{p}}^{p_b} \frac{1}{\velpath{x}(p)} dp \\
		 &\leq \hat{t} + \int_{\hat{p}}^{p_b} \frac{1}{\velpath{y}(p)} dp
		 \leq t_b.
\end{align*}
For any $\tau \in [t_a,t_b]$, we define path $A_\tau$ as follows.
See Figure~\ref{figure:hat2}.
For any $p \in [y(\tau),0]$:
\begin{align*}
	A_\tau(p) := \sqrt{\dot{y}(\tau)^2 + 2a_m(y(\tau) - p)}.
\end{align*}
Define
\begin{align}
	p_\tau' &:= \inf \{ p \colon \exists \, p, \, \velpath{x}(p) = A_\tau(p) \} \label{definition:earliest} \\
	v_\tau' &:=  \velpath{x}(p_\tau) = A_\tau(p_\tau).
\end{align}
Define trajectory $x[\tau] := \mathscr{T}(\mathbf{y}(t_a),u[\tau])$, where control action $u[\tau]$ is defined as follows:
\begin{align*}
u[\tau](t) = 
	\begin{cases}
		\ddot{y}(t) &\colon t \in [t_a,\tau)\\
		a_m 		&\colon t \in [\tau,\tau')\\
		\ddot{x}(t-\zeta[\tau]) &\colon t \in [\tau',\tau_b'],
	\end{cases}
\end{align*}
where
\begin{align*}
	\tau' &:= \tau + (v_\tau' - \dot{y}(\tau))/a_m \\
	\tau_b' &:= \tau' + \int_{p_\tau'}^{p_b} \frac{1}{\velpath{x}(p)} dp \\
	\zeta[\tau] &:= \tau_b' - t_b'.
\end{align*}
See Figure~\ref{figure:hat}.
%We generate trajectory $x_\tau$ such that $\ddot{x}_\tau(t) = u_\tau(t)$ for all $t \in [t_\alpha',t_e]$ with initial conditions $x_\tau(t_\alpha') = y(t_\alpha')$ and $\dot{x}_\tau(t_\alpha') = \dot{y}(t_\alpha')$.
For any $\tau \in [t_a,t_b]$, let $T(\tau)$ be the arrival time of $x[\tau]$ at position $p_b$, which is simply the terminal time of $x[\tau]$ defined above. 
More precisely, we define:
\begin{align} \label{equation:terminaltimemapping}
	T(\tau) := \tau + \int_{x[\tau](\tau)}^{p_b} \frac{1}{\velpath{x[\tau]}(p)} dp .
\end{align}
We claim that $T$ is continuous.
To see this, consider Figure~\ref{figure:hat2}.
A small change in $\tau$ creates a small change in $T$.
We provide the proof of this claim at the end. 

Assuming $T$ is continuous, we continue to establish the lemma.
Since
\begin{align*}
	T(t_a) &= t_a + t_b '- t_a' \leq t_b' \\
	T(t_b) &= t_b \geq t_b',
\end{align*}	
there exists $\tilde{\tau} \in [t_a,t_b]$ such that $T(\tilde{\tau}) = t_b'$, from which $\zeta[\tilde{\tau}] = 0$. 
We define trajectory $\tilde{x}$ as follows:
\begin{align*}
\tilde{x}(t) = 
	\begin{cases}
		x[\tilde{\tau}](t)	&\colon t \in [t_a,t_b') \\
		x(t)			&\colon t \in [t_b',t_f]. 
	\end{cases}
\end{align*}
By construction of $x[\tilde{\tau}]$:
\begin{align*}
	\tilde{x}(t_b) &= x[\tilde{\tau}](t_b) = x(t_b) \\
	\dot{\tilde{x}}(t_b) &= \dot{x}[\tilde{\tau}](t_b) = \dot{x}(t_b),
\end{align*}
satisfying Inequalities~\eqref{equation:tildep1} and~\eqref{equation:tildep2}.
%
%Since $y$ is safe with $x_s$, $|z(t)| = |x_{\tilde{\tau}} (t)|  = |y(t)| \geq l + |x_s(t)|$ for all $t \in [t_\alpha',\tilde{\tau}]$.
%Also, since $\zeta(\tilde{\tau}) = 0$ and $x$ is safe with $x_s$, $|z(t)|  = |x(t)| \geq l + |x_s(t)|$ for all $t \in [s_{\tilde{\tau}},\tilde{t}_f]$.
%We show safety of $x_{\tilde{\tau}}$ and $x_s$ on $(\tilde{\tau},s_{\tilde{\tau}})$ by contradiction.
%Suppose there exists $\bar{t} \in (\tilde{\tau},s_{\tilde{\tau}})$ such that $|x_{\tilde{\tau}}(\bar{t})| < l + |x_s(\bar{t})|$.
%Suppose that $\dot{x}_{\tilde{\tau}}(\bar{t}) \geq \dot{x}_s(\bar{t})$. Then, $|x(s_{\tilde{\tau}})| = |x_{\tilde{\tau}}(s_{\tilde{\tau}})| < l+|x_s(s_{\tilde{\tau}})|$, contradicting safety of $x$ and $x_s$ at time $s_{\tilde{\tau}}$.
%Now suppose $\dot{x}_{\tilde{\tau}}(\bar{t}) < \dot{x}_s(\bar{t})$. Then, $|y(\tilde{\tau})| = |x_{\tilde{\tau}}(\tilde{\tau})| < l+|x_s(\tilde{\tau})|$, contradicting safety of $y$ and $x_s$ at time $\tilde{\tau}$.
%Hence, $\tilde{x} \in \mathscr{C}(\mathbf{y}(t_a'),t_a',t_f,x_s)$.

For all $p \in (p_a,p_b)$:
\begin{align*}
	\velpath{\tilde{x}}(p) &\geq \velpath{y}(p) \\
	\wait{\tilde{x}}(p) 	&\leq \wait{y}(p)
\end{align*}
and
\begin{align*}
	\depart{\tilde{x}}(p_a)  = \depart{y}(p_a).
\end{align*}	
Thus, by Proposition~\ref{proposition:boundedTimes}, for all $t \in [t_a,t_b']$:
\begin{align}
	\tilde{x}(t) \geq y(t). \label{equation:inequality1}
\end{align}
Similarly, for all $p \in (p_a,p_b)$:
\begin{align*}
	\velpath{\tilde{x}}(p) \leq \velpath{x}(p) \\
	\wait{\tilde{x}}(p) \geq \wait{x}(p),
\end{align*}	
and
\begin{align*}
	\arrive{\tilde{x}}(p_b) = \arrive{x}(p_b).
\end{align*}
Thus, by Proposition~\ref{proposition:boundedTimesFinal}, for all $t \in [t_a,t_b']$:
\begin{align}
	\tilde{x}(t) \geq x(t). \label{equation:inequality2}
\end{align}
Inequalities~\eqref{equation:inequality1} and~\eqref{equation:inequality2} together establish Inequality~\eqref{equation:tildep}.
To complete the proof, it remains to show that $\tilde{x}$ is safe with $x_s$. However, this follows directly from either Inequality~\eqref{equation:inequality1} or~\eqref{equation:inequality2}. For all $t$:
\begin{align*}
	\tilde{x}(t) \geq y(t) \geq x_s(t) - l.
\end{align*}

We conclude by showing $T$ is continuous.
Choose $\tau_1 \in [t_a,t_b]$. Let $\epsilon > 0$. We show that there exists $\gamma > 0$ such that for any $\tau_2$ satisfying
\begin{align*}
	|\tau_2-\tau_1| \leq \gamma, 
\end{align*}
we have
\begin{align*}
	|T(\tau_2) - T(\tau_1)| \leq \epsilon.
\end{align*}
For shorthand, we denote
\begin{align*}
	V_i &:= \velpath{x[\tau_i]} \\
	p_i &:= y(\tau_i) \\
	v_i &:= V_i(p_i),
\end{align*}
for $i = 1, 2$.
See Figure~\ref{figure:d1}.
We also denote
\begin{align*}
	p_3  &:= \inf \{ p \colon A_{\tau_1}(p) = \velpath{x}(p) \}   \\
	p_4  &:= \inf \{ p \colon A_{\tau_2}(p) = \velpath{x}(p) \}   .
\end{align*}
See Figures~\ref{figure:d1} and~\ref{figure:d2}.

{\it Case 1.}
First, suppose
\begin{align*}
	v_1 		&> 0 \\
	\tau_2 	&< \tau_1.
\end{align*} 
Note that for any $p \in [p_a,p_b]$:
\begin{align*}
	|V_2(p) - V_1(p) | \leq  c(\gamma),
\end{align*}
where
\begin{align}
	c(\gamma) := \sqrt{v_1^2 + 4a_m v_m \gamma} - v_1. \label{equation:cbound1}
\end{align}
\begin{figure}
\centerline{\includegraphics[clip=true, trim=2.1in 6.5in 4.1in 3.4in, width=0.5\textwidth]{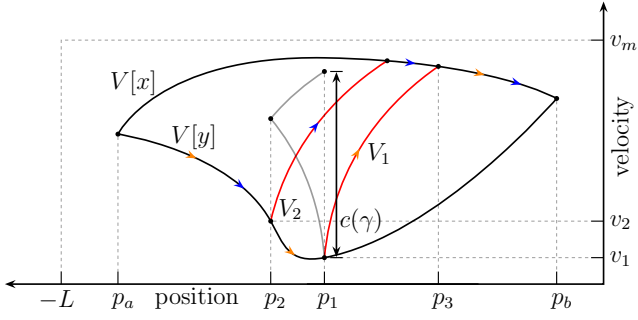}}
\caption{Notation for Case 1 in Proof of Lemma~\ref{lemma:accelerationTrajectoryPatch}. We show that the mapping $\tau \mapsto T(\tau)$, as defined in Equation~\eqref{equation:terminaltimemapping}, is continuous. Velocity paths $\velpath{x}$ and $\velpath{y}$ are the upper and lower black paths, respectively. Velocity paths $V_1$ and $V_2$ are depicted along orange and blue arrows, respectively. The red arcs are maximum acceleration curves. Velocity $v_1$ is the velocity of $V_1$ at time $\tau_1$. Time $\tau_2$ is the time that $V_2$ breaks away from $\velpath{y}$. The conditions of Case 1 are depicted: $v_1 > 0$, and $\tau_2 < \tau_1$.  By construction, $V_1$ and $V_2$ disagree only on the interval $(p_2,p_3)$. Moreover, we bound the absolute value of the difference between $V_1$ and $V_2$, by using maximum acceleration curves over the interval $(p_2,p_1)$, depicted in grey.} \label{figure:d1}
\end{figure}
Note that $V_1$ and $V_2$ only differ on the interval $(p_2,p_3)$. 
Bounding the difference between $V_1$ and $V_2$ amounts to bounding the difference between two square root functions.
Note that the maximum difference between $V_1$ and $V_2$ occurs at position $p_1$.
We denote $c(\gamma)$ as the maximum difference at this position.
To derive the expression for $c(\gamma)$, consider Figure~\ref{figure:d1}. 
The maximum possible distance between positions $p_2$ and $p_1$ is equal to $v_m \gamma$.
The maximum possible velocity (of either $V_1$ or $V_2$) at position $p_2$ is equal to $\sqrt{v_1^2 + 2a_m v_m\gamma}$.
By using full acceleration from position $p_2$ to $p_1$, the maximum possible velocity (of either $V_1$ or $V_2$) at position $p_2$ is equal to $\sqrt{v_1^2 + 4a_m v_m \gamma}$.
The lowest possible velocity (of either $V_1$ or $V_2$) at position $p_1$ is simply $v_1$.
Note that for all $p \in [p_1,p_3]$:
\begin{align*}
	 V_2(p) \geq v_1.
\end{align*}
%By continuity of $P_y$, we have that for sufficiently small $\gamma$, $P_2(\xi) \geq v_1 / 2$ for all $\xi \in [\xi_{\tau_1},\xi_2]$.
Also, note that
\begin{align*} 
	\int_{p_2}^{p_1} \frac{1}{V_2(p)} dp &\leq \int_{p_2}^{p_1} \frac{1}{\velpath{y}(p)} dp \\
						&\leq |\tau_2 - \tau_1| \leq \gamma. \numberthis \label{equation:case1p}
\end{align*}
Now, 
\begin{align*}
	|T(\tau_2) &- T(\tau_1)| \\
	&\leq |\tau_2-\tau_1| + \bigg\rvert \int_{p_2}^{p_b}\frac{1}{V_2(p)}dp - \int_{p_1}^{p_b}\frac{1}{V_1(p)}dp \bigg\rvert \\
	&\leq \gamma + \int_{p_2}^{p_1} \frac{1}{V_2(p)} dp + \int_{p_1}^{p_b}\Big\rvert \frac{1}{V_2(p)}-\frac{1}{V_1(p)} \Big\rvert dp   \\
	&\leq  2\gamma + \int_{p_1}^{p_3}\frac{|V_2(p) -V_1(p) |}{V_1(p) V_2(p)} dp \\
	&\leq 2\gamma + c(\gamma)\int_{p_1}^{p_3}\frac{1}{V_1(p) V_2(p)} dp  \\
	&\leq 2\gamma + 2c(\gamma) L / v_1^2 := d(\gamma). %, where the last inequality holds for sufficiently small $\gamma$.
\end{align*}
Since $d(\gamma)$ is continuous and $\lim_{\gamma \to 0} d(\gamma) = 0$, we can always find $\gamma > 0$ so that $d(\gamma) \leq \epsilon$.

{\it Case 2.}
Now, suppose
\begin{align*}
	v_1 		&> 0 \\
	\tau_2	&> \tau_1.
\end{align*} 
Note that for any $p \in [p_a,p_b]$:
\begin{align*}
	|V_2(p) - V_1(p)| \leq \tilde{c}(\gamma),
\end{align*}	
where
\begin{align*}
	\tilde{c}(\gamma) := \sqrt{v_1^2 + 2a_m v_m \gamma} - \sqrt{v_1^2 - 2 a_m v_m\gamma}.
\end{align*}
\begin{figure}
\centerline{\includegraphics[clip=true, trim=2.1in 6.5in 4.1in 3.4in, width=0.5\textwidth]{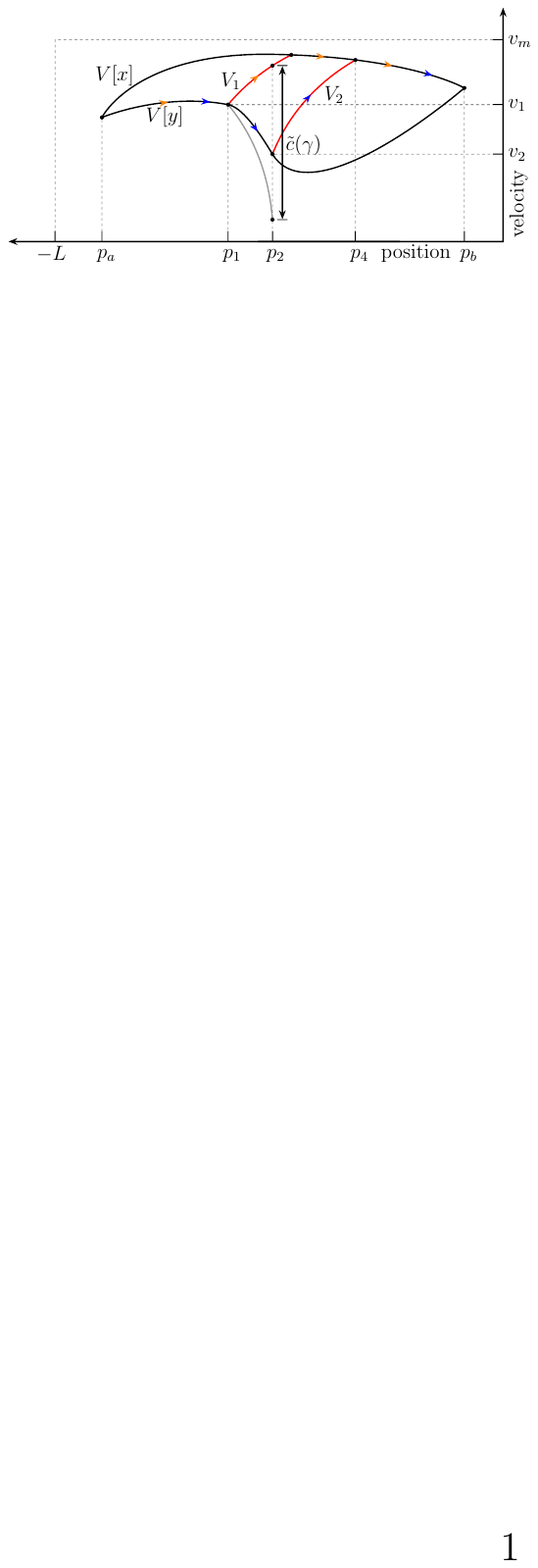}}
\caption{Notation for Case 2 in Proof of Lemma~\ref{lemma:accelerationTrajectoryPatch}. We show that the mapping $\tau \mapsto T(\tau)$, as defined in Equation~\eqref{equation:terminaltimemapping}, is continuous. Velocity paths $\velpath{x}$ and $\velpath{y}$ are the upper and lower black paths, respectively. Velocity paths $V_1$ and $V_2$ are depicted along orange and blue arrows, respectively. The red arcs are maximum acceleration curves. Velocity $v_2$ is the velocity of $V_2$ at time $\tau_2$. Time $\tau_1$ is the time that $V_1$ breaks away from $\velpath{y}$. The conditions of Case 2 are depicted: $v_1 > 0$, and $\tau_2 > \tau_1$.  By construction, $V_1$ and $V_2$ disagree only on the interval $(p_1,p_4)$. Moreover, we bound the absolute value of the difference between $V_1$ and $V_2$ by $\tilde{c}(\gamma)$, by using maximum acceleration curves over the interval $(p_1,p_2)$, depicted in grey.} \label{figure:d2}
\end{figure}
Note that $V_1$ and $V_2$ only differ on the interval $(p_1,p_4)$. Consider Figure~\ref{figure:d2}.
The maximum difference between $V_1$ and $V_2$ now occurs at position $p_2$.
We use two maximum acceleration arcs pivoted at $(p_1,v_1)$.
The lowest achievable velocity (by either $V_1$ or $V_2$) at position $p_2$ is $\sqrt{v_1^2 - 2 a_m v_m\gamma}$, while the highest achievable velocity is $\sqrt{v_1^2+2a_m v_m \gamma}$.

Note that for all $p \in [p_2,p_4]$:
\begin{align*}
	P_2(\xi) \geq v_1/2,
\end{align*}
for sufficiently small $\gamma$, \ie, any $\gamma$ satisfying $\gamma \leq 3v_1 / (2a_m)$ is sufficiently small.
Also, note that
\begin{align*}
	\int_{p_1}^{p_2} \frac{1}{V_1(p)} dp  &\leq \int_{p_1}^{p_2} \frac{1}{\velpath{y}} dp \\
						&\leq |\tau_2 - \tau_1| \leq \gamma. \numberthis \label{equation:case2p}
\end{align*}
Then, 
\begin{align*}
	|T(\tau_2) &- T(\tau_1)| \\
	&\leq |\tau_2-\tau_1| + \bigg\rvert \int_{p_2}^{p_b}\frac{1}{V_2(p)}dp - \int_{p_1}^{p_b}\frac{1}{V_1(p)}dp \bigg\rvert \\
				&\leq \gamma + \int_{p_1}^{p_2} \frac{1}{V_1(p)} dp + \int_{p_2}^{p_b} \Big\rvert \frac{1}{V_2(p)} - \frac{1}{V_1(p)} \Big\rvert dp \\
				&\leq  2\gamma + \int_{p_2}^{p_4}\frac{|V_2-V_1|}{V_1(p)V_2(p)} dp \\
				&\leq 2\gamma + \tilde{c}(\gamma)\int_{p_2}^{p_4}\frac{1}{V_1(p)V_2(p)} dp \\
				&\leq  2\gamma + 2\tilde{c}(\gamma) L / v_1^2 := \tilde{d}(\gamma),
\end{align*}				
where the last inequality holds for sufficiently small $\gamma$.
Since $\tilde{d}(\gamma)$ is continuous and $\lim_{\gamma \to 0} \tilde{d}(\gamma) = 0$, we can always find $\gamma > 0$ sufficiently small so that $\tilde{d}(\gamma) \leq \epsilon$.

{\it Case 3.}
Now, suppose 
\begin{align*}
	v_1 		&=  0 \\
	\tau_2 	&<  \tau_1.
\end{align*}
Note that $V_1$ and $V_2$ differ only on the interval $(p_2,p_3)$.
%Consider Figure~\ref{}.
On this interval we bound the difference in velocities by using Equation~\eqref{equation:cbound1} and setting $v_1$ equal to 0.
Thus, for all $p \in [p_2,p_3]$:
\begin{align*}
	|V_2(p) - V_1(p)| 	&\leq C \sqrt{\gamma} \\
	V_1(p) 			&\leq V_2(p),
\end{align*}
where $C := \sqrt{4a_m v_m}$.
Also, note that Inequality~\eqref{equation:case1p} still holds when $v_1$ is equal to 0.
For any $\gamma > 0$ the following holds:
\begin{align*}
	&|T(\tau_2) - T(\tau_1)| \\
	&\leq |\tau_2-\tau_1| + \bigg\rvert \int_{p_2}^{p_b}\frac{1}{V_2(p)}dp - \int_{p_1}^{p_b}\frac{1}{V_1(p)}dp \bigg\rvert \\
	&\leq \gamma + \int_{p_2}^{p_1} \frac{1}{V_2(p)} dp + \int_{p_1}^{p_b} \Big\rvert \frac{1}{V_2(p)} - \frac{1}{V_1(p)} \Big\rvert dp  \\
	&\leq 2\gamma + \int_{p_1}^{p_3} \Big\rvert \frac{1}{V_2(p)} - \frac{1}{V_1(p)} \Big\rvert dp \\
	&\leq 2\gamma + \int_{p_1}^{p_1+\gamma} \Big\rvert \frac{1}{V_2(p)} - \frac{1}{V_1(p)} \Big\rvert dp  + \int_{p_1+\gamma}^{p_3} \frac{|V_2 - V_1|}{V_1 V_2} dp \\
	&\leq 2\gamma + \int_{p_1}^{p_1+\gamma} \Big( \frac{1}{V_2} + \frac{1}{V_1} \Big) dp + C\sqrt{\gamma} \int_{p_1+\gamma}^{p_3} \frac{1}{V_1 V_2} dp \\
	&\leq 2\gamma + 2\int_{p_1}^{p_1+\gamma} \frac{1}{V_1(p)} dp +C\sqrt{\gamma} \int_{p_1+\gamma}^{p_3} \frac{1}{V_1(p)^2} dp  \\
	&=  2\gamma + 2\sqrt{2\gamma/a_m} + C\sqrt{\gamma} \int_{p_1+\gamma}^{p_3} \frac{1}{2a_m(p-p_1)} dp \\
	&= 2\gamma + 2\sqrt{2\gamma/a_m}+ \frac{C\sqrt{\gamma}}{2a_m} \log\Big(\frac{p_3-p_1}{\gamma}\Big) := q(\gamma).
\end{align*}
Note that $q(\gamma)$ is continuous and $\lim_{\gamma \to 0} q(\gamma) = 0$.
Thus, we can always find $\gamma > 0$ such that $q(\gamma) \leq \epsilon$.

{\it Case 4.}
Now, suppose 
\begin{align*}
	v_1 		&=  0 \\
	\tau_2 	&>  \tau_1.
\end{align*}
Note that $V_1$ and $V_2$ differ only on interval $(p_1,p_4)$.% Consider Figure~\ref{}.
We bound the difference in velocities as follows.
The maximum difference in velocities occurss at position $p_2$. The maximum distance between $p_1$ and $p_2$ is equal to $v_m \gamma$.
The maximum possible velocity (of either $V_1$ or $V_2$) at position $p_2$ is, thus, $\sqrt{2a_m v_m \gamma}$, while the lowest possible velocity is 0. 
Hence, for all $p \in [p_1,p_4]$:
\begin{align*}
	|V_2(p) - V_1(p)| \leq \tilde{C} \sqrt{\gamma}, %a_m\gamma.
\end{align*}
where $\tilde{C} := \sqrt{2a_m v_m}$. Note also that for all $p \in [p_2,p_4]$:
\begin{align*}
	V_1(p) \geq V_2(p) \geq \frac{1}{\sqrt{2a_m(p-p_2)}}.
\end{align*}
Also, note that Inequality~\eqref{equation:case2p} still holds when $v_1$ is equal to 0.
Then, 
\begin{align*}
	&|T(\tau_2) - T(\tau_1)| \\
	&\leq |\tau_2-\tau_1| + \bigg\rvert \int_{p_2}^{p_b}\frac{1}{V_2(p)}dp - \int_{p_1}^{p_b}\frac{1}{V_1(p)}dp \bigg\rvert \\
	&\leq \gamma + \int_{p_1}^{p_2} \frac{1}{V_1(p)} dp + \int_{p_2}^{p_b} \Big\rvert \frac{1}{V_2(p)} - \frac{1}{V_1(p)} \Big\rvert dp  \\
	&\leq 2\gamma + \int_{p_2}^{p_4} \Big\rvert \frac{1}{V_2(p)} - \frac{1}{V_1(p)} \Big\rvert dp \\
	&\leq 2\gamma + \int_{p_2}^{p_2+\gamma} \Big\rvert \frac{1}{V_2(p)} - \frac{1}{V_1(p)} \Big\rvert dp  + \int_{p_2+\gamma}^{p_4} \frac{|V_2 - V_1|}{V_1 V_2} dp \\
	&\leq 2\gamma + \int_{p_2}^{p_2+\gamma} \Big( \frac{1}{V_2} + \frac{1}{V_1} \Big) dp + \tilde{C}\sqrt{\gamma} \int_{p_2+\gamma}^{p_4} \frac{1}{V_1 V_2} dp \\
	&\leq 2\gamma + 2\int_{p_2}^{p_2+\gamma} \frac{1}{V_2(p)} dp +\tilde{C}\sqrt{\gamma} \int_{p_2+\gamma}^{p_4} \frac{1}{V_2(p)^2} dp  \\
	&\leq  2\gamma + 2\sqrt{2\gamma/a_m} + \tilde{C}\sqrt{\gamma} \int_{p_2+\gamma}^{p_4} \frac{1}{2a_m(p-p_2)} dp \\
	&\leq 2\gamma + 2\sqrt{2\gamma/a_m}+ \frac{\tilde{C}\sqrt{\gamma}}{2a_m} \log\Big(\frac{p_4-p_2}{\gamma}\Big) := \tilde{q}(\gamma).
\end{align*}
Note that $\tilde{q}(\gamma)$ is continuous and $\lim_{\gamma\to0}\tilde{q}(\gamma) = 0$.
Thus, we can always find $\gamma > 0$ such that $\tilde{q}(\gamma) \leq \epsilon$.

Hence, have shown that $T$ is continuous. \end{proof}

\end{document}